\newcommand{\Comments}{1}
\newcommand{\mynote}[2]{\ifnum\Comments=1\textcolor{#1}{#2}\fi}
\newcommand{\parbold}[1]{\vspace{.25em}\noindent\textbf{#1}}
\DeclareMathOperator{\thetaset}{\boldsymbol{\theta}}
\DeclareMathOperator{\full}{\textsc{full}}
\DeclareMathOperator{\subb}{\textsc{sub}}
\DeclareMathOperator{\prob}{\mathbb{P}}
\DeclareMathOperator{\expec}{\mathbb{E}}
\DeclareMathOperator{\indic}{\mathbf{1}}
\DeclareMathOperator{\N}{\mathcal{N}}
\newcommand\bbE{\ensuremath{\mathbb{E}}}
\newcommand\E{\ensuremath{\mathbb{E}}}
\newcommand*\diff{\mathop{}\!\mathrm{d}}
\renewcommand{\var}{\mathrm{Var}}
\newcommand\blfootnote[1]{%
	\begingroup
	\renewcommand\thefootnote{}\footnote{#1}%
	\addtocounter{footnote}{-1}%
	\endgroup
}
\pgfplotsset{compat=1.10}
\definecolor{babyblueeyes}{rgb}{0.63, 0.79, 0.95}
\definecolor{powderblue}{rgb}{0.69, 0.88, 0.9}
\definecolor{bluebell}{rgb}{0.64, 0.64, 0.82}
\pgfplotsset{compat=1.10}
\definecolor{babyblueeyes}{rgb}{0.63, 0.79, 0.95}
\definecolor{powderblue}{rgb}{0.69, 0.88, 0.9}
\newtheorem{proposition}{Proposition}
\newtheorem{lemma}{Lemma}
\newtheorem{corollary}{Corollary}
\newtheorem{definition}{Definition}
\renewcommand{\cite}[1]{\citep{#1}}
  \providecommand\BibTeX{{%
    \normalfont B\kern-0.5em{\scshape i\kern-0.25em b}\kern-0.8em\TeX}}}
\begin{document}

\title{Dropping Standardized Testing for Admissions Trades Off Information and Access}

\author{
	Nikhil Garg\\
	Cornell Tech\\
	\texttt{ngarg@cornell.edu} \\
	\and
	Hannah Li\\
	Columbia University\\
	\texttt{hannah.li@columbia.edu} \\
	\and
	Faidra Monachou\\
	Yale University\\
	\texttt{faidra.monachou@yale.edu}\blfootnote{The journal version is available at \url{https://pubsonline.informs.org/doi/abs/10.1287/mnsc.2023.02573}.
The authors are extremely grateful to Itai Ashlagi for insightful discussions and suggestions throughout this project.
We also thank Jack Buckley, Sharad Goel, Josh Grossman, Ramesh Johari,  Wanyi Li, Irene Lo, Muriel Niederle, Collin Raymond, Alvin E. Roth, Philipp Strack, Sabina Tomkins, Johan Ugander, Gabriel Weintraub, and anonymous reviewers of Management Science, ACM FAccT 2021, and ACM EAAMO 2021 for helpful comments. 
This research uses public data from the Texas Higher Education Opportunity Project (THEOP) and acknowledges the following agencies that made THEOP data available through grants and support: Ford Foundation, The Andrew W. Mellon Foundation, The William and Flora Hewlett Foundation, The Spencer Foundation, National Science Foundation (NSF Grant SES-0350990), The National Institute of Child Health \& Human Development (NICHD Grant R24 H0047879) and The Office of Population Research at Princeton University.}}

\maketitle

\begin{abstract}
We study the role of information and access in capacity-constrained selection problems with fairness concerns. We develop a statistical discrimination framework, where each applicant has multiple features and is potentially strategic. The model formalizes the trade-off between the (potentially positive) informational role of a feature and its (negative) exclusionary nature when members of different social groups have unequal access to this feature. 
Our framework finds a natural application to policy debates on dropping standardized testing in admissions. Our primary takeaway is that the decision to drop a feature (such as test scores) cannot be made without the joint context of the information provided by other features and how the requirement affects the applicant pool composition. Dropping a feature may exacerbate disparities by decreasing the amount of information available for each applicant, especially those from non-traditional backgrounds. However, in the presence of access barriers to a feature, the interaction between the informational environment and the effect of access barriers on the applicant pool size becomes highly complex.  
Furthermore, we consider an extension with two schools and costly tests, where strategic students decide whether to take the test or not. Our theoretical results reveal that the students' test-taking behavior can be non-monotonic. We characterize the two-school policy equilibria and show that each school's optimal decision to drop the test critically depends on the other school's test policy.
Finally, using calibrated simulations, we demonstrate the presence of practical instances where the decision to eliminate standardized testing improves or worsens all metrics. 
\end{abstract}

\section{Introduction}

Recent debates on the use of standardized testing in college admissions have increasingly garnered national attention, initially during the COVID-19 pandemic as test centers shut down and schools were forced to reconsider their admissions practices \citep{covidtests2020}. Independently of the COVID-19 pandemic, in an attempt to increase equity and diversity in admissions, the University of California (UC) 
settled a lawsuit by eliminating all consideration of SAT and ACT scores for admissions and scholarships, 
following an earlier decision to suspend testing requirements and ultimately design its own test \citep{delrio_2021}.
{Most recently, in response to the United States Supreme Court ruling to end race-based affirmative action, more colleges are expected to drop those requirements permanently, ``responding to critics who say the tests favor students from wealthier families'' and at the same time, protecting schools from lawsuits~\citep{saul_2023}.} On the other hand, schools such as MIT 
reinstated standardized testing requirements that were dropped during the initial years of the pandemic~\citep{mit2022}.

These discussions primarily center on highly selective institutions and their efforts to shape the student body through the admissions process.\footnote{
	We analyze settings in which the capacity constraint for a school means that they must be selective about the students that they admit, i.e., that the school can accept less than half its applicants. While most students do not attend such colleges in the United States, selective college admissions are the subject of considerable academic study, partially due to their importance in improving downstream outcomes for the students who do attend them \citep{missingoneoffscollegeundermatching, bleemer2022affirmative, chetty2023diversifying, tomkins2023showing, grossman2024disparate}.
	Most schools are not selective, accepting most applicants. The admissions considerations of these schools differ substantially from those of more selective institutions~\citep{selingo2020whogetsin}. Modeling nonselective admissions for schools who accept the vast majority of their applicants would require specifying objective functions for how the school trades off class size with academic merit, diversity, and other desiderata of the admitted class; in practice, many schools accept all students who meet a minimum academic requirement such as high school graduation. More generally, given such an objective function trading off class size, results similar to ours may hold; for example, our model can be viewed as analyzing one extreme of such a model in which the admitted class size is an exact equality constraint.} These schools promise great opportunities to their students, but---due to perceived capacity constraints---limit their acceptances to students that they deem to have high potential 
in academics, athletics, creative endeavors, or leadership and service~\citep{espenshade2013no}. They typically attempt to identify these students through a combination of standardized tests, high school grades, letters of recommendation, personal essays, and extracurricular activities~\citep{zwick2002fair,espenshade2013no}. 

The question is whether each of these components, and the application as a whole, allows the schools to assess individuals from different backgrounds effectively and `fairly,' including students from different racial, ethnic, and socioeconomic groups. {Implicitly, the debate concerns how to \textit{design} an admission policy to aid fair and efficient decision-making, in terms of both deciding which information to collect from applicants and how to use this information.  
Our exposition focuses on the context of college admissions; however, our model and the questions we ask are more broadly applicable to other settings of information design and fair decision-making in capacity-constrained settings, 
such as labor markets, award committees, and {social welfare programs}.\footnote{{Our model applies to selection problems where there is a trade-off in the value of additional information and the fraction of applicants who can provide it. For example, in means-testing welfare programs, requiring long forms might help in better targeting benefits but  might also discourage eligible recipients from applying~\citep{welfare2021}.}}
In each of these cases, the decisions are being made on limited information but have far-reaching consequences for employment or education opportunities. Thus, it is important to analyze these policies and their potential disparate impact across different groups of applicants.}

\parbold{Background.} {A high-profile debate has surrounded the use of standardized testing for admissions, in which social scientists and education experts have highlighted specific fairness concerns.}
Test critics argue that tests exhibit racial gaps \citep{reardon2011widening} and reinforce inequality in higher education \citep{reeves2017race}. \citet{espenshade2013no} find that only 8\% of lower-income, compared to 78\% of high-income students, use a test preparation service. %
The testing process is expensive and time-consuming; \cite{hyman2016act} finds that
``for every ten poor students who score college-ready on the ACT or SAT, there are an additional five poor students who would score college-ready but who take neither exam'' and so cannot apply to colleges that require it.\footnote{{After UC Berkeley eliminated GRE requirements, 
``while overall graduate applications have increased 19 percent when compared to [the 2019-2020 cycle], the number of under-represented minority (URM) doctoral applicants increased by 42 percent and URM applicants to academic master’s programs increased by 82 percent'' \citep{berkeley2021}. }}  \cite{rothstein2004college}, correcting for selection biases, finds that the SAT correlates with high socioeconomic status, and that its orthogonal predictive power and thus weight given to it in an optimal predictor is low.

On the other hand, supporters of testing argue that it is ``a systematic means of collecting information,'' thereby contributing to decision-making when used appropriately~\citep{phelps2005defending}. Some supporters claim that tests actually \textit{help} schools evaluate under-represented minorities; in the absence of standardized testing, ``a capable student from a little-known school in the South Bronx may be more challenging to evaluate,'' further benefiting students from privileged---and historically familiar---backgrounds~\citep{bellafante_2020}. A report released by University of California explicitly uses the language of precision and predictive power of test scores compared to other features: ``The predictive power of the standardized test scores is higher for those student groups who are under-represented [\dots] Thus, consideration of test scores allows campuses to select those students from under-represented groups who are more likely to earn higher grades and to graduate on time [\dots] One implication is that consideration of test scores allows greater precision when selecting from [under-represented minority] populations''~\citep{berkeleyreport}. MIT in 2022 reinstated the SAT, highlighting that their 
    ``research shows standardized tests help us better assess the academic preparedness of all applicants, and also help us identify socioeconomically disadvantaged students who lack access to advanced coursework or other enrichment opportunities that would otherwise demonstrate their readiness'' \citep{mit2022}. {Other application components such as recommendation letters~\citep{dutt2016gender} and application essays~\citep{alvero_21} may also be unreliable.}\footnote{{For example, letter writers use different language to describe women and other under-represented groups, giving weaker recommendations~\citep{dutt2016gender}, and  application essays have a stronger correlation to reported household income than do SAT scores~\citep{alvero_21} (although they are not necessarily differentially scored).}} A school that does not consider test scores must rely more heavily on these components.

\parbold{Research questions.} The competing claims from critics and supporters largely center around two issues: \textit{access} and \textit{information}. 
We develop a model to capture these arguments in favor of and against dropping test scores and formalize the underlying trade-off. 
The model considers a Bayesian school that wishes to admit students based on their skill level, which we refer to as ``academic merit,'' and also values the ``diversity'' of the admitted class. Not every student applies to a school that requires testing---they may face group-dependent barriers or costs to applying. The school admits applicants to meet a capacity constraint and tries to maximize the average academic merit of the accepted cohort. However, it has imperfect knowledge of the students skills and instead must rely on 
noisy and potentially biased signals, one of which is the test score. The school decides whether to require the test score; the decision affects both who applies and how the school evaluates applicants.

We then provide a framework for evaluating potential trade-offs in these decisions. In particular, alongside the \textit{academic merit} objective, we analyze two fairness notions: \textit{diversity} and \textit{individual fairness}. The former captures group-level disparities. The latter quantifies disparities in individual opportunities, by measuring the difference in the admissions probability between two individuals of equal skill but different demographic groups. We focus on the trade-off between two effects:

\begin{description}
	\item[Differential informativeness.] Colleges often have better information---through, e.g., familiar letter writers and transcripts---on students from privileged backgrounds, and so can better estimate their true academic merit. Standardized testing reduces this measurement gap, and so especially helps identify well-qualified, non-traditional students. 

	\item[Applicant pool composition due to disparate access and strategic behavior.] Some students---especially those from disadvantaged backgrounds---either do not take standardized tests or do not report their scores,\footnote{{A University of California report on testing  states that under-represented students might be discouraged from applying based on their score, even if their score would be competitive~\citep{berkeleyreport}.}} due to cost and other exogenous access barriers. Without a test score, students cannot apply to a school with a test requirement, even if they are well-qualified. Dropping the requirement thus expands the applicant pool {but also alters its composition at different rates across groups.} 

    We further study when applicant composition results from \textit{strategic}  decisions by students, who choose whether to pay testing 
    costs, as a function of their other features.

\end{description}

\parbold{Contributions.} 
Given these effects, we study: \textit{Under what settings of informativeness and disparate access should standardized testing be dropped, if a college values both diversity and academic merit? Furthermore, what is the effect on these metrics when students can overcome disparate application costs, i.e., when students are strategic and schools may differ in their testing requirements?} 
To the best of our knowledge, our paper is the first theoretical study examining the impact of eliminating testing requirements in college admissions.

Modeling-wise, we introduce a Bayesian model that extends the classic statistical discrimination theory by \cite{phelps1972statistical} to include {multiple} application components, access assymetries to some feature and potentially strategic student behavior and multiple schools (see Section~\ref{sec:related} for a more detailed comparison). Our multi-feature model   allows us to study the \textit{design} of the information structure used in a selection process, and provide a testable framework for reasoning about how the new feature would interact with the current set of features, including when applicants can make strategic decisions. More broadly, we thus believe that our work provides a useful conceptual framework of independent interest, for studying emerging problems in fair decision-making and public policy. 

From a technical perspective, we formalize a trade-off between informativeness and access, two basic arguments in favor of and against the inclusion of a given feature, and show how the set of features required influences the admitted class's academic merit and diversity, through these competing effects.
{Our main technical insight shows that differences in the total \textit{variance} of features lead to information disparities across groups: even though the school manages to correct for the existing mean bias in the features of different groups, it is generally impossible to correct for variance---this variance effect is thus central when considering the set of features to use.} 
We characterize the settings where dropping test scores introduces a trade-off between diversity and academic merit and where it simultaneously improves or worsens all objectives. 

We further extend the model to consider the effect of students' strategic test-taking behavior and two schools simultaneously admitting students. Students can choose to pay (potentially heterogeneous) costs to take the test and apply to a school that requires it. At equilibrium,  students self-select to apply to a test-based school if their perceived probability of admission outweighs their relative cost-to-valuation ratio. We find that such strategic behavior disproportionately affects the applicant pool composition but not always at the expense of the group facing higher test costs. Additionally, in the case of two schools, where only the top school requires the test, we uncover an interesting discontinuity in the students' self-selecting behavior, which in turn leads to a potential mismatch between academic merit and the ranking of the school.
More broadly, we analyze schools' strategic incentives for whether to require the test, given the behavior of strategic students and the behavior of the other school---here, even a school that is more preferred by all applicants has best response strategies that differ based on the policies of the other school. 

Finally, we use our model to perform calibrated simulations, using student application and transcript data from the University of Texas at Austin. 
Our results establish that there exist practical settings both in which dropping testing concurrently worsens or improves all metrics, and that such effects especially depend on the strategic behavior of potential applicants. Thus, our primary takeaway for practice
is that the decision to drop testing cannot be made without jointly considering the interaction between the information provided by other features \textit{relative to test scores} and how dropping the test requirement affects the applicant pool composition. This interaction between information and access is complex.

\parbold{Organization.}  Section~\ref{sec:related} discusses the related literature. Section \ref{sec:baseline_model} introduces our baseline model. Section \ref{sec:estimatesintuition} provides intuition on the effect of informativeness and test access in our model. Section \ref{sec:single} formalizes a trade-off between informativeness and access when students may face access barriers to taking the test. Section \ref{sec:strategic} extends the model to include students' strategic test-taking behavior and two schools.  \Cref{sec:conclusion} concludes. Proofs, additional results, and simulations with UT Austin data are in the Electronic Companion.
\subsection{Related Work}
\label{sec:related}

Our work broadly relates to the study of discrimination and admissions in the economics and fair machine learning and operations communities.

\smallskip
\noindent\textbf{Economics of discrimination.} 
In economics, there are two lines of  related work: discrimination theories~\citep{becker1957economics}, especially statistical discrimination~\citep{ arrow1971theory, phelps1972statistical}, as well as theoretical models of affirmative action in student admissions~(e.g., \cite{ chan2003does, abdulkadirouglu2005college, avery2006cost, epple2006admission, fu2006theory, chade2014student, kamada2019fair,   fershtman2020soft}). There is also an important line of empirical work investigating the implications of affirmative action (e.g., \cite{arcidiacono2011does, backes2012affirmative,  bagde2016does,bleemer2022affirmative}) and race-neutral alternatives such as top percent plans and holistic reviews (e.g., \cite{long2004race,   kapor2020distributional,ellison2021efficiency, bleemer2023affirmative}).

From a conceptual viewpoint, our work is most closely related to the statistical discrimination theory of \cite{phelps1972statistical}, which---surprisingly---is rarely adopted in the admissions literature (except \citet{ kannan2019downstream, emelianov2020fair}). 
\citet{emelianov2020fair} 
 use Phelps' model
to study how differential variance of a single feature affects the admissions decisions of a school that greedily admits students with the highest test scores, without factoring in the differential variance. 

Both our work and \citet{emelianov2020fair} adopt the seminal theory of statistical discrimination  \citep{phelps1972statistical}. 
However, our work moves beyond \cite{emelianov2020fair} and \cite{phelps1972statistical}, as well as the standard matching-based approach of other theoretical models  (e.g., \cite{chade2014student, abdulkadirouglu2005college,karni2021fairness}),  in several ways. 
To our knowledge, 
 our paper is the first to extend Phelps' model to {multiple} features with non-identical distributions and access asymmetries to some feature. We further combine such statistical discrimination with a model of strategic student behavior. These modeling contributions allow us to study the complex interactions between the test and several other factors, including the remaining application components, access barriers and test costs (that induce student strategic behavior). Furthermore, our multi-feature model allows the decision-maker to potentially remove a feature, thus enabling us to reason about policy changes such as dropping standardized testing in a tractable manner. On the other hand, \citet{emelianov2020fair} include an effort component: in their framework, candidates have the ability to increase the mean of their single feature at a quadratic cost. Their finding that affirmative action can enhance both diversity and academic merit arises from the balancing of average efforts across groups in certain equilibria.

\parbold{Fairness in machine learning and operations.} Recent machine learning work applies fairness notions to admissions and related allocation problems, studying implicit bias~\citep{kleinberg2018selection, emelianov2020fair, faenza2020impact}, downstream effects~\citep{kannan2019downstream}, grade signaling~\citep{immorlica2019access}, greenlining~\citep{borgs2019algorithmic}, school choice~\citep{allman2022designing}, bus scheduling~\citep{banerjee2019incorporating}, and classification algorithms~\citep{hu2019disparate, liu2020disparate}. More broadly, our work contributes to the emerging literature on fairness in operational contexts (e.g., \cite{bertsimas2011price, monachou2019discrimination,  baek2021fair, kallus2021fairness, manshadi2021fair,  cohen2022price, sinclair2022sequential}), especially with respect to equity in education~\citep{smilowitz2020use}.

A line of literature specializes on different types of barriers for students, including implicit bias \citep{faenza2020impact}  and when only one group can take the test multiple times \citep{niu2022best}. 
These barriers affect the treatment of applicants, but do not prevent students from even applying, as is our focus in our baseline model. In relation to our strategic setting, note that \cite{faenza2020impact} do not consider strategic students. \cite{niu2022best} allow students to decide whether to take the test twice or not, but their model does not include costs and students have only binary skill levels.

A follow-up  paper~\citep{liu2021test} extends our model to provide (im)possibility results under test-optional policies (see also \cite{dessein2023test}). \cite{castera2024correlation} also build upon our work to study disparities due correlations across two schools in how they evaluate a student.   
Using data from the Education Longitudinal Study of 2002, \cite{borghesan2022heterogeneous}  finds that banning the SAT leads to a small increase in the population of low-income students but has a negligible effect on under-represented minority students.

\section{Model}
\label{sec:baseline_model}
{We develop a model where the school can design their admissions procedure and, in particular, choose the information that it requires the applicants to submit.}

We consider 
a continuum of students and a  single {school}. A unit mass 
of students is applying to college. Each student belongs to a group $g \in \{A, B\}$, and the mass of students in group $B$ is $\pi$.
Each student has a latent (unobserved) \textit{skill level} $q$, Normally distributed according to $\mathcal{N}(\mu, \sigma^2)$ identically for each group, 
as well as a set of observed \textit{features} $\thetaset = (\theta_1, \ldots, \theta_K)$. Each $\theta_{k}$ is a noisy  function of $q$, i.e.,  $\theta_k =  q + \epsilon_{k}$, $k = 1, \ldots, K$,  with Gaussian noise $\epsilon_{k}~\sim~N(\mu_{g k}, \sigma_{g k}^2)$. The distribution of noise $\epsilon_{k}$ is feature- and group-dependent, but each $\epsilon_{k}$ is drawn independently across features and students. Features represent application components like recommendation letters, grades, and test scores.

Students differ in their \textit{access} to the features. {When a student does not have access to feature $K$, then they cannot apply to a school that requires it.}  In our primary model, only a fraction $\gamma_g$ of group $g \in \{A, B\}$ has access to the \textit{full} set of features $\full = \{1, \dots, K\}$, i.e., $\thetaset = (\theta_1, \dots, \theta_K)$; the remainder only has access to the \textit{subset} $\subb = \{1, \dots, K-1\}$. 
Whether a student has access to all features is independent of their skill $q$ and conditionally independent of the feature values given  group membership. 
In \Cref{sec:strategic} we consider a setting where students are \textit{strategic} about whether to take the test.

\smallskip
\noindent\textbf{Admissions policy.} 
{We now turn to the question of interest: the design of the admissions policy.} The school admits a mass $C<1$  to fill its capacity. 
The admissions procedure consists of a feature requirement policy, skill estimation, and selection given estimates. 

The feature requirement policy choice is whether to require the {full} set of features or the {subset}. If it requires the full set, then students without full access cannot apply.
If it only requires the subset, then it observes only  that subset for each student. 
Then, given a student's features $\thetaset$, the school estimates a \textit{perceived skill} $\tilde q$ of their true skill $q$. 
The school is Bayesian, knows the distribution of $q$ and the (group-dependent) distributions of $\epsilon_k$, and is \textit{group-aware}: it can use the student's group membership in constructing its estimate. 
The resulting Bayesian estimate is the `best' one can do, given the available information:
$$\tilde q(\thetaset, g) \triangleq \bbE[ q \mid \thetaset, g].$$

After estimating the skill level of each applicant, the school selects  the mass $C$ of students with the highest skill estimates $\tilde q$. This selection process induces a \textit{threshold} $\tilde q^*_S$ such that applicants with perceived skill above the threshold are admitted.

Holding the estimation and selection policies fixed, the \textit{admissions policy} $P_S$ is determined by the required feature set $S$.

\smallskip
\noindent\textbf{Academic merit and fairness metrics.}
 We evaluate a policy $P$ using three metrics on the admitted class. Let $Y\in\{0,1\}$ denote the admission decision for a given student; $Y=1$ means that the student is admitted.

\textbf{Academic merit} $\E[q \mid  Y=1, P] $, the expected skill level of accepted students. We also use group-specific measures, $\E[q \mid  Y=1, g, P] $.

\textbf{Diversity level} $\tau (P)$, the fraction of students admitted that are of group $B$. Policy $P$ satisfies \textit{group fairness} if and only if the  
fraction matches the population: $\tau (P) = \pi$.

\textbf{Individual fairness gap} $I(q; P)$, the difference in admissions probability between two students of identical true skill $q$, one belonging to group $A$ and the other to group $B$:
\begin{equation*}
	I(q ; P) \triangleq  \prob\left(Y=1\mid q, A, P\right) - \prob\left(Y=1 \mid q, B, P\right).
\end{equation*}
Policy $P$ satisfies \textit{individual fairness} if and only if the gap is $0$ for all skill levels $q$.

\smallskip
\noindent\textbf{College admissions and relationship to practice.} 
While our model and results are more general, our exposition primarily considers college admissions in the United States and the debate to drop standardized testing {as our main example}. We focus on how policies differentially affect {privileged} (group $A$) versus {disadvantaged} (group $B$) students.

We refer to the potentially inaccessible last feature $\theta_K$ as the \textit{test score} of a student in a common standardized exam like the SAT or ACT, and assume that more privileged students have access to testing; 
as~\citet{hyman2016act} notes, many well-qualified disadvantaged students do not have access to standardized tests and so cannot apply to schools that require them.
On the other hand, as the~\citet{berkeleyreport}, \citet{bellafante_2020}, and \citet{mit2022} posit, without testing it may be  especially difficult to evaluate students from non-traditional backgrounds, as colleges instead rely on transcripts and recommendations from familiar (privileged) high schools. This aspect could be captured---as we do for our simulations---by considering the first $K-1$ features as substantially more informative for group $A$ ($\sigma_{Ak} < \sigma_{Bk}$), with a smaller informativeness discrepancy for the test score. 

The model's focus differs from feature bias as traditionally understood, if a feature systematically under-values one group; e.g., weaker letters of recommendation for under-represented students. In our model, the school fully corrects for such bias (cancelling out $\mu_{gk})$; in practice, schools interpret signals in context, for example, by benchmarking how many AP courses are offered by a student's school.  
In contrast, differential informativeness (a function of $\sigma_{gk})$ and disparate access ($\gamma_g$) are harder to correct at admissions time. The former represents an information-theoretic limit to identifying the most qualified students, and the latter prevents some students from even applying. These effects cannot even be completely mitigated using affirmative action, 
which is  insufficient in identifying qualified disadvantaged students. We study affirmative action in Electronic Companion~\ref{sec:affirmative_action}.

Without loss of generality, we assume that the features are less informative for group $B$ than they are for group $A$. Specifically, under policy $P_S$ let \textit{unequal precisions} between groups mean $\sum_{k\in S} \sigma_{Ak}^{-2} > \sum_{k \in S} \sigma_{Bk}^{-2}$, and  \textit{equal precision} mean  $\sum_{k\in
 S} \sigma_{Ak}^{-2} = \sum_{k\in S} \sigma_{Bk}^{-2}$. In settings with barriers, we assume that group $A$ also has more access to the test, i.e., $\gamma_A \geq \gamma_B$.\footnote{We further assume that, even in the presence of barriers, the market is \textit{over-demanded}, 
 i.e., $C< (1-\pi)\gamma_A +\pi \gamma_B$.} Finally, the school is \textit{selective} with capacity $C<1/2$. These assumptions are for exposition; our model's tractability allows us to solve analogously for the omitted cases.

\Cref{sec:strategic} extends our model to one in which students make a strategic decision to take the test as a function of their admissions probability and the test cost that differs across groups, in both single- and two-school settings. 

\section{Intuition: The role of differential informativeness} 
\label{sec:estimatesintuition}

\begin{figure}
	\centering
	\resizebox{0.7\columnwidth}{!}{
	\begin{tikzpicture}
		\begin{axis}[
			no markers, domain=0:10, samples=100,
			axis lines*=left, xlabel=$\tilde{q}$, ylabel=\empty,
			axis y line=none,
			every axis x label/.style={at=(current axis.right of origin),anchor=west},
			height=5cm, width=12cm,
			xtick=\empty, ytick=\empty,
			enlargelimits=false, clip=false, axis on top,
			grid = major,
			legend style={at={(0.75,0.8)},anchor=west, font=\small},
			legend cell align={left}
			]

			\addlegendimage{line width=0.3mm, dashed, color=black}
			\addlegendentry{$q$}
			\addlegendimage{line width=0.3mm,  color=green!50!black}
			\addlegendentry{$\tilde q \mid A, P_S$}
			\addlegendimage{line width=0.3mm,  color=magenta!80!black}
			\addlegendentry{$\tilde q \mid B, P_S$}

			\addplot [fill=powderblue!70, draw=none,domain=6.5:10, name path =D] {gauss(5,1)} \closedcycle;
			\addplot [very thick,black, dashed, name path=C] {gauss(5,1.5)};
			\addplot [very thick,green!50!black, name path=A] {gauss(5,1)};
			\addplot [very thick,magenta!80!black, name path=B] {gauss(5,0.7)};

			\node[below] at (axis cs:6.5, 0)  {$\tilde{q}^*_{S}$};
			\node[below] at (axis cs:5, 0)  {$\mu$};

		\end{axis}
	\end{tikzpicture}
	}
	\caption{The distribution of skill estimates $\tilde q$ at an aggregate level for each group, as it depends on the \textit{informativeness} of the features. When the application components are more precise for one group (group $A$, in green), the variance in the skill estimates of their group is \textit{higher}---there is more signal for individuals to demonstrate that their skill is different than the mean. Then, more group $A$ students have high skill estimates above threshold $\tilde q^*_S$, and thus more are admitted. This effect occurs even though the true skill $q$ distribution is identical across groups. If dropping the test causes such differential informativeness, then doing so may worsen both fairness and academic merit (estimated skill of admitted students). Figure~\ref{fig:2destimates} illustrates how the differential informativeness interacts with \textit{disparate access}, due to which dropping test scores may improve all objectives. 
}
	\label{fig:normals_intuition}
\end{figure}

We begin our analysis in Section~\ref{sec:skillestimation} by deriving how a Bayesian-optimal school estimates the students' skill level.
Then, 
we preview our main results, illustrating how the relationship between skill estimates and true skills of the applicant pool depends on the informativeness  of features and the access barriers, with implications for how admissions differ by group.

\subsection{School's optimal Bayesian estimation procedure}

\label{sec:skillestimation}

Our Bayesian school---with knowledge of the model's feature noise means and variances---observes each student's features and group membership and estimates their expected skill level, using properties of Normal distributions. Repeating this process for all applicants induces the following distribution of skill level estimates for each group. 

\begin{restatable}[Estimated skill]{lemma}{lemperceivedskilldisttribution}
	\label{lemma:perceived_skill_distribution}
	Consider a school that uses feature set $S \subseteq \{1, \dots, K\}$ for each applicant. 
 {Then, the perceived skill of an applicant in group $g\in\{A,B\}$ with feature values $\thetaset = (\theta_k)_{k\in S}$ is:} 
 \begin{equation}
 \label{eq:skill_estimate}
 	\tilde q(\thetaset, g) = 
 	\frac{\mu \sigma^{-2} + \sum_{{k\in S}}(\theta_{k} - \mu_{gk})\sigma_{gk}^{-2}} {\sigma^{-2} +\sum_{{k\in S}} \sigma_{gk}^{-2}}.
 \end{equation}
Further, the skill level estimates for students in group $g$ are Normally distributed:
\begin{equation}
\label{eq:estimate_distribution_group}
	{\tilde q \mid  g, P_{S}} \sim \mathcal{N}  \left( \mu, \sigma^{2}\left[\frac{\sum_{{k\in S}} \sigma_{gk}^{-2}  }{\sigma^{-2} +\sum_{{k\in S}} \sigma_{gk}^{-2}}\right]\right).
\end{equation}
\end{restatable}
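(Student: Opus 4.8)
The plan is to prove the two claims in sequence: first the closed form for $\tilde q(\thetaset,g)$ in \eqref{eq:skill_estimate}, then the distribution of $\tilde q \mid g, P_S$ in \eqref{eq:estimate_distribution_group}. For the first part, I would use the standard conjugate Gaussian update. The prior is $q \sim \mathcal{N}(\mu,\sigma^2)$, and conditional on $q$ and group $g$, each observed feature satisfies $\theta_k = q + \epsilon_k$ with $\epsilon_k \sim \mathcal{N}(\mu_{gk},\sigma_{gk}^2)$ independent across $k \in S$. Equivalently, the shifted observation $\theta_k - \mu_{gk}$ is a Gaussian signal of $q$ with mean $q$ and variance $\sigma_{gk}^2$. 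The plan is to write the log-posterior as a sum of quadratics in $q$ — the prior term $-(q-\mu)^2/(2\sigma^2)$ plus the likelihood terms $-(\theta_k-\mu_{gk}-q)^2/(2\sigma_{gk}^2)$ — complete the square in $q$, and read off the posterior mean as the precision-weighted average $\bigl(\mu\sigma^{-2} + \sum_{k\in S}(\theta_k-\mu_{gk})\sigma_{gk}^{-2}\bigr)\big/\bigl(\sigma^{-2}+\sum_{k\in S}\sigma_{gk}^{-2}\bigr)$, which is exactly \eqref{eq:skill_estimate}. Since $\tilde q(\thetaset,g) = \mathbb{E}[q\mid\thetaset,g]$ is by definition the posterior mean, this establishes the formula. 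I would also note the posterior variance $\bigl(\sigma^{-2}+\sum_{k\in S}\sigma_{gk}^{-2}\bigr)^{-1}$ for later use, though it is not needed for the statement.

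For the second part, I would treat $\tilde q$ as a random variable induced by the randomness in $\thetaset$ (over the population of group-$g$ applicants), i.e. take expectations over $q \sim \mathcal{N}(\mu,\sigma^2)$ and $\epsilon_k \sim \mathcal{N}(\mu_{gk},\sigma_{gk}^2)$. From \eqref{eq:skill_estimate}, $\tilde q$ is an affine combination of the jointly Gaussian variables $\{\theta_k\}_{k\in S}$ (since $\theta_k = q+\epsilon_k$ is Gaussian and these are jointly Gaussian), hence $\tilde q$ is Gaussian; it remains to compute its mean and variance. For the mean: $\mathbb{E}[\theta_k - \mu_{gk}] = \mathbb{E}[q] = \mu$, so the numerator has expectation $\mu\sigma^{-2} + \mu\sum_{k\in S}\sigma_{gk}^{-2} = \mu(\sigma^{-2}+\sum_{k\in S}\sigma_{gk}^{-2})$, giving $\mathbb{E}[\tilde q]=\mu$. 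For the variance, the cleanest route is the law of total variance conditioning on $q$: $\var(\tilde q) = \mathbb{E}[\var(\tilde q\mid q)] + \var(\mathbb{E}[\tilde q\mid q])$. Writing $P_S := \sum_{k\in S}\sigma_{gk}^{-2}$ and $T := \sigma^{-2} + P_S$, we have $\tilde q = \frac{1}{T}\bigl(\mu\sigma^{-2} + \sum_k (q+\epsilon_k-\mu_{gk})\sigma_{gk}^{-2}\bigr)$, so conditional on $q$ the only randomness is in the $\epsilon_k$, giving $\var(\tilde q\mid q) = \frac{1}{T^2}\sum_k \sigma_{gk}^{-4}\sigma_{gk}^2 = \frac{1}{T^2}\sum_k \sigma_{gk}^{-2} = \frac{P_S}{T^2}$ (a constant in $q$), and $\mathbb{E}[\tilde q\mid q] = \frac{1}{T}(\mu\sigma^{-2} + q P_S)$, so $\var(\mathbb{E}[\tilde q\mid q]) = \frac{P_S^2}{T^2}\sigma^2$. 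Adding: $\var(\tilde q) = \frac{P_S}{T^2} + \frac{P_S^2\sigma^2}{T^2} = \frac{P_S(1+P_S\sigma^2)}{T^2} = \frac{P_S\sigma^2(\sigma^{-2}+P_S)}{T^2} = \frac{P_S\sigma^2}{T} = \sigma^2\cdot\frac{P_S}{\sigma^{-2}+P_S}$, which matches \eqref{eq:estimate_distribution_group}. An alternative derivation I could mention: $\tilde q = \mathbb{E}[q\mid\thetaset,g]$ so by the law of total variance $\var(q) = \var(\tilde q) + \mathbb{E}[\var(q\mid\thetaset,g)] = \var(\tilde q) + T^{-1}$, hence $\var(\tilde q) = \sigma^2 - T^{-1} = \sigma^2 P_S/T$; this is slicker and uses the posterior variance computed in part one.

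I do not anticipate a genuine obstacle here — this is a standard conjugate-Gaussian computation — but the step requiring the most care is being explicit about what probability space $\tilde q \mid g, P_S$ lives on (the marginal over a random applicant in group $g$, integrating out both $q$ and the noise), since $\tilde q$ plays a dual role as a posterior-mean estimator and as a population random variable. I would state this clearly at the outset. The second most error-prone point is bookkeeping the $\sigma_{gk}^{-2}$ versus $\sigma_{gk}^{2}$ factors when computing $\var(\tilde q\mid q)$; the weights in the affine combination are $\sigma_{gk}^{-2}/T$, and each $\epsilon_k$ contributes variance $\sigma_{gk}^2$, so the per-feature contribution is $(\sigma_{gk}^{-2}/T)^2\sigma_{gk}^2 = \sigma_{gk}^{-2}/T^2$, which is what makes the sum collapse to $P_S/T^2$. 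Given the slick alternative via posterior variance, I would likely present that as the main argument and relegate the direct total-variance computation to a remark.
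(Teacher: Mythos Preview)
Your proposal is correct and essentially matches the paper's proof: the paper derives \eqref{eq:skill_estimate} by the standard conjugate-Gaussian update (stated tersely as an inductive application of the Normal--Normal conjugacy), and establishes \eqref{eq:estimate_distribution_group} by exactly your law-of-total-variance computation conditioning on $q$, obtaining $\var(\tilde q) = P_S/T^2 + \sigma^2 P_S^2/T^2 = \sigma^2 P_S/T$. Your alternative route via $\var(q) = \var(\tilde q) + T^{-1}$ is a nice shortcut the paper does not use, but it is a cosmetic difference rather than a different approach.
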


As \Cref{eq:skill_estimate} shows,\footnote{Note that \Cref{eq:skill_estimate} is a direct generalization of \cite{phelps1972statistical} from a single to $K$ features.} when the school estimates the skill level $\tilde q (\thetaset, g)$ of an individual and knows the skill and feature noise distributions, it perfectly cancels out the mean bias terms $\mu_{gk}$ such that they do not affect estimation.\footnote{\cite{berkeleyreport}: ``test scores are considered in the context of comprehensive review, which in effect re-scales the scores to help mitigate between-group differences.''} {The school also re-weights each feature $\theta_k$ proportionately to the relative informativeness of this feature for group $g$:  the less informative a feature is for a group (smaller \textit{precision} $\sigma_{gk}^{-2}$), the less it contributes to estimates. Thus, due to differences in $\sigma_{gk}^{-2}$ across groups, two students from different social groups with the same features $\thetaset$ are evaluated differently. However, even in this idealized scenario, the school cannot fully correct for the variance terms $\sigma^2_{gk}$;
two students with same skill $q$ but in different groups have different skill estimates in expectation.} 

These individual estimation effects accumulate at the group level (\Cref{eq:estimate_distribution_group}) {and drive our results on disparities}. The school {knows} that $q \sim \mathcal{N}(\mu,\sigma^2)$ is identically distributed across social groups. However, as illustrated in Figure~\ref{fig:normals_intuition}, the distribution of its skill estimates ${\tilde q \mid  g, P_{S}}$  
can differ across groups. For each group, the skill estimates are regularized toward the mean skill level $\mu$. The regularization strength depends on the total precision $\sum_{{k\in S}} \sigma_{gk}^{-2}$: the larger the total precision for a group is (or the more informative its features are), the higher the variance in the estimated skills for that group is. 
In Figure~\ref{fig:normals_intuition}, group $A$ has larger total precision and for any value $\bar q>\mu$, there is a larger mass of students from group $A$ than $B$ with estimated skill higher than $\bar q$. Thus a school with capacity $C<\frac{1}{2}$ admits more students from group $A$.

\subsection{Intuition for the impact of admissions policy}
\label{sec:intuition}
\begin{figure*}[t]
	\begin{subfigure}[b]{0.325\textwidth}
		\centering
		\includegraphics[width=1\linewidth]{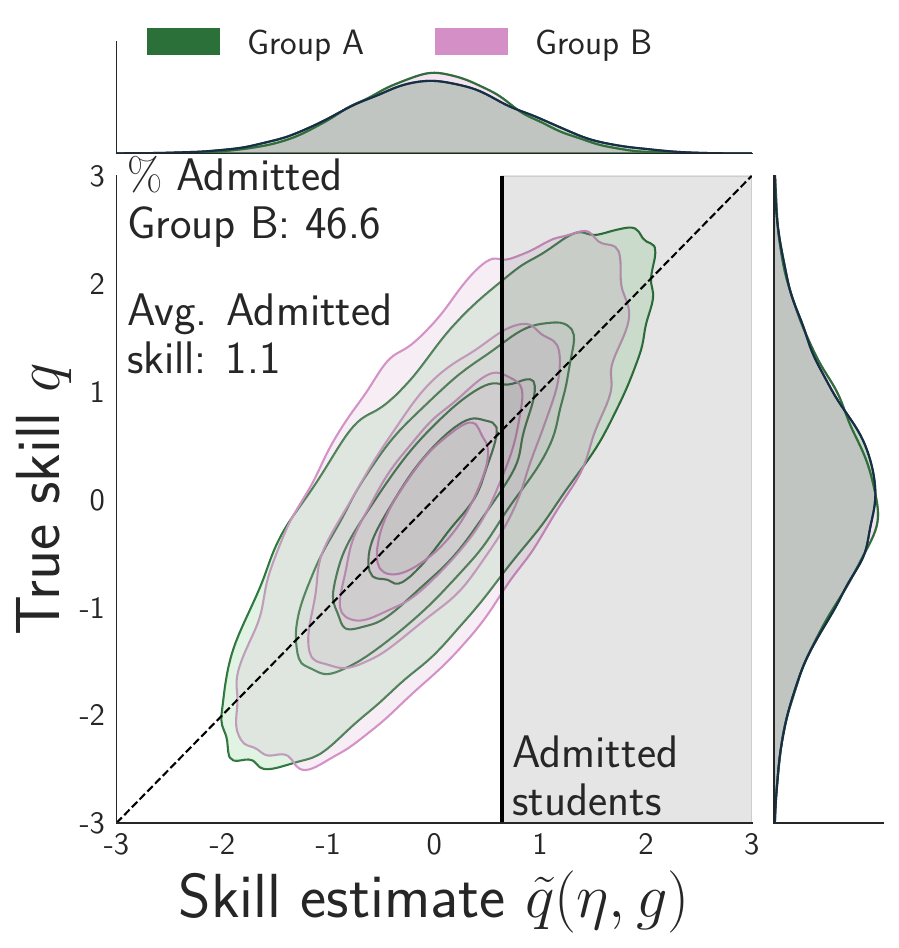}
		\caption{\small With test, no barriers \normalsize}
		\label{fig:2dcontour_ideal}
	\end{subfigure}
	\hfill
	\begin{subfigure}[b]{0.325\textwidth}
		\centering
		\includegraphics[width=1\linewidth]{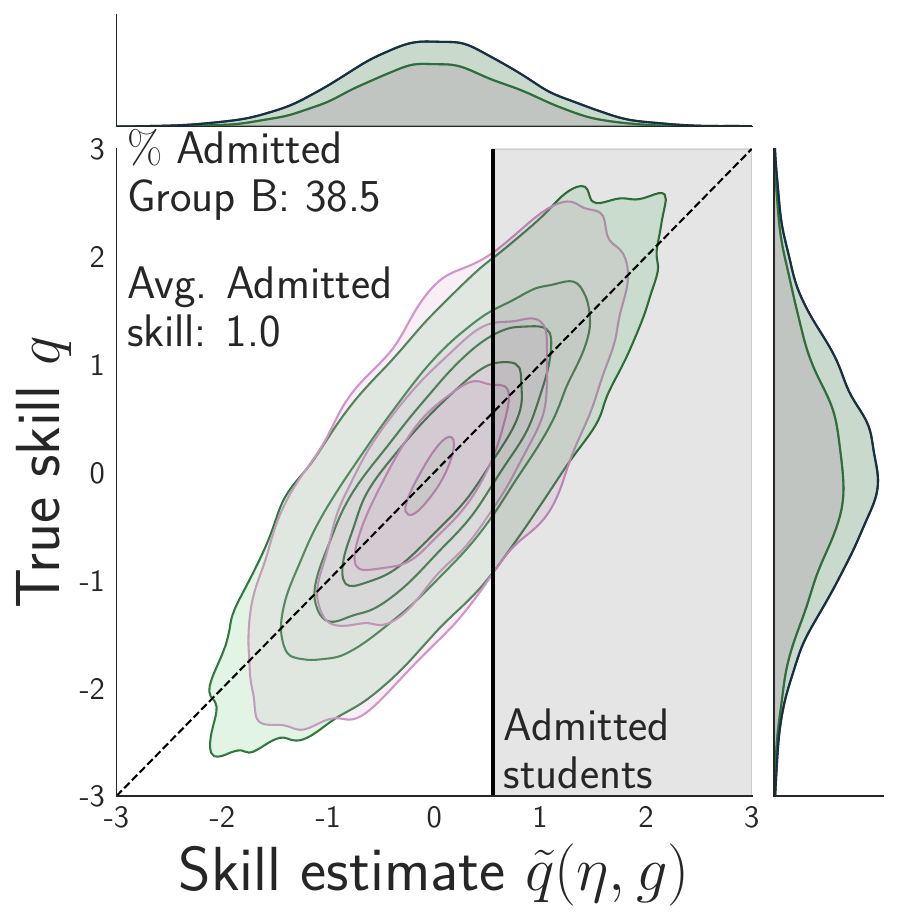}
		\caption{\small With test, inaccessible to some \normalsize}
		\label{fig:2dcontour_barriertest}
	\end{subfigure}
	\hfill
	\begin{subfigure}[b]{0.325\textwidth}
		\centering
		\includegraphics[width=1\linewidth]{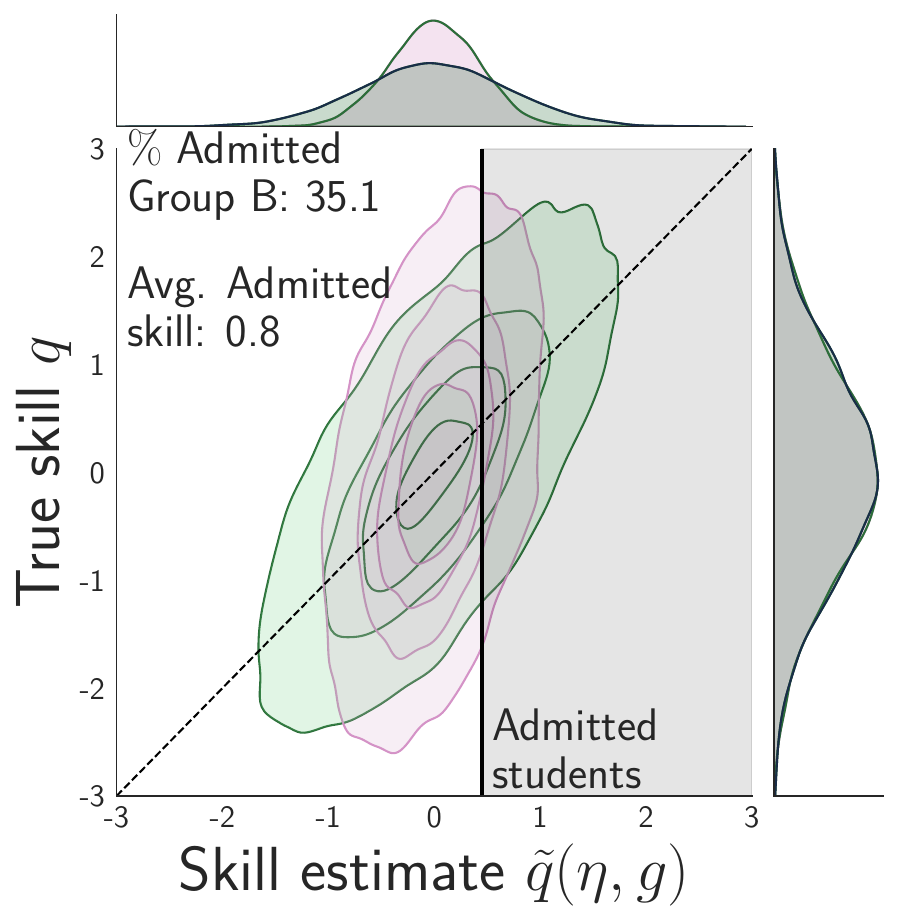}
		\caption{\small Without test \normalsize}
		\label{fig:2dcontour_barrierfree}
	\end{subfigure}
	
	\caption{Skill vs estimate joint distribution for each group.  Above and to the right of each joint distribution we plot the corresponding marginal distributions---e.g., the plot on the right of each joint distribution corresponds to the true skill distribution, which is equal across groups. The dashed diagonal lines correspond to perfect estimation. Figure~\ref{fig:2dcontour_ideal} represents a world without access barriers and when the features are approximately equally informative across groups. Figure~\ref{fig:2dcontour_barriertest} illustrates the consequences of requiring the test when group B (in pink) has access barriers: fewer can apply and so can be admitted. Figure~\ref{fig:2dcontour_barrierfree} illustrates potential consequences of dropping the test: the school may be unable to distinguish among group B applicants, leading to worse estimates (rotated away from diagonal) and fewer admitted from that group. Full parameters  
    are in Electronic Companion~\ref{appsec:simparams}.
    }

	\label{fig:2destimates}
\end{figure*}

Before proceeding to our main results, we first illustrate our primary insight regarding the trade-off between informativeness and the applicant pool size.
In Figure~\ref{fig:2destimates}, each sub-figure shows, for one scenario, the joint distribution between true skill $q$ and the corresponding skill estimates $\tilde q$ for each group, along with the respective marginal distributions.
Since both groups have identical true skill distributions, the joint distributions would  \textit{ideally} be identical for the two groups (and perfectly aligned along the diagonal) and group $B$ would comprise a proportion $\pi$ of the admitted class.

Consider the case where the potentially dropped feature (the ``test score") is equally informative for both groups, whereas the remaining features are more informative for group $A$.
Figure~\ref{fig:2dcontour_ideal} illustrates the scenario when there are no access barriers to the test.  Due to the differential informativeness induced by the other features, (slightly) more group $A$ students are admitted: the college can better estimate their true skill, as illustrated by the group $A$ joint distribution being closer to the diagonal.
 Figure~\ref{fig:2dcontour_barriertest} illustrates the consequences of requiring test scores in the presence of unequal access levels ($\gamma_A=1$ and $\gamma_B=\frac{2}{3}$). \textit{Among those who apply}, the college can estimate their true skill as well as it could in Figure~\ref{fig:2dcontour_ideal}. However, fewer group $B$ students can apply, as indicated by the smaller marginal count histogram, and so fewer are admitted.
Figure~\ref{fig:2dcontour_barrierfree} illustrates a scenario where the school removes the test score. Estimates for both groups are worse, as reflected in the joint distributions being further from the perfect estimation diagonal. However, skill estimates for group $B$ students are especially degraded as their other features may be less informative, and so they make up a smaller proportion of the admitted class. Whether the effect in Figure~\ref{fig:2dcontour_barriertest} or \ref{fig:2dcontour_barrierfree} dominates depends on the parameter context.

\section{Analysis of the baseline model}%
\label{sec:single}

We now apply the insights from Section~\ref{sec:estimatesintuition} to our baseline admissions model. We show that differences in informativeness alone (Section~\ref{sec:baseline_policy}) generate disparities in academic merit, diversity, and individual fairness. We then compare admissions \textit{with} and \textit{without} a given feature (Section~\ref{sec:dropping_tests}), showing that under full and equal access to testing, removing information may further decrease both fairness and academic merit. Under unequal access, however, a trade-off emerges between test-imposed barriers and the value of the information the test provides. We characterize the school’s optimal testing policy accordingly.

\subsection{Informational effects of fixed testing policies}
\label{sec:baseline_policy}

In general, our fairness notions are not achievable, \textit{even though both groups have the same true skill distribution}. We  study how differential informativeness affects our three metrics.

\begin{restatable}[Metrics with a fixed policy]{proposition}{propgroupawarenoAA}
\label{prop:group_aware_noAA}
\label{prop:appendix_baseline}
Suppose that a selective school uses admissions policy $P_{S}$.
\emph{Group fairness} and \emph{individual fairness} fail except for equal precision, even   
in the absence of barriers. 
Given \textit{unequal precisions}:

\begin{itemize}
	\item [(i)] \emph{Diversity level}:  Group $B$ students are under-represented, i.e.,  $\tau(P_{S}) <\pi$. 
    Furthermore, a larger informativeness gap leads to decreased diversity:
    Fix group $B$ precision, $\sum_{k\in S} \sigma_{Bk}^{-2}$. Then, as group $A$ precision increases, the diversity level $\tau(P_{S})$ decreases.

	\item[(ii)] \emph{Individual fairness}: High-skilled group $B$ students are hard to target,  i.e., $I(q; P_{S}) >0$,
	iff $q  >  \tilde q^*_{S}+ \frac{\sigma^{-2}(\tilde q^*_{S} -\mu)}{\sqrt{\sum_{k\in S} \sigma_{Bk}^{-2}}  \sqrt{\sum_{k\in S} \sigma_{Ak}^{-2}}}$.

Increasing the informativeness gap  increases the individual fairness gap for high-skilled students: fix group $B$ precision, $\sum_{k\in S} \sigma_{Bk}^{-2}$; then as group $A$ precision increases, $I(q; P_{S})$ increases for ${q>\mu + \sigma \Phi^{-1}(1-C)}$, where $\Phi$ denotes the CDF of 
$\N(0,1)$.

	\item[(iii)] \emph{Academic merit}: 
    Admitted group $B$ students have lower academic merit than group $A$.

\end{itemize}
\end{restatable}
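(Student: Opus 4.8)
The plan is to base everything on \Cref{lemma:perceived_skill_distribution} plus one auxiliary distributional fact, and then treat the three items in turn. Write $P_g\triangleq\sum_{k\in S}\sigma_{gk}^{-2}$ and $w_g\triangleq P_g/(\sigma^{-2}+P_g)\in(0,1)$, so that \Cref{lemma:perceived_skill_distribution} gives $\tilde q\mid g,P_S\sim\N(\mu,\sigma^2 w_g)$; substituting $\theta_k=q+\epsilon_k$ into \Cref{eq:skill_estimate} gives the conditional law $\tilde q\mid q,g,P_S\sim\N\!\big(\mu+w_g(q-\mu),\,\sigma^2 w_g(1-w_g)\big)$. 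Next I would observe that, since the school fills capacity $C<\tfrac12$ by thresholding $\tilde q$ and every estimate distribution is centered at $\mu$, the induced threshold satisfies $\tilde q^*_S>\mu$: the left side of the capacity identity $(1-\pi)\Phi\!\big(\tfrac{\tilde q^*_S-\mu}{\sigma\sqrt{w_A}}\big)+\pi\Phi\!\big(\tfrac{\tilde q^*_S-\mu}{\sigma\sqrt{w_B}}\big)=1-C$ equals $\tfrac12$ at $\tilde q^*_S=\mu$ and is increasing in $\tilde q^*_S$. Under unequal precisions $P_A>P_B$, hence $w_A>w_B$ and $\sigma^2w_A>\sigma^2w_B$.

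For part (i), I would write $\tau(P_S)=\pi\,\prob(\tilde q>\tilde q^*_S\mid B)/C$. Because group $A$ has the larger-variance estimate distribution and $\tilde q^*_S>\mu$, it has strictly more mass above $\tilde q^*_S$, so $\prob(\tilde q>\tilde q^*_S\mid B)<\prob(\tilde q>\tilde q^*_S\mid A)$; as $C=(1-\pi)\prob(\tilde q>\tilde q^*_S\mid A)+\pi\prob(\tilde q>\tilde q^*_S\mid B)$ is a strict convex combination of the two, $\prob(\tilde q>\tilde q^*_S\mid B)<C$, giving $\tau(P_S)<\pi$. For the comparative static, fix $P_B$ and raise $P_A$: $\sigma^2w_A$ increases, so for every $t>\mu$ the mass $1-\Phi(\tfrac{t-\mu}{\sigma\sqrt{w_A}})$ of group $A$ above $t$ increases, and to keep the weighted sum equal to $C$ the threshold $\tilde q^*_S$ must rise (remaining above $\mu$); then $\prob(\tilde q>\tilde q^*_S\mid B)=1-\Phi(\tfrac{\tilde q^*_S-\mu}{\sigma\sqrt{w_B}})$, and hence $\tau(P_S)$, falls.

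For part (ii), the conditional law gives $\prob(Y=1\mid q,g,P_S)=1-\Phi(h_g(q))$ with, after simplification,
\[
 h_g(q)=\frac{\tilde q^*_S-\mu-w_g(q-\mu)}{\sigma\sqrt{w_g(1-w_g)}}=\frac{\sigma^{-2}(\tilde q^*_S-\mu)}{\sqrt{P_g}}+\sqrt{P_g}\,(\tilde q^*_S-q),
\]
so $I(q;P_S)=\Phi(h_B(q))-\Phi(h_A(q))>0\iff h_B(q)>h_A(q)$, and the identity
\[
 h_B(q)-h_A(q)=\big(\sqrt{P_A}-\sqrt{P_B}\big)\!\left[\frac{\sigma^{-2}(\tilde q^*_S-\mu)}{\sqrt{P_AP_B}}-(\tilde q^*_S-q)\right]
\]
shows (using $P_A>P_B$) that this holds exactly when $q$ exceeds $\tilde q^*_S$ by the displayed margin, which is the claimed threshold. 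For the comparative static in $P_A$ I would prove (a) $\prob(Y=1\mid q,B,P_S)=1-\Phi(h_B(q))$ depends on $P_A$ only through $\tilde q^*_S$, which rises with $P_A$ by part (i), so it decreases in $P_A$; and (b) $\prob(Y=1\mid q,A,P_S)=\Phi\!\big(\sqrt{P_A}\,(q-\tilde q^*_S)-\sigma^{-2}(\tilde q^*_S-\mu)/\sqrt{P_A}\big)$ increases in $P_A$ whenever $q>\mu+\sigma\Phi^{-1}(1-C)$; adding (a) and (b) gives the statement. For (b), since each $w_g<1$ the left side of the capacity identity strictly exceeds $\Phi(\tfrac{\tilde q^*_S-\mu}{\sigma})$, so $\tilde q^*_S-\mu<\sigma\Phi^{-1}(1-C)$ always and the hypothesis forces $q>\tilde q^*_S$; then differentiating that probability in $\sqrt{P_A}$ and substituting the bound $\diff\tilde q^*_S/\diff P_A\le \sigma^{-2}(\tilde q^*_S-\mu)/\big(2P_A(\sigma^{-2}+P_A)\big)$ — obtained by implicitly differentiating the capacity identity and discarding the positive group-$B$ term in the denominator — collapses the derivative to something bounded below by $q-\tilde q^*_S>0$.

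For part (iii), since $\tilde q$ is the Bayesian posterior mean, $\E[q\mid\tilde q,g]=\tilde q$, so by the tower rule $\E[q\mid Y=1,g,P_S]=\E[\tilde q\mid\tilde q>\tilde q^*_S,g]=\mu+\sigma\sqrt{w_g}\,\lambda\!\big(\tfrac{\tilde q^*_S-\mu}{\sigma\sqrt{w_g}}\big)$, where $\lambda=\phi/(1-\Phi)$ is the inverse Mills ratio. It then suffices that $v\mapsto\sqrt v\,\lambda\big(\tfrac{\tilde q^*_S-\mu}{\sqrt v}\big)$ is increasing for $v>0$; with $z=(\tilde q^*_S-\mu)/\sqrt v>0$ this is equivalent to $z\mapsto\lambda(z)/z$ being decreasing on $z>0$, which I would check from $\frac{\diff}{\diff z}\frac{\lambda(z)}{z}=\frac{\lambda(z)\,[z(\lambda(z)-z)-1]}{z^2}$ (using $\lambda'=\lambda(\lambda-z)$) and the elementary bound $z(\lambda(z)-z)<1$ for $z>0$, i.e.\ positivity of the variance of a one-sided-truncated standard normal. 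Since $w_A>w_B$ this gives $\E[q\mid Y=1,A,P_S]>\E[q\mid Y=1,B,P_S]$. The "except for equal precision" clause is immediate, since $w_A=w_B$ makes the two estimate distributions identical, forcing $\tau(P_S)=\pi$, $I(\cdot;P_S)\equiv0$, and equal group merits. The main obstacle I anticipate is step (b) of part (ii): bounding how fast the endogenous threshold $\tilde q^*_S$ drifts with $P_A$ tightly enough to conclude that a high-skilled group-$A$ applicant's admission probability genuinely rises — the implicit-function bound and the fact that $q>\mu+\sigma\Phi^{-1}(1-C)$ is precisely what keeps $q$ above $\tilde q^*_S$ are the delicate points, whereas parts (i) and (iii) are routine once the truncated-normal identities are in hand.
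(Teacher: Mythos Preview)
Your proposal is correct and tracks the paper's proof closely in parts (i), (iii), and the threshold characterization in (ii); the same normal-distribution machinery (Lemma~\ref{lemma:perceived_skill_distribution}, the conditional law $\tilde q\mid q,g$, the truncated-normal mean, and the monotonicity of $v\mapsto\sqrt v\,\lambda(a/\sqrt v)$) drives both.

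The one noteworthy difference is the comparative static in (ii) for group $A$. The paper argues informally: as $P_A$ grows, estimates for group $A$ become ``more precise,'' capacity is fixed, and group-$B$ admission probability falls, so high-skilled group-$A$ students must be admitted more often. You instead make this quantitative by implicitly differentiating the capacity identity, discarding the positive group-$B$ term to get the clean bound $\diff\tilde q^*_S/\diff P_A\le \sigma^{-2}(\tilde q^*_S-\mu)/[2P_A(\sigma^{-2}+P_A)]$, and combining it with the observation $\tilde q^*_S<\mu+\sigma\Phi^{-1}(1-C)$ (from $w_g<1$) so that the hypothesis $q>\mu+\sigma\Phi^{-1}(1-C)$ guarantees $q>\tilde q^*_S$; the derivative of the admission probability in $\sqrt{P_A}$ then collapses to at least $q-\tilde q^*_S>0$. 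This is a genuinely sharper argument than the paper's: the paper's claim that $\var[\tilde q\mid q,A]=P_A/(\sigma^{-2}+P_A)^2$ decreases in $P_A$ is only literally true when $P_A>\sigma^{-2}$, so its heuristic needs interpretation, whereas your implicit-function bound works uniformly.
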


Intuitively, although the school's Bayesian-optimal decision-making process can eliminate bias from skill estimates in terms of mean differences (see Section \ref{sec:estimatesintuition}),
the informativeness gap---as quantified via the difference in the total precision across groups---induces disparities in admission outcomes even for
ex-ante identical student groups. 
As Figure~\ref{fig:variancevarying}  illustrates, and as we prove in Electronic Companion~\ref{app.A3}, with overall equal precision (the vertical line) both groups are admitted according to their population fractions (here, $1-\pi = \pi = 0.5$); however, all fairness metrics degrade as the gap in informativeness between the two groups increases.
Access barriers (even if limited to one group) would have a similarly negative effect, albeit for a different reason: high-skilled students who otherwise would be admitted cannot even apply as they have not taken the test, cf.~\citet{hyman2016act}.

The errors in estimation due to unequal precision affect not only the diversity of the class but also the academic merit of each admitted group.
As parts (i) and (iii) establish, under unequal precisions (and no other disparities), students from group $A$ admitted to selective colleges are not only admitted at a higher rate, but---contrary to existing theoretical results \citep{faenza2020impact}---are also of higher true skill, on average, than the admitted students from group $B$.
{This discrepancy arises because the school fails to identify high-skilled students
from group $B$.  Part (ii) shows that high-skilled students in group 
$B$ are less likely to be admitted than they would be in group $A$. Although the individual fairness gap is positive for all sufficiently high-skilled students, its magnitude varies: for students in the far right tail, the gap eventually decreases because---despite the noise---their estimated skills remain high enough for admission. We prove this in the lemma below.

\begin{restatable}{lemma}{IFgapdecrease}
\label{lemma:IFgap_decrease}
Consider policy $P_{S}$, and assume unequal precision. The individual fairness gap $I(q; P_{S})$ is decreasing in $q$ for $q>q_e$, where
\begin{equation*}
    q_e \triangleq \tilde q^*_S + \sqrt{\frac{ \sigma^{-4}(\mu - \tilde q^*_S)^2 }{{\sum_{k \in S}\sigma^{-2}_{Ak}}  {\sum_{k \in S}\sigma^{-2}_{Bk}}  } +  \frac{\ln\left({{\sum_{k \in S}\sigma^{-2}_{Ak}}}\right) - \ln\left({{\sum_{k \in S}\sigma^{-2}_{Bk}}}\right)}{{\sum_{k \in S}\sigma^{-2}_{Ak}} - {\sum_{k \in S}\sigma^{-2}_{Bk}}}}.
\end{equation*}
Furthermore,
$\lim_{q \rightarrow \infty} I(q; P_{S}) = 0$.
\end{restatable}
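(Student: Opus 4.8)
The natural first move is to record the law of $\tilde q$ \emph{conditional on the true skill $q$} (rather than the marginal law in \Cref{lemma:perceived_skill_distribution}). Writing $P_g \triangleq \sum_{k\in S}\sigma_{gk}^{-2}$ for the total precision of group $g$ and substituting $\theta_k = q + \epsilon_k$, $\epsilon_k\sim\N(\mu_{gk},\sigma_{gk}^2)$ independent across $k$, into \eqref{eq:skill_estimate}, one checks that $\tilde q \mid q, g$ is Gaussian with mean $\mu + \alpha_g(q-\mu)$ and standard deviation $\alpha_g/\sqrt{P_g}$, where $\alpha_g \triangleq P_g/(\sigma^{-2}+P_g)$ (the mean-bias terms $\mu_{gk}$ cancel, exactly as in \Cref{lemma:perceived_skill_distribution}). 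Hence $\prob(Y=1\mid q,g,P_S) = \prob(\tilde q \ge \tilde q^*_S\mid q,g) = \Phi(z_g(q))$, where $z_g(q) \triangleq \big(\mu + \alpha_g(q-\mu) - \tilde q^*_S\big)\sqrt{P_g}/\alpha_g$ is \emph{affine} in $q$ with slope $z_g'(q) = \sqrt{P_g}$. I would present this as a one-line computation, since everything else rests on it.

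Next, differentiate the gap. From $I(q;P_S) = \Phi(z_A(q)) - \Phi(z_B(q))$ we obtain $I'(q) = \sqrt{P_A}\,\phi(z_A(q)) - \sqrt{P_B}\,\phi(z_B(q))$ with $\phi$ the standard normal density. Since $\phi>0$, the sign condition $I'(q)<0$ is equivalent to $\phi(z_A(q))/\phi(z_B(q)) < \sqrt{P_B/P_A}$, and taking logarithms and using $\phi(z)\propto e^{-z^2/2}$ turns this into the clean inequality $z_A(q)^2 - z_B(q)^2 > \ln P_A - \ln P_B$. So the lemma reduces entirely to understanding the quadratic $q\mapsto z_A(q)^2 - z_B(q)^2$.

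The one genuinely computational step is to expand that quadratic. Setting $y\triangleq q - \tilde q^*_S$ and $t\triangleq \tilde q^*_S - \mu$, a short rearrangement gives $z_g(q) = \sqrt{P_g}\,y - t\sigma^{-2}/\sqrt{P_g}$; squaring produces a cross term $-2t\sigma^{-2}y$ whose coefficient is \emph{independent of the group}, so it cancels in the difference, leaving
\[ z_A(q)^2 - z_B(q)^2 \;=\; (P_A - P_B)\Big( y^2 - \tfrac{t^2\sigma^{-4}}{P_A P_B}\Big). \]
Under \emph{unequal precision} we may assume $P_A > P_B$ without loss of generality (the features are less informative for group $B$), so $P_A - P_B > 0$ and also $\ln P_A - \ln P_B > 0$; dividing, the condition $z_A^2 - z_B^2 > \ln P_A - \ln P_B$ becomes
\[ y^2 \;>\; \frac{\sigma^{-4}(\mu - \tilde q^*_S)^2}{P_A P_B} + \frac{\ln P_A - \ln P_B}{P_A - P_B}, \]
and the right-hand side is precisely the quantity under the square root in the definition of $q_e$ — moreover it is strictly positive, so $q_e > \tilde q^*_S$. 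Therefore, for any $q > q_e$ we have $y = q - \tilde q^*_S > q_e - \tilde q^*_S = \sqrt{(\,\cdot\,)} > 0$, hence $y^2$ exceeds the right-hand side and $I'(q) < 0$, which is the claim. Finally, because $\alpha_g > 0$, both affine functions $z_A(q), z_B(q) \to +\infty$ as $q\to\infty$, so $\Phi(z_A(q)), \Phi(z_B(q)) \to 1$ and $I(q;P_S)\to 0$.

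I expect the only real obstacle to be bookkeeping: getting $z_g$ into the form $\sqrt{P_g}\,y - t\sigma^{-2}/\sqrt{P_g}$ and spotting that the group-independent cross term cancels. After that, the equivalence chain $I'(q)<0 \Leftrightarrow z_A^2 - z_B^2 > \ln P_A - \ln P_B \Leftrightarrow y^2 > (\text{the } q_e \text{ radicand})$ is immediate, and the role of the \emph{unequal precision} hypothesis is simply to guarantee $P_A - P_B \neq 0$ so that dividing is legitimate and fixes the orientation of the inequality.
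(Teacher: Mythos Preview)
Your proof is correct and follows essentially the same route as the paper's: compute $\prob(Y=1\mid q,g)$ from the conditional law of $\tilde q\mid q,g$, differentiate $I(q;P_S)=\Phi(z_A)-\Phi(z_B)$, pass to logarithms to compare $\sqrt{P_A}\,\phi(z_A)$ against $\sqrt{P_B}\,\phi(z_B)$, and then simplify the resulting quadratic in $q-\tilde q^*_S$ to read off the threshold $q_e$; the limit claim is handled identically. Your substitutions $y=q-\tilde q^*_S$, $t=\tilde q^*_S-\mu$ and the observation that the cross term $-2t\sigma^{-2}y$ is group-independent make the factorization $z_A^2-z_B^2=(P_A-P_B)\big(y^2-t^2\sigma^{-4}/(P_AP_B)\big)$ cleaner than the paper's more direct expansion, but the argument is the same.
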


\begin{figure*}[tb]
		\begin{subfigure}[b]{0.32\textwidth}
		\centering
		\includegraphics[width=\linewidth]{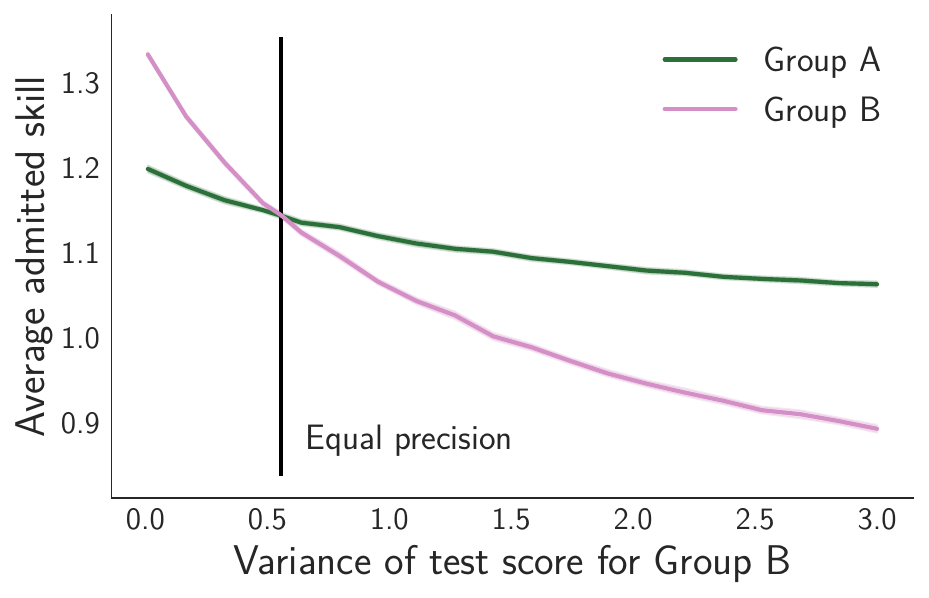}
		\caption{\small Academic merit \normalsize}
		\label{fig:varianceadmittedskill}
	\end{subfigure}
	\hfill
 \begin{subfigure}[b]{0.32\textwidth}
	\centering
\includegraphics[width=\linewidth]{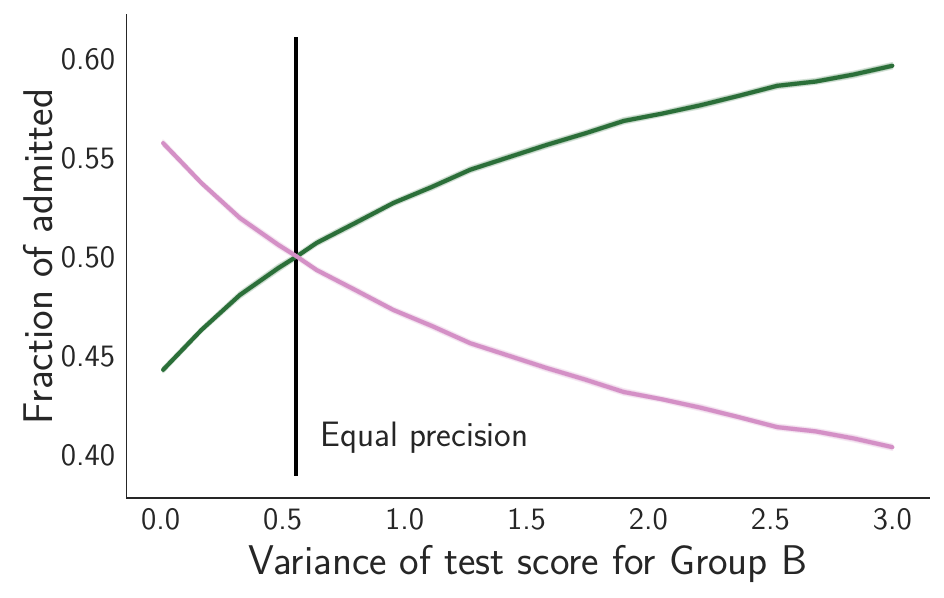}
		\caption{\small Fraction of admitted students \normalsize}
		\label{fig:variancefractionadmitted}
	\end{subfigure}
	\hfill
\begin{subfigure}[b]{0.32\textwidth}
	\centering
	\includegraphics[width=\linewidth]{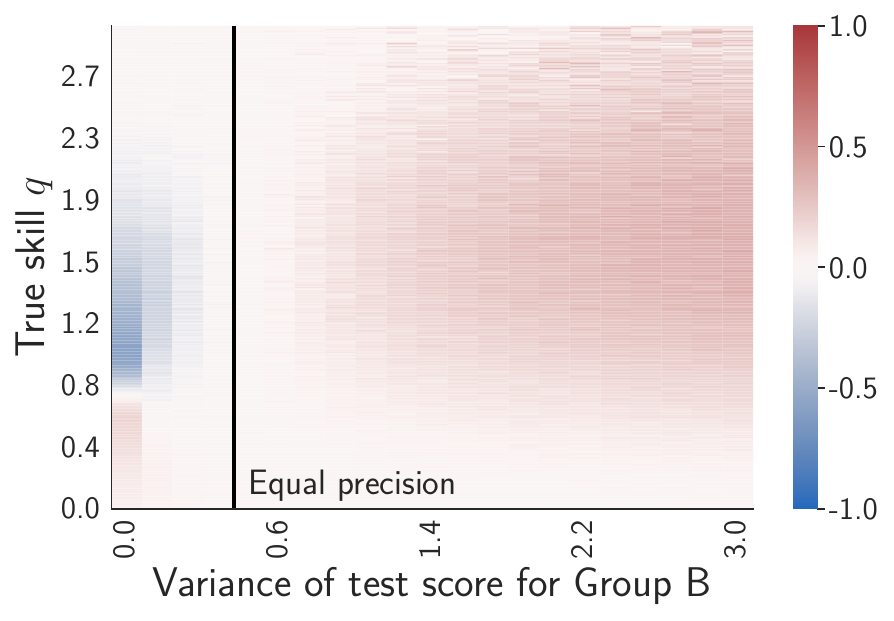}
	\caption{\small Individual fairness gap \normalsize}
	\label{fig:varianceIF}
\end{subfigure}

	\begin{subfigure}[b]{0.32\textwidth}
		\centering
		\includegraphics[width=\linewidth]{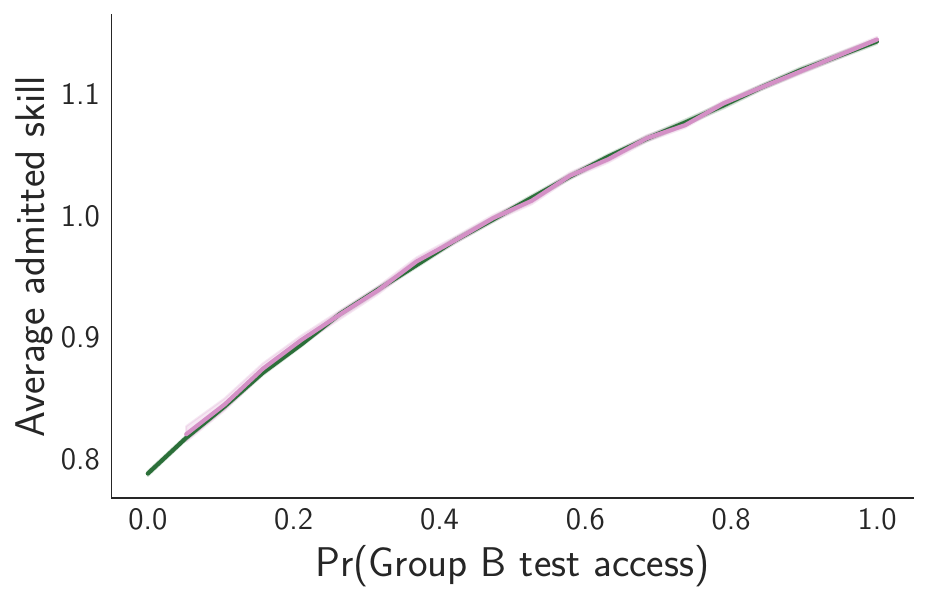}
		\caption{\small Academic merit \normalsize}
		\label{fig:barrieradmittedskill}
	\end{subfigure}
	\hfill
	\begin{subfigure}[b]{0.32\textwidth}
		\centering
		\includegraphics[width=\linewidth]{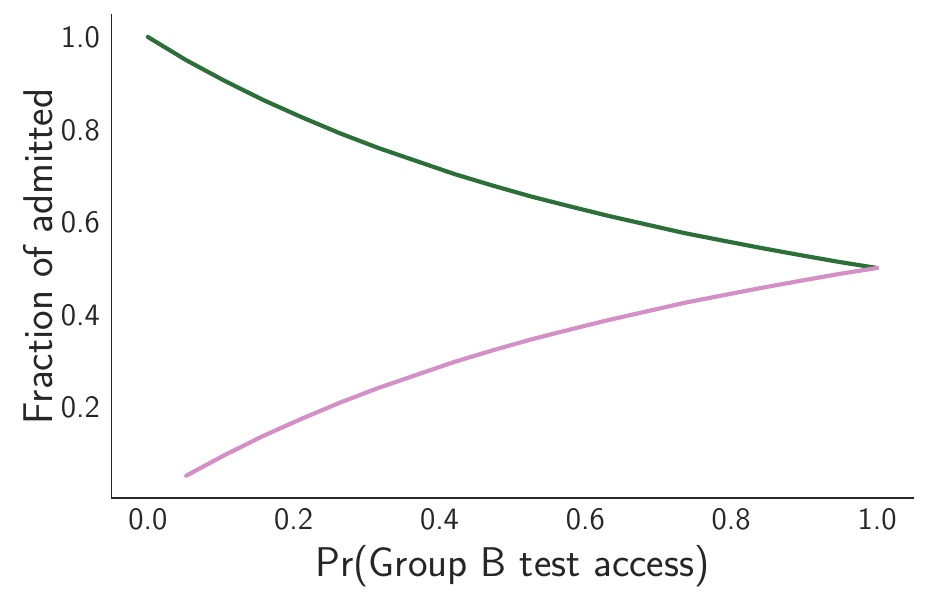}
		\caption{\small Fraction of admitted students \normalsize}
		\label{fig:barrierfractionadmitted}
	\end{subfigure}
	\hfill
	\begin{subfigure}[b]{0.32\textwidth}
		\centering
		\includegraphics[width=\linewidth]{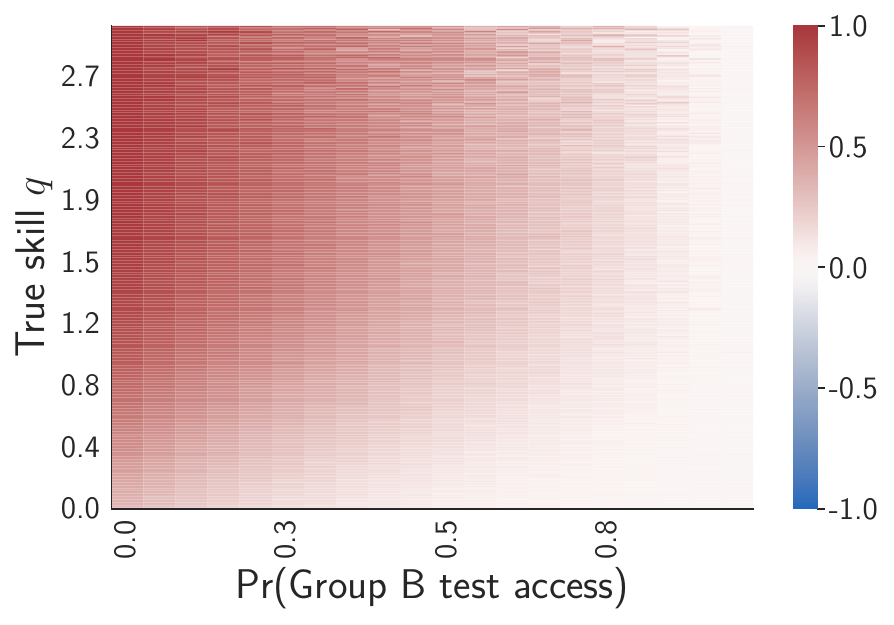}
		\caption{\small Individual fairness gap \normalsize}
		\label{fig:barrierIF}
	\end{subfigure}
	
 \caption{How the admitted students' academic merit, fraction of each group, and individual fairness gap change with group $B$ test score variance and test access, respectively.
 Figures (a)-(c) fix test access $\gamma_A=\gamma_B=1$ and vary the test score variance. Figures (d)-(f) fix the test variance to be equal for both groups and vary test access $\gamma_B$. 
 With equal precision and no barriers, groups are treated equitably. As the feature variance or barriers increase for group $B$, both academic merit of admitted $B$ students and fairness metrics worsen. We considered $\pi = 1-\pi = 0.5$; the full parameter set can be found in Electronic Companion~\ref{appsec:simparams}. 
}
\label{fig:variancevarying}
\end{figure*}

These results hint at the difficulty in deciding whether to drop standardized testing. Removing the test increases estimation variance (perhaps differentially, as~\citet{bellafante_2020} and~\citet{berkeleyreport} posit) which worsens all metrics, but it also reduces access barriers, which improves them. The interaction of these two forces determines the overall effect. Our next section formalizes this interaction.

\subsection{Dropping test scores with and without barriers}
\label{sec:dropping_tests}

{We now ask: \textit{Under what conditions would dropping a feature benefit the school and the applicants?} We study this question by comparing the test-free policy $P_{\subb}$ to the test-based policy $P_{\full}$, both \textit{with} (Theorem~\ref{thm:threshold_char_without_aa}) and \textit{without} barriers (Theorem~\ref{prop.test_group_aware_noAA}).}

\begin{restatable}[Dropping tests with barriers]{theorem}{propthresholdnoAA}
\label{thm:threshold_char_without_aa}
Consider policies $P_{\full}$ and $P_{\subb}$ and assume unequal precisions under $P_{\full}$.
With barriers, the following is true:

\begin{itemize}
    \item[(i)] \emph{Diversity level}:  
    Holding other parameters fixed, there exists a threshold $\bar{\gamma}$ such that diversity improves under $P_{\subb}$ iff the fraction of group B students with access $\gamma_B < \bar{\gamma}$.

    \item[(ii)] \emph{Academic merit}: For each group $g$, holding other parameters fixed, there exists a threshold $\bar{\bar\gamma}_{g}$ such that academic merit of group $g$  increases under $P_{\subb}$ iff $\gamma_g < \bar{\bar\gamma}_g$. 
    
\end{itemize}
\end{restatable}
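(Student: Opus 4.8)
The plan is to reduce both parts to one monotonicity fact — that the test-based admission threshold $\tilde q^*_{\full}$ is strictly increasing in each access parameter — and to exploit that the test-free policy $P_{\subb}$ is completely insensitive to $\gamma_A,\gamma_B$. Set $\rho_{gS}=\sum_{k\in S}\sigma_{gk}^{-2}$; by \Cref{lemma:perceived_skill_distribution}, $\tilde q\mid g,P_S\sim\N(\mu,\alpha_{gS}^2)$ with $\alpha_{gS}^2=\sigma^2\rho_{gS}/(\sigma^{-2}+\rho_{gS})$, and $\alpha_{g\full}>\alpha_{g\subb}$ since $\rho_{g\full}=\rho_{g\subb}+\sigma_{gK}^{-2}$. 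Writing $\bar\Phi_{gS}(t)=1-\Phi\big((t-\mu)/\alpha_{gS}\big)$ for the upper-tail mass, the two thresholds are the unique solutions of the capacity equations
\begin{align*}
(1-\pi)\,\bar\Phi_{A\subb}(\tilde q^*_{\subb})+\pi\,\bar\Phi_{B\subb}(\tilde q^*_{\subb})&=C,\\
(1-\pi)\gamma_A\,\bar\Phi_{A\full}(\tilde q^*_{\full})+\pi\gamma_B\,\bar\Phi_{B\full}(\tilde q^*_{\full})&=C,
\end{align*}
where existence uses the over-demand assumption and uniqueness uses strict monotonicity of the left-hand sides in $t$; the admitted masses $M_A,M_B$ given by the two terms of the second equation satisfy $M_A+M_B=C$. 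Implicit differentiation of the second equation gives $d\tilde q^*_{\full}/d\gamma_B=-\,\pi\bar\Phi_{B\full}(\tilde q^*_{\full})\big/\big[(1-\pi)\gamma_A\bar\Phi_{A\full}'+\pi\gamma_B\bar\Phi_{B\full}'\big]>0$ (numerator $>0$, denominator $<0$ since each $\bar\Phi_{gS}'<0$), and symmetrically $d\tilde q^*_{\full}/d\gamma_A>0$; both are continuous by the implicit function theorem. Nothing in $\tilde q^*_{\subb}$ or $\alpha_{g\subb}$ depends on $\gamma_A,\gamma_B$.

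For part (i), $\tau(P_{\subb})=\pi\bar\Phi_{B\subb}(\tilde q^*_{\subb})/C$ is a constant in $\gamma_B$, so it suffices to show $\tau(P_{\full})$ is strictly increasing in $\gamma_B$. From the capacity equation, $M_B=C-(1-\pi)\gamma_A\bar\Phi_{A\full}(\tilde q^*_{\full})$ with $\gamma_A$ fixed; as $\gamma_B$ grows, $\tilde q^*_{\full}$ grows, so $\bar\Phi_{A\full}(\tilde q^*_{\full})$ falls, so $M_B$ — and hence $\tau(P_{\full})=M_B/C$ — strictly rises. Therefore $\gamma_B\mapsto \tau(P_{\subb})-\tau(P_{\full})$ is continuous and strictly decreasing on the feasible interval $\{0\le\gamma_B\le\gamma_A\}$ intersected with the over-demand region; it is thus either single-signed there (the threshold sits at an endpoint) or changes sign exactly once, and in every case there is $\bar\gamma$ with $\tau(P_{\subb})>\tau(P_{\full})\iff\gamma_B<\bar\gamma$.

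For part (ii), the key reduction is the tower property: since $\tilde q=\E[q\mid\thetaset,g]$ and admissions thresholds on $\tilde q$, we have $\E[q\mid\tilde q,g]=\tilde q$, hence $\E[q\mid Y=1,g,P_S]=\E[\tilde q\mid \tilde q>\tilde q^*_S,g]=\mu+\alpha_{gS}\,\lambda\big((\tilde q^*_S-\mu)/\alpha_{gS}\big)$, where $\lambda(x)=\phi(x)/(1-\Phi(x))$ is the strictly increasing inverse Mills ratio; for $P_{\full}$ one additionally uses that having access is independent of $(q,\thetaset)$ given $g$, so conditioning on "has access" leaves this truncated-normal law unchanged. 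For $g=B$ with $\gamma_B$ varying, the merit under $P_{\subb}$ equals $\mu+\alpha_{B\subb}\lambda\big((\tilde q^*_{\subb}-\mu)/\alpha_{B\subb}\big)$, a constant in $\gamma_B$, whereas the merit under $P_{\full}$ equals $\mu+\alpha_{B\full}\lambda\big((\tilde q^*_{\full}-\mu)/\alpha_{B\full}\big)$, which is strictly increasing in $\gamma_B$ because $\tilde q^*_{\full}$ is and $\lambda$ is increasing; the difference is then continuous and strictly decreasing, giving $\bar{\bar\gamma}_B$. For $g=A$ with $\gamma_A$ varying and the other parameters (including $\gamma_B$) fixed, the identical argument applies using $d\tilde q^*_{\full}/d\gamma_A>0$, giving $\bar{\bar\gamma}_A$.

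The main obstacle is not any single computation — the derivatives are elementary — but rather (a) pinning down the asymmetry that $P_{\subb}$'s metrics are flat in the access parameters while $P_{\full}$'s move monotonically, together with careful sign bookkeeping in the implicit differentiation for $\tilde q^*_{\full}$; and (b) the tower-property reduction of academic merit to a truncated-normal mean, plus the independence-of-access argument needed to justify it under $P_{\full}$. One should also phrase the statements so that they gracefully cover the degenerate cases where the relevant difference never changes sign on the feasible range, in which case the asserted threshold simply sits at an endpoint of $\{\gamma_g\le\gamma_A\}$.
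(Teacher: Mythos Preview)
Your proof is correct and shares the paper's high-level strategy: establish that the $P_{\subb}$ metrics are constant in the access parameters while the $P_{\full}$ metrics are strictly monotone in them, and then invoke continuity to obtain a threshold. The technical route, however, differs. The paper first writes down a closed form for $\tilde q^*_{\full}$ by treating the Gaussian \emph{mixture} $(1-\pi)\gamma_A F_{\tilde q\mid A,P_{\full}}+\pi\gamma_B F_{\tilde q\mid B,P_{\full}}$ as if it were a single Gaussian with the mixture variance, and then analyzes the resulting expression $\beta_g(\gamma_A,\gamma_B,\rho^g_{\full})$ factor by factor to get monotonicity in $\gamma_g$; this delivers the explicit characterization $\beta_g\le\Delta_g$ of the merit threshold that appears in the appendix restatement, but it leans on that mixture-as-Gaussian simplification (and the factor-by-factor argument for $g=B$ needs extra care, since the second factor is in fact \emph{decreasing} in $\gamma_B$ when $\rho^B_{\full}>1$). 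Your argument is more elementary and sidesteps both issues: implicit differentiation of the exact capacity equation gives $d\tilde q^*_{\full}/d\gamma_g>0$ directly, and the tower-property reduction of academic merit to a truncated-normal mean, together with monotonicity of the inverse Mills ratio, finishes the job without ever needing a closed-form threshold. What you give up is the explicit formula for $\bar\gamma$ and $\bar{\bar\gamma}_g$, but the main-text statement asks only for existence, which your argument supplies cleanly; your endpoint convention for degenerate cases is also adequate there, whereas the paper additionally argues $\bar{\bar\gamma}_g\in(0,1]$ by appealing to the no-barriers result (\Cref{prop.test_group_aware_noAA}(iii)) at $\gamma_g=1$ and to the trivial bound at $\gamma_g=0$.
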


{Perhaps surprisingly,  Theorem~\ref{thm:threshold_char_without_aa}  establishes that the academic merit of the admitted class may improve after dropping the test score. Similarly, diversity may deteriorate after dropping test scores.  
More specifically, Theorem~\ref{thm:threshold_char_without_aa} offers a threshold characterization, where the thresholds $\bar{\bar\gamma}_g$ and $\bar{\gamma}$ are functions of both the access levels of the two groups as well as the variance parameters, with and without the test. 
We provide the full characterization 
and additional illustrations  in Electronic Companion~\ref{app.A4} and~\ref{app.suppfigs.nonstrategic}, respectively.  

{At a high level, Theorem~\ref{thm:threshold_char_without_aa} implies that the decision to drop the test requirement is not just a matter of increasing access for the disadvantaged group. Rather, it depends on the complex interaction between the informational environment and the access levels of both groups. 
First, dropping test scores increases the applicant pool size but also affects its composition at different rates for each group. Second, 
 the information loss incurred by dropping the test may not necessarily benefit students in group $B$. In particular, it is possible that the informational disadvantage faced by group $B$ students may be exacerbated by the absence of test score information even if test scores are more noisy for group $B$ than group $A$; especially when the testing barriers are relatively small, the negative informational effect may not be counterbalanced sufficiently by the increase in the group's pool size. }

{Beyond the equivocal impact of dropping test scores on diversity and group-level academic merit, doing so introduces additional trade-offs. As part (ii) of Theorem~\ref{thm:threshold_char_without_aa} and Figure~\ref{fig:policycompare_barriers2dplot} show, one admitted group’s  academic merit may decrease even when overall academic merit rises. Depending on the parameters, this may be an inevitable consequence of dropping the test, raising important fairness trade-offs for policymakers.}

{Our next result studies the role of information loss in more depth, focusing on just the effect of the variance parameters in a setting without access barriers.}

\begin{restatable}[Dropping tests without barriers]{theorem}{proptestgroupawarenoAA}
\label{prop.test_group_aware_noAA}
Consider policies  $P_{\full}$ and $P_{\subb}$, and assume unequal precisions under $P_{\full}$. 
\begin{itemize}

    \item[(i)] \emph{Diversity level}:
    Diversity level improves after dropping the test,  $\tau(P_{\subb})>\tau(P_{\full})$, iff
    \begin{equation}
    \label{eq:condition_diversity_dropping_test}
 \frac{  \sum_{k\in \subb} \sigma_{Ak}^{-2} \left(\sigma^{-2}+\sum_{k\in \full} \sigma_{Ak}^{-2}   \right)}{\sum_{k\in \subb} \sigma_{Bk}^{-2} \left(\sigma^{-2} +\sum_{k\in \full} \sigma_{Bk}^{-2} \right)}  < \frac{\sigma^{-2}_{AK}}{\sigma_{BK}^{-2}}.
      \end{equation}

\item[(ii)] 
\emph{Individual fairness}:  For each group $g$, there  exist thresholds $q_{g}$ such that the admission probability for students of skill $q$ in group $g$ decreases under $P_{\subb}$ iff $q>q_{g}$. 
Further, there exists a threshold $\hat{q} \geq \max\{q_A, q_B\}$ such that the individual fairness gap increases for all $q> \hat{q}$, but may decrease otherwise.

    \item[(iii)]  \emph{Academic merit}: Academic merit decreases for both groups $g\in\{A,B\}$, that is,
    ${\E [q \mid Y=1, g, P_{\full}] > \E [q \mid Y=1, g, P_{\subb}].}$
\end{itemize}
\end{restatable}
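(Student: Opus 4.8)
The plan is to work throughout with the closed-form distributions from Lemma~\ref{lemma:perceived_skill_distribution}. Write $t_g^S \triangleq \sum_{k\in S}\sigma_{gk}^{-2}$ for the total precision of group $g$ under feature set $S$, so that under $P_S$ the estimate satisfies $\tilde q\mid g,P_S\sim\N(\mu,v_g^S)$ with $v_g^S=\sigma^2\,\tfrac{t_g^S}{\sigma^{-2}+t_g^S}$, an increasing function of $t_g^S$. Since $\subb\subset\full$, dropping feature $K$ strictly decreases each total precision, $t_g^\subb=t_g^\full-\sigma_{gK}^{-2}<t_g^\full$, hence strictly decreases each estimate variance, $v_g^\subb<v_g^\full$. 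The three parts are then consequences of how the admission threshold and the Gaussian tail masses move.

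\textbf{Part (iii), academic merit.} First I would compute, for a fixed group-blind threshold $\tilde q^*$ on estimates, the conditional mean true skill of admitted group-$g$ students. Because $(q,\tilde q)$ are jointly Gaussian with $\E[q\mid\tilde q]=\tilde q$ (the estimate is the posterior mean) and $\var(q\mid\tilde q)=\sigma^2-v_g^S$, one gets $\E[q\mid Y=1,g,P_S]=\mu+\sqrt{v_g^S}\cdot\lambda\!\big((\tilde q^*-\mu)/\sqrt{v_g^S}\big)$ where $\lambda$ is the inverse Mills ratio; the key point is that this equals $\mu$ plus a term that is increasing in $v_g^S$ for a \emph{fixed} standardized cutoff, but the cutoff itself also changes when we drop the test, so I must track the threshold. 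The clean way is to note that admitted group-$g$ students are exactly those with $\tilde q>\tilde q^*_S$, and $\E[q\mid Y=1,g,P_S]$ can be written as $\E[\tilde q\mid \tilde q>\tilde q^*_S,g,P_S]$ (tower property, since $\E[q\mid\tilde q,g]=\tilde q$). So it suffices to show the truncated mean of $\tilde q$ above $\tilde q^*_\subb$ under $P_\subb$ is smaller than that above $\tilde q^*_\full$ under $P_\full$. Here I would use the capacity constraint: the selective school ($C<1/2$) admits mass $C$, so $\tilde q^*_S$ is pinned down by $(1-\pi)\bar\Phi\big((\tilde q^*_S-\mu)/\sqrt{v_A^S}\big)+\pi\bar\Phi\big((\tilde q^*_S-\mu)/\sqrt{v_B^S}\big)=C$ where $\bar\Phi=1-\Phi$. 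A mean-preserving-spread / dilation argument does the rest: the $P_\full$ estimate distribution for each group is a dilation (about $\mu$) of the $P_\subb$ one, so mixing and thresholding at the \emph{same} admitted mass $C$ yields a larger upper-tail conditional mean under $P_\full$; one then checks the inequality survives for each group separately, not just the mixture, using that both groups' variances shrink and $\tilde q^*>\mu$ since $C<1/2$.

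\textbf{Part (i), diversity.} The diversity level is $\tau(P_S)=\pi\bar\Phi\big((\tilde q^*_S-\mu)/\sqrt{v_B^S}\big)/C$. Comparing $P_\subb$ and $P_\full$ amounts to comparing $\bar\Phi(z_B^\subb)$ with $\bar\Phi(z_B^\full)$ where $z_g^S=(\tilde q^*_S-\mu)/\sqrt{v_g^S}$, equivalently comparing $z_B^\subb$ with $z_B^\full$. Using the capacity equation to eliminate $\tilde q^*_S$ and the explicit forms of $v_g^S$, I would reduce $\tau(P_\subb)>\tau(P_\full)$ to an inequality between the ratios $v_A^S/v_B^S$ with and without the test; substituting $v_g^S=\sigma^2 t_g^S/(\sigma^{-2}+t_g^S)$ and $t_g^\subb=t_g^\full-\sigma_{gK}^{-2}$ and clearing denominators should yield exactly~\eqref{eq:condition_diversity_dropping_test}. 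The content is that diversity improves iff removing feature $K$ shrinks $A$'s variance proportionally more than $B$'s, which is what the displayed ratio inequality encodes; I expect this to be a somewhat delicate but mechanical manipulation, and the main care is signs (both sides positive, inequality direction preserved).

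\textbf{Part (ii), individual fairness.} For fixed group $g$, $\prob(Y=1\mid q,g,P_S)=\bar\Phi\big((\tilde q^*_S-\E[\tilde q\mid q,g])/\sqrt{\var(\tilde q\mid q,g)}\big)$, and from \eqref{eq:skill_estimate} one has $\E[\tilde q\mid q,g,P_S]=\mu+\beta_g^S(q-\mu)$ with $\beta_g^S=t_g^S/(\sigma^{-2}+t_g^S)\in(0,1)$ and $\var(\tilde q\mid q,g,P_S)=\beta_g^S(1-\beta_g^S)\cdot(\text{const})$—more precisely $\var(\tilde q\mid q,g)=v_g^S-(v_g^S)^2/\sigma^2=\sigma^2\beta_g^S(1-\beta_g^S)$. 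Dropping $K$ decreases $\beta_g^S$ and changes $\tilde q^*_S$. I would show the argument of $\bar\Phi$ is, for large $q$, larger under $P_\subb$ (so admission probability drops), because the slope $\beta_g^S$ shrinks—an applicant of very high $q$ is pulled back toward $\mu$ more aggressively with fewer features—while for small $q$ the effect reverses; monotonicity of this argument in $q$ gives a single crossing threshold $q_g$. For the gap $I(q;P)=\prob(Y=1\mid q,A,P)-\prob(Y=1\mid q,B,P)$, I would analyze the difference of the two $\bar\Phi$-arguments; Lemma~\ref{lemma:IFgap_decrease} already controls its tail behavior under a single policy, and an analogous asymptotic expansion as $q\to\infty$ shows which policy has the larger gap far out in the tail, yielding $\hat q\ge\max\{q_A,q_B\}$ beyond which the gap strictly increases, with the "may decrease otherwise" coming from the non-monotone regime near the thresholds.

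\textbf{Main obstacle.} The hard part is Part (ii): unlike merit and diversity, which reduce to one-dimensional tail-mass comparisons at a common admitted mass, the individual-fairness statements require controlling the \emph{sign} of a difference of two $\bar\Phi$-arguments that each depend on $q$ through a different affine map and a different (shifting) threshold, simultaneously in two groups. Establishing the single-crossing property for each group and then that the \emph{gap} has a well-defined eventual sign—while only claiming ambiguity in the intermediate range—needs a careful asymptotic analysis (Mills-ratio expansion as $q\to\infty$, plus a monotonicity argument to rule out extra crossings), and is where the bulk of the work lies. Parts (i) and (iii), by contrast, I expect to follow cleanly from the dilation structure of the estimate distributions together with the capacity equation.
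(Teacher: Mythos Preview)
Your plan is correct and matches the paper's approach: Parts (i) and (iii) proceed identically via the tower identity $\E[q\mid Y{=}1,g,P_S]=\E[\tilde q\mid\tilde q>\tilde q^*_S,g,P_S]$ and monotonicity of the truncated-normal mean in both the cutoff and the scale (the paper packages the latter as Lemma~\ref{lemma:HR_h} on $x\mapsto x\,\textsc{HR}(a/x)$, combined with $\tilde q^*_\subb<\tilde q^*_\full$), while Part (i) reduces to the variance-ratio comparison $v_A^\subb/v_B^\subb$ versus $v_A^\full/v_B^\full$ through the capacity equation exactly as you describe. The only minor divergence is in the tail step of Part (ii): rather than your Mills-ratio asymptotic expansion, the paper invokes convexity of $\Phi$ on $(-\infty,0]$ to pass from an ordering of the $\bar\Phi$-arguments to an ordering of the gaps, then takes $\hat q$ as the maximum of the per-group thresholds and the points where the argument ordering and negativity hold---both routes deliver the existence claim.
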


Without barriers, the effect on the  diversity level and individual fairness gap of dropping a feature depends on relative informativeness. However, it always worsens academic merit for both groups as the school has access less information and skill estimates are noisier.

The exact effect on diversity depends on both the total precision of the remaining ${K-1}$ features and how much the test precisions $\sigma^{-2}_{A,K}$, $\sigma^{-2}_{B,K}$ differ.  
\Cref{eq:condition_diversity_dropping_test} is equivalent to: 
\begin{equation}
\label{eqn:dropariancescondition_2}
   \frac{\left[\frac{\sum_{k \in \subb} \sigma_{Ak}^{-2}}{\sigma^{-2} + \sum_{k\in \subb} \sigma_{Ak}^{-2}}\right]}{\left[\frac{\sum_{k\in \subb} \sigma_{Bk}^{-2}}{\sigma^{-2} + \sum_{k\in \subb} \sigma_{Bk}^{-2}}\right]} < \frac{\left[\frac{\sum_{k \in \full} \sigma_{Ak}^{-2}}{\sigma^{-2} + \sum_{k\in \full} \sigma_{Ak}^{-2}}\right]}{\left[\frac{\sum_{k\in \full} \sigma_{Bk}^{-2}}{\sigma^{-2} + \sum_{k\in \full} \sigma_{Bk}^{-2}}\right]},
\end{equation}
which intuitively encodes how informativeness for each group changes after dropping the test. 
If \Cref{eq:condition_diversity_dropping_test} holds, then the diversity level improves since dropping the test narrows the relative informativeness gap. However, if \cref{eq:condition_diversity_dropping_test} does not hold (as~\citet{berkeleyreport} attests),  removing test scores exacerbates the informational disadvantage of of group $B$; then, dropping the test decreases diversity.

Similarly, dropping the test may worsen individual fairness. As part (ii) shows, the admission probability of students with sufficiently high true skill, for either group, decreases after removing the test.
Furthermore,  
for sufficiently high-skilled students, the individual fairness gap increases after dropping test scores. 
This implication is separate from of the effect on overall diversity; although the school may manage to improve diversity by dropping the test, the targeting of high-skilled students in both groups becomes less effective, leaving high-skilled students in group $B$ disproportionately affected.  

Even without access barriers, the result establishes the importance of understanding features \textit{other} than the test score---not just their biases $\mu_{gk}$, which are canceled out given full knowledge---but also their informativeness. More broadly, our theoretical results show that even in a simple model, the debate over dropping standardized testing cannot be held without the particulars of the context: whether one cares about overall academic merit of the admitted class or  fairness criteria, the effects hinge on the interplay between access barriers, the test informativeness, and the informativeness of other application components.

\begin{figure}[t]
    \centering
    \includegraphics[width=0.45\linewidth]{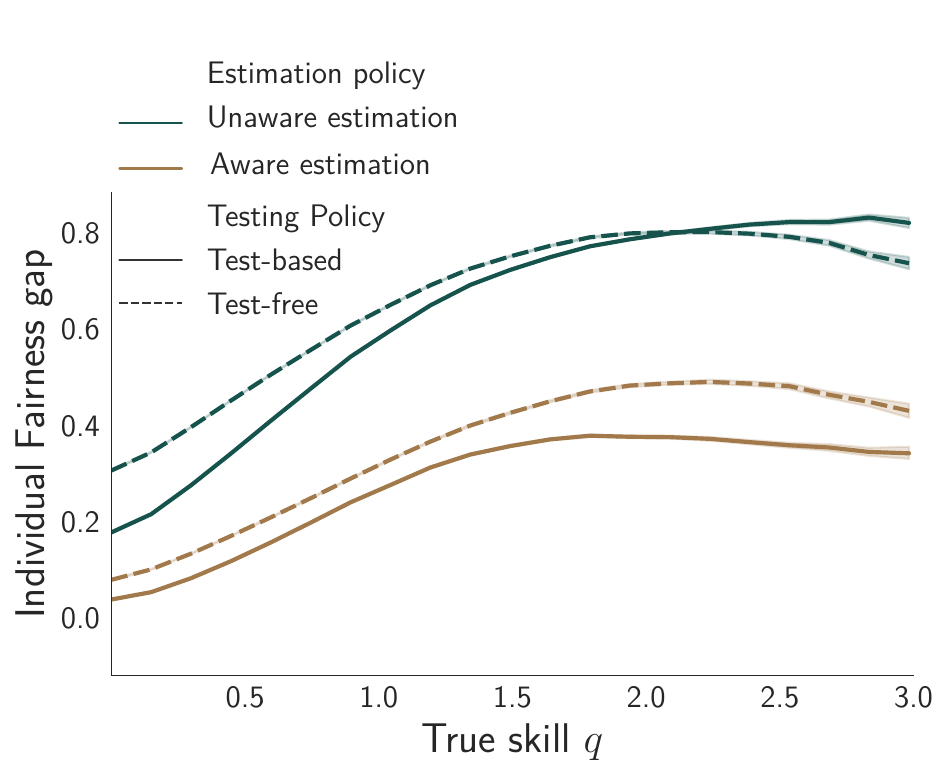}

    \caption{Individual fairness gap of various policies, in simulations of a setting where features are more informative for group $A$, and with testing barriers for group $B$.  Group-unaware policies, analyzed in \Cref{app.group_unaware}, are those in which the school does not use group information in its Bayesian optimal estimation. In this setting, dropping the test worsens the individual fairness gap across all skill levels under the group-aware policy, whereas it improves the individual fairness gap for large true skill values under the group-unaware policy. The group-aware policy has lower individual fairness gap at all true skill levels, compared to the group-unaware policy.  The full parameter set can be found in Electronic Companion \ref{appsec:simparams}.
 }
	\label{fig:policycompare_barriers_withunaware_no_AA}
\end{figure}

\smallskip
\noindent\textbf{Comparing the policies in simulation.} Figure~\ref{fig:policycompare_barriers_withunaware_no_AA} compares, for one parameter setting, our policies with and without testing. Dropping tests worsens the individual fairness gap for the main (group-aware) estimation policy, across all true skill levels $q$. {Figure~\ref{fig:policycompare_barriers_withunaware} in the Electronic Companion also includes \textit{group-unaware estimation} policies, that ignore the social group that a student belongs to  (see Electronic Companion~\ref{app.group_unaware}).  
Ignoring group attributes is an oft-proposed but often problematic policy proposal to combat bias~\citep{corbett2018measure}.
Perhaps unsurprisingly, group-unaware estimation policies perform poorly and have higher individual fairness gaps across all true skill levels, compared to group-aware estimation policies.} Finally, Figure \ref{fig:policycompare_barriers_withunaware}   also compares these policies with and without \textit{affirmative action}.

\section{Extensions: Strategic students and two schools}
\label{sec:strategic}

We extend our analysis to incorporate student incentives and school competition. Students strategically decide whether to take the test based on their expected admission probabilities and testing costs. Schools can also strategically decide which testing policy to use, since the outcomes of one school may depend on the policy of another. We first develop the model with incentives for schools and students, then characterize the student test-taking behavior illustrating that it may exhibit a non-monotonic pattern, and finally characterize equilibrium testing policies when schools compete and decide their test policy strategically.

\subsection{Extended model}

We extend the baseline model from Section~\ref{sec:baseline_model} to incorporate strategic student behavior and potential competition between schools.

\smallskip
\noindent\textbf{Schools.}
We consider two settings with strategic students: with one and two schools, respectively; the latter introduces competition between schools.  
Each school $J_i$, $i=1,2$, admits up to capacity $C_i$, and we assume an over-demanded market such that $C_1 + C_2 <1$.
Let $P^i$ denote the test policy of school $J_i$. For example, with two schools, if $J_1$ is test-based ($P^1_{\full}$) and $J_2$ is test-free ($P^2_{\subb}$), we write $\mathbf P=(P^1_{\full},P^2_{\subb})$. The testing policy determines the subset of features $S_i$ that school $i$ sees.  When the context is clear, we drop notation $i$.

\smallskip
\noindent\textbf{Students.}
Each student in group $g$ incurs a constant cost $c_g$ to take the test. A student does not know their own true skill $q$, but does know their other features $\thetaset_{\subb}$ and group membership $g$. They use this information to assess the probability of admission when deciding whether to take the test and apply. 
Admission to school $i$ is of valuation $v_i$. With two schools, we assume that students in both groups strictly prefer $J_1$ to $J_2$, i.e., $v_1>v_2$.\footnote{Our theoretical results do not depend on value $v$ directly but rather the ratio $\frac{c_g}{v}$ (or $\frac{c_g}{v_1 - v_2}$ in the two-school case).} 
To rule out trivial equilibria where no students take the test due to high testing costs, we assume that the \textit{cost-to-valuation ratio} $\frac{c_g}{v_i} < 1$ (and $\frac{c_g}{v_1 - v_2}<1$ in the two-school case).

Students are \textit{strategic} in the sense that they decide whether to apply to test-requiring schools; $\alpha\in\{0,1\}$ denotes the action of the student, where $\alpha =0$ corresponds to not taking the test and thus not applying to test-requiring schools and $\alpha = 1$ corresponds to taking the test and applying to test-requiring schools.  
If a school $J_i$ uses a test-free policy, then all students apply to it without taking the test.  

\smallskip
\noindent\textbf{School's selection policy, conditional on test policy.}
As in the base model, the school maximizes the academic merit of the admitted class, 
an objective that is strictly increasing in the skill estimate $\tilde q_{S_i}$. We define $Y_i(\thetaset_{S_i}, g; \mathbf{P})$ to be each school's selection policy. We show that, in both the single-school and the two-school setting, each school $i$ optimal selection policy is a threshold policy in which they determine a lower bound on the skill estimate $\tilde{q}^*_{S_i}$ and admit all students with estimated skill above the threshold, i.e., $Y_i(\thetaset_{S_i}, g; \mathbf{P}) = \indic\{\tilde q (\thetaset_{S_i}, g) \geq \tilde q^*_{S_i}\}$.  (See \Cref{eq:singleschoolproblem_main} for the single-school selection policy and Electronic Companion~\ref{app:strategic_two_schools}, Lemma~\ref{lemma:twoschoools_thresholdpolicies} for the two-school selection policy.)

\smallskip
\noindent\textbf{Student decisions.}
First consider a single-school setting. With testing policy $P_{\full}$, each student's expected utility depends on the valuation $v$ from getting admitted, the probability of being admitted, and the testing cost $c_g$. A student who does not apply to the school or is not admitted receives an outside option with valuation 0.  If their perceived probability of admission is sufficiently high to outweigh the cost of the test, they decide to take the test and apply to the school. 
Thus each student solves the following optimization problem:
\begin{equation}
\label{eq.student.decision.cost.theta}
    \alpha(\thetaset_{\subb}, g; P_{\full})=\arg \max_{\alpha \in \{0,1\}} \alpha \left( v \prob_{\theta_K} (Y = 1 \mid \thetaset_{\subb}, g, P_{\full}) -  c_g \right).
\end{equation}
If a school does not require the test ($P_{\subb}$), then the student always applies, i.e., $\alpha(\thetaset_{\subb}, g; P_{\subb}) =1$ for all $\thetaset_{\subb} \in \mathbb{R}^{K-1}$, $g \in \{A,B\}$.

Now consider the setting with two schools. With policies $\mathbf{P} = (P^1, P^2)$, students decide whether to apply to each school. Because applying to test-free schools is always beneficial, the key decision is whether to also apply to test-requiring schools. 
For example, if $\mathbf{P} = (P^1_{\full}, P^2_{\subb})$, then the decision to take the test (and thus apply to school $J_1$) is given by:
\small
\begin{equation*}
    \alpha(\thetaset_{\subb}, g; \mathbf{P}) = \arg \max_{\alpha \in \{0,1\}} \alpha \left( v_1 \prob (Y_1 = 1 \mid \thetaset_{\subb}, g, P^1_{\full})   - c_g  \right) + v_2 \prob (Y_1 = 0 \cap Y_2 =1 \mid \thetaset_{\subb}, g, {P}^2_{\subb}).
\end{equation*}
\normalsize
If the student applies to both schools and is admitted to both, they enroll in $J_1$. (Student decisions for other policies are given in Electronic Companion~\ref{app:strategic_two_schools}.)
We will show that this optimization problem induces more complex application behavior than with one school.

\smallskip
\noindent\textbf{Summary of school and student decisions.}
Both schools simultaneously choose whether to require the test. After observing this pair of policies, students decide whether to take the test and where to apply, taking into account testing costs and their preferences over schools. Each school $J_i$ then observes its applicant pool and admits its top $C_i$ applicants according to its admission rule. 
Finally, students enroll in their most preferred school among those that admit them.

\smallskip\noindent\textbf{Equilibria.}
In the setting with a single school, we restrict attention to the interesting setting where $P_{\full}$, i.e., the school requires the test. We say that a pair $(\alpha^*, Y^*)$
constitutes an \textit{equilibrium} if: (i) for all  $\thetaset_{\subb} \in \mathbb{R}^{K-1}$ and $g \in \{A,B\}$, $$\alpha^*(\thetaset_{\subb}, g; P_{\full}) = \arg \max_{\alpha \in \{0,1\}} \alpha \left( v \prob (Y^* = 1 \mid \thetaset_{\subb}, g; P_{\full}) - c_g \right);$$ and  (ii) for all  $\thetaset_{\full} \in \mathbb{R}^{K}$ and $g \in \{A,B\}$, $Y^*(\thetaset_{\full}, g; P_{\full}) = \indic\{\tilde q (\thetaset_{\full}, g) \geq \tilde q^*_{\full}\}$, where the optimal admission threshold  
$\tilde q^*_{\full}$  is the corresponding solution to 
\begin{equation}
    \begin{split}
    \label{eq:singleschoolproblem_main}
    \tilde q^*_{\full} =  \min \left\{z \in \mathbb{R}: 
    \sum_g \pi_g \expec_{\thetaset_{\full}}[\alpha^*(\thetaset_{\subb},g; P_{\full})  \mid \tilde q (\thetaset_{\full}, g)  \geq z, g, P_{\full}  ] \leq C\right\}.
\end{split}
\end{equation}
 In words, (i) requires that each student makes their optimal test-taking decision given the school's equilibrium decisions, and (ii) requires that the school's decisions correspond to their optimal threshold policy given their capacity and skill estimates.\footnote{Note that the above equilibrium definition depends on the full vector of $K-1$ feature values $\thetaset_{\subb}$. 
However, as formalized in Lemma \ref{lemma:reduced_equilibria_single} in Electronic Companion~\ref{app:strategic_single}, we can equivalently focus solely on the set of student actions that are dependent on $\tilde q_{\subb}$ instead of the entire vector of $K-1$ feature values $\thetaset_{\subb}$, where $\tilde q_{S} \triangleq \tilde q (\thetaset_{S}, g)$  is the skill estimate using feature set $S$ and the group $g$ of the given student.
We thus work directly with the \textit{reduced-form equilibrium} $(\alpha^*(\tilde q_{S_i}, g; P_{S_i}), Y^*(\tilde q_{S_i}; P_{S_i}))$. In this case, the actions still also depend on the precisions of this estimate and the test score, which may be group specific. The same also applies to the two-school setting (see Lemma~\ref{lemma:twoschoools_thresholdpolicies}).}
With two schools, we say that a triple $(\mathbf{\alpha}^*, \mathbf{Y}^*, \mathbf{P})$ constitutes an \textit{equilibrium} if  (i) students' application decisions are optimal conditional on the schools' selection policies and (ii) each school's testing policy and selection policy maximizes its academic merit given student application decisions and the other school's policy. The formal definition is in Electronic Companion~\ref{app:strategic_two_schools}. 
In the next sections, we will analyze the set of policy pairs $(P^1, P^2)$ and characterize the equilibrium behavior of schools.

\subsection{Optimal student test-taking behavior}
\label{sec.testtakingbehavior}

In order to analyze the equilibrium behavior described above and the schools' optimal testing policy, we first analyze the optimal student test-taking behavior, as a response to a fixed testing policy. We begin with the single-school case where the school requires the test, which provides intuition for the key factors in students' strategic behavior. In this setting, students follow a simple threshold strategy: those with sufficiently high skill estimates from their non-test features take the test, while those below the threshold opt out. This threshold structure results in positive assortative behavior: higher-skilled students are more likely to apply and, conditional on applying, more likely to be admitted.

In contrast, introducing a second school fundamentally changes this behavior. In a two-school setting where the more preferred 
school $J_1$ uses a test-based policy while the less preferred   
school $J_2$ uses a test-free policy, we show that the single-threshold structure breaks down. Instead, students may exhibit a non-monotonic test-taking behavior: high-skilled students with high admission probability and 
some low-skilled students with lower outside option take the test and apply, but middle-skilled students may opt out, preferring a guaranteed admission to the less selective school. This breakdown in monotonicity can lead to cases where $J_1$ may attain lower academic merit than the less preferred school $J_2$, if $J_1$ requires the test and $J_2$ does not.

\subsubsection{Single school: Threshold-based student strategies}
\label{sec:singleschoolstratetic}
Consider a setting with a single school that uses the testing policy $P_{\full}$. Students can observe their non-test features $\thetaset_{\subb}$ before assessing their probability of admission, which depends on the other students' behavior, and deciding whether to take the test. 
Despite the complexity introduced, the following result shows that the student's optimal decision follows a threshold-strategy that depends on their skill estimate $\tilde q_{\subb}$ from the non-test features $\thetaset_{\subb}$.

\begin{restatable}{lemma}{lemmathresstrategy}
\label{lemma.thres.strategy}
Suppose that the school uses a test-based policy $P_{\full}$. There exists a unique equilibrium $(\alpha^*,Y^*)$, with the following property:  there is a threshold $\underline{q}^ g$ such that  students in group $g$  take the test ($a=1$) if and only if $ \tilde q_{\subb}\geq \underline{q}^ g$,
where 
\begin{equation}
    \label{eq.decision_threshold_taking_test}
    \underline{q}^g = \tilde q_{\full}^* - \Phi^{-1} \left(1 - \frac{c_g}{v}\right) \left ( \frac{\sigma_{gK}^{-2}}{\sigma^{-2} + \sum_{k \in \full} \sigma_{gk}^{-2}}\right)
\sqrt{\sigma^{2}_{gK} + \frac{1}{\sigma^{-2} + \sum_{k \in \subb} \sigma_{gk}^{-2}} },
\end{equation}
and  $\tilde q^*_{\full}$ is the solution to \Cref{eq:singleschoolproblem_main} so that $Y^*(\tilde q_{\full}; P_{\full})= \indic\{ \tilde q_{\full} \geq \tilde q^*_{\full}\}$.
\end{restatable}

The above result implies positive assortativity in test-taking decisions.
In contrast to the non-strategic setup with access barriers of Theorem~\ref{thm:threshold_char_without_aa}, where all students had  the same a priori probability of being eligible to apply, the decision to apply now correlates with each individual student's true skill level through the other features (see Figure \ref{fig:simulations_prob_apply_by_skill} in Electronic Companion~\ref{app:synthetic_simulations_strategic_single_school}). These selection effects change the composition of the applicant pool. 
We note that applying does not guarantee admission: conditional on applying, higher true skill (and thus higher skill estimate $\tilde q_{\subb}$) 
increases the admission probability, yet some students will still pay the cost $c_g$, apply, and ultimately be rejected. 
For further characterization of the equilibrium outcomes in this single-school, strategic student setting, see Electronic Companion~\ref{app.singleschool.effectsoftestcosts}.

\subsubsection{Two schools: Non-monotonic student test-taking behavior}
\label{sec:twoschoolstrategicbehavior}

The introduction of a second school fundamentally changes student strategic behavior. Consider an asymmetric case where the preferred school $J_1$ is test-based ($P^1_{\full}$) and $J_2$ is test-free ($P^2_{\subb}$); we write $\mathbf P=(P^1_{\full},P^2_{\subb})$. 
This is equivalent to adding a less preferred school to the single-school setting above, and mirrors a common real-world practice in which a more selective school requires a test and a less selective school does not.
Here, competition---induced by a less-preferred school---breaks monotonic testing behavior of the single-school setting. 

\begin{restatable}{proposition}{thmtwoschools}
\label{thm.twoschools}
  Consider the setting with two schools defined above. Then, there exists a unique equilibrium $(\alpha^*, \mathbf{Y}^*)$ with the following properties:
  \begin{itemize}
      \item [(i)] School $J_i$'s selection policy $Y^*_i$ takes a threshold form: $Y^*_i (\tilde q_{S_i}; \mathbf{P})= \indic\{\tilde q_{S_i} \geq \tilde q^*_{i, S_i}\}$, where $\tilde q^*_{i, S_i}$ is  the admission threshold of school $i$ using feature set $S_i$.
      
      \item[(ii)] Students in group $g$ with preliminary skill estimate $\tilde q_{\subb}$ take the test and apply to school $J_1$,  if and only if one of the following conditions holds:
      \begin{itemize}
          \item [1)]  either  $\tilde q^*_{2, \subb} > \tilde q_{\subb} \geq \underline{q}_{l}^g $ where
            \begin{equation}
            \label{eq.q_l_thres}
    \underline{q}_{l}^g = \tilde q_{1, \full}^* - \Phi^{-1} \left(1 - \frac{c_g}{v_1}\right) \left ( \frac{\sigma_{gK}^{-2}}{\sigma^{-2} + \sum_{k \in \full} \sigma_{gk}^{-2}}\right)
\sqrt{\sigma^{2}_{gK} + \frac{1}{\sigma^{-2} + \sum_{k \in \subb} \sigma_{gk}^{-2}} }; 
\end{equation}

\item[2)] or $\tilde q_{\subb} \geq \max\{\underline{q}_{h}^g , \tilde q^*_{2, \subb}\}$, where
\begin{equation}
\label{eq.q_h_thres}
    \underline{q}_{h}^g = \tilde q_{1, \full}^* - \Phi^{-1} \left(1 - \frac{c_g}{v_1-v_2}\right) \left ( \frac{\sigma_{gK}^{-2}}{\sigma^{-2} + \sum_{k \in \full} \sigma_{gk}^{-2}}\right)
\sqrt{\sigma^{2}_{gK} + \frac{1}{\sigma^{-2} + \sum_{k \in \subb} \sigma_{gk}^{-2}} }.
\end{equation}
      \end{itemize}
     
     Furthermore, $\underline{q}_{l}^g < \underline{q}_{h}^g $ for both groups $g \in \{A,B\}$.

\normalsize  
  \end{itemize}
\end{restatable}

\begin{figure}
\centering
\begin{tikzpicture}
  \begin{axis}[
    width=8cm,
    height=6cm,
    axis lines=middle,
    xmin=-4, xmax=4,
    ymin=0, ymax=0.45,
    xlabel={$\tilde q_{\subb}$},
    xtick={-0.3, 0.2, 0.9},
   xticklabels={$\underline{q}^g_l$, $~~~\tilde q^*_{2, \subb}$, $~~~~~\underline{q}^g_h$},
    ytick=\empty,
    domain=-4:4,
    samples=100,
    axis y line=none,
    smooth,
    ticklabel style = {font=\small}
    ]  
    \addplot [fill=yellow!30, draw=none, domain=-0.3:10, area legend] {gauss(0,1)} \closedcycle;
    
    \addplot [fill=bluebell!50, draw=none, domain=-0.7:10, area legend] {min(gauss(0,1),gauss(-0.9,0.97))} \closedcycle;

  \addplot [fill=bluebell!50, draw=none, domain=0.2:0.9, area legend] {gauss(0,1)} \closedcycle;

    \addplot [fill=white, draw=none, domain=-0.75:0.2, area legend] {gauss(-0.9,0.97))} \closedcycle;
    
    \draw [dashed] (axis cs: 0.2,0) -- (axis cs: 0.2, 0.39);

    \draw [dashed] (axis cs: -0.3,0) -- (axis cs: -0.3, 0.38);

    \draw [dashed] (axis cs: 0.9,0) -- (axis cs: 0.9, 0.26);
      \addplot [black, very thick] {gauss(0,1)};
  \end{axis}
\end{tikzpicture}
~
\begin{tikzpicture}
  \begin{axis}[
    width=8cm,
    height=6cm,
    axis lines=middle,
    xmin=-4, xmax=4,
    ymin=0, ymax=0.45,
    xlabel={$\tilde q_{\subb}$},
    xtick={-0.3, 0.15, 0.7},
   xticklabels={$\underline{q}^g_l$,  $\underline{q}^g_h$, $~~~~~\tilde q^*_{2, \subb}$},
    ytick=\empty,
    domain=-4:4,
    samples=100,
    axis y line=none,
    smooth,
    ticklabel style = {font=\small}
    ]

    \addplot [fill=yellow!30, draw=none, domain=-0.3:10, area legend] {gauss(0,1)} \closedcycle;
     \addlegendentry{$J_1$}

    \addplot [fill=bluebell!50, draw=none, domain=0.7:10, area legend] {min(gauss(0,1),gauss(-0.9,0.97))} \closedcycle;
     \addlegendentry{$J_2$}

    \addplot [fill=white, draw=none, domain=-0.75:0.7, area legend] {gauss(-0.9,0.97))} \closedcycle;

    \draw [dashed] (axis cs: 0.15,0) -- (axis cs: 0.15, 0.40);

    \draw [dashed] (axis cs: -0.3,0) -- (axis cs: -0.3, 0.38);

    \draw [dashed] (axis cs: 0.7,0) -- (axis cs: 0.7, 0.31);
      \addplot [black, very thick] {gauss(0,1)};
  \end{axis}
\end{tikzpicture}
\caption{Students in group $g$ enrolled at schools $J_1$ and $J_2$. The left and right panel of the figure correspond to $\tilde q^*_{2, \textrm{SUB}} < \underline{q}_h^g$ and $\tilde q^*_{2,\textrm{SUB}} > \underline{q}_h^g$, respectively. 
In both cases, all students apply to $J_2$ but only the students with skill estimates above the cutoff $\tilde q^*_{2, \textrm{SUB}}$ are admitted. Among those students only the mass of students in the purple-shaded area will accept $J_2$'s offer, since a fraction of the students admitted to $J_2$ also applied and got admitted to $J_1$ (yellow-shaded area). However, as the left panel illustrates for $\tilde q^*_{2, \textrm{SUB}} < \underline{q}_h^g$, not all students who apply to $J_2$ necessarily apply to $J_1$ as well; indeed, the admitted students with $\tilde q^*_{2, \textrm{SUB}} \leq \tilde q_{\textrm{SUB}} <\underline{q}^g_h$ have applied only to $J_2$. 
Furthermore, observe that the students who successfully apply to $J_1$ (yellow-shaded area) are not always characterized by higher skill estimates than the admitted students at $J_2$ (purple-shaded area). Finally, in both panels, all students with $\underline{q}_l^g < \tilde q^*_{2, \textrm{SUB}}$ applied to $J_1$, however only the yellow-shaded mass is admitted.}
\label{fig.merit.twoschools.strategic}
\end{figure}

The above theorem establishes several interesting equilibrium properties. Even this two-school setting induces complex student strategic behavior.
\Cref{fig.merit.twoschools.strategic} illustrates the student test-taking behavior, described in Part (ii) above.  In particular, student test-taking behavior is not necessarily characterized by a single threshold in their skill estimates $\tilde q_{\subb}$. Some students with lower skill estimates $\tilde q_{\subb}$ (who, after observing the first $K-1$ features, know they will not be admitted to the test-free school $J_2$) take the test to reattain a chance of admission to school $J_1$, {while other students with higher skill estimates $\tilde q_{\subb}$ may choose not to take the test, preferring the safer option of $J_2$.}

This non-monotonicity in application behavior can generate a mismatch between student skill and school ranking. Although schools continue to use admission thresholds that are increasing in their ranking (Part (i)), the non-monotonic student responses (Part (ii)) break the positive assortativeness that matching models typically exhibit~\citep{chade2017sorting}. 
In Proposition~\ref{thm.twoschools.full} (Parts (iii)-(iv)), we show that, as a result, $J_1$ may, in some cases, attain lower academic merit than the less-preferred school $J_2$. Such a reversal cannot occur when both schools adopt the same testing policy---then, each student either applies to both schools or neither, eliminating the possibility of differential self-selection. We further show that heterogeneous testing policies across schools may also lead to differing levels of diversity, although our analysis does not preclude both schools experiencing low diversity.

\subsection{Equilibrium testing policy outcomes}
In light of the complex dynamics in the student test-taking behavior, we consider whether \textit{schools} themselves are incentivized to require the test.
We return to the general setting where the two schools $J_1$ and $J_2$ can each choose their testing policy and optimize for the academic merit of the student body.\footnote{{For a partial characterization of school behavior with respect to  diversity, see Proposition~\ref{prop.strategic.twoschool.diversity}.}} Now, a school's optimal testing policy may depend on its competitor's requirements. For example, a school may be incentivized to drop the test if most other schools have already done so, as fewer students would be taking the test.
Conversely, if all other schools require the test, then many students may take it regardless, potentially incentivizing the school to require the test as well. Section \ref{ssec:two_schools_strategic_theory} 
characterizes the schools' equilibrium policies. Section 
\ref{ssec:simulation_strategic_schools} highlights, through simulations, settings in which one school's policy depends on the other school's decision.

\subsubsection{Characterization of equilibrium policies.}
\label{ssec:two_schools_strategic_theory}

{Theorem \ref{thm.twoschool.strategic.equilibria} analyzes the set of policy pairs $(P^1, P^2)$ and identifies the conditions under which $(P_{\subb}, P_{\full})$, $(P_{\full}, P_{\subb})$, and $(P_{\full}, P_{\full})$ are equilibrium policies.} Recall that, at equilibrium, $\tilde{q}^*_{i, S_i}$ denotes the optimal admission threshold of school $i$ using feature set $S_i$.

\begin{restatable}[Academic merit and two-school equilibria with strategic students]{theorem}{thmtwoschoolsstrategicequlibria}
\label{thm.twoschool.strategic.equilibria}
Suppose that each school chooses a policy to maximize academic merit. 
 Under school policies $\mathbf{P}$, let $M_g (\mathbf{P})$ denote the mass of test-taking students from group $g$ and let $L^i_g (\mathbf{P})$ denote the academic merit of admitted students from group $g$ to school $J_i$ (see Electronic Companion~\Cref{app:strategic_two_schools} for definitions). 
 Define also 
 $$K_g (q)  \triangleq 1- \Phi\left(\frac{q - \mu}{\sigma \sqrt{\frac{\sum_{k\in \subb} \sigma_{gk}^{-2}}{\sigma^{-2} + \sum_{k \in \subb} \sigma_{gk}^{-2}} }}\right)$$ to be the mass in group $g$ with skill estimates $\tilde q_{\subb} \geq q$.
 Then, the following hold:
\begin{itemize}
    \item[(i)]  Policy $(P_{\subb}, P_{\full})$ is an equilibrium if and only if 
     \begin{equation}
    \label{eq.condition.drop.twoschools.strategic.J1}
       c_g \geq \hat{c}_g \mathrm{~and~} M_g ({P}_{\full}, P_{\full}) < 
       K_g (\tilde q^*_{1, \subb})
        \end{equation} 
        for exactly one group $g \in \{A, B\}$ and some threshold $\hat{c}_g >0$, while for $g' \neq g$, it holds
            \begin{equation}
                \label{eq.sub.full.J2.cond1}
                M_{g'} (P_{\subb}, P_{\full}) \geq C_2
            \end{equation}
\begin{equation}
\label{eq.sub.full.J2.cond2}
\mathrm{~~~\textit{and}~~~~}
\pi_{g'}L^2_{g'} (P_{\subb}, P_{\full}) >  \pi_A L^2_{A} (P_{\subb}, P_{\subb}) + \pi_B L^2_{B} (P_{\subb}, P_{\subb}).
\end{equation}

\item[(ii)] Policy $(P_{\full}, P_{\subb})$ is an equilibrium if and only if 
\begin{equation}
\label{eq.condition.drop.twoschools.strategic.J2.1} \begin{split}
       & c_g \geq  \hat{c}'_g,  ~~~
              M_g ({P}_{\full}, P_{\full}) < K_g (\tilde q^*_{2, \subb}) 
                    \end{split}
        \end{equation}
\begin{equation}
\label{eq.condition.drop.twoschools.strategic.J2.2} \begin{split}
             & \mathrm{~~~\textit{and either}~}  M_g ({P}_{\full}, P_{\subb})  > K_g (\tilde q^*_{1, \subb})
                \mathrm{~~\textit{or}~~} c_g \leq\hat{c}_g'',
        \end{split}
        \end{equation}
        for exactly one group $g \in \{A, B\}$ and  thresholds $ \hat{c}_g'' > \hat{c}_g'>0$.

    \item[(iii)] There exist functions $\underline{c}_A, \underline{c}_B: \mathbb{R}_+ \rightarrow \mathbb{R}_+$  
    such that policy $(P_{\full}, P_{\full})$ is an equilibrium if and only if $c_A \leq \underline{c}_A (c_B)$ and $c_B \leq \underline{c}_B (c_A)$.
\end{itemize}
\end{restatable}

{Intuitively, both schools want to maintain the testing requirement when test costs are sufficiently low for both groups (Part (iii)). However, there exist regimes where one school wants to drop the test. Part (i) shows that the more preferred school, $J_1$, wants to drop the test when test costs of one group $g$ are sufficiently high and an insufficient number of group $g$ students take the test. Dropping the test helps $J_1$ attract more students from group $g$ in a way that increases the average skill of its applicant pool, thanks to group $g$. At the same time, there are extreme scenarios school $J_2$ might prefer to keep the test, even if this implies admitting zero students from group $g$; however, our simulations suggest that this is a rare case and school $J_2$ is better off following $J_1$'s lead and dropping the test. Part (ii) shows that  $J_1$ might wish to maintain its test requirement if enough students from both groups take the test. As the more preferred school, $J_1$ has the advantage to keep the test, whereas the less preferred school, $J_2$, must drop the test to attract students from the group with higher costs. Typically, this equilibrium arises if one group faces moderately high test costs.} Note that the theorem does not rule out $(P_{\subb}, P_{\subb})$ being an equilibrium---the conditions are not mutually exclusive, and our analysis does not imply that the equilibrium is unique. Further, in simulations, we find settings in which $(P_{\subb}, P_{\subb})$ is the equilibrium (Figure \ref{fig:twoschool_strat_cost2.0}).

\subsubsection{Simulation results}
\label{ssec:simulation_strategic_schools}
We simulate the resulting admission outcomes decisions when there are two schools vying for the same student population and highlight two different settings where one school's optimal testing decision depends on the other school's decision. 
The simulation closely resembles that of the single school, strategic student setting outlined in Appendix \ref{app:synthetic_simulations_strategic_single_school}; for details, see Appendix \ref{app:synthetic_simulations_strategic_two_schools}.

\begin{figure*}[tb]
\centering
\begin{subfigure}[b]{0.49\textwidth}
    \centering
    \includegraphics[width=\linewidth]{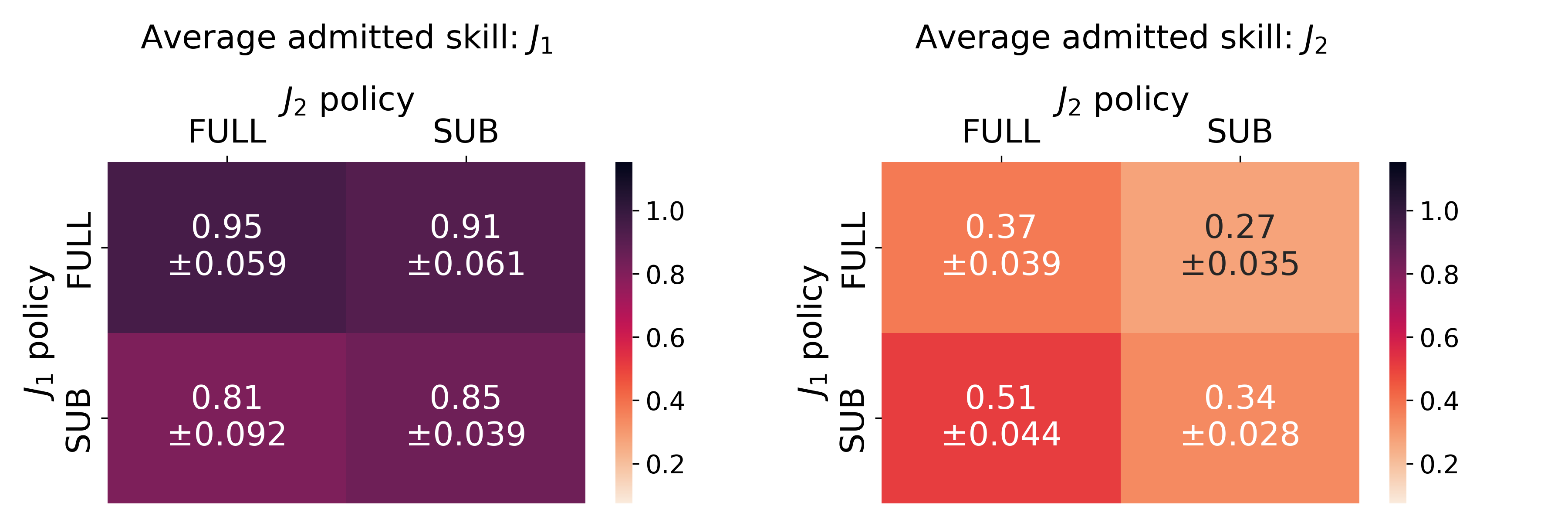}
    \caption{\small \centering Test cost $c_A = c_B = 0.5$. Unique equilibrium is $(P^1_{\full}, P^2_{\full})$.
    }
    \label{fig:twoschool_strat_cost0.5}
\end{subfigure}
\hfill
\begin{subfigure}[b]{0.49\textwidth}
    \centering
    \includegraphics[width=\linewidth]{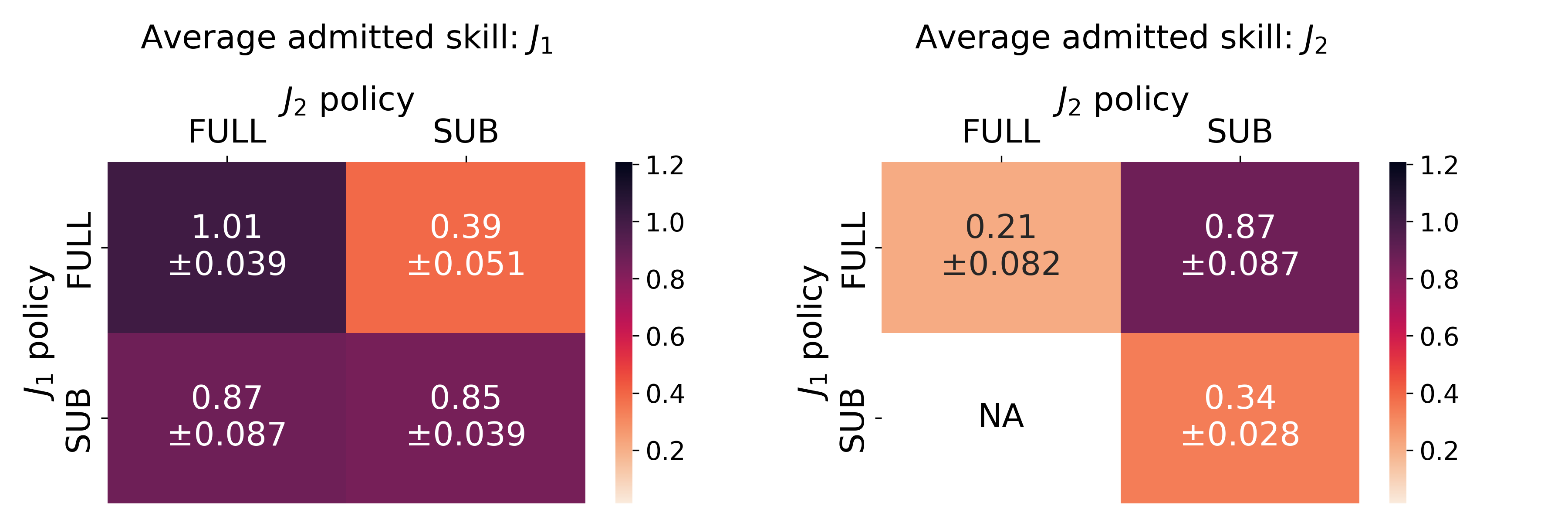}
    \caption{\small \centering Test cost $c_A = c_B = 2.0$. Unique equilibrium is $(P^1_{\subb}, P^2_{\subb})$.
    }
    \label{fig:twoschool_strat_cost2.0}
\end{subfigure}

\caption{Average admitted skill of the resulting student body, in $J_1$ (preferred school) and $J_2$ (less preferred school), as a function of $J_1$ and $J_2$ testing policies. {Error bars are shown for 2 standard errors.} In both settings, students in both groups have utilities $v_1=3$ and $v_2=2$. Figure \ref{fig:twoschool_strat_cost0.5} corresponds to low test cost ($c_A=c_B=0.5$) and Figure \ref{fig:twoschool_strat_cost2.0} corresponds to high test cost ($c_A=c_B=2.0$). See Appendix~\ref{app:synthetic_simulations_strategic_two_schools} for full parameter details. {In Figure \ref{fig:twoschool_strat_cost2.0}, no students apply to $J_2$ under policy pair $(P^1_{\subb}, P^2_{\full})$.}}
\label{fig:two_school_strategic_heatmap}
\end{figure*}

Figure \ref{fig:two_school_strategic_heatmap} shows the academic merit for schools $J_1$ and $J_2$ under various policy pairs $(P^1, P^2)$. 
{Figure \ref{fig:twoschool_strat_cost0.5} demonstrates a low test cost setting where the optimal policy of both the preferred school, $J_1$, and the less preferred school, $J_2$, is to require the test, regardless of the other school's policy. 
By contrast, \Cref{fig:twoschool_strat_cost2.0} demonstrates a high test cost setting where $J_1$'s optimal policy depends on $J_2$'s policy. In this setting, $J_1$'s optimal policy is to require the test when $J_2$ requires the test and to drop the test otherwise. $J_2$'s optimal policy is to drop the test, regardless of $J_1$'s policy.} 
Notably, in both settings, {when the preferred school $J_1$ requires the test, it benefits when the other school $J_2$ also requires it.}

{The results in the two settings in Figure \ref{fig:two_school_strategic_heatmap} parallel the results in the non-strategic, single school setting, albeit with more complex behavior. In the low test cost setting of Figure \ref{fig:twoschool_strat_cost0.5},
if both schools aim to maximize academic merit, then the equilibrium strategy is $(P^1_{\full}, P^2_{\full})$. This policy pair also ensures the optimal academic merit for each school, out of the four possible policy pairs. This setting parallels the non-strategic setting where students have a high level of access $\gamma_g$, and thus dropping the test does not significantly increase the number of students who apply. Conversely, in the high test cost setting of Figure \ref{fig:twoschool_strat_cost2.0}, the unique equilibrium strategy is $(P^1_{\subb}, P^2_{\subb})$, which parallels the non-strategic setting where students have low level of test access $\gamma_g$ and so dropping the test increases the number of students who apply. However, the strategic components induce more complex behavior. The preferred school, $J_1$, receives the highest average admitted skill under the policy pair $(P^1_{\full}, P^2_{\full})$, but the less preferred school, $J_2$, is incentivized to deviate and drop the test. Similarly, $J_2$ receives the highest average admitted skill under the policy pair $(P^1_{\full}, P^2_{\subb})$, when few students apply to $J_1$ and so $J_2$ has a larger pool of students who will enroll in $J_2$ if admitted, but $J_1$ is incentivized to deviate and drop the test. Also note that, if $J_1$ drops the test, then $J_2$ under the $P^2_{\full}$ policy has no applicants and so must also drop the test.}

The high cost setting of Figure \ref{fig:twoschool_strat_cost2.0}
depicts a setting where the preferred school $J_1$'s best response can depend on the less preferred school's policy. 
Here, the equilibrium strategy is $(P^1_{\subb}, P^2_{\subb})$. While $J_2$'s best response is to choose $P^2_{\subb}$ regardless of $J_1$'s policy, $J_1$'s best response varies: if $J_2$ chooses $P^2_{\full}$, then $J_1$'s best response is $P^1_{\full}$ and if $J_2$ chooses $P^2_{\subb}$, $J_1$'s best response is $P^1_{\subb}$. 
Intuitively, this occurs because the cost of the test is high enough relative to the students' valuation for $J_1$; thus, many students do not take the test solely for $J_1$ if they have the option of applying to $J_2$ without incurring test costs, but would take the test if both schools required it.

For the settings in which one school's optimal testing decision is independent of their competitor's strategy, the change in academic merit due to the other school's testing policy is not large enough to change the optimal testing behavior. 

Together, these simulations demonstrate that multiple schools environments can significantly influence each school's optimal testing policy. Schools cannot determine its testing strategy in isolation without considering other schools, even in the simplified setting in which all students prefer one school over the other.

\vspace{-0.3em}
\section{Conclusion}
\label{sec:conclusion}

We formalize the trade-off between information access and barriers in a testable framework, an important aspect of the decision for colleges to keep or drop standardized testing. As we show, there are reasonable parameter settings in which dropping testing improves or worsens both academic merit and diversity goals.

From a conceptual and modeling perspective, our work contributes  to the growing literature of fairness in decision-making systems. 
Our multi-feature version of the seminal model by \citet{phelps1972statistical} naturally 
 provides a general framework for analyzing how decision-makers use imperfect and differentially informative information in education and beyond. Our results further underscore that the \textit{design and choice of input features}---not just the decision rule itself---constitute a key lever for shaping fairness and efficiency.

Practically, our work suggests that schools must further invest in better signals and in expanding their applicant pools. In settings where test scores are found to be highly effective for skill estimation \textit{but} also impose large barriers, our analysis further suggests the value of another option for increasing fairness in admission: decreasing the access barriers. For example, several states have implemented policies to make the SAT and/or ACT mandatory for all public school students, while also reducing both financial and logistical barriers by paying the financial costs of test registration and offering the tests at more convenient times \citep{hyman2016act}.

We derived our theoretical results in a highly stylized setting where the school is Bayesian-optimal and knows the parameters of the model. While such a scenario is, in practice, unattainable, this work can be viewed as an information-theoretic limit to how well schools can identify the most qualified students. Even if a school had full knowledge of each group's feature distributions (i.e., were able to perfectly evaluate students' skills in context),  the school could not completely mitigate inequalities in admissions.

We remark that many of our results may extend to more general models. While we work with the standard Gaussian framework by~\citet{phelps1972statistical}, our insights hold for broader distributions where group $A$’s skill estimates are a mean-preserving spread of group $B$’s \citep{blackwell1953equivalent}. Similarly, relaxing the assumptions of feature independence and additive, uncorrelated noise would preserve our main conclusions, albeit without closed-form analytic solutions. Further details 
appear in Electronic Companion~\ref{app.general.distributions}.

Finally, our analysis considered the impact that access barriers to testing and strategic incentives might have  on the applicants' behavior and the schools' policy decisions. However, several other factors such as differential access to test preparation services \citep{park2015benefits} and family support~\citep{ espenshade2013no}, 
 may also constitute significant barriers for certain groups of students.  Many of these factors may additionally introduce compounding effects that contribute to students' future success, beyond those that we consider in our model.

\bibliographystyle{ACM-Reference-Format}
\bibliography{references,rebuttal_bib}


\newpage

%
%
%

 \appendix
\onecolumn

\section{Calibrated simulations with UT Austin data}
\label{sec:simulations_revision}

{We calibrate our model to empirical data to assess the effects of dropping test requirements under our model.} 
Our results establish that (a) there are reasonable parameter ranges both in which dropping the test can be beneficial and harmful for the desiderata, and (b)  when tests are required, outcomes can depend on whether the model allows students to self-select to take the test. 
Real admissions decisions are much more complex than our model \citep{selingo2020whogetsin}, and a key challenge in empirical admissions settings is selective data availability \citep{rothstein2004college}, since we typically only observe college outcome data for those admitted, which partially depends on test scores and other admission features. Given these limitations, our calibrated simulation exercise should be viewed as suggestive examples that dropping the test can either improve or worsen the desiderata, as opposed to establishing optimal policy for any particular setting.

\smallskip

\noindent\textbf{Data.} Our data is from the Texas Higher Education Opportunity Project (THEOP), a semipublic dataset of applications and transcripts for universities in Texas~\citep{tienda2011texas}. We focus on data from the University of Texas at Austin, for students who enrolled there in 1992-1997 and completed at least 24 credit hours.\footnote{This period represents admissions from before the time Texas adopted the Top Ten Percent rule, in which all students at the top of their Texas public high school class were accepted regardless of other application components.} For each student, we observe their high school \textit{class rank} (rounded to nearest decile), standardized \textit{test score} (SAT, or ACT score translated to equivalent SAT score); we also observe characteristics of their high school (including relative \textit{economic privilege} rounded to nearest quartile, which is a measure of the socioeconomic status of the students the high school serves). Since we consider enrolled students, we observe their GPA and number of credit hours for each enrolled semester, that we use to calculate overall GPA in their first year and afterwards. 

\smallskip

\noindent\textbf{Calibration and simulation setup.} We conduct a calibrated simulation exercise for a hypothetical admissions setting in which the applicant population looks distributionally similar to students who in reality {enrolled} to UT Austin.\footnote{This could reflect, for example, admissions at a college more selective than UT Austin in the time period considered.} For each individual, we use their cumulative \textit{college GPA}---not counting their first year---to represent their true skill. Then, as features, we use (in various simulations) their 
high school \textit{class rank}, 
standardized \textit{test score} and/or college \textit{first-year GPA}. To form the two groups, we take the upper (group $A$) and lower (group $B$) halves of the high schools' \textit{economic privilege}\footnote{Column by the data provider, defined as ``Publicly available data from the Texas Education Agency (TEA) is used to stratify regular, Texas public high schools according to the socioeconomic status of the students they serve. The 25\% of high schools with the lowest percent of students ever economically disadvantaged are designated as Upper quartile. The 25\% of high schools having the highest percent of students ever economically disadvantaged are designated as Lower quartile. Because the statewide share of economically disadvantaged students rose over time, quartile cut points are calculated separately for each year.'' We then binarize the quartiles.} index.

We calibrate our model parameters to the empirical data. We calibrate the true skill mean $\mu$ and variance $\sigma^2$ to the empirical mean and variance of the cumulative college GPA, excluding the first year. We then calibrate the conditional feature distributions for each group, which in our model are distributed as $\theta_k \sim N(q + \mu_{gk}, \sigma^2_{gk})$; i.e., for each group $g$ and feature $k$ pair, we need estimates of $\mu_{gk}$ and $\sigma^2_{gk}$, the \textit{conditional} mean and variance of the feature given the student's true skill. We estimate these values  by running an ordinary least squares regression $\theta_{k} \sim q$, where $q$ is the observed college GPA. Let the fitted regression model be $\hat{\theta}_k = \hat{\beta}_0 + \hat{\beta}_1 q$, so that $q = (\hat{\theta}_k - \hat{\beta}_0) / \hat{\beta}_1 $. To normalize the features so that a one unit increase in the feature corresponds to a one unit increase in skill level (so that the feature has mean $q + \mu_{gk}$), we center and scale each observed feature to obtain $\theta_k' = (\theta_k - \hat{\beta}_0) / \hat{\beta}_1$, and likewise for the predicted features $\hat{\theta}_k$ to obtain $\hat{\theta}_k'$. Now, we calibrate the model to the distribution of $\theta_k'$. We set $\mu_{gk}$ to be the sample mean of $\theta_k'$ and $\sigma^2_{gk}$ to be the sample variance of the residuals $\hat{\theta}_k' - \theta_k'$. 
{The calibrated standard deviation parameters $\sigma_{gk}$ are in \Cref{tab:theop_calibrated_feature_var}.} 
\begin{table}[tbh]
\small
	\centering
 \begin{tabular}{c |c|c | c}
\textbf{Group} & \textbf{HS class rank} & \textbf{College GPA, 1st year} & \textbf{Test score}\\ \hline
A (high economic privilege)   &  2.00  &  0.98 & 3.30 \\
B (low economic privilege)   &  2.65  &  0.91 & 3.11 \\
 \end{tabular}

\caption{Calibrated feature standard deviations $\sigma_{gk}$ for each group $g$ and feature $\theta_k$. This calibration suggests that class rank is relatively more predictive of cumulative college GPA for the high economic privilege group while the test score is more predictive for the low economic privilege group -- consistent with \citet{berkeleyreport} and \citet{mit2022}. Most predictive for each is  the first-year college GPA.}
\label{tab:theop_calibrated_feature_var}
\end{table}

Using these calibrated mean and variance parameters, we then simulate our model, with the students' applications and the school's Bayesian updating as described earlier.  
We simulate the admission outcomes in both the setting with strategic students and the setting with non-strategic students. In both settings, we fix group $A$ to have full access to the test ($\gamma_A=1$ and $c_A=0$ in the non-strategic and strategic settings, respectively) and vary the level of access for group $B$ students. We set the student utility for the school to be $v=5$. We fix an equal proportion of students from each group in the candidate pool ($\pi=0.5$). We simulate a setting with 10,000 applicants and a capacity of 1,000. For each parameter set, we run 100 simulations and report the mean and 95\% confidence intervals across simulation runs.

We simulate two informational cases, which correspond to the school having access to different features when making its decision. 
\begin{description}
	\item[Low informativeness:] \textit{Class rank} and (potentially) \textit{Test score}. Simulates, for example, a setting in which the application pool and information available is incoming first-year students at UT Austin.
	\item[High informativeness:] \textit{First-year GPA} and (potentially) \textit{Test score}. Simulates, for example, a setting in which the application pool and information available is students at the end of their first year at UT Austin. 
\end{description}

To make the non-strategic and strategic settings comparable, we define the notion of \textit{test access level} for group $B$ as the proportion of group $B$ students taking the test. In the non-strategic setting, this is $\gamma_B$ by definition. In the strategic setting, each cost level $c_B$ induces a test access level which can be found through simulation. We note that while the overall number of group $B$ students taking the test is the same for a fixed test access level, in the strategic setting this group of students are disproportionately high-skilled (see Lemma~\ref{lemma.thres.strategy} and Figure \ref{fig:simulations_prob_apply_by_skill}).

\begin{table}[tb]
\small
\centering

 \begin{tabular}{c|c|cc|cc}
	&  & \multicolumn{2}{c|}{\textbf{Academic merit}} & \multicolumn{2}{c}{\textbf{Diversity Level}} \\ \hline
  
 \textbf{Informational Case}
 & \textbf{Student behavior} & With test           & Without test           & With test           & Without test           \\ \hline
Low      
& Strategic    & 3.42               & 3.33                   & 40.8\%              & 35.7\%              \\
&     & $\pm$ .005 & $\pm$ .0005 & $\pm$ .3\% & $\pm$ .03 \% \\
   & Non-strategic      & 3.38            & 3.33                 & 23.7\%              & 35.7\%     \\
   &    & $\pm$ .005 & $\pm$ .0005 & $\pm$ .3\% & $\pm$ .03\% \\
\hline
 High   
 & Strategic    &   3.76           &     3.74           & 52.5\%             &  52.4\%  
 \\
 &  & $\pm$ .005 & $\pm$ .0004 & $\pm$ .3\% & $\pm$ .03\% \\
    & Non-strategic     &       3.66   &   3.74     & 29.4\%            & 52.4\%  \\
    &   &  $\pm$ .004 & $\pm$ .0004 & $\pm$.3\% & $\pm$ .03 \% \\
\end{tabular}

\caption{How academic merit and diversity level of admitted students change with and without requiring a test score, for two informational cases (how informative the non-test feature is) and for the strategic and non-strategic settings. Academic merit (GPA) ranges from 1.0-4.0. Diversity level is shown as a percentage of admitted students. Shown with 95\% confidence intervals. 
This table assumes a 40\% test access for group $B$; for outcomes for the full range of group $B$ test access, see Figures \ref{fig:calibrated_sims_theop_test_and_firstyeargpa} and  \ref{fig:calibrated_sims_theop_test_and_hsclassrank}. 
Values are averaged across 100 simulation runs with 95\% confidence intervals shown. All differences are statistically significant.
}
\label{tab:empiricsresults_with_confints}
\end{table}

\begin{figure*}[tb]
    \begin{subfigure}[b]{0.31\textwidth}
    \centering
    \includegraphics[width=\linewidth]{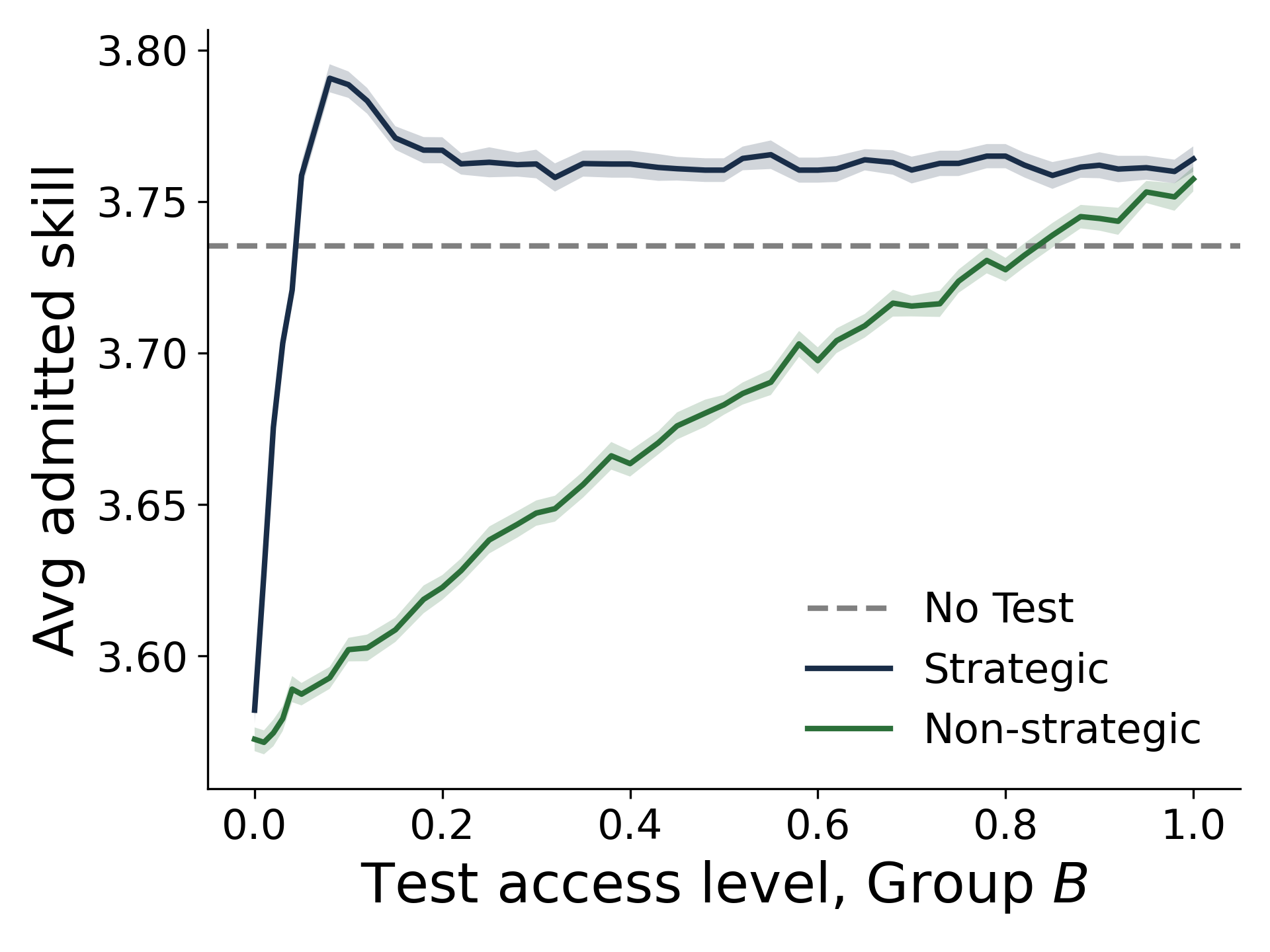}
	\caption{\small Academic merit\normalsize}
 \label{fig:calibrated_with_cgpa_academic_merit}
    \end{subfigure}
    \hfill
    \begin{subfigure}[b]{0.31\textwidth}
    \centering
    \includegraphics[width=\linewidth]{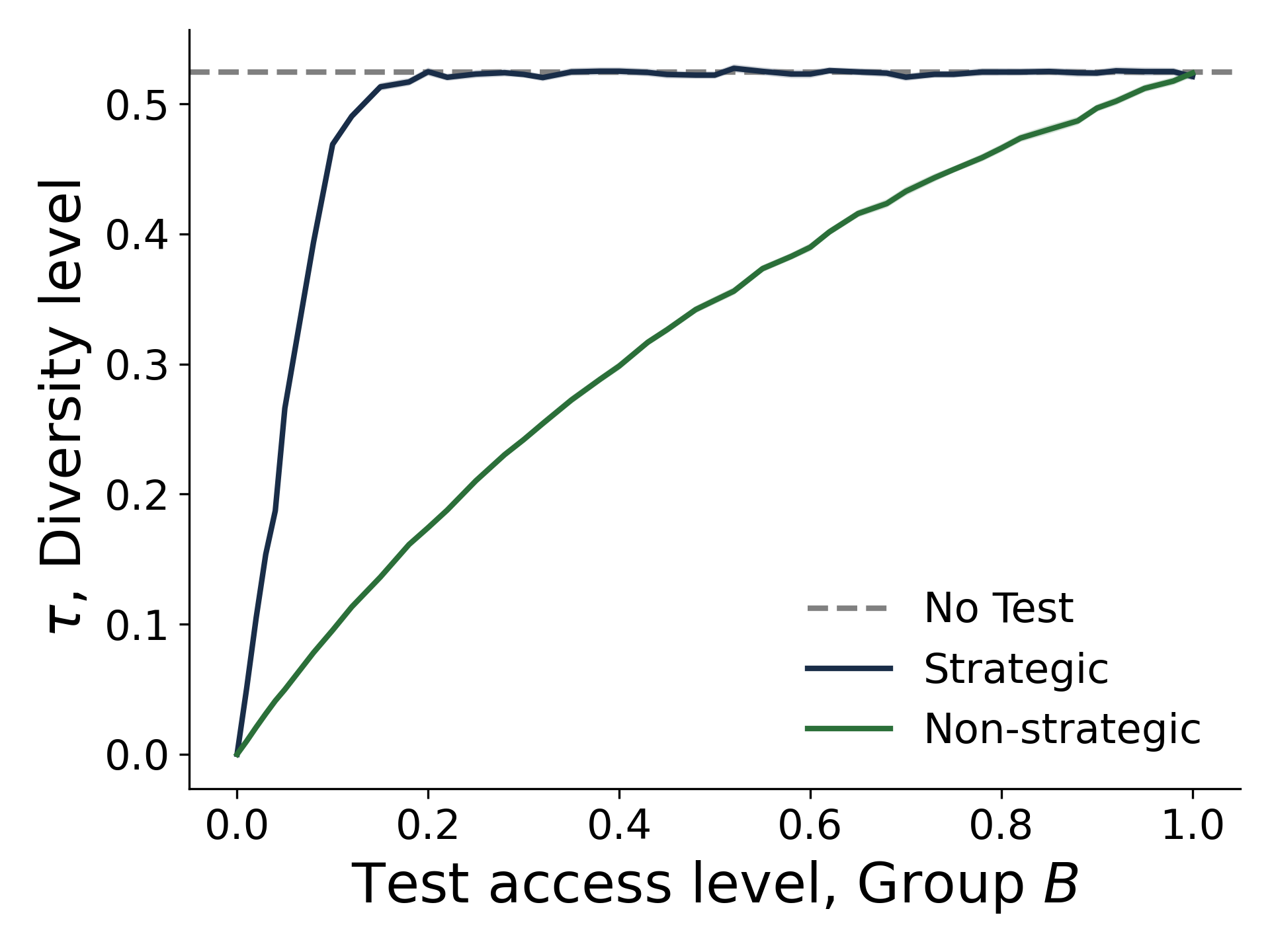}
	\caption{\small Diversity level \normalsize}
	\label{fig:calibrated_with_cgpa_diversity}
\end{subfigure}
\hfill
\begin{subfigure}[b]{0.31\textwidth}
    \centering
    \includegraphics[width=\linewidth]{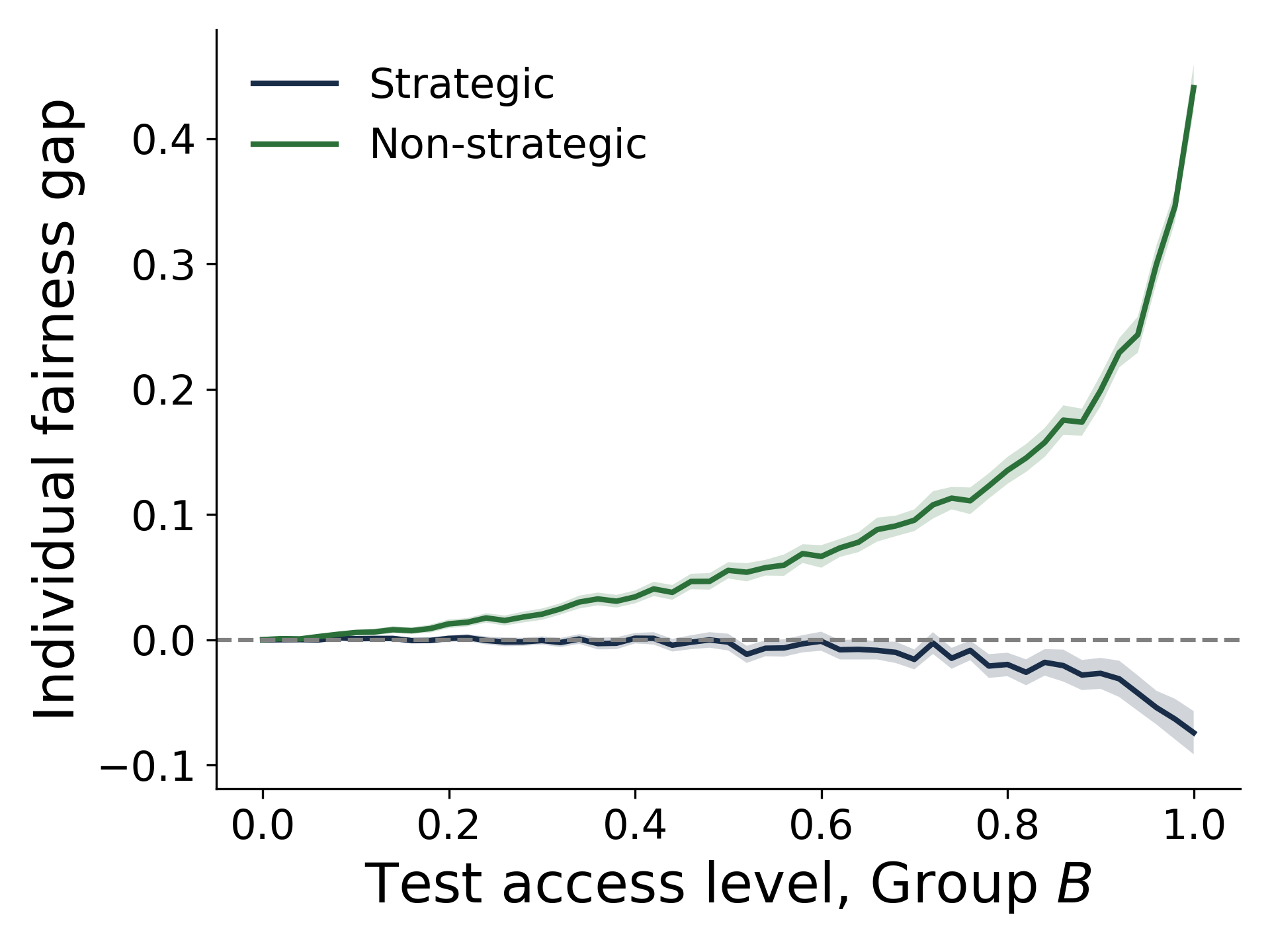}
    \caption{\small Individual fairness gap \normalsize}
    \label{fig:calibrated_with_cgpa_ifgap}
\end{subfigure}
\hfill
     \caption{Calibrated simulations when full set of features is $\{$First year GPA, test score$\}$ -- high informativeness case (see Table \ref{tab:theop_calibrated_feature_var} for informativeness of features). Figures (a) and (b) vary test the access level for group $B$. Figure (c) shows the individual fairness gap at a fixed group $B$ test access level of 40\%. Figures show average value across 100 simulation runs and 95\% confidence intervals.
     }
 \label{fig:calibrated_sims_theop_test_and_firstyeargpa}
\end{figure*}

\begin{figure*}[h]
    \begin{subfigure}[b]{0.31\textwidth}
    \centering
    \includegraphics[width=\linewidth]{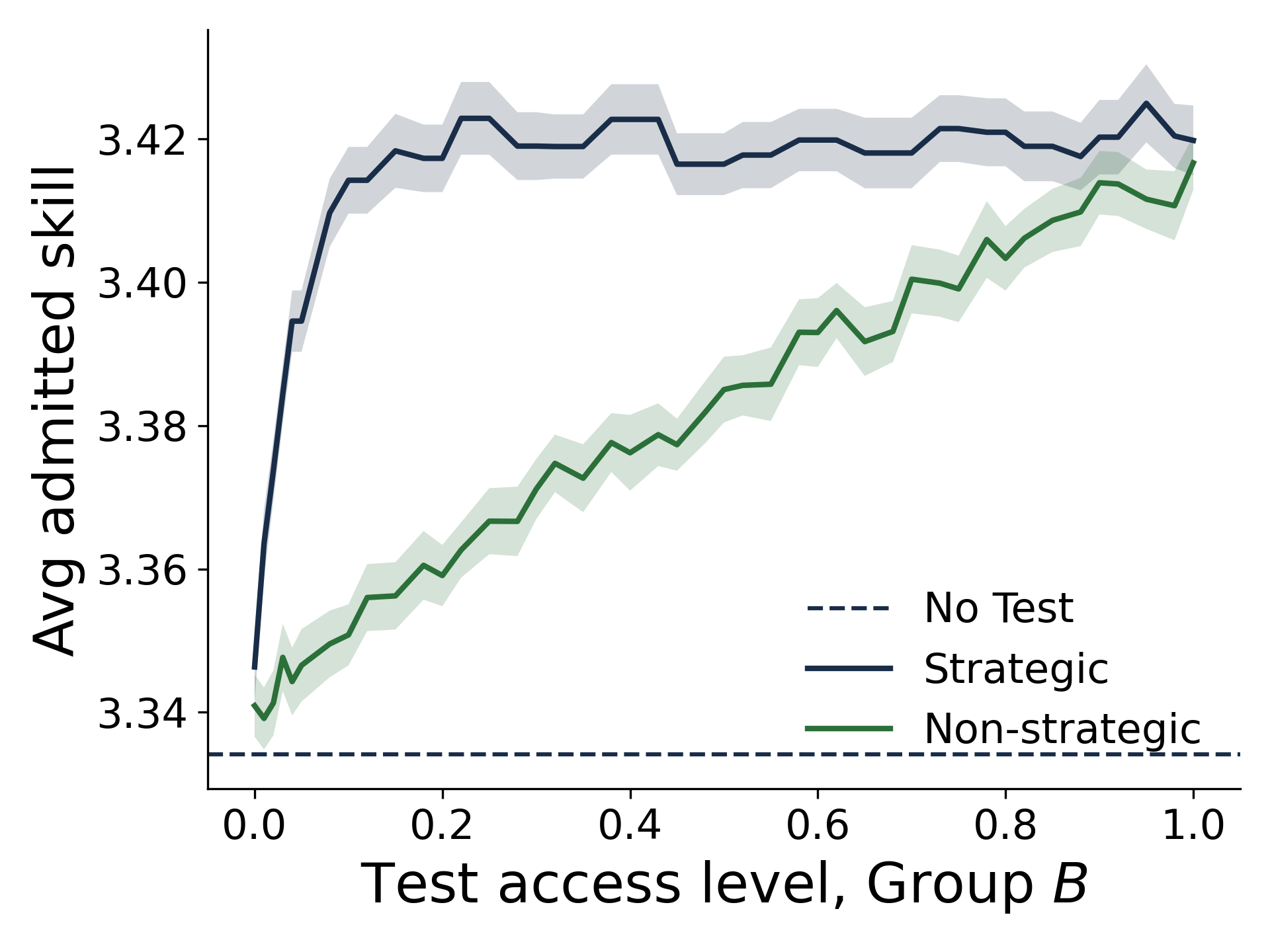}
	\caption{\small Academic merit \normalsize}
	\label{fig:calibrated_with_hsclassrank_academic_merit_group_a}
    \end{subfigure}
    \hfill
    \begin{subfigure}[b]{0.31\textwidth}
    \centering
    \includegraphics[width=\linewidth]{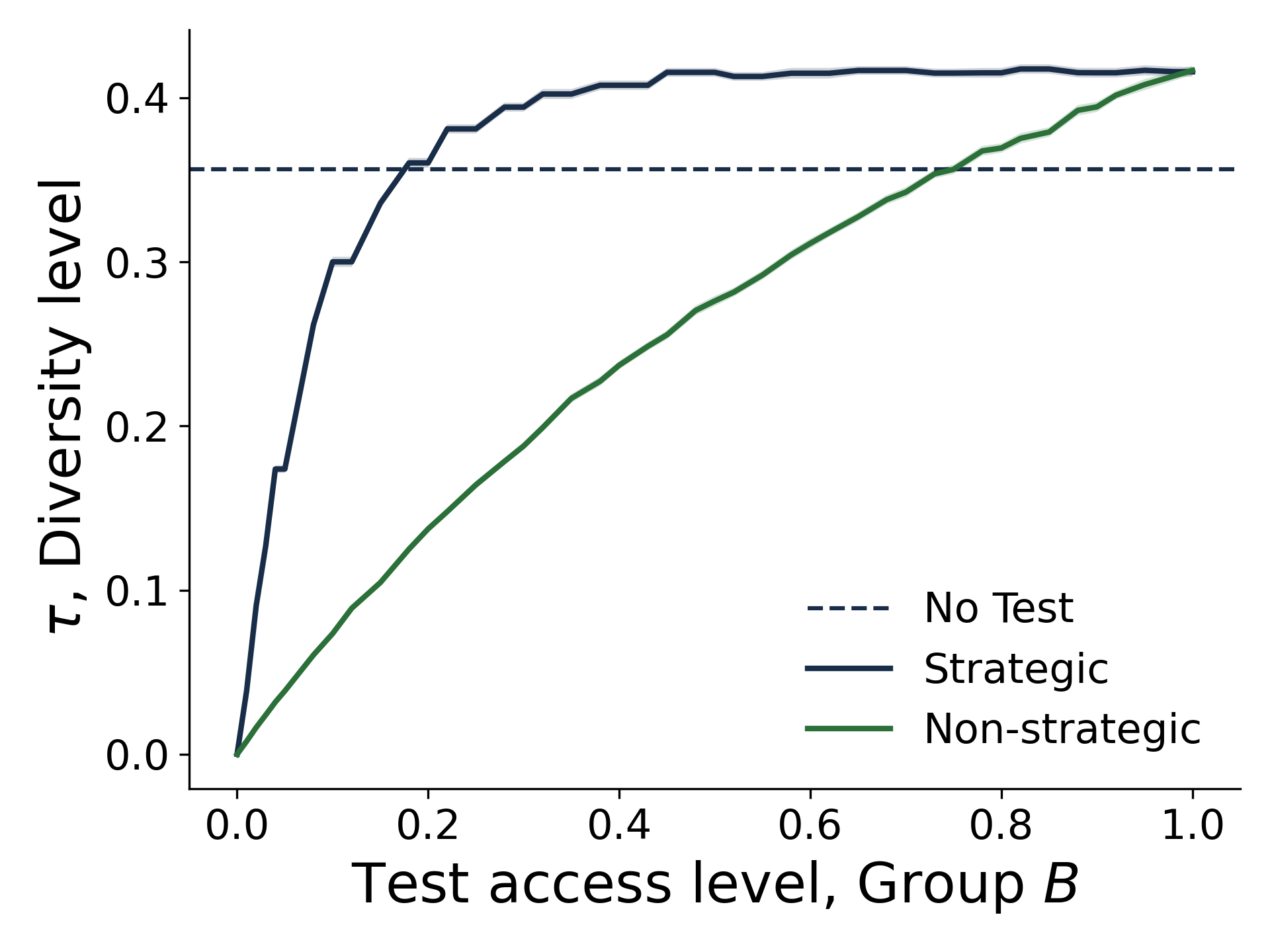}
	\caption{\small Diversity level \normalsize}
	\label{fig:calibrated_with_hsclassrank_diversity}
\end{subfigure}
\hfill
\begin{subfigure}[b]{0.31\textwidth}
    \centering
    \includegraphics[width=\linewidth]{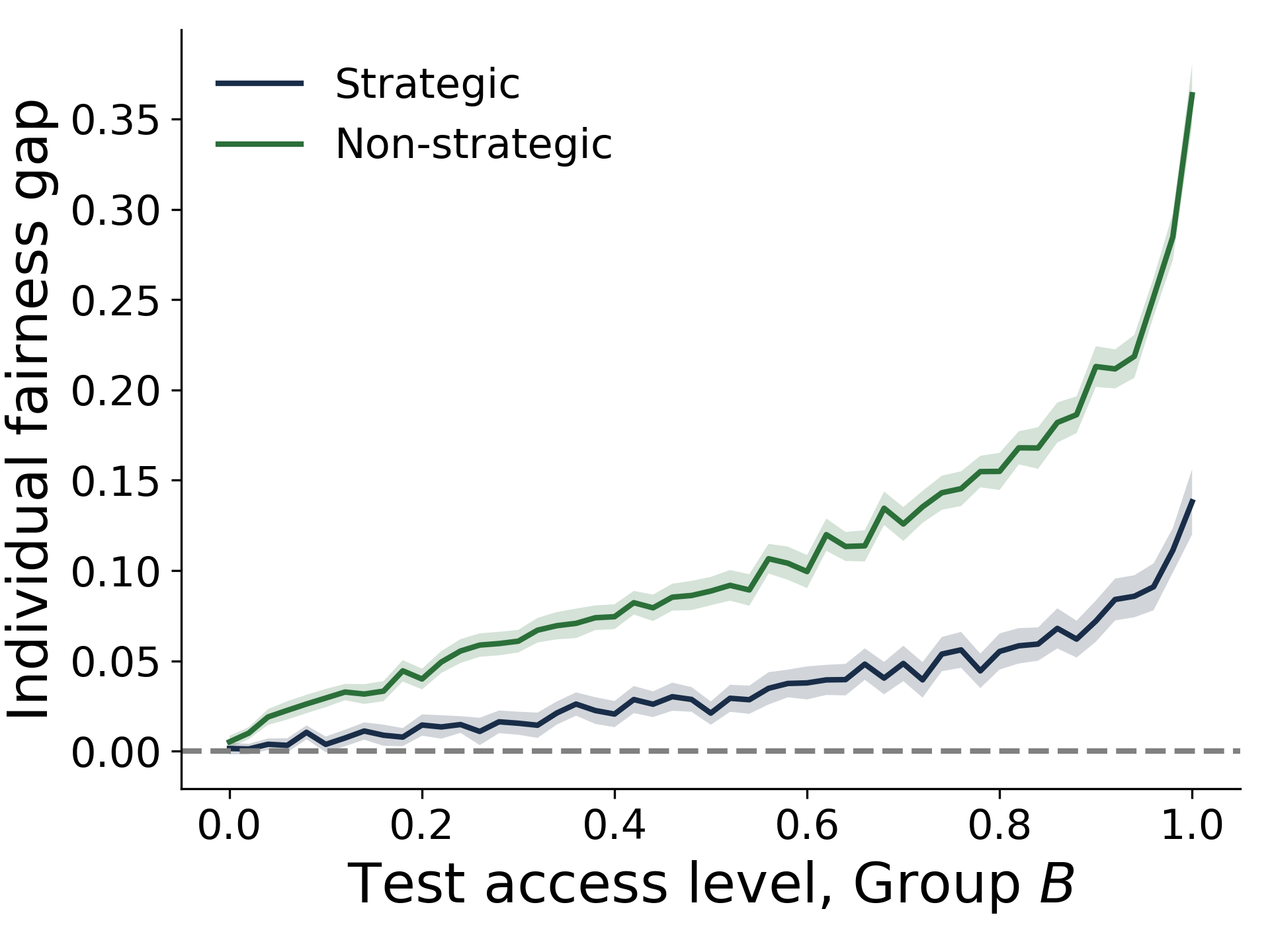}
    \caption{\small Individual fairness gap \normalsize}
    \label{fig:calibrated_with_hsclassrank_ifgap}
\end{subfigure}
\hfill
     \caption{Calibrated simulations when full set of features is $\{$High school class rank, test score$\}$. Low informativeness case. See Table \ref{tab:theop_calibrated_feature_var} for informativeness of features.  
     }
 \label{fig:calibrated_sims_theop_test_and_hsclassrank}
\end{figure*}

\smallskip

\noindent\textbf{Simulation results.} Table \ref{tab:empiricsresults_with_confints} summarizes the admission outcomes with and without the test, for a fixed level of group $B$ students having access (40\%), while all group $A$ students have access.\footnote{Using the \cite{2022California} California SAT Suite of Assessments Annual Report, we calculate that a student from the bottom two quintiles of family income are 38\% as likely to take the test as a student from the top two quintiles. Thus we focus on an access levels of 100\% and 40\% for groups $A$ and $B$, respectively.} 
For outcomes for the full range of group $B$ test access, see Figures \ref{fig:calibrated_sims_theop_test_and_firstyeargpa} and \ref{fig:calibrated_sims_theop_test_and_hsclassrank}, for the high and low informational environment, respectively. 

In this setting, exactly half of the students are in group $B$ ($\pi=0.5$). For any diversity level below 50\% (i.e., students in group $B$ make up less than half of the admitted student body), we consider group $B$ to be under-represented.

Overall, the results show that the effects of dropping the test requirement depend crucially on both the informational environment and whether students are strategic. At a test access level of 40\% for group $B$, dropping the test worsens both academic merit and diversity level when students are strategic in both informational cases, although only slightly in the high information case. However, when students are non-strategic, dropping the test improves both metrics when the remaining feature has high informativeness, whereas dropping the test has mixed results when the other feature has low informativeness.

\textit{Comparing effect of test access in strategic and non-strategic settings.} The results show that in both informational settings, academic merit, diversity, and individual fairness all worsen when fewer group $B$ students have access to the test. 
However, for a given level of test access, the outcomes for all three metrics are better when students are strategic, compared to when they are non-strategic. 
In the strategic setting, the students with higher skill levels are more likely to take the test (see Lemma~\ref{lemma.thres.strategy} and Figure \ref{fig:simulations_prob_apply_by_skill}), as opposed to the non-strategic setting where all students in group $B$ have the same probability $\gamma_B$ of taking the test. Thus, as we see in Figures  \ref{fig:calibrated_sims_theop_test_and_firstyeargpa} and \ref{fig:calibrated_sims_theop_test_and_hsclassrank}, even when the test access levels are as low as 30 percent, the admission outcomes of academic merit, diversity, and individual fairness are comparable to when group $B$ has full test access. This observation, of course, relies on the students appropriately assessing their likelihood of admission upon taking the test, which we assume in our model. 
We also note that academic merit in particular is not monotonic in the test access level (Figure \ref{fig:calibrated_sims_theop_test_and_firstyeargpa}). As the access level for group $B$ approaches 0, the average skill level for admitted students increases for group $B$ but decreases for group $A$, leading to non-monotonicity in the overall academic merit. See Figure \ref{fig:academic_merit_by_cost} for an illustration of average skill level of admitted students, by group.

\textit{Effect of the informational environment.} 
When the college has access to a high quality signal on all students---\textit{first-year GPA}---dropping test scores increases both academic merit  and diversity when costs are high enough; it allows more students to apply, without incurring a substantial informational loss. 
In contrast, in the low informativeness case, without test scores the school must rely on students' high school ranks, which are especially uninformative for group $B$, thus leading to worse admissions outcomes.

These findings underscore our theoretical results: the consequences of dropping test scores depend crucially on the information content of other signals, the level of strategic behavior by applicants, and the levels of access to the test. Decisions to require the test should not (and cannot) be made in a context-independent manner.

\smallskip

\noindent\textbf{Discussion.} 
There are several ways in which our simulation setup differs from reality, for example: (1) We use college GPA as a measure of student true skill; in reality, GPA is a function of many other aspects as well, such as college major and barriers faced during college \citep{engle2008moving}. (2) Because of our choice to use college GPA as a true skill measure, we cannot simulate our model for all students who \textit{apply} to UT Austin, as data is censored\footnote{This is a  common barrier to measuring the predictive power of standardized testing in admissions \citep{rothstein2004college,weissman2020gre}.}---we do not observe their college GPA unless they enrolled. Thus, we must simulate a hypothetical admissions setting for which the enrolled population at UT Austin is a reasonable application pool. (3) To closely simulate our model, we fit Normal distributions to the data, while the respective distributions may not be Normally distributed (e.g., many of the features are truncated). (4) We do not have estimates of the barriers or costs to testing, and in fact {almost all applicants in the data (over 99.9\%)} have test scores due to school policies at the time; 
thus, we have to artificially simulate some students as not having access. For these reasons, our simulations should not be interpreted as making statements about the UT Austin context or any particular admissions setting.

\section{Simulations with synthetic data}

\subsection{Supplementary simulations for the non-strategic setting}

\subsubsection{Supplemental simulation figures for the non-strategic setting}
\label{app.suppfigs.nonstrategic}
\label{app.B4}
\begin{figure}[h]
	\begin{subfigure}[b]{0.48\textwidth}
		\centering
		\includegraphics[width=\linewidth]{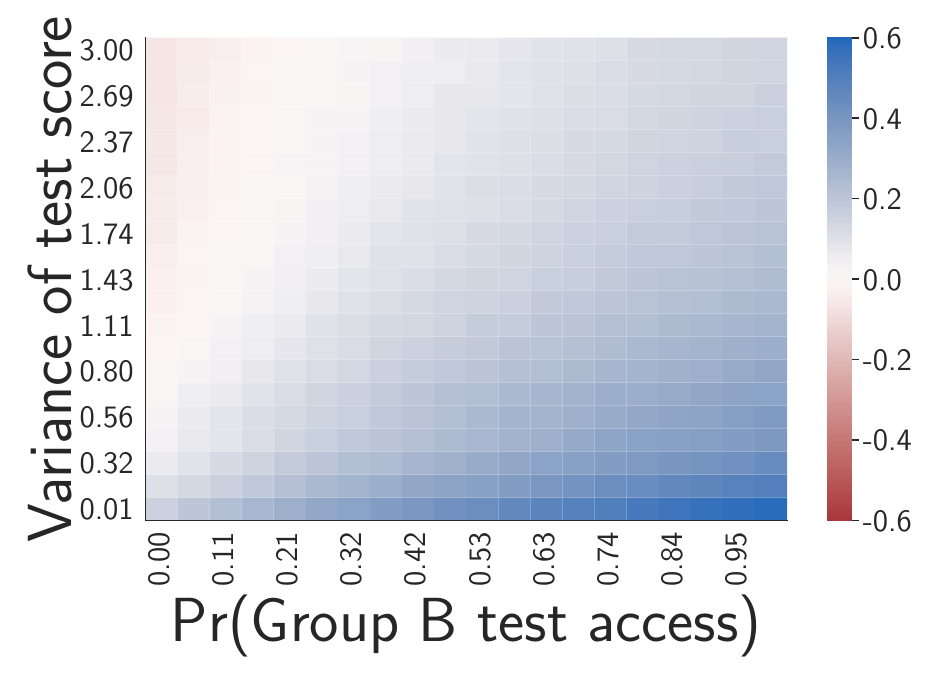}
		\caption{Difference in Average admitted skill}
		\label{fig:2dplot_avgskill}
	\end{subfigure}
	\hfill
	\begin{subfigure}[b]{0.48\textwidth}
		\centering
		\includegraphics[width=\linewidth]{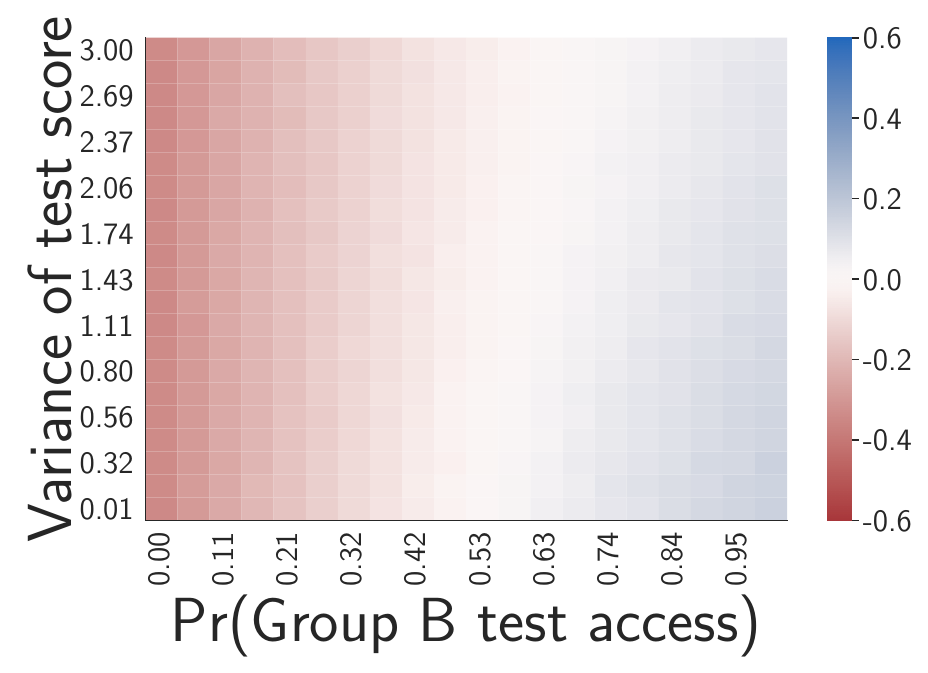}
		\caption{Difference in Diversity level}
		\label{fig:2dplot_diversity}
	\end{subfigure}

	\begin{subfigure}[b]{0.48\textwidth}
		\centering
		\includegraphics[width=\linewidth]{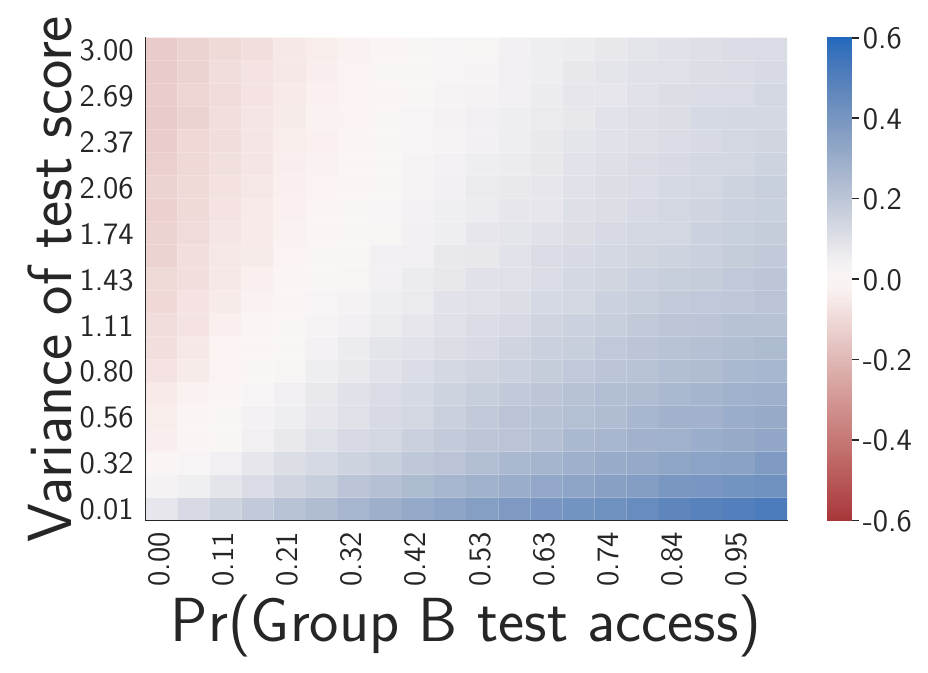}
		\caption{\small Difference in Average admitted skill, Group A \normalsize}
		\label{fig:2dplot_avgskilla}
	\end{subfigure}
	\hfill
	\begin{subfigure}[b]{0.48\textwidth}
		\centering
		\includegraphics[width=\linewidth]{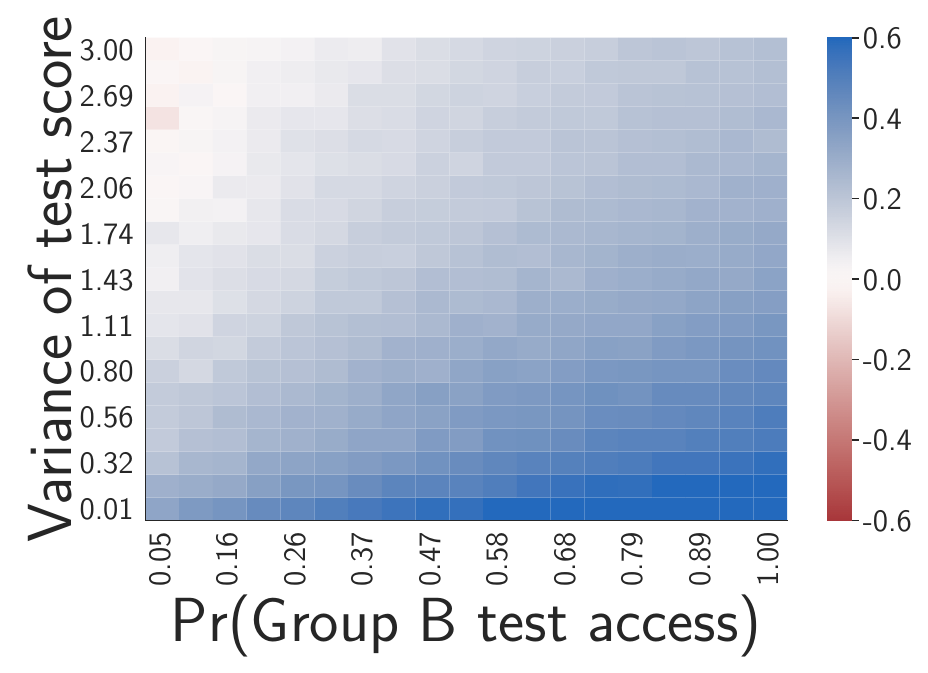}
		\caption{\small Difference in Average admitted skill, Group B \normalsize}
		\label{fig:2dplot_avgskillb}
	\end{subfigure}
	\caption{Difference between test-based and test-free policies with respect to various objective functions. The more negative (red) the difference, the more that dropping the test \textit{improves} that metric compared to test-based policies. Simulation is with budgets case, using parameters as given in Electronic Companion~\ref{appsec:simparams}. The plot reads as follows: in Figure~\ref{fig:2dplot_avgskill}, a difference of $0.6$ means that the average academic merit with a test-based policy is $0.6$ higher than that with a test-free policy.}

	\label{fig:policycompare_barriers2dplot}
\end{figure}

Figure~\ref{fig:policycompare_barriers2dplot} supplements the results in Theorem~\ref{thm:threshold_char_without_aa} and Proposition~\ref{prop:diversitythresholdnoAA}, regarding the thresholds at which academic merit and diversity improve after dropping the test. In particular, they illustrate that for high enough test score variance or high enough barriers, dropping the test score improves the objectives.

\subsubsection{Simulation parameters}
\label{appsec:simparams}

We report the parameters for the simulations with non-stretegic students. 

\noindent\textbf{Figure~\ref{fig:2destimates}.}
$    C = 0.2, 
    \pi = 0.5, 
     q, 
     \theta_{A0}, 
    \theta_{A1} \sim N(0,1),
     \theta_{B0} \sim N(-4,5), 
    \theta_{B1} \sim N(-4,1), 
    \gamma_A = 1, 
    \gamma_B = \frac{2}{3}.$

\smallskip
\noindent\textbf{Figure~\ref{fig:variancevarying}.} Same as Figure~\ref{fig:2destimates}, except with  $\theta_{B1} \sim N(-4,\sigma_{B1}^2)$, where   $\sigma_{B1}^2 \in (0,5)$. For subfigures (\ref{fig:varianceadmittedskill}) - (\ref{fig:varianceIF}) we fix test access $\gamma_A = \gamma_B=1$. For subfigures (\ref{fig:barrieradmittedskill}) - (\ref{fig:barrierIF}) fix the test score variance of group B to be equal to that of group A, so that $\sigma^2_{B1} = \sigma^2_{A1}=1$ and we vary $\gamma_B$.

\smallskip
\noindent\textbf{Figure~\ref{fig:policycompare_barriers2dplot}.} Same as Figure~\ref{fig:2destimates}, except with test score precision varying together for both groups $\sigma_{A1}^2 = \sigma_{B1}^2 \in (0, 3)$, and group $B$ test access varying, $\gamma_B \in (0, 1)$.

\smallskip
\noindent\textbf{Figure~\ref{fig:policycompare_barriers_withunaware}.}
Same as Figure~\ref{fig:2destimates}.

\subsection{Additional simulations for the strategic setting}

In this appendix, we report results for the setting with strategic students. Section~\ref{app:synthetic_simulations_strategic_single_school} focuses on a single school and Section~\ref{app:synthetic_simulations_strategic_two_schools} studies two competing schools.

\subsubsection{Simulations with synthetic data and a single school}
\label{app:synthetic_simulations_strategic_single_school}

We first describe the general simulation setup for a single school and then report the parameters and other details for each figure.

\noindent\textbf{Simulation setup for strategic students and a single school.}
Fix a single school that uses policy $P_{\full}$. 
All students are initialized with realizations of their true skill $q$ and non-test features $\thetaset_{\subb}$. 
Students follow the behavior outlined in the proof of Lemma \ref{lemma.thres.strategy} (see Appendix \ref{app:strategic_single}).
Note that this result shows the existence of an equilibrium that is characterized by thresholds $\underline{q}^g_{\subb}$, where students take the test if and only if $\tilde{q}_{\subb} \geq \underline{q}^g_{\subb}$ but does not directly give the value of these thresholds. 
We simulate admissions process under candidate equilibrium thresholds $\underline{q}^g_{\subb}$, each of which results in a certain number of students being admitted. We find the thresholds $\underline{q}^g_{\subb}$ such that the school's capacity constraint is respected and size of the admitted student body is the closest to the threshold. 

The school's admission decisions and students' test taking decisions follow the proof of Lemma \ref{lemma.thres.strategy}. For the policy $P_{\full}$, fix a candidate admission threshold $\tilde{q}'$ for the estimated skill $\tilde{q}_{\full}|\thetaset_{\full},g$.  
First, consider the students' observations. 
Each student observes their non-test features $\theta_{\subb}$ and estimates the distribution of their estimated skill if they were to take the test $\tilde{q}_{\full} \mid \tilde{q}_{\subb}, g, P_{\full}$, given in Equation \eqref{eq:q_tilde_cond_thetasubb_q}.  
The student then calculate $\prob (Y = 1 \mid \thetaset_{\subb}, g, P_{\full})= \prob( \tilde q (\thetaset_{\full}, g) \geq \tilde q^*_{\full} \mid \thetaset_{\subb}, g)$ and solves for their optimal test taking decision $\arg \max_{\alpha \in \{0,1\}} \alpha ( v \prob (Y = 1 \mid \thetaset_{\subb}, g, P_{\full}) - c_g  ),$ as seen in Lemma \ref{lemma:reduced_equilibria_single}. In other words, the student takes the test when $v \prob (Y = 1 \mid \thetaset_{\subb}, g, P_{\full})\geq c_g$. The students who take the test then apply to the school. Now, the school admits all students with estimated skill $\tilde{q}_{\full}|\thetaset_{\full},g \geq \tilde{q}'$. Note, however, that this may result in a smaller or larger admitted class than the school's capacity. We then search across candidate thresholds $\tilde{q}'$ and set $\tilde{q}^*_{\full}$ to be the $q'$ that attains the largest admitted class size, while still respecting the capacity constraint. 

\begin{figure}
    \centering
    \includegraphics[scale=0.4]{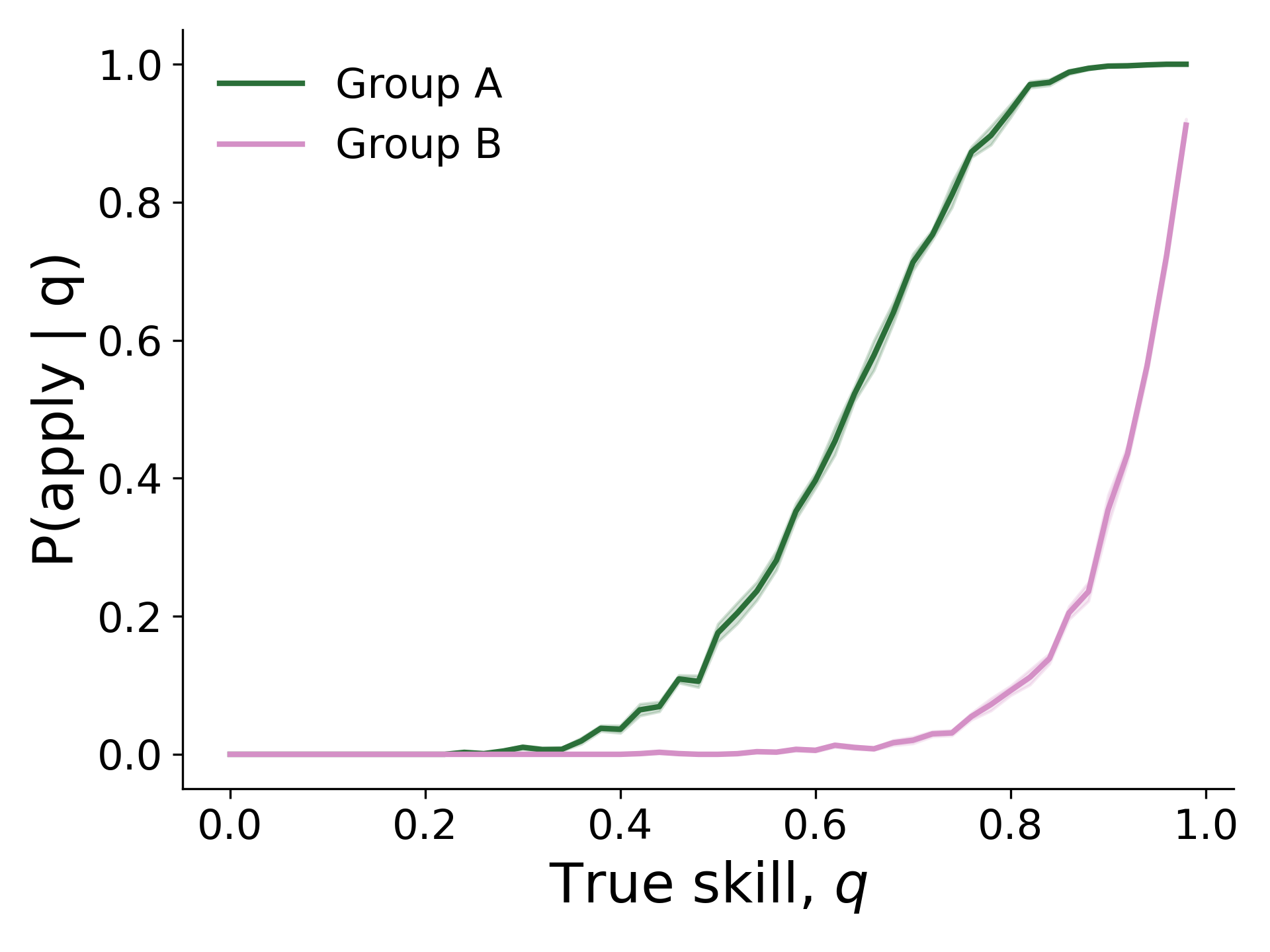}
    \caption{Results of a simulation calculating the student equilibrium decisions (characterized by \cref{eq.student.decision.cost.theta}) on whether to take the test and apply to a school that requires the test, as a function of their true skill and group. When students are strategic, high skilled students are more likely to take the test and apply. We consider a case where Group $A$ has higher precision and lower test costs: Parameters are $\mu_{gk}=0$ for all $g,k$, $\sigma^2_{A1}=\sigma^2_{A2}=\sigma^2_{B1}=1$ and $\sigma^2_{B2}=2$, where $k=2$ denotes the test feature. Costs are $c_A=0.5$ and $c_B=3$. In this case, Group $A$ students are more likely to take the test than Group $B$ students of the same true skill value. Full simulation parameter set can be
found in Electronic Companion \ref{app:synthetic_simulations_strategic_single_school}. }
\label{fig:simulations_prob_apply_by_skill}
\end{figure}

\noindent\textbf{Parameters for Figure \ref{fig:simulations_prob_apply_by_skill}.}
Figure \ref{fig:simulations_prob_apply_by_skill} shows simulation results illustrating the student equilibrium decisions
(characterized by \Cref{eq.student.decision.cost.theta}) on whether to take the test and apply to a school that requires the test, as a function of their true skill and group. 

There are two features, where the where the non-test feature is equally informative for both groups, but the test score is more informative for group $A$ than group $B$. The true skill distribution for both groups is Normally distributed with mean $\mu=0$ and variance $\sigma^2=1$. The features for the two groups are Normally distributed with mean $\mu_{gk}=0$ for all $g,k$ and $\sigma^2_{A1}=\sigma^2_{A2}=\sigma^2_{B1}=1$ and $\sigma^2_{B2}=2$, where $k=2$ denotes the test feature. Students of both groups have valuation $v=5$ for the school. Test costs are $c_A=0.5$ and $c_B=3$. 
There are $N=10000$ students and the school has capacity $0.1$. To find the equilibrium $\tilde{q}^*_{\full}$, we search over a grid of 250 threshold values. The mean over 20 simulation runs is presented, along with 95\% confidence intervals.

\begin{figure*}[tb]
    \begin{subfigure}[b]{0.32\textwidth}
    \centering
    \includegraphics[width=\linewidth]{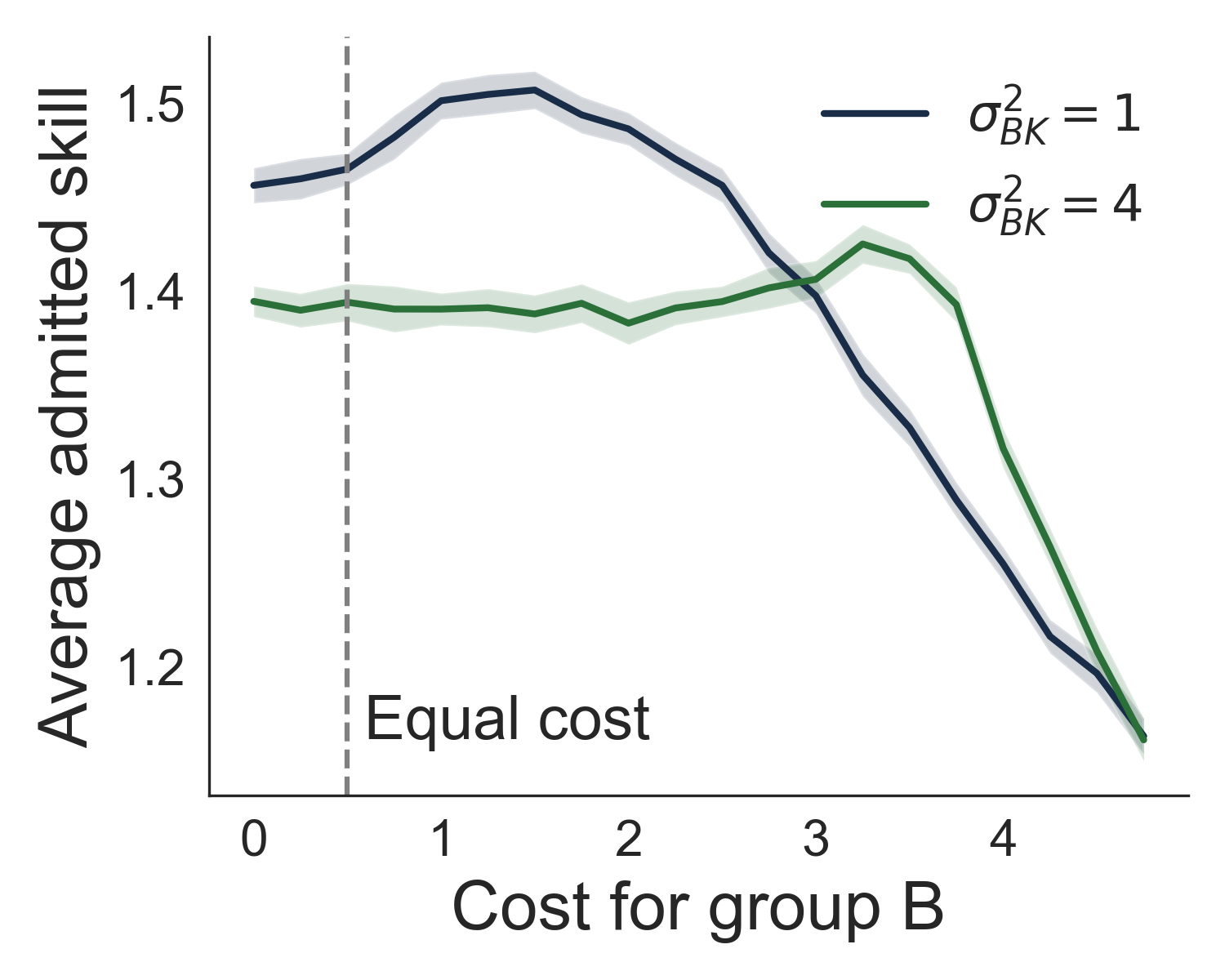}
	\caption{\small Academic merit \normalsize}
	\label{fig:academic_merit_by_cost_var}
\end{subfigure}
\hfill
    \begin{subfigure}[b]{0.32\textwidth}
    \centering
    \includegraphics[width=\linewidth]{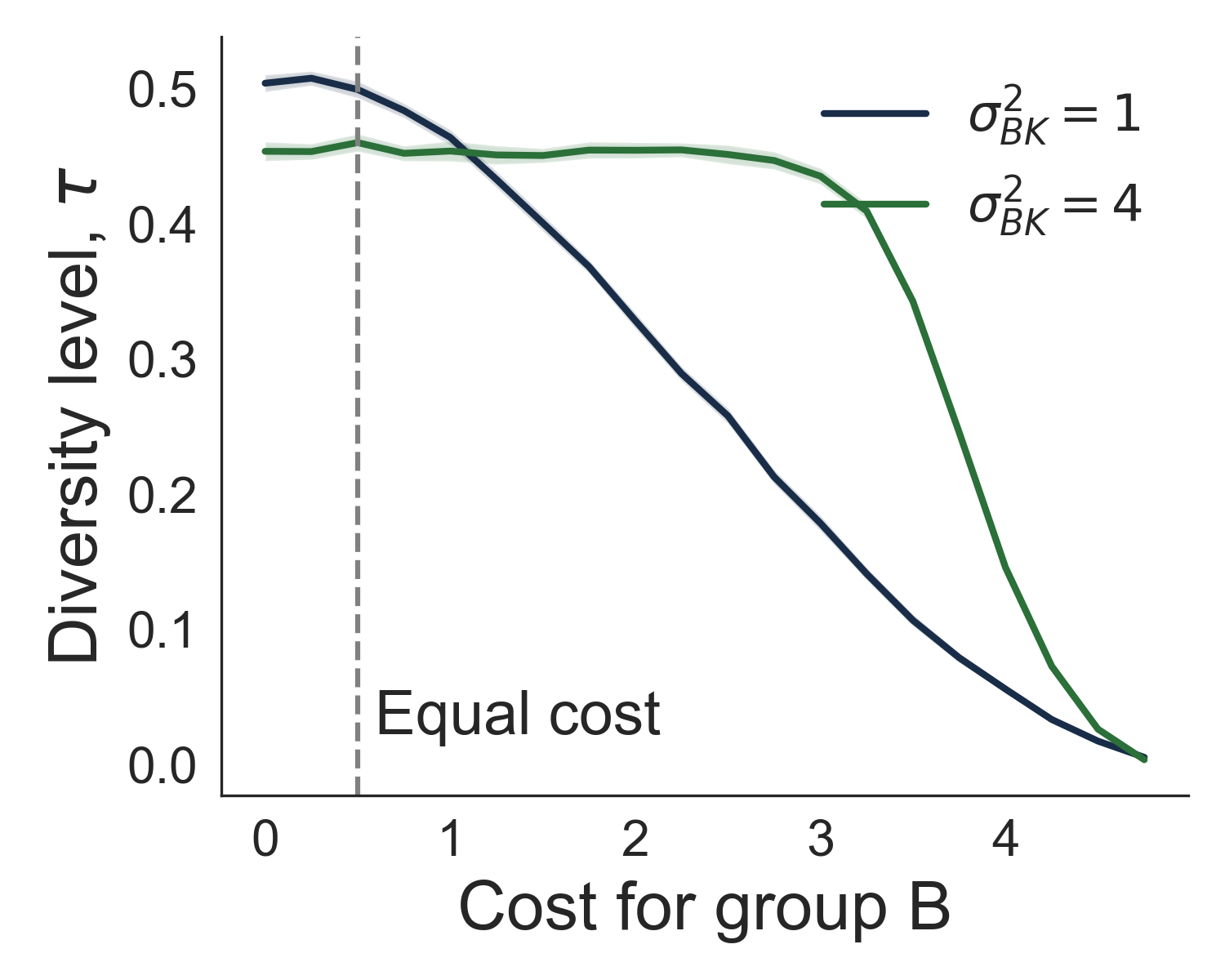}
	\caption{\small Diversity level \normalsize}
	\label{fig:diversity_level_by_cost_var}
\end{subfigure}
\hfill
\begin{subfigure}[b]{0.32\textwidth}
    \centering
    \includegraphics[width=\linewidth]{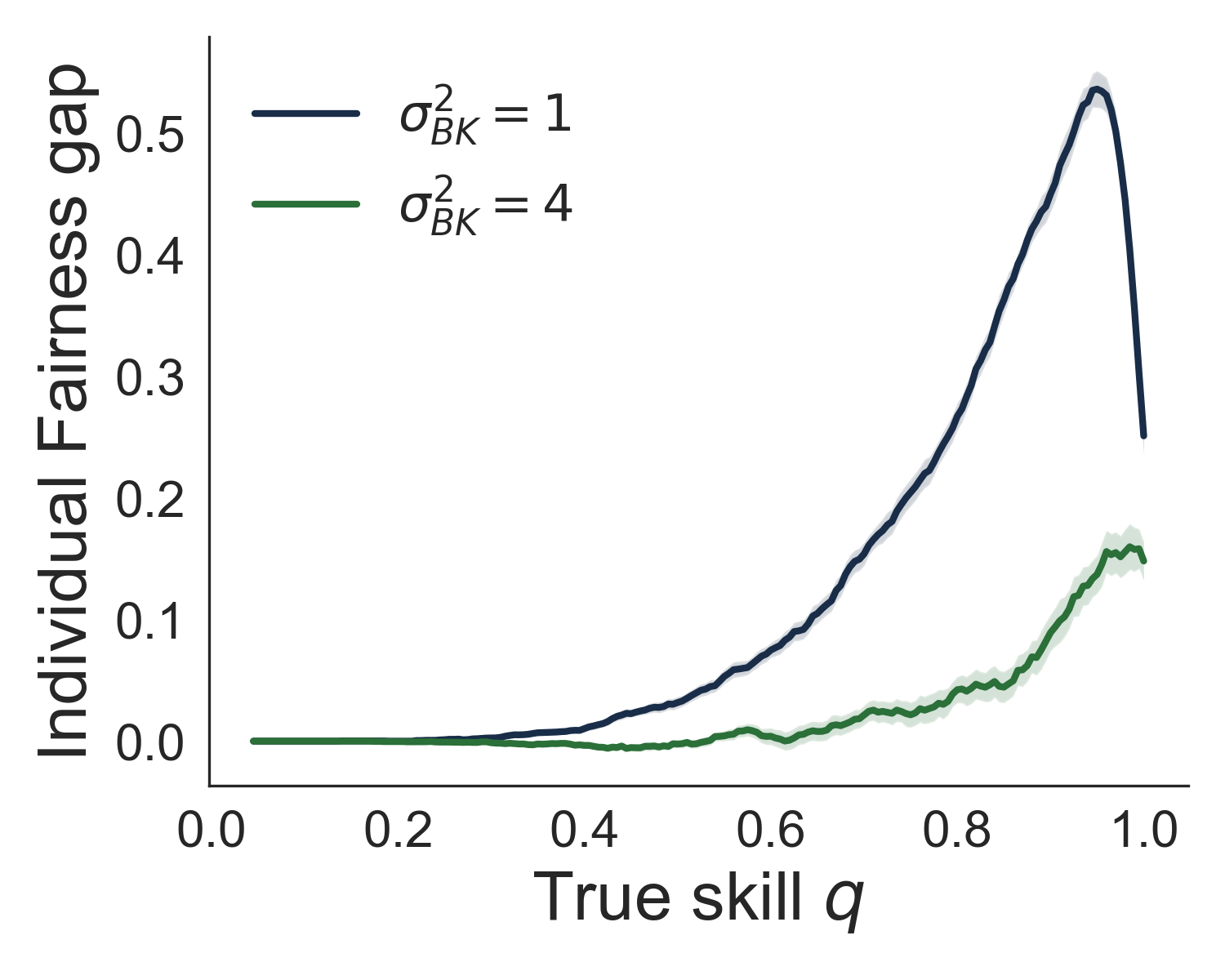}
    \caption{\small Individual fairness gap \normalsize}
    \label{fig:ifgap_by_skill}
\end{subfigure}
\hfill
\caption{Simulation results illustrating \Cref{prop.div_merit_strategic}, characterizing admission outcomes in a setting in which students make strategic decisions on whether to take the test. The figures show how the admitted students' (a) academic merit, (b) diversity level, and (c) individual fairness gap depend on test informativeness $\sigma^2_{BK}$ for Group $B$, as a function of either the test cost $c_B$ or the student skill level $q$. Figures (a) and (b)
fix cost $c_A=0.5$ and vary $c_B$. Academic merit and diversity are particularly harmed when the test is costly \textit{and} informative for Group $B$. Figure (c) 
considers a fixed cost $c_B=3$ and shows that individual fairness is worse when the test is more informative for Group $B$. The full parameter set can be
found in Appendix \ref{app:synthetic_simulations_strategic_single_school}. Overall, when the test is informative (low feature variance), higher costs for group $B$ correspond to worse outcomes across all metrics. }
\label{fig:strategic_varying_costs_and_feature_var}
\end{figure*}

\noindent\textbf{Parameters for Figure \ref{fig:strategic_varying_costs_and_feature_var}.}
Figure \ref{fig:strategic_varying_costs_and_feature_var} illustrates \Cref{prop.div_merit_strategic}, which characterizes admission outcomes in a setting in which students make strategic decisions on whether to take the test. The figures show how the admitted students' (a) academic merit, (b) diversity level, and (c) individual fairness gap depend on test informativeness $\sigma^2_K$ for Group $B$, as a function of either the test cost $c_B$ or the student skill level $q$.

There are two features, where the where the non-test feature is equally informative for both groups, and compares instance a) where the test score is equally informative for both groups ($\sigma^2_{AK}=\sigma^2_{BK}=1$) and b) where the test score is more informative for group $A$ than $B$ ($\sigma^2_{AK}1, \sigma^2_{BK}=4$). Figures \ref{fig:academic_merit_by_cost_var} and \ref{fig:diversity_level_by_cost_var} fix cost $c_A=0.5$ and vary $c_B \in [0,5)$. Figure \ref{fig:ifgap_by_skill} considers $c_A=0.5$ and a fixed cost $c_B=3$. 
The remainder of the parameters are the same as Figure \ref{fig:simulations_prob_apply_by_skill}. The true skill distribution for both groups is Normally distributed with mean $\mu=0$ and variance $\sigma^2=1$. The features for the two groups are Normally distributed with mean $\mu_{gk}=0$ for all $g,k$ and $\sigma^2_{A1}=\sigma^2_{A2}=\sigma^2_{B1}=1$. Students of both groups have valuation $v=5$ for the school. Test costs are $c_A=0.5$ and $c_B=3$. 
There are $N=10000$ students and the school has capacity $0.1$. To find the equilibrium $\tilde{q}^*_{\full}$, we search over a grid of 250 threshold values. 
The mean over 20 simulation runs is presented, along with 95\% confidence intervals.

\begin{figure*}[tbh]
    \begin{subfigure}[b]{0.31\textwidth}
		\centering
		\includegraphics[width=\linewidth]{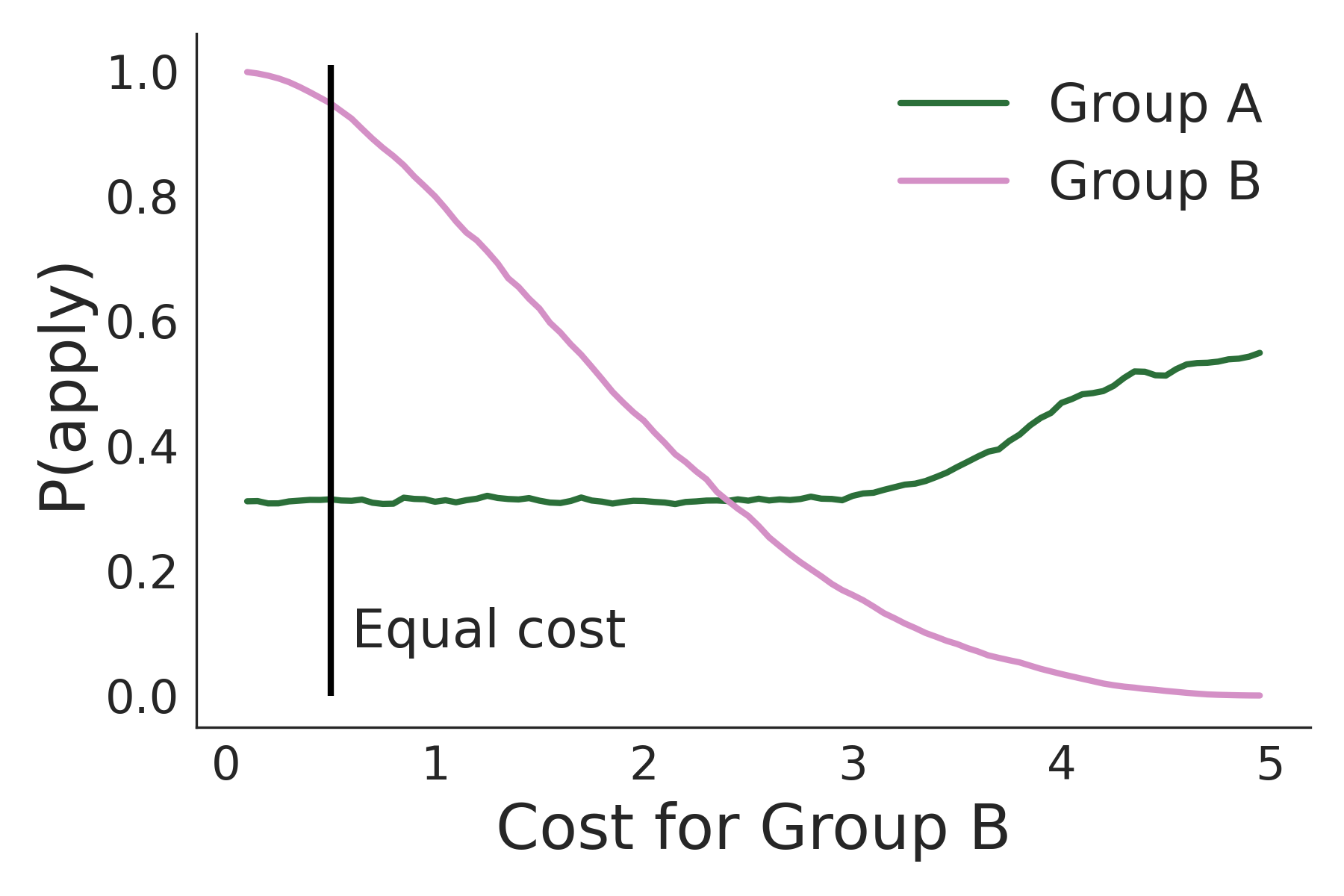}
    \caption{\small Probability of applying \normalsize}
    \label{fig:p_apply_by_cost}
    \end{subfigure}
    \hfill
    \begin{subfigure}[b]{0.31\textwidth}
    \centering
    \includegraphics[width=\linewidth]{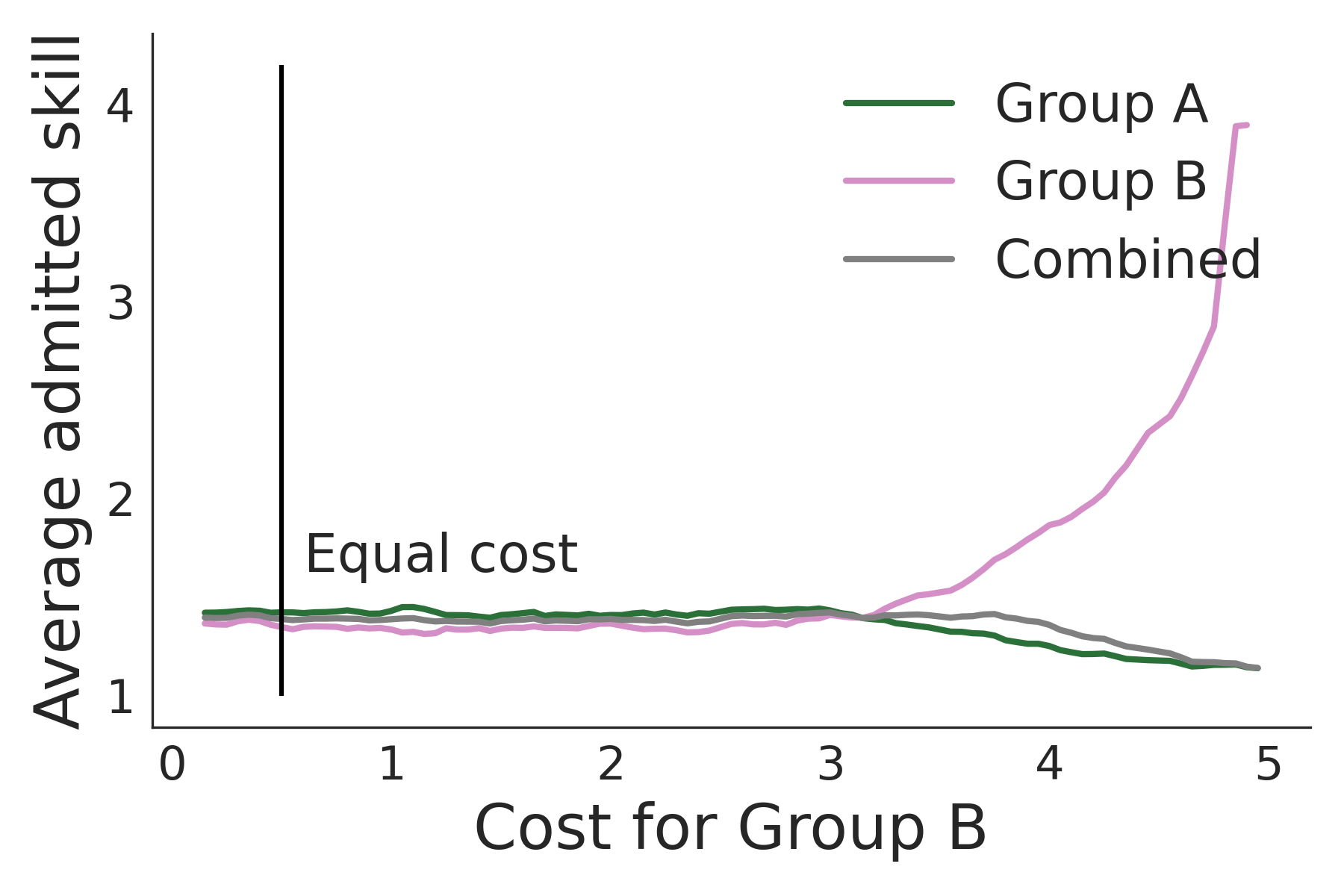}
    \caption{\small Academic merit \normalsize}
    \label{fig:academic_merit_by_cost}
    \end{subfigure}
    \hfill
    \begin{subfigure}[b]{0.31\textwidth}
    \centering
    \includegraphics[width=\linewidth]{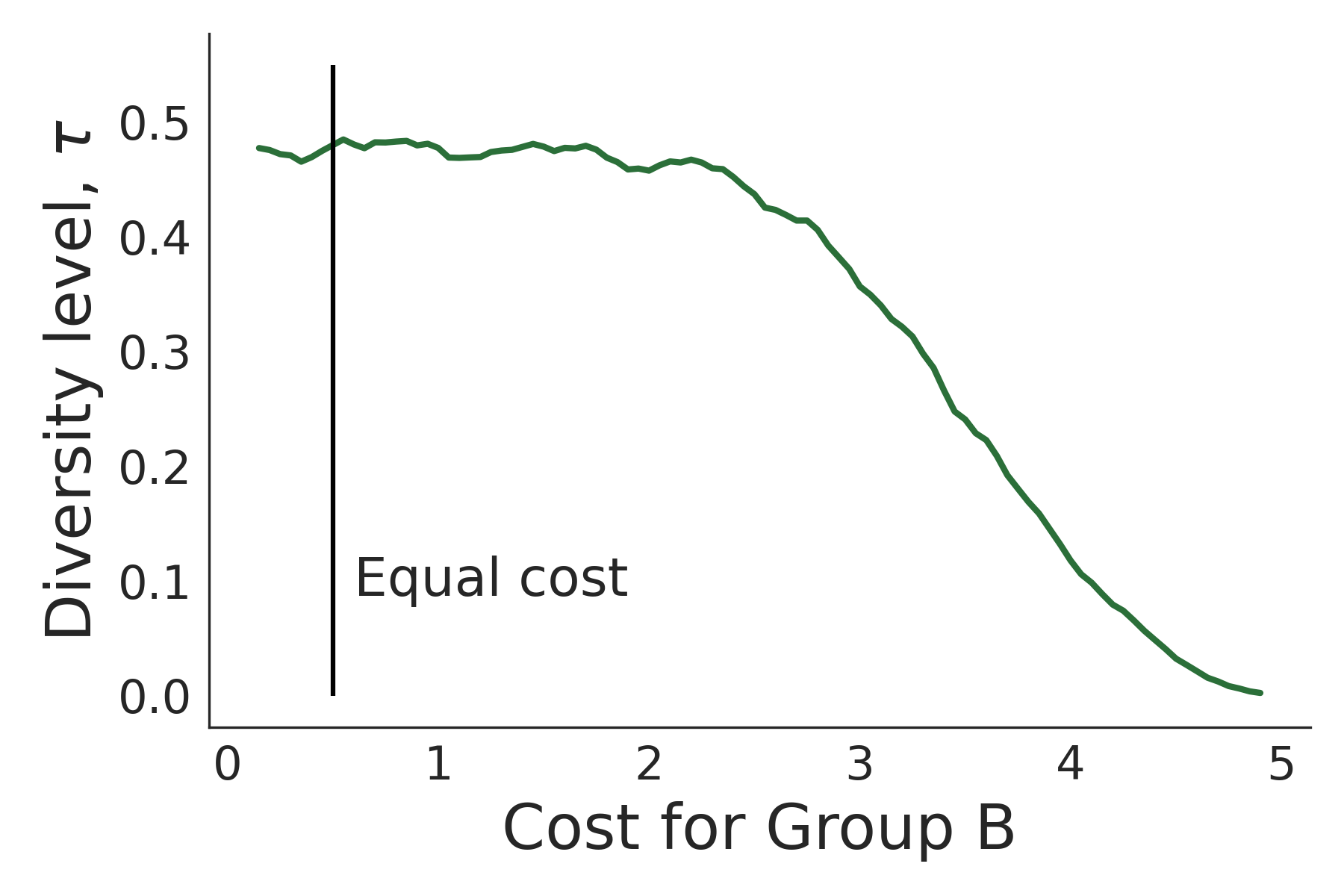}
    \caption{\small Diversity level \normalsize}
    \label{fig:diversity_by_cost}
    \end{subfigure}
    \hfill
\caption{Strategic students setting. How the probability of applying and the admitted students' academic merit change when Group $A$ has cost $c_A=0.5$ and the cost for Group $B$ varies. As the cost for Group $B$ increases, fewer Group $B$ students apply and more Group $A$ students apply (since the threshold decreases). Academic merit of admitted Group $B$ students increases while that of Group $A$ decreases.
We consider a setting where the variances of non-test features are equal for both groups, but Group $B$ has higher variance for test; the full parameter set can be
found in Electronic Companion \ref{app:synthetic_simulations_strategic_single_school}. 
}
\label{fig:strategic_papply_and_am_by_group}
\end{figure*}

\medskip

\noindent\textbf{Dropping the test score.} Figure \ref{fig:strategic_drop_test_diversity_and_academic_merit} shows the change in the diversity level and average skill level of the admitted students, after dropping the test. In this scenario, since the variance of the non-test feature $\sigma^2_{A0}=\sigma^2_{B0}=1$ are equal for both groups, a test-free policy will have a diversity level of $\tau=0.5$. 
\begin{figure}[tbh]
    \centering
    \includegraphics[scale=0.4]{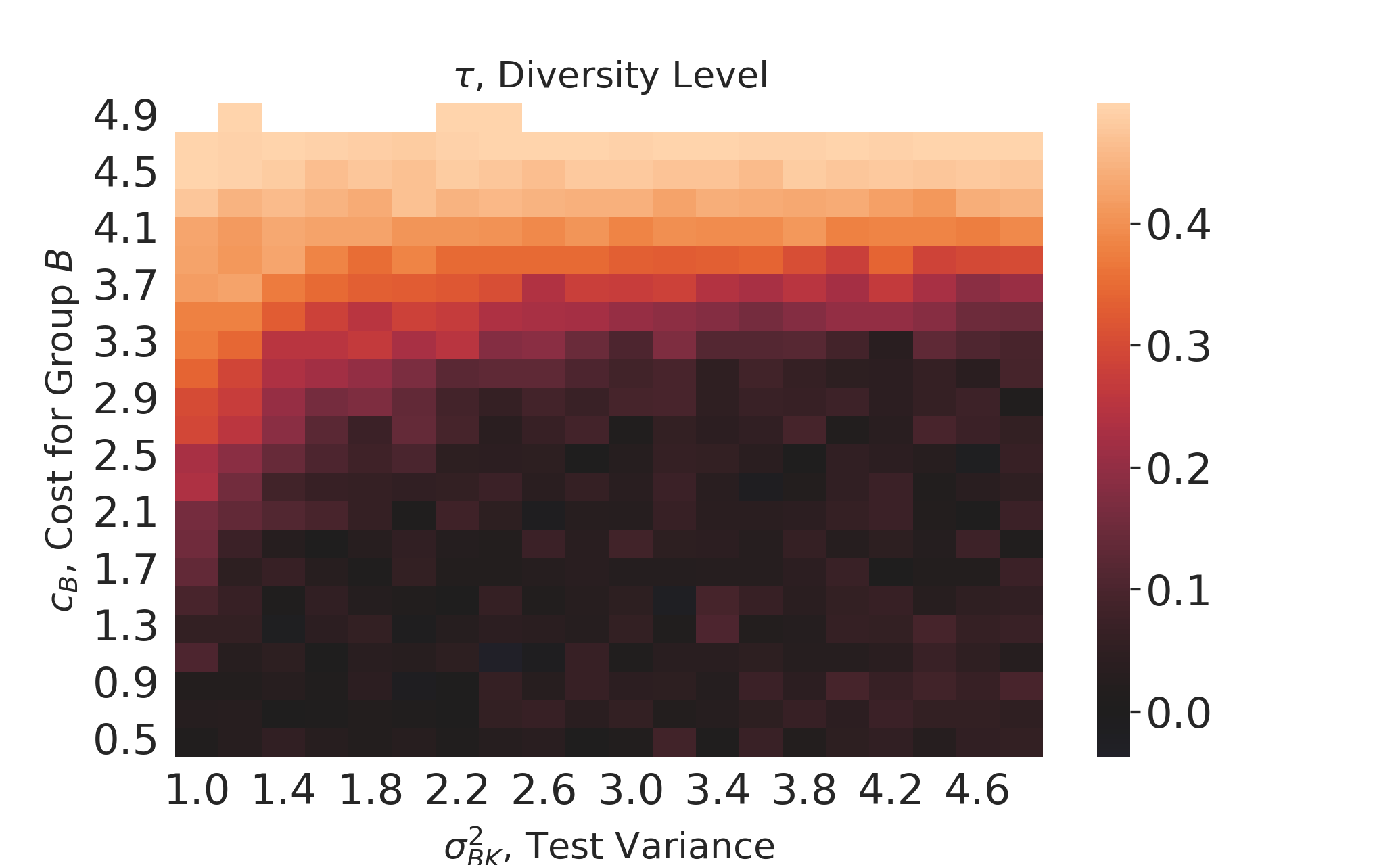}
    \includegraphics[scale=0.4]{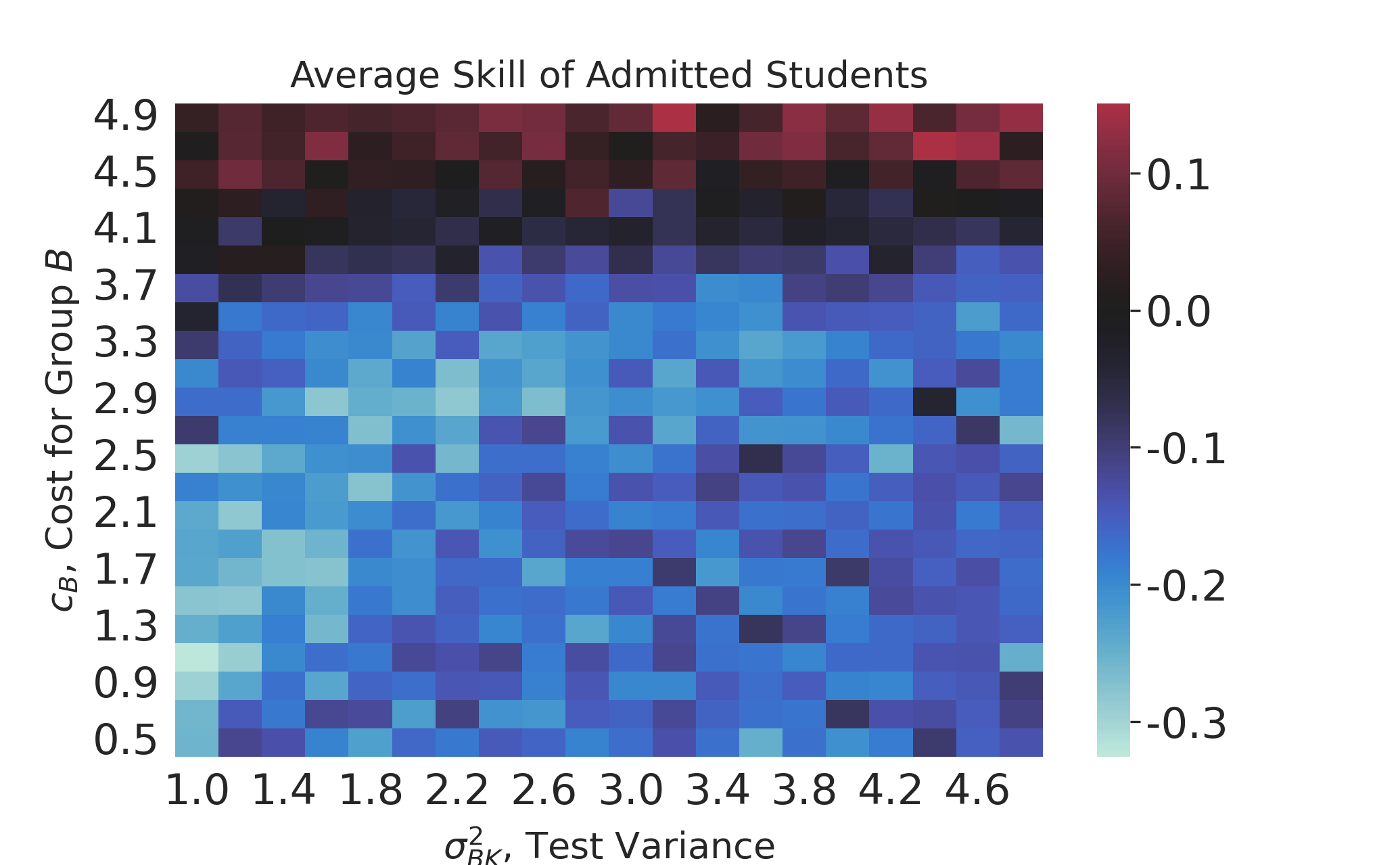}
    \caption{Change in diversity and average skill when the school drops the test requirement. When the test cost is large enough for Group $B$, dropping the test requirement increases the average academic merit and the diversity of the admitted student body. The full parameter set can be
found in Electronic Companion \ref{app:synthetic_simulations_strategic_single_school}.
    }
    \label{fig:strategic_drop_test_diversity_and_academic_merit}
\end{figure}

\subsection{Simulations with synthetic data and two schools}
\label{app:synthetic_simulations_strategic_two_schools}

The simulation setup closely resembles that of the single school, strategic student setting outlined in \ref{app:synthetic_simulations_strategic_single_school}.
In the same way as the single school, strategic student setting, each student is initialized with their true skill $q$ and non-test features $\thetaset_{\subb}$. The student observes $\thetaset_{\subb}$ and makes their test decision by calculating their expected reward for the situation in which they take the test and the situation in which they do not.

The simulation setup with two schools $J_1$ and $J_2$ differs in that we now have two testing policies ($P^1, P^2$) and two admission thresholds ($\tilde{q}^*_1, \tilde{q}^*_2$). For each policy pair --($P^1_{\full}, P^2_{\full}$), ($P^1_{\full}, P^2_{\subb}$), ($P^1_{\subb}, P^2_{\full}$), and ($P^1_{\subb}, P^2_{\subb}$)---we simulate the resulting admission outcomes in equilibrium.

Given a fixed policy pair ($P^1, P^2$) and admission threshold pair $(\tilde{q}^*_1, \tilde{q}^*_2)$, students observe their non-test features $\thetaset_{\subb}$ and calculates the distribution of their estimated skill if they were to take the test $\tilde{q}_{\full} \mid \tilde{q}_{\subb}, g, P_{\full}$, in the same way that they do in the single school, strategic setting. Then, the student solves for their optimal test decision. 
For example, if the policy pair is ($P^1_{\full}, P^2_{\subb})$, then the student solves the following optimization problem:
\begin{equation*}
    \alpha(\tilde q(\thetaset_{\subb}, g), g; \mathbf{P}) = \arg \max_{\alpha \in \{0,1\}} \alpha \left( v_1 \prob (Y_1 = 1 \mid \thetaset_{\subb}, g, P^1_{\full})   - c_g  \right) + v_2 \prob (Y_1 = 0 \cap Y_2 =1 \mid \thetaset_{\subb}, g, {P}^2_{\subb}).
\end{equation*}
The student then applies to all schools that do not require the test and applies to a test-required school if they choose $\alpha=1$. 

For each policy pair ($P^1, P^2$), finding an equilibrium amounts to finding an admission threshold pair $(\tilde{q}^*_1, \tilde{q}^*_2)$ such that each school admits the largest number of students while respecting their capacity constraints. We do a grid search to find the equilibrium admission threshold pair $(\tilde{q}^*_1, \tilde{q}^*_2)$.

\subsubsection{Parameters for the figures in the main text}
\label{app:ssec_two_school_strategic_parameter_setup}

Figure \ref{fig:two_school_strategic_heatmap} shows the average admitted skill (academic merit) of the resulting student body in $J_1$ (the preferred school) and $J_2$ (the less preferred school), as a function of $J_1$ and $J_2$ testing policies. 

In both Figure \ref{fig:twoschool_strat_cost0.5} and Figure \ref{fig:twoschool_strat_cost2.0}, there are $N=1000$ students, with half in group $A$ and half in group $B$. The true skill is Normally distributed with mean $\mu=0$ and variance $\sigma^2=1$. Students have two features. Group $A$ has feature distributions with mean $\mu_{A1}=\mu_{A2}=1$ and variance $\sigma^2_{A1}=\sigma^2_{A2}=1$. Group $B$ has feature distributions with mean $\mu_{B1}=\mu_{B2}=-1$ and variance $\sigma^2_{B1}=3, \sigma^2_{B2}=5$. 
Students have valuation $v_1=3$ for $J_1$ and $v_2=2$ for $J_2$. We simulate $N=1000$ students. $J_1$ and $J_2$ each have capacity $0.2$. In Figure \ref{fig:twoschool_strat_cost0.5}, the test costs are $c_A = c_B = 0.5$. 
In Figure \ref{fig:twoschool_strat_cost2.0}, the test costs are $c_A = c_B = 2.0$. To find the equilibrium threshold pair $(\tilde{q}^*_1, \tilde{q}^*_2)$ we do a grid search over 100 values of $\tilde{q}^*_1$ and 100 values of $\tilde{q}^*_2$.

\begin{figure}[tbh]
\centering
    \includegraphics[width=0.8\linewidth]{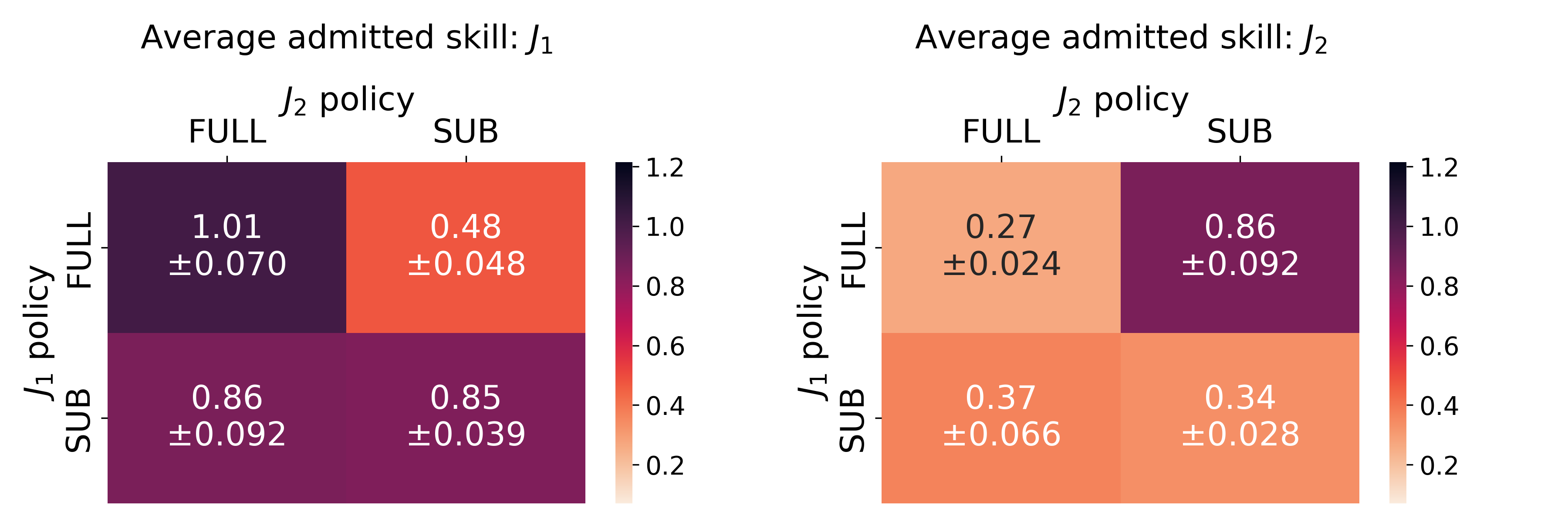}

 \caption{Average admitted skill of the resulting student body, in $J_1$ (preferred school) and $J_2$ (less preferred school), as a function of $J_1$ and $J_2$ testing policies. Students in both groups have valuations $v_1=3$ and $v_2=2$ and students have test cost $c_A=c_B=1.5$.  See \ref{app:ssec_two_school_strategic_parameter_setup} for full parameter details. 
}
\label{fig:two_school_strategic_heatmap_appendix}
\end{figure}

Figure \ref{fig:simulations_prob_apply_by_skill} shows simulation results illustrating the student equilibrium decisions
(characterized by \Cref{eq.student.decision.cost.theta}) on whether to take the test and apply to a school that requires the test, as a function of their true skill and group. 

There are two features, where the where the non-test feature is equally informative for both groups, but the test score is more informative for group $A$ than group $B$. The true skill distribution for both groups is Normally distributed with mean $\mu=0$ and variance $\sigma^2=1$. The features for the two groups are Normally distributed with mean $\mu_{gk}=0$ for all $g,k$ and $\sigma^2_{A1}=\sigma^2_{A2}=\sigma^2_{B1}=1$ and $\sigma^2_{B2}=2$, where $k=2$ denotes the test feature. Students of both groups have valuation $v=5$ for the school. Test costs are $c_A=0.5$ and $c_B=3$. 
There are $N=10000$ students and the school has capacity $0.1$. To find the equilibrium,  
we search over a grid of 250 threshold values.

\subsubsection{Additional simulations for synthetic data and two schools.}
\label{app:ssec_additional_two_school_strategic}

Figure \ref{fig:two_school_strategic_heatmap_appendix} shows additional simulations for the two school, strategic student setting, under different utility and test costs parameters.
In this setting, the optimal policies of both schools can depend on the policy of their competitor. $J_1$'s optimal policy is to require the test when $J_2$ requires the test, but drop the test when $J_2$ drops the test. $J_2$'s optimal policy is to drop the test when $J_1$ requires the test, but when $J_1$ drops the test, $J_2$ receives a quite similar average admitted skill when dropping or requiring the test.

\section{Supplementary information and discussion}
\FloatBarrier
Here, we provide additional information to support the main text analysis and writing. \Cref{tab:notation} provides a table of key notation. We also include further discussion on various modeling points.

{
\begingroup
\tiny
\begin{table}
\footnotesize
\centering
\caption{Key mathematical notation}
\begin{tabular}{l|p{8cm}|p{5cm}}
\hline
\textbf{Symbol} & \textbf{Meaning} & \textbf{Section} \\
\hline\hline
$q$ & Student's latent (unobserved) skill level & Section 2 (Base Model) \\
$\theta_k$ & Feature $k$ (e.g., test score, grades, etc.) & Section 2 (Base Model) \\
$\theta = (\theta_1,...,\theta_K)$ & Vector of all features & Section 2 (Base Model) \\
$\epsilon_k$ & Gaussian noise term for feature $k$ & Section 2 (Base Model) \\
$g \in \{A,B\}$ & Student group (A or B) & Section 2 (Base Model) \\
$\pi$ & Mass/proportion of students in group B & Section 2 (Base Model) \\
$\mu$ & Mean of skill distribution & Section 2 (Base Model) \\
$\sigma^2$ & Variance of skill distribution & Section 2 (Base Model) \\
$\mu_{gk}$ & Mean of noise distribution for feature $k$ and group $g$ & Section 2 (Base Model) \\
$\sigma^2_{gk}$ & Variance of noise distribution for feature $k$ and group $g$ & Section 2 (Base Model) \\
$\gamma_g$ & Fraction of group $g$ with access to full set of features & Section 2 (Base Model) \\
$\tilde{q}(\theta,g)$ & Perceived skill estimate given features $\theta$ and group $g$ & Section 3.1 (Bayesian Estimation) \\
$\tilde{q}^*_S$ & Admission threshold under policy $P_S$ & Section 3.1 (Bayesian Estimation) \\
$Y \in \{0,1\}$ & Admission decision (1 = admitted) & Section 2 (Base Model) \\
$\tau(P)$ & Diversity level (fraction of admitted students from group B) under policy $P$ & Section 4 (Analysis) \\
$I(q;P)$ & Individual fairness gap at skill level $q$ under policy $P$ & Section 4 (Analysis) \\
$c_g$ & Cost for group $g$ to take test (in strategic model) & Section 5 (Strategic Model) \\
$v$ & Value/utility of admission (in strategic model) & Section 5 (Strategic Model) \\
$\alpha \in \{0,1\}$ & Student's action (1 = apply/take test) & Section 5 (Strategic Model) \\
$P_{{\full}}$ & Policy requiring full set of features & Section 4.2 (Policy Analysis) \\
$P_{{\subb}}$ & Policy requiring only subset of features & Section 4.2 (Policy Analysis) \\
$\underline{q}^g$ & Threshold for taking test in strategic model & Section 5.1 (Single School) \\
\hline
\end{tabular}
\label{tab:notation}
\end{table}
\endgroup
}

\paragraph{Strategic student behavior in practice.} Our model of rational student behavior requires that students know school cutoffs in equilibrium. In practice, there is substantial uncertainty about school admission policies across application settings \citep{tomkins2023showing,idoux2023integrating,kapor_heterogeneous_2020,ajayi2020school}, and students may behave suboptimally given knowledge of historical college decisions \citep{tomkins2023showing}. It may be possible to incorporate such behavioral and informational effects into the model, using ideas from application search models under imperfect information \citep{ajayi2020school,calsamiglia2020structural,agarwal2018demand,agarwal2020revealed,idoux2023integrating}. Our results provide qualitative, directional insight regarding student behavioral effects. For example, we expect the results to continue to hold in settings where student beliefs regarding their admission chances is monotonic increasing in their test scores and their knowledge of their other features; however, showing such a result would require moving beyond our distributional assumptions and specifying a specific search model or belief structure for students.  

\paragraph{Model with general competition across many schools and arbitrary student preferences.} Our multi-school analysis is restricted to studying two schools, where all applicants prefer one program over another. Studying competition more generally would be of interest, such as when student preferences are heterogeneous (students differ in which schools they relatively prefer). We note that our results suggest that the homogeneous setting already induces competition: the best response policy of the more-preferred school can depend on the policy of the less-preferred school; intuitively, students may not choose to take the test if they can be admitted to the less-preferred school without the test. The policies of the schools jointly affect student strategic incentives, and in turn school optimal policies. We foresee that analyzing more general competition effects would use substantially different technical tools (and likely start from a different model than our base model).  

\FloatBarrier
\section{Proofs of statements}

In this appendix, 
we provide and prove the full statement of each result appearing in the main text.

\subsection{Auxiliary lemmas}
\label{app.A1}

Let $\Phi$ denote the CDF of $\mathcal{N}(0,1)$ and $\mathrm{HR}(x) = \frac{\phi(x)}{ 1-\Phi(x)}$ the \textit{Hazard Rate} 
of $X\sim\mathcal{N}(0,1)$.

\begin{lemma}
	\label{lemma:prior_marginal_normal}
	Let $X \mid M \sim \mathcal{N}(M, \sigma^2)$ and
	$M \sim \mathcal{N}(\mu_0, \sigma_0^2)$.
	Then, $X \sim \mathcal{N}(\mu_0, \sigma^2+ \sigma^2_0)$.
\end{lemma}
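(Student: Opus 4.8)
\subsection*{Proof plan for Lemma~\ref{lemma:prior_marginal_normal}}

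The plan is to represent $X$ as a sum of two independent Gaussians and invoke the standard fact that such a sum is Gaussian with added means and variances. Concretely, first I would note that the conditional law $X \mid M \sim \mathcal{N}(M,\sigma^2)$ can be realized by writing $X = M + Z$, where $Z \sim \mathcal{N}(0,\sigma^2)$ is drawn independently of $M$; this is legitimate because the conditional distribution of $M + Z$ given $M$ is exactly $\mathcal{N}(M,\sigma^2)$, matching the hypothesis. Since $M \sim \mathcal{N}(\mu_0,\sigma_0^2)$ and $Z$ are independent, $X$ is a linear combination (in fact the sum) of independent normal random variables.

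Next I would compute the distribution of $X = M + Z$ using moment generating functions (or characteristic functions), which is the cleanest route and avoids the convolution integral. We have
\[
  \E\!\left[e^{tX}\right] = \E\!\left[e^{tM}\right]\E\!\left[e^{tZ}\right]
  = \exp\!\left(\mu_0 t + \tfrac{1}{2}\sigma_0^2 t^2\right)\exp\!\left(\tfrac{1}{2}\sigma^2 t^2\right)
  = \exp\!\left(\mu_0 t + \tfrac{1}{2}(\sigma^2 + \sigma_0^2) t^2\right),
\]
using independence for the factorization and the known MGF of a Gaussian for each factor. The right-hand side is the MGF of $\mathcal{N}(\mu_0,\sigma^2+\sigma_0^2)$, and since the MGF (being finite in a neighborhood of $0$) determines the distribution, we conclude $X \sim \mathcal{N}(\mu_0,\sigma^2+\sigma_0^2)$.

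There is essentially no hard step here; the only point requiring a word of care is justifying the representation $X = M + Z$ with $Z$ independent of $M$, i.e.\ that one may enlarge the probability space to carry such a $Z$ — this is routine. As an alternative to the MGF argument one could instead compute the marginal density directly, $f_X(x) = \int_{\mathbb{R}} f_{X\mid M}(x\mid m)\, f_M(m)\,\diff m$, and complete the square in the exponent to recognize a Gaussian integral in $m$ times the claimed Gaussian density in $x$; I would mention this as a remark but carry out the MGF version in the text since it is shorter and less error-prone.
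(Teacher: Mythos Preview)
Your proof is correct. The paper itself does not supply a proof of this lemma; it is listed among the auxiliary lemmas in the appendix and treated as a standard fact about Gaussians, so there is nothing in the paper to compare your argument against. Your representation $X = M + Z$ with $Z \sim \mathcal{N}(0,\sigma^2)$ independent of $M$, followed by the MGF computation, is exactly the kind of short, routine verification one would expect here.
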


\begin{lemma}
	\label{lemma:prior_marginal_normal_2}
	Let $X \mid M \sim \mathcal{N}(M, \sigma^2)$ and
	$M \sim \mathcal{N}(\mu_0, \sigma_0^2)$. Then,
	$$ M \mid X \sim \mathcal{N}\left(\frac{\sigma_0^2}{\sigma^2+ \sigma_0^2} X + \frac{\sigma^2}{\sigma^2+ \sigma_0^2} \mu_0,
	\frac{1}{\sigma^{-2}+\sigma_0^{-2}}\right).$$
\end{lemma}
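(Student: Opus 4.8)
The plan is to compute the posterior density directly via Bayes' rule and recognize it as a Gaussian kernel in $m$. First I would write the joint density of $(X,M)$ as the product of the conditional and the prior,
\[
f(x,m) \;\propto\; \exp\!\left(-\frac{(x-m)^2}{2\sigma^2} - \frac{(m-\mu_0)^2}{2\sigma_0^2}\right),
\]
where $\propto$ suppresses factors not depending on $m$. For fixed $x$, the posterior $f(m\mid x)$ is proportional to the right-hand side viewed as a function of $m$ alone.

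Next I would expand the exponent and collect powers of $m$: the quadratic term is $-\tfrac12(\sigma^{-2}+\sigma_0^{-2})\,m^2$, the linear term is $(\sigma^{-2}x + \sigma_0^{-2}\mu_0)\,m$, and the remaining terms do not involve $m$ and are absorbed into the normalizing constant. Completing the square yields
\[
f(m\mid x) \;\propto\; \exp\!\left(-\frac{\sigma^{-2}+\sigma_0^{-2}}{2}\left(m - \frac{\sigma^{-2}x + \sigma_0^{-2}\mu_0}{\sigma^{-2}+\sigma_0^{-2}}\right)^{\!2}\right),
\]
which is, up to normalization, the density of a $\mathcal{N}\!\left(\frac{\sigma^{-2}x + \sigma_0^{-2}\mu_0}{\sigma^{-2}+\sigma_0^{-2}},\ \frac{1}{\sigma^{-2}+\sigma_0^{-2}}\right)$ random variable. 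Since a probability density that is proportional to a Gaussian kernel must equal the corresponding normalized Gaussian, this identifies $M\mid X$.

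Finally I would do the bookkeeping to match the stated form: multiplying the numerator and denominator of the posterior mean by $\sigma^2\sigma_0^2$ gives
\[
\frac{\sigma^{-2}x + \sigma_0^{-2}\mu_0}{\sigma^{-2}+\sigma_0^{-2}} \;=\; \frac{\sigma_0^2 x + \sigma^2\mu_0}{\sigma^2+\sigma_0^2} \;=\; \frac{\sigma_0^2}{\sigma^2+\sigma_0^2}\,x + \frac{\sigma^2}{\sigma^2+\sigma_0^2}\,\mu_0,
\]
and the variance $(\sigma^{-2}+\sigma_0^{-2})^{-1}$ is already in the claimed form. (Alternatively, one could note that $(X,M)$ is jointly Gaussian with $\operatorname{Cov}(X,M)=\sigma_0^2$, invoke Lemma~\ref{lemma:prior_marginal_normal} for the marginal $X\sim\mathcal{N}(\mu_0,\sigma^2+\sigma_0^2)$, and apply the standard multivariate Gaussian conditioning formula; but the direct density computation is cleaner and self-contained.) There is no genuine obstacle here — the statement is the classical conjugate-prior update — and the only thing requiring a little care is the algebraic rearrangement of the mean, which is routine.
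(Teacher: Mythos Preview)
Your proposal is correct. The paper states this lemma as a standard auxiliary fact without proof, so there is no ``paper's own proof'' to compare against; your direct Bayes-rule computation with completing the square is the textbook derivation and is entirely appropriate here.
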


\begin{lemma}
\label{lemma:conditional_expectation_normal}
Let $X \sim \mathcal{N}(\mu, \sigma^2)$. Then, for any $a \in \mathbb{R}$,
$\expec[X \mid X>a] = \mu + \sigma \frac{\phi(t)}{1-\Phi(t)},$
where $t = \frac{a - \mu}{\sigma}$.
\end{lemma}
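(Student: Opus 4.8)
The plan is a routine truncated-Gaussian computation via standardization. First I would reduce to the standard normal case: set $Z = (X-\mu)/\sigma$, so that $Z\sim\N(0,1)$ and the event $\{X>a\}$ coincides with $\{Z>t\}$ for $t=(a-\mu)/\sigma$. By linearity, $\expec[X\mid X>a] = \mu + \sigma\,\expec[Z\mid Z>t]$, so it suffices to show $\expec[Z\mid Z>t] = \phi(t)/(1-\Phi(t))$.

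Next I would write the conditional expectation explicitly. Since $\prob(Z>t) = 1-\Phi(t) > 0$, the conditional density of $Z$ given $Z>t$ is $\phi(z)/(1-\Phi(t))$ on $(t,\infty)$, hence
\[
\expec[Z\mid Z>t] = \frac{1}{1-\Phi(t)}\int_t^{\infty} z\,\phi(z)\,\diff z.
\]
The key elementary identity is $\frac{\diff}{\diff z}\phi(z) = -z\,\phi(z)$, which follows directly from $\phi(z) = \frac{1}{\sqrt{2\pi}}e^{-z^2/2}$. Therefore $\int_t^{\infty} z\,\phi(z)\,\diff z = \big[-\phi(z)\big]_{t}^{\infty} = \phi(t)$, using $\phi(z)\to 0$ as $z\to\infty$. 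Substituting back gives $\expec[Z\mid Z>t] = \phi(t)/(1-\Phi(t))$, and then $\expec[X\mid X>a] = \mu + \sigma\,\phi(t)/(1-\Phi(t))$ as claimed.

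There is essentially no real obstacle here: the statement is the standard formula for the mean of a one-sided truncated normal, and the only points worth stating carefully are (i) that the conditioning event has positive probability for every finite $a$, so the conditional density is well-defined, and (ii) the antiderivative identity $\phi' = -\,\mathrm{id}\cdot\phi$ that collapses the integral. This lemma will subsequently be invoked (together with Lemmas~\ref{lemma:prior_marginal_normal} and~\ref{lemma:prior_marginal_normal_2}) to compute the academic merit $\E[q\mid Y=1,g,P]$ of admitted students, since that quantity is exactly the mean of a Gaussian skill-estimate distribution truncated at an admission threshold.
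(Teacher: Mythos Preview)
Your proof is correct. The paper itself states this lemma as a standard auxiliary result without proof, so there is no paper proof to compare against; your standardization-plus-antiderivative argument is exactly the routine derivation one would supply.
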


\begin{lemma}
\label{lemma:HR}
The hazard rate $\mathrm{HR}(x) = \frac{\phi(x)}{1-\Phi(x)}$, $x\in \mathbb{R}$ has the following properties:
\begin{itemize}
    \item[(i)] Its derivative equals $\frac{\diff \mathrm{HR}(x)}{ \diff x}  = \mathrm{HR}(x) (\mathrm{HR}(x)-x)$;
    \item[(ii)] It holds that $\mathrm{HR}(x) > x$ for all $x >0$;
\end{itemize}
\end{lemma}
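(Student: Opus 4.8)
The plan is to establish the two parts independently, using throughout only the identities $\phi'(x) = -x\phi(x)$ and $\tfrac{\diff}{\diff x}\Phi(x)=\phi(x)$. For part (i), I would differentiate $\mathrm{HR}(x)=\phi(x)/(1-\Phi(x))$ with the quotient rule: the numerator of the derivative is $\phi'(x)(1-\Phi(x))+\phi(x)^2 = -x\phi(x)(1-\Phi(x))+\phi(x)^2$, so after dividing by $(1-\Phi(x))^2$ and regrouping we obtain $\mathrm{HR}(x)^2 - x\,\mathrm{HR}(x) = \mathrm{HR}(x)(\mathrm{HR}(x)-x)$. This is a short, obstacle-free computation.

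For part (ii), note that $1-\Phi(x)>0$, so $\mathrm{HR}(x)>x$ is equivalent to $h(x)>0$ where $h(x) = \phi(x) - x(1-\Phi(x))$. I would first compute $h'(x) = \phi'(x) - (1-\Phi(x)) + x\phi(x) = -x\phi(x) - (1-\Phi(x)) + x\phi(x) = -(1-\Phi(x)) < 0$, so $h$ is strictly decreasing on all of $\mathbb{R}$. Then I would show $\lim_{x\to\infty} h(x)=0$: one has $\phi(x)\to 0$, and the Mills-ratio bound $1-\Phi(x)=\int_x^\infty \phi(t)\,\diff t \le \int_x^\infty \tfrac{t}{x}\phi(t)\,\diff t = \phi(x)/x$ gives $0 \le x(1-\Phi(x)) \le \phi(x)\to 0$, hence $h(x)\to 0$. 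A strictly decreasing function with limit $0$ at $+\infty$ is strictly positive everywhere — if $h(x_0)\le 0$ then $h(x)<h(x_0)\le 0$ for all $x>x_0$, contradicting the limit — so in particular $h(x)>0$ for every $x>0$, which is exactly $\mathrm{HR}(x)>x$.

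I expect the only step requiring any care to be the limit $x(1-\Phi(x))\to 0$, and even that is dispatched by the one-line Mills-ratio estimate above; the rest is routine differentiation together with the monotonicity argument. (An alternative route for (ii) would combine (i) with $\mathrm{HR}(0)=\sqrt{2/\pi}>0$ and the asymptotics $\mathrm{HR}(x)\sim x$, but that needs a finer asymptotic expansion to rule out a crossing, so the direct argument via $h$ is cleaner.)
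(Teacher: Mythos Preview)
Your proof is correct. Part (i) is the standard quotient-rule computation, and your argument for part (ii) via $h(x)=\phi(x)-x(1-\Phi(x))$ is clean: the derivative computation $h'(x)=-(1-\Phi(x))<0$ is right, and the Mills-ratio estimate $1-\Phi(x)\le \phi(x)/x$ for $x>0$ correctly gives $x(1-\Phi(x))\to 0$, so the strictly-decreasing-with-zero-limit argument goes through.

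The paper itself does not supply a proof of this lemma; it is listed among the auxiliary lemmas in the appendix as a standard fact about the normal hazard rate and is simply invoked where needed (for instance, in the proof of Lemma~\ref{lemma:HR_h}). So there is nothing to compare against --- your write-up would serve as a self-contained justification that the paper omits.
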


\begin{lemma}
\label{lemma:HR_h}
Let $a>0$. The function $h(x) = \frac{x}{a}  \mathrm{HR}( \frac{a}{x})$ is increasing in $x>0$.
\end{lemma}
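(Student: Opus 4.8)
The plan is to reparametrize and reduce the claim to a sharp upper bound on the hazard rate. Writing $z = a/x$, which is a strictly decreasing bijection of $(0,\infty)$ onto itself (as $a>0$), we have $h(x) = \frac{1}{z}\,\mathrm{HR}(z) =: g(z)$. Hence $h$ is strictly increasing in $x$ on $(0,\infty)$ if and only if $g$ is strictly decreasing in $z$ on $(0,\infty)$, and it suffices to prove the latter.

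To analyze $g$ I would pass to logarithms: $\log g(z) = \log\mathrm{HR}(z) - \log z$, so that, using $\mathrm{HR}'(z)/\mathrm{HR}(z) = \mathrm{HR}(z) - z$ from Lemma~\ref{lemma:HR}(i), one gets $(\log g)'(z) = \mathrm{HR}(z) - z - \frac{1}{z}$. Therefore $g$ is strictly decreasing on $(0,\infty)$ precisely when $\mathrm{HR}(z) < z + \frac{1}{z}$, i.e.\ when $z\,\mathrm{HR}(z) < z^2+1$, for all $z>0$. Thus the entire lemma reduces to this Mills-ratio inequality.

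For that inequality I would introduce $\psi(z) := (z^2+1)\bigl(1-\Phi(z)\bigr) - z\phi(z)$ and show $\psi(z)>0$ on $(0,\infty)$ (note $\psi(z)>0 \iff \mathrm{HR}(z) < (z^2+1)/z$). Using $\phi'(z) = -z\phi(z)$, a short computation collapses the derivative to $\psi'(z) = 2\bigl[z(1-\Phi(z)) - \phi(z)\bigr] = 2\bigl(1-\Phi(z)\bigr)\bigl(z - \mathrm{HR}(z)\bigr)$, which is strictly negative on $(0,\infty)$ by Lemma~\ref{lemma:HR}(ii); hence $\psi$ is strictly decreasing there. Combined with $\lim_{z\to\infty}\psi(z)=0$ — which follows from the elementary tail bound $1-\Phi(z)\le \phi(z)/z$ together with $z\phi(z)\to 0$ — a strictly decreasing function that vanishes at $+\infty$ must be strictly positive on $(0,\infty)$. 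This yields $\psi>0$, hence $(\log g)'<0$, hence $g$ strictly decreasing and $h$ strictly increasing, completing the proof.

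The only mildly delicate point — the ``hard part'' — is establishing the upper Mills-ratio bound $\mathrm{HR}(z) < z + 1/z$; everything else is routine bookkeeping with the identities already recorded in Lemma~\ref{lemma:HR}. Even this bound is dispatched cleanly by the monotonicity-plus-vanishing-limit argument above, so I do not anticipate genuine difficulty. (Alternatively one could invoke the classical Gordon bound $z/(z^2+1) < (1-\Phi(z))/\phi(z) < 1/z$ for $z>0$, but deriving it in-line keeps the appendix self-contained.)
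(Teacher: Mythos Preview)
Your proof is correct and follows the same overall route as the paper: both substitute $z=a/x$, reduce to showing $g(z)=\mathrm{HR}(z)/z$ is strictly decreasing on $(0,\infty)$, and boil this down to the Mills-ratio upper bound $z\,\mathrm{HR}(z)<z^2+1$. The only genuine difference is how that last inequality is established. The paper invokes an external bound of Baricz, $\mathrm{HR}(y)<\tfrac{y}{2}+\tfrac{1}{2}\sqrt{y^2+4}$, and then checks algebraically that $\tfrac{y}{2}\bigl(\sqrt{y^2+4}-y\bigr)<1$. You instead prove the Gordon-type bound in-line via $\psi(z)=(z^2+1)(1-\Phi(z))-z\phi(z)$, using Lemma~\ref{lemma:HR}(ii) to get $\psi'<0$ and the elementary tail estimate to get $\psi(z)\to 0$. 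Your argument is fully self-contained and uses only the facts already recorded in Lemma~\ref{lemma:HR}, whereas the paper's version trades a short citation for brevity; both are clean, but yours avoids the outside dependence.
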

\begin{proof}{\textit{Proof.}}
Let $y = a/x$. We study the monotonicity of $\hat{h}(y) = \mathrm{HR}(y) / y $. The derivative of $\hat{h}(y)$ equals
$$\frac{\diff \hat{h}(y)}{ \diff y}  = \frac{\frac{\diff \mathrm{HR}(y)}{ \diff y} y - \mathrm{HR}(y)}{y^2}.$$
For any $y>0$, it holds that $\frac{\diff \hat{h}(y)}{ \diff y}<0$ if and only if
$\frac{\diff \mathrm{HR}(y)}{ \diff y} y - \mathrm{HR}(y) < 0.$
Using Part (i) in Lemma~\ref{lemma:HR}, we get that
$$\frac{\diff \mathrm{HR}(y)}{ \diff y} y - \mathrm{HR}(y) = \mathrm{HR}(y) \left (
        \mathrm{HR}(y) \, y -y^2 -1 \right),$$
which is negative for $y>0$ if and only if $\mathrm{HR}(y) \, y -y^2 -1 <0$ for all $y>0$.

By Theorem 2.3 in \cite{baricz2008mills}, we know that
$\mathrm{HR}(y) < \frac{y}{2} + \frac{\sqrt{y^2 +4}}{2}.$
Thus, using this inequality, we can bound the quantity $\mathrm{HR}(y) \, y -y^2 -1 $ as follows:
$$\mathrm{HR}(y) \, y -y^2 -1 < \frac{y^2}{2} + y\,\frac{\sqrt{y^2 +4}}{2} -y^2 -1 = \frac{y}{2} ( -y + \sqrt{y^2 +4}) - 1,$$
which  is negative for any $y \in \mathbb{R}$. Therefore,  $ \frac{\diff \hat{h}(y)}{ \diff y} <0$ for all $y >0$. 
Finally, since $\hat{h}(y)$ is decreasing in $y>0$ and $y= \frac{a}{x}$, $a >0$, is decreasing in $x>0$, it follows that $h(x) = \hat{h}\left( \frac{a}{x}\right)$ is increasing in $x>0$.
\end{proof}

\subsection{Group-aware estimation (Proofs from Section~\ref{sec:estimatesintuition})}
\label{app.A2}

\noindent\textbf{Gaussian social learning with feature set $S \subseteq \{1, \ldots, K\}$.} Given that $q \sim \N(\mu, \sigma^2)$, $\epsilon_{kg} \sim \N(\mu_{gk}, \sigma_{gk}^2)$ and the noise is drawn independently, each feature $k\in S$ is also Normally distributed conditional on $q$, i.e.,
$\theta_k \mid q, g \sim \N (q + \mu_{gk}, \sigma_{gk}^2).$
Then, we inductively find that
$q \mid \thetaset, g \sim \N \left(\tilde q(\thetaset, g), \tilde \sigma^2(\thetaset, g) \right),$
where
\begin{equation}
\label{eq.app.perceived_skill_cond_features}
\tilde q(\thetaset, g) = \frac{\mu \sigma^{-2} + \sum_{k\in S}(\theta_{k} - \mu_{gk})\sigma_{gk}^{-2}} {\sigma^{-2} +\sum_{k\in S} \sigma_{gk}^{-2}}, \, \, \, \, 
\tilde \sigma^2(\thetaset, g)= \frac{1} {\sigma^{-2} +\sum_{k\in S} \sigma_{gk}^{-2}}.
\end{equation}

\noindent\textbf{Perceived skill conditional on true skill.}  \Cref{eq.app.perceived_skill_cond_features} gives us the skill estimate $\tilde q$ of a student conditional on features $\thetaset$. Another useful distribution is  
$\tilde q \mid q, g, P_S$, which is also Gaussian. Indeed, observe that $\tilde q(\thetaset, g)$ in \Cref{eq.app.perceived_skill_cond_features} is a linear combination of independent (conditional on $q$) Gaussian variables $\theta_k= q + \epsilon_{kg}$, $k \in S$. Thus,
\begin{equation}
\label{eq:tilde_q_cond_q_distribution}
    \tilde q  \mid q, g, P_S \sim \N\left(
    \frac{\mu\sigma^{-2} + q\sum_{k\in S} \sigma_{gk}^{-2}}{\sigma^{-2} + \sum_{k\in S} \sigma^{-2}_{gk}},
    \frac{\sum_{k\in S} \sigma^{-2}_{gk}}{\left(\sigma^{-2}+\sum_{k\in S} \sigma^{-2}_{gk}\right)^2}
    \right).
\end{equation}

\begin{lemma}
\label{lemma:comp_tildeq_cond_q}
For group-aware estimation policies, the following properties hold:
\begin{itemize}
    \item[(i)] $\expec[\tilde q  \mid q, A, P_{S}] > \expec[\tilde q \mid q, B, P_{S}]$ if and only if $(q-\mu)\left(
    \sum_{k\in S}\sigma_{Ak}^{-2} -
    \sum_{k\in S}\sigma_{Bk}^{-2}
    \right)>0.$
    \item[(ii)] $\textrm{Var}[\tilde q \mid q, A, P_{S} ] >\textrm{Var}[\tilde q \mid q, B, P_{S} ]$ if and only if $${\left( \sigma^{-4} - \sum_{k\in S} \sigma_{Ak}^{-2} \sum_{k\in S}\sigma_{Bk}^{-2} \right)
    \left(\sum_{k\in S}\sigma_{Ak}^{-2} - \sum_{k\in S}\sigma_{Bk}^{-2} \right)>0.}$$
\end{itemize}
\end{lemma}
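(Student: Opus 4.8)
Both parts follow by direct computation from the explicit Gaussian law of $\tilde q \mid q, g, P_S$ recorded in \eqref{eq:tilde_q_cond_q_distribution}. To keep the algebra readable, the plan is to abbreviate $p \triangleq \sigma^{-2}$ and $S_g \triangleq \sum_{k\in S}\sigma_{gk}^{-2}$ for $g\in\{A,B\}$, so that
\[
\expec[\tilde q \mid q, g, P_S] = \frac{\mu p + q S_g}{p + S_g}, \qquad \var[\tilde q \mid q, g, P_S] = \frac{S_g}{(p+S_g)^2}.
\]
Note $p>0$, $S_A,S_B>0$, hence $p+S_A,p+S_B>0$; these positivity facts are what let us clear denominators without flipping inequalities.

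\textbf{Part (i).} First I would rewrite the conditional mean in the ``shrinkage'' form $\expec[\tilde q \mid q, g, P_S] = \mu + (q-\mu)\,\frac{S_g}{p+S_g}$, which isolates the only $g$-dependent factor. Subtracting the two group expressions gives
\[
\expec[\tilde q \mid q, A, P_S] - \expec[\tilde q \mid q, B, P_S] = (q-\mu)\left(\frac{S_A}{p+S_A} - \frac{S_B}{p+S_B}\right) = (q-\mu)\,\frac{p\,(S_A - S_B)}{(p+S_A)(p+S_B)}.
\]
Since the prefactor $p/[(p+S_A)(p+S_B)]$ is strictly positive, the sign of the difference equals the sign of $(q-\mu)(S_A-S_B)$, which is exactly the claimed equivalence after substituting back $S_g = \sum_{k\in S}\sigma_{gk}^{-2}$.

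\textbf{Part (ii).} Here I would compare $\frac{S_A}{(p+S_A)^2}$ with $\frac{S_B}{(p+S_B)^2}$ by cross-multiplying with the positive quantity $(p+S_A)^2(p+S_B)^2$, so the inequality $\var[\tilde q\mid q,A,P_S] > \var[\tilde q\mid q,B,P_S]$ is equivalent to $S_A(p+S_B)^2 > S_B(p+S_A)^2$. Expanding both sides, the $2pS_AS_B$ terms cancel, leaving $p^2 S_A + S_A S_B^2 > p^2 S_B + S_B S_A^2$, i.e. $p^2(S_A-S_B) - S_A S_B(S_A - S_B) > 0$, which factors as $(S_A-S_B)(p^2 - S_A S_B) > 0$. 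Substituting $p^2 = \sigma^{-4}$ and $S_g = \sum_{k\in S}\sigma_{gk}^{-2}$ yields precisely the stated condition $\left(\sigma^{-4} - \sum_{k\in S}\sigma_{Ak}^{-2}\sum_{k\in S}\sigma_{Bk}^{-2}\right)\left(\sum_{k\in S}\sigma_{Ak}^{-2} - \sum_{k\in S}\sigma_{Bk}^{-2}\right) > 0$.

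\textbf{Main obstacle.} There is no real obstacle: the entire argument is elementary algebra on the closed-form mean and variance from \eqref{eq:tilde_q_cond_q_distribution}. The only points requiring care are (a) checking that every quantity by which we multiply or divide is strictly positive so inequality directions are preserved, and (b) performing the cross-multiplication and expansion in part (ii) cleanly so that the cancellation producing the factor $(S_A-S_B)$ is transparent.
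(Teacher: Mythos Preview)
Your proposal is correct and takes essentially the same approach as the paper, which simply states that ``the proof follows immediately from simple algebra thus it is omitted.'' Your writeup supplies exactly that algebra---the shrinkage rewriting in part (i) and the cross-multiplication/factorization in part (ii)---and is in fact more detailed than what the paper records.
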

\begin{proof}{\textit{Proof.}} The proof follows immediately from simple algebra thus it is ommitted.
\end{proof}

\smallskip\noindent\textbf{Distribution of skill estimates per group.} 
We find the distribution  
$\tilde q \mid g, P_{S}$, that we denote by 
$F_{\tilde q \mid g, P_{S}}$.

\begin{lemma}[Lemma~\ref{lemma:perceived_skill_distribution}]
	Consider a school that uses feature set $S \subseteq \{1, \dots, K\}$ for each applicant. 
For $g \in \{A, B\}$, the skill level estimates for students in group $g$ are  Normally distributed:
\begin{equation*}
	{\tilde q \mid  g, P_{S}} \sim \mathcal{N}  \left( \mu, \sigma^{2}\left[\frac{\sum_{{k\in S}} \sigma_{gk}^{-2}  }{\sigma^{-2} +\sum_{{k\in S}} \sigma_{gk}^{-2}}\right]\right).
\end{equation*}
\end{lemma}

\begin{proof}{\textit{Proof.}} 
An application of Lemma \ref{lemma:prior_marginal_normal} for $X= \tilde q$ and $M = \frac{\mu \sigma^{-2} + q  \sum_{k\in \full} \sigma_{gk}^{-2}} {\sigma^{-2} +\sum_{k\in \full} \sigma_{gk}^{-2}}$ gives us the result. 
Analytically, the parameters of this distribution can be computed as follows:
\begin{equation*}
 \begin{split}
    \expec[ \tilde q \mid g, P_{S} ] =& \expec_q[\expec[ \tilde q \mid q, g, P_{S}]] =\frac{\mu \sigma^{-2} + \mu \sum_{k\in S} \sigma_{gk}^{-2}} {\sigma^{-2} +\sum_{k\in S} \sigma_{gk}^{-2}}
 =\mu,
    \\
  \textrm{Var}[ \tilde q \mid g, P_{S} ] =& \expec[\tilde q^2 \mid g,P_{S} ] - \mu^2
  = \expec_q [\expec[\tilde q^2\mid q, g, P_{S}] ] -\mu^2\\
 =& \expec_q \left[
  \textrm{Var} [\tilde q \mid q, g, P_{S}] + \left(\frac{\mu \sigma^{-2} + q \sum_{k\in S} \sigma_{gk}^{-2}} {\sigma^{-2} +\sum_{k\in S} \sigma_{gk}^{-2}}\right)^2\right] - \mu^2
  \\
  =&\expec_q \left[
  \textrm{Var} [\tilde q \mid q, g, P_{S}]\right]+  \textrm{Var}\left[\frac{\mu \sigma^{-2} + q \sum_{k\in S} \sigma_{gk}^{-2}} {\sigma^{-2} +\sum_{k\in S} \sigma_{gk}^{-2}}\right] \\
  =&  \frac{\sum_{k\in S} \sigma_{gk}^{-2}   }{(\sigma^{-2} +\sum_{k\in S} \sigma_{gk}^{-2})^2} + \sigma^2 \left(\frac{\sum_{k\in S} \sigma_{gk}^{-2}   }{\sigma^{-2} +\sum_{k\in S} \sigma_{gk}^{-2}} \right)^2\\
  = &\sigma^{2}\frac{\sum_{k\in S} \sigma_{gk}^{-2}    }{\sigma^{-2} +\sum_{k\in S} \sigma_{gk}^{-2}}.
  \end{split}
 \end{equation*}
\end{proof}

\begin{corollary}
\label{cor:var_tilde_q_comp}
$\textrm{Var}[\tilde q \mid A, P_{S}] > \textrm{Var}[\tilde q \mid B, P_{S}]$ if and only if $\sum_{k\in S} \sigma_{Ak}^{-2}>\sum_{k\in S} \sigma_{Bk}^{-2}$.
\end{corollary}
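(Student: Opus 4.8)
The plan is to reduce the claim to the monotonicity of a single scalar function. By \Cref{lemma:perceived_skill_distribution}, for each group $g\in\{A,B\}$ we have
\[
\textrm{Var}[\tilde q \mid g, P_{S}] = \sigma^{2}\,\frac{\sum_{k\in S}\sigma_{gk}^{-2}}{\sigma^{-2}+\sum_{k\in S}\sigma_{gk}^{-2}}.
\]
So it suffices to show that the map $x\mapsto f(x)\triangleq \dfrac{\sigma^{2}x}{\sigma^{-2}+x}$ is strictly increasing on $x>0$, and then apply it with $x_A=\sum_{k\in S}\sigma_{Ak}^{-2}$ and $x_B=\sum_{k\in S}\sigma_{Bk}^{-2}$ (both positive).

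First I would rewrite $f(x)=\sigma^{2}\bigl(1-\tfrac{\sigma^{-2}}{\sigma^{-2}+x}\bigr)$, from which strict monotonicity is immediate since $\tfrac{\sigma^{-2}}{\sigma^{-2}+x}$ is strictly decreasing in $x>0$; alternatively one computes $f'(x)=\tfrac{1}{(\sigma^{-2}+x)^2}>0$. Then $\textrm{Var}[\tilde q\mid A,P_S]>\textrm{Var}[\tilde q\mid B,P_S]$ holds if and only if $f(x_A)>f(x_B)$, which by strict monotonicity is equivalent to $x_A>x_B$, i.e.\ $\sum_{k\in S}\sigma_{Ak}^{-2}>\sum_{k\in S}\sigma_{Bk}^{-2}$, as claimed. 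Both directions of the ``if and only if'' follow at once from strict monotonicity (hence injectivity and order-preservation) of $f$.

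There is no real obstacle here: the only thing to be careful about is that the total precisions are strictly positive so that $f$ is evaluated on its domain of monotonicity, and that $S\neq\emptyset$ so the sums are well defined — both guaranteed by the standing assumptions. This corollary is essentially a one-line consequence of \Cref{lemma:perceived_skill_distribution}, and in fact it is just the special case of part (ii) of \Cref{lemma:comp_tildeq_cond_q} obtained by integrating out $q$, or can be read off directly from the variance formula.
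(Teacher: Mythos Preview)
Your proof is correct and matches the paper's intended argument: the corollary is stated without proof immediately after \Cref{lemma:perceived_skill_distribution}, relying precisely on the variance formula and the strict monotonicity of $x\mapsto \sigma^{2}x/(\sigma^{-2}+x)$ that you verify. One minor remark: your aside linking this to part~(ii) of \Cref{lemma:comp_tildeq_cond_q} is slightly imprecise, since that lemma compares $\mathrm{Var}[\tilde q\mid q,g]$ rather than the marginal variance, but your main argument does not rely on it.
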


\begin{corollary}[Second-order stochastic dominance]
\label{cor:ssd}
    If $\sum_{k\in S} \sigma_{Ak}^{-2}>\sum_{k\in S} \sigma_{Bk}^{-2}$, then ${(\tilde q \mid B,  P_{S}) \succ_{\textrm{SSD}} (\tilde q \mid A,  P_{S})}$
 and $\tilde q \mid A,   P_{S}$ is a mean-preserving spread of $\tilde q \mid B,   P_{S}$.
\end{corollary}

\noindent\textbf{Distribution of true skill conditional on skill estimate.}
To answer questions about the academic merit of the admitted student body, we need to be able to compute the expected value of $q$ conditional on acceptance and the social group $g$ of a student, i.e., $\expec[q \mid  Y=1, g, P_S]$.
Thus, we first the conditional distribution $q\mid \tilde q, g, P_S$ in the following lemma.

\begin{lemma}
\label{lemma.true_skill_conditional_perceived}
Suppose that the school uses policy $P_{S}$. Then, the true skill level $q$ of students in group $g\in \{A,B\}$ conditional on the estimated skill level $\tilde q$ is Normally distributed as follows
\begin{equation}
\label{eq.true_skill_conditional_perceived}
    q \mid \tilde{q}, g, P_{S} \sim \N \left( \tilde q , \frac{1}{\sigma^{-2} + \sum_{k\in S} \sigma_{gk}^{-2}}\right).
  \end{equation}
\end{lemma}

  \begin{proof}{\textit{Proof.}}
  We apply Lemma \ref{lemma:prior_marginal_normal_2} by using the transformation $M = \frac{\mu\sigma^{-2} + q \sum_{k\in S} \sigma_{gk}^{-2}}{\sigma^{-2} + \sum_{k\in S} \sigma_k \sigma_{gk}^{-2}}$ and $X=\tilde q$.
More specifically, let
$$X \mid M \sim \N \left(M, \frac{\sum_{k\in S} \sigma_{gk}^{-2}}{(\sigma^{-2} + \sum_{k\in S} \sigma_{gk}^{-2})^2}\right), \,\,\, \,M \sim \mathcal{N}\left( \mu, \sigma^2 \left(  \frac{\sum_{k\in S} \sigma_{gk}^{-2}}{\sigma^{-2} + \sum_{k\in S} \sigma_{gk}^{-2}} \right)^2\right).$$
Then, by Lemma \ref{lemma:prior_marginal_normal_2}, we get that
\begin{equation*}
\begin{split}
    \expec[M \mid \tilde q, g, P_{S}] &=
    \frac{\sigma^2\frac{ (\sum_{k\in S} \sigma_{gk}^{-2})^2}{(\sigma^{-2} + \sum_{k\in S} \sigma_{gk}^{-2})^2} \tilde q + \mu \frac{\sum_{k\in S} \sigma_{gk}^{-2}}{(\sigma^{-2} + \sum_{k\in S} \sigma_{gk}^{-2})^2}}{\sigma^2 \frac{(\sum_{k\in S} \sigma_{gk}^{-2})^2}{(\sigma^{-2} + \sum_{k\in S} \sigma_{gk}^{-2})^2}  + \frac{\sum_{k\in S} \sigma_{gk}^{-2}}{(\sigma^{-2} + \sum_{k\in S} \sigma_{gk}^{-2})^2}}\\
   &= \frac{\sum_{k\in S} \sigma_{gk}^{-2} \tilde q + \mu \sigma^{-2} }{\sum_{k\in S} \sigma_{gk}^{-2}  +  \sigma^{-2}}
    \\
    \textrm{Var}[M \mid \tilde q,  g, P_{S}] & = \left ( \left(\frac{\sum_{k\in S} \sigma_{gk}^{-2}}{(\sigma^{-2} + \sum_{k\in S} \sigma_{gk}^{-2})^2} \right)^{-1} +  \sigma^{-2} \left(  \frac{\sum_{k\in S} \sigma_{gk}^{-2}}{\sigma^{-2} + \sum_{k\in S} \sigma_{gk}^{-2}} \right)^{-2} \right)^{-1} \\
       &= \frac{\left(\sum_{k\in S}\sigma_{gk}^{-2} \right)^2}{\left(\sigma^{-2} + \sum_{k\in S} \sigma_{gk}^{-2}\right)^3}.
\end{split}
\end{equation*}
Therefore,
   $M \mid  \tilde q, g, P_{S}  \sim \N \left(\frac{\sum_{k\in S} \sigma_{gk}^{-2} \tilde q + \mu \sigma^{-2} }{\sum_{k\in S} \sigma_{gk}^{-2}  +  \sigma^{-2}}, \frac{\left(\sum_{k\in S}\sigma_{gk}^{-2} \right)^2}{\left(\sigma^{-2} + \sum_{k\in S} \sigma_{gk}^{-2}\right)^3}\right).$
Finally, using the linear transformation $$q = \frac{M \left(\sigma^{-2} + \sum_{k\in S}  \sigma_{gk}^{-2}\right) - \mu \sigma^{-2}}{\sum_{k\in S} \sigma_{gk}^{-2}},$$ 
we get that
$q \mid \tilde q, g, P_{S} \sim \N \left(\tilde q , \frac{1}{\sigma^{-2} + \sum_{k\in S} \sigma_{gk}^{-2}}\right).$
\end{proof}

\subsection{Baseline policy in the absence of barriers (Proofs from Sections~\ref{sec:baseline_policy})}
\label{app.A3}

\noindent
Let $\tilde q^*_S$ denote the optimal decision threshold used by the school under policy $P_S$. Using the distribution $F_{\tilde{q} \mid g, P_{S}}$, it follows that threshold $\tilde q^*_S$ is the solution to the equation
\begin{equation}
\label{def:threshold_groupaware_noAA}
   (1-\pi) F_{\tilde{q} \mid A, P_{S}} (\tilde q^*_S)  + \pi F_{\tilde{q} \mid B, P_{S}} (\tilde q^*_S) = 1-C.
\end{equation}

By Lemma~\ref{lemma:perceived_skill_distribution}, the \textit{Gaussian mixture} of $F_{\tilde{q} \mid A, P_{S}}$, $F_{\tilde{q} \mid B, P_{S}}$ with weights $1-\pi$, $\pi$ has mean $\mu$ and variance
$$(1-\pi) \sigma^{2}\left[\frac{\sum_{k\in S} \sigma_{Ak}^{-2}  }{\sigma^{-2} +\sum_{k\in S} \sigma_{Ak}^{-2}}\right] +  \pi \sigma^{2}\left[\frac{\sum_{k\in S} \sigma_{Bk}^{-2}  }{\sigma^{-2} +\sum_{k\in S} \sigma_{Bk}^{-2}}\right].$$

Recall that for a Gaussian random variable $X\sim N(\mu_0, \sigma_0^2)$, it holds that $\frac{X-\mu_0}{\sigma_0} \sim N(0,1)$.
Thus, \Cref{def:threshold_groupaware_noAA} can be equivalently written as
\begin{equation}
\label{eq:threshold_groupaware_noAA_Phi}
    \Phi \left ( \left(\tilde q^*_S - \mu\right) \left( (1-\pi) \sigma^{2}\left[\frac{\sum_{k\in S} \sigma_{Ak}^{-2}  }{\sigma^{-2} +\sum_{k\in S} \sigma_{Ak}^{-2}}\right] +  \pi \sigma^{2}\left[\frac{\sum_{k\in S} \sigma_{Bk}^{-2}  }{\sigma^{-2} +\sum_{k\in S} \sigma_{Bk}^{-2}}\right]\right)^{-1/2} \right) = 1-C.
\end{equation}

We also introduce some additional definitions.
Given any fixed value of $\sum_{k\in S} \sigma_{Bk}^{-2}$, the \textit{informativeness gap} $\Delta$ is defined as $\Delta = \sum_{k\in S} \sigma_{Ak}^{-2} - \sum_{k\in S} \sigma_{Bk}^{-2}$.
Given all parameters, except $\sum_{k\in S} \sigma_{Ak}^{-2}$ fixed, let $F_{\tilde{q} \mid g, P_{S}} (q ; \Delta)$ denote the CDF $F_{\tilde{q} \mid g, P_{S}}$ parameterized by $\Delta \geq 0$ and $\tilde{q}^*_S (\Delta)$ and $\tau(P_{S};\Delta)$ denote the corresponding admission threshold and diversity level, respectively, for any $\Delta \geq 0$ under  baseline policy $P_{S}$.

We now provide the proof to Proposition~\ref{prop:group_aware_noAA}.
Note that the result below considers a general feature set $S$ where the assumption on unequal precisions holds.

\propgroupawarenoAA*

\begin{proof}{\textit{Proof of Part (i)}.} We break the proof into two steps.

\noindent \textit{Step 1: We show that group fairness fails except for equal precision. Given unequal precisions, we further show that $\tau(P_S) < \pi$.}
If $\sum_{k\in S} \sigma_{Ak}^{-2}=\sum_{k\in S} \sigma_{Bk}^{-2}$, then the two distributions $F_{\tilde q \mid A, P_{S}}$, $F_{\tilde q \mid B, P_{S}}$ are identical so it trivially holds that $F_{\tilde q \mid A, P_{S}}(\tilde q_S^*) = F_{\tilde q \mid B, P_{S}} (\tilde q_S^*)=1-C$. Consequently, group fairness is achieved.

Next, assume that $\sum_{k\in S} \sigma_{Ak}^{-2}>\sum_{k\in S} \sigma_{Bk}^{-2}$. Then, by Lemma \ref{lemma:perceived_skill_distribution} and Corollary \ref{cor:ssd}, ${(\tilde q \mid B, P_{S}) \succ_{SSD} (\tilde q \mid A, P_{S})}$ and $\tilde q \mid A, P_{S}$ is a mean-preserving spread of $\tilde q \mid B, P_{S}$.
Thus, the CDFs $F_{\tilde q \mid A, P_{S}}$ and  $F_{\tilde q \mid B, P_{S}}$ cross once at $\tilde q = \mu$. Furthermore, $F_{\tilde q \mid A, P_{S}} (\tilde q) <F_{\tilde q \mid B, P_{S}} (\tilde q)$, for $\tilde q > \mu$ and  $F_{\tilde q \mid A, P_{S}} (\tilde q) > F_{\tilde q \mid B, P_{S}} (\tilde q)$, for $\tilde q < \mu$.

Since $C<0.5 = F_{\tilde q \mid A, P_{S}} (\mu) = F_{\tilde q \mid B, P_{S}} (\mu)$, then $\tilde q_S^* > \mu$. Therefore, $F_{\tilde q \mid A, P_{S}}(\tilde q_S^*) < F_{\tilde q \mid B, P_{S}}(\tilde q_S^*)$, which due to \Cref{def:threshold_groupaware_noAA} implies that $1- F_{\tilde q \mid B, P_{S}}(\tilde q_S^*)<C$ thus
$$\tau(P_{S}) = \frac{\pi(1-F_{\tilde q \mid B, P_{S}}(\tilde q^*_S))}{C} < \pi.$$

\smallskip
\noindent\textit{Step 2: We show that the marginal effect of $\Delta$ on $\tau(P_{S})$ is negative.}
 Consider $0\leq \Delta < \Delta '$.
Since $F_{\tilde{q} \mid B, P_{S}} (q ; \Delta)$ 
depends only on $\sum_{k\in S} \sigma_{Bk}^{-2}$, it remains unchanged under both $\Delta, \Delta'$.

Recall that the admission threshold is the solution to  \Cref{eq:threshold_groupaware_noAA_Phi}.
Solving for $\tilde q^*_S (\Delta)$ gives us
\begin{equation}
\label{eq:threshold_tilde_q_Delta_fun}
    \tilde q^*_S (\Delta)  = \mu+\Phi^{-1}(1-C) \cdot \left( (1-\pi) \sigma^{2}\left[\frac{\sum_{k\in S} \sigma_{Bk}^{-2} +\Delta  }{\sigma^{-2} +\sum_{k\in S} \sigma_{Bk}^{-2}+\Delta}\right] +  \pi \sigma^{2}\left[\frac{\sum_{k\in S} \sigma_{Bk}^{-2}  }{\sigma^{-2} +\sum_{k\in S} \sigma_{Bk}^{-2}}\right]\right)^{1/2},
\end{equation}
which is an increasing function of $\Delta$.
Thus, $\tilde{q}^*_S(\Delta') > \tilde{q}^*_S(\Delta)$.

Therefore, given that the capacity remains constant at $C$, the diversity level decreases as $\Delta$ increases since
\small
\begin{equation*}
\begin{split}
    \tau(P_{S};\Delta') = \frac{\pi (1-F_{\tilde{q} \mid B, P_{S}} (\tilde{q}^*_{S}(\Delta') ; \Delta')) }{C} 
      =\frac{\pi (1-F_{\tilde{q} \mid B, P_{S}} (\tilde{q}^*_{S}(\Delta');\Delta))}{C} 
    < \frac{\pi (1-F_{\tilde{q} \mid B, P_{S}} (\tilde{q}^*_{S}(\Delta); \Delta))}{C} = \tau(P_{S};\Delta).
\end{split}
\end{equation*}
\normalsize
 \end{proof}

 \begin{proof}{\textit{Proof of Part (ii)}.} We prove each claim in different steps.
 
\noindent\textit{Step 1: We show that $I(q; P_S)>0$ if and only if}
 $$ q  > \tilde q^*_{S}+ \frac{\sigma^{-2}(\tilde q^*_{S} -\mu)}{\sqrt{\sum_{k\in S} \sigma_{Bk}^{-2}}  \sqrt{\sum_{k\in S} \sigma_{Ak}^{-2}}}.$$
 
 Recall that for a Gaussian variable $X\sim N(\mu_0, \sigma_0^2)$, it holds that $\frac{X-\mu_0}{\sigma_0} \sim N(0,1)$.
Thus, given policy $P_{S}$, the probability of admission for a student in group $g$ equals
\begin{equation}
\label{eq:admission_prob_cond_q}
    \prob[Y=1 \mid q, g, P_{S}] = 1- F_{\tilde q \mid q, g, P_{S}} (\tilde q^*_S) = 1- \Phi\left( \frac{\sigma^{-2}+\sum_{k\in S} \sigma^{-2}_{gk}}{\sqrt{\sum_{k\in S} \sigma^{-2}_{gk}}} \left(\tilde q^*_S  - \frac{\mu \sigma^{-2} + q\sum_{k\in S}\sigma_{gk}^{-2}} {\sigma^{-2} +\sum_{k\in S} \sigma_{gk}^{-2}}\right)\right),
\end{equation}
where 
$$\E[\tilde q \mid q, g, P_{S}] = \frac{\mu \sigma^{-2} + q\sum_{k\in S}\sigma_{gk}^{-2}} {\sigma^{-2} +\sum_{k\in S} \sigma_{gk}^{-2}}, \, \, \, \textrm{Var}[\tilde q \mid q, g, P_{S}]= \frac{\sum_{k\in S} \sigma^{-2}_{gk}}{\left(\sigma^{-2}+\sum_{k\in S} \sigma^{-2}_{gk}\right)^2}.$$

Consequently, due to the monotonicity of $\Phi$, it holds that $I(q ; P_{S}) >0$ if and only if
    \begin{equation}
        \label{eq:condition_IF_noAA}
        \begin{split}
        & \frac{\sigma^{-2}+\sum_{k\in S} \sigma^{-2}_{Ak}}{\sqrt{\sum_{k\in S} \sigma^{-2}_{Ak}}} \left(\tilde q^*_S  - \frac{\mu \sigma^{-2} + q\sum_{k\in S}\sigma_{Ak}^{-2}} {\sigma^{-2} +\sum_{k\in S} \sigma_{Ak}^{-2}}\right)  < \frac{\sigma^{-2}+\sum_{k\in S} \sigma^{-2}_{Bk}}{\sqrt{\sum_{k\in S} \sigma^{-2}_{Bk}}} \left(\tilde q^*_S  - \frac{\mu \sigma^{-2} + q\sum_{k\in S}\sigma_{Bk}^{-2}} {\sigma^{-2} +\sum_{k\in S} \sigma_{Bk}^{-2}}\right)\\
           & \Longleftrightarrow \frac{ \tilde q^*_S\sigma^{-2} + \tilde q^*_S\sum_{k\in S} \sigma_{Ak}^{-2}  - \mu \sigma^{-2} - q \sum_{k\in S} \sigma_{Ak}^{-2}}{ \sqrt{\sum_{k\in S} \sigma_{Ak}^{-2}}} < \frac{ \tilde q^*_S\sigma^{-2} +\tilde q^*_S \sum_{k\in S} \sigma_{Bk}^{-2}  - \mu \sigma^{-2} - q \sum_{k\in S} \sigma_{Bk}^{-2}}{ \sqrt{\sum_{k\in S} \sigma_{Bk}^{-2}}}\\
        & \Longleftrightarrow 
        \left (\sqrt{\sum_{k\in S} \sigma_{Bk}^{-2}} - \sqrt{\sum_{k\in S} \sigma_{Ak}^{-2}} \right) \left( \sigma^{-2}(\tilde q^*_S -\mu) + (q -\tilde q^*_S) \sqrt{\sum_{k\in S} \sigma_{Bk}^{-2}}  \sqrt{\sum_{k\in S} \sigma_{Ak}^{-2}}
        \right) < 0.
        \end{split}
    \end{equation}
    
Due to our assumption on unequal precisions, the last inequality further translates to
$$(\tilde q^*_S-q) \sqrt{\sum_{k\in S} \sigma_{Bk}^{-2}}  \sqrt{\sum_{k\in S} \sigma_{Ak}^{-2}} < \sigma^{-2}(\tilde q^*_S -\mu),$$
where the RHS is always positive due to  school selectivity which implies that $\tilde q^*_S >\mu$.
Thus, we conclude that $I(q; P_S) >0$ if and only if
$$ q  > \tilde q^*_{S}+ \frac{\sigma^{-2}(\tilde q^*_{S} -\mu)}{\sqrt{\sum_{k\in S} \sigma_{Bk}^{-2}}  \sqrt{\sum_{k\in S} \sigma_{Ak}^{-2}}}.$$

\smallskip
\noindent\textit{Step 2: We show that individual fairness fails except for equal precisions.}
As an immediate corollary of the previous analysis in Step 1, observe that individual fairness fails unless the LHS in \Cref{eq:condition_IF_noAA} equals 0 for all $q$; equivalently, individual fairness fails except for equal precision, i.e., $\sqrt{\sum_{k\in S} \sigma_{Bk}^{-2}} - \sqrt{\sum_{k\in S} \sigma_{Ak}^{-2}}=0$.

\medskip\noindent\textit{Step 3: Finally, we show that for $q>\mu + \sigma \Phi^{-1}(1-C)$, $I(q; P_{S})$ increases as the informativeness gap increases.}
We begin with group $B$.
By \Cref{eq:tilde_q_cond_q_distribution}, it follows that
\small
\begin{equation*}
\begin{split}
    \prob[Y=1 \mid q, B, P_{S}, \Delta] = 1- F_{\tilde{q} \mid B, P_{S}} ( \tilde q^*_{S}(\Delta) ; \Delta) = 1-\Phi\left( \frac{\sigma^{-2}+\sum_{k\in S} \sigma^{-2}_{Bk}}{\sqrt{\sum_{k\in S} \sigma^{-2}_{Bk}}}
    \left (\tilde q^*_S(\Delta) - \frac{\mu\sigma^{-2} + q\sum_{k\in S} \sigma_{Bk}^{-2}}{\sigma^{-2} + \sum_{k\in S} \sigma^{-2}_{Bk}}\right) \right).
    \end{split}
\end{equation*}
\normalsize
By   \Cref{eq:threshold_tilde_q_Delta_fun}, it further follows that $\tilde q^*_S (\Delta)$ is increasing in $\Delta$. Consequently, the above probability is  decreasing in $\Delta$ since $\Phi$ is an increasing function and all terms except for $\tilde q^*_S(\Delta)$ do not depend on $\Delta$. Therefore, we conclude that the admission probability of group $B$ students decreases for any $q$ as $\Delta$ increases.

Next, for group $A$, note that students with $q>\mu+\sigma \Phi^{-1}(1-C)$ are exactly those students in group $A$ who---given perfectly observable skills $q$---would be admitted to the class; 
due to imperfect information, a group $A$ student of true skill $q>\mu+\sigma \Phi^{-1}(1-C)$ has a non-zero probability to get rejected.
Next, observe that as $\Delta$ increases, the total precision $\sum_{k\in S}\sigma^{-2}_{Ak}$ of group $A$ must increase. Consequently, the variance $\textrm{Var}[\tilde q \mid q, A, P_S]$ decreases thus the estimates $\tilde q \mid q, A, P_S$ of all group $A$ students (including those with true skill $q>\mu+\sigma \Phi^{-1}(1-C)$) become more precise.
Combining this observation with the facts that the capacity $C$ remains constant and the admission probability of group $B$ students decreases, it follows that the probability that the top-skilled group $A$ students with $q>\mu+\sigma \Phi^{-1}(1-C)$ are rejected (either in favor of lower-skilled students in $A$ or students in $B$) decreases as $\Delta$ increases. 
Equivalently,  their admission probability  $\prob[Y=1 \mid q, A, P_{S}, \Delta]$ increases as $\Delta$ grows.

Putting everything together, we conclude that, given $q>\mu+\sigma \Phi^{-1}(1-C)$, the individual fairness gap $I(q; P_S)$ increases as the informativeness gap $\Delta$ increases.
 \end{proof}

\smallskip
\begin{proof}{\textit{Proof of Part (iii).}}
We break the proof into the following steps.

\noindent\textit{Step 1}: \textit{We compute the expected value $\expec[\tilde q \mid Y=1, g, P_{S}]$ and show that $\expec[\tilde q \mid Y=1, A, P_{S}] \geq \expec[\tilde q \mid Y=1, B, P_{S}]$.} Applying Lemma~\ref{lemma:conditional_expectation_normal}, we get that
\begin{equation}
\label{eq:exp_estimated_skill_given_admit}
\begin{split}
    \expec[\tilde q \mid Y=1, g, P_{S}]= \expec[\tilde q \mid \tilde q \geq \tilde q_S^*, g, P_{S}] &= \expec[\tilde q \mid g] + \sqrt{\textrm{Var}[\tilde{q} \mid g, P_{S}]} \frac{\phi(t_g)}{1-\Phi(t_g)} \\&= \mu +  \sigma \sqrt{\frac{\sum_{k\in S} \sigma_{gk}^{-2}  }{\sigma^{-2} +\sum_{k\in S} \sigma_{gk}^{-2}}} \cdot \frac{\phi(t_g)}{1-\Phi(t_g)},
\end{split}
\end{equation}
where $t_g = \frac{\tilde q_S^* - \expec[\tilde q \mid g, P_{S}]}{\sqrt{\textrm{Var}[\tilde{q} \mid g, P_{S}]}}$.
Due to school selectivity, we have $\tilde{q}_S^* >\mu$. 
By Lemma~\ref{lemma:HR_h}, 
the function
$$ h(x) = x \frac{\phi\left(\frac{\tilde q^*_S - \mu}{x}\right)}{1-\Phi\left(\frac{\tilde q^*_S - \mu}{x}\right)}=  x  \textrm{HR}\left(\frac{\tilde q^*_S - \mu}{x}\right)$$
is increasing in $x>0$ 
for $\tilde{q}_S^* >\mu$.
Thus, by  Corollary \ref{cor:var_tilde_q_comp}, we get that
that $$\expec[\tilde q \mid \tilde q \geq \tilde q_S^*, A, P_{S}] \geq \expec[\tilde q \mid \tilde q \geq \tilde q_S^*, B, P_{S}].$$

\medskip\noindent\textit{Step 2}:\textit{ We compute the expected value $\expec[ q \mid \tilde q \geq \tilde q_K^*, g, P_{S}]$.} Specifically,
\begin{equation}
\label{eq:expected_academic_merit}
\begin{split}
     \expec[ q \mid  Y=1, g, P_{S}]
     =\expec[ q \mid \tilde q \geq \tilde q_K^*, g, P_{S}]
     =\expec_{\tilde q}[ \expec_q[q \mid \tilde q, g, P_{S}] \mid \tilde q \geq \tilde q_K^*, g, P_{S}] =  \expec[\tilde q \mid \tilde q \geq \tilde q_K^*, g, P_{S}]  ,
\end{split}
\end{equation}
where the last equality follows from Lemma \ref{lemma.true_skill_conditional_perceived}.

\medskip\noindent\textit{Step 3:} \textit{We show that
$\expec[ q \mid Y=1, A, P_{S}]  > \expec[ q \mid Y=1, B, P_{S}]$.}
Given our assumptions on unequal precisions and school selectivity, 
the proof follows  from Steps 1 and 2.
I.e., if $\sum_{k\in S} \sigma_{Ak}^{-2}>\sum_{k\in S} \sigma_{Bk}^{-2}$ and $C<0.5$, then
$\expec[ q \mid Y=1, A, P_{S}]  > \expec[ q \mid Y=1, B, P_{S}].$
\end{proof}

\medskip
\noindent\textbf{Explaining why the individual fairness gap decreases for high-skilled students.}
Although the individual fairness gap is positive for sufficiently high-skilled students, the magnitude of this gap varies.
For students at the end of the right tail of the true skill distribution, the individual fairness gap starts to decrease.
This property can be graphically observed in Figure~\ref{fig:IFsimulations_barriers_withunaware}. 

\IFgapdecrease*

\begin{proof}{\textit{Proof.}}
By \Cref{eq:tilde_q_cond_q_distribution}, the individual fairness gap equals
\footnotesize
\begin{equation*}
    I(q; P_{S}) = \left(1- 
    \Phi\left( \frac{\sigma^{-2}+\sum_{k \in S}\sigma^{-2}_{Ak}}{\sqrt{\sum_{k \in S}\sigma^{-2}_{Ak}}}
    \left (\tilde q^*_S - \frac{\mu\sigma^{-2} + q\sum_{k \in S}\sigma_{Ak}^{-2}}{\sigma^{-2} + \sum_{k \in S}\sigma^{-2}_{Ak}}\right) \right)\right)
    - \left(1-\Phi\left( \frac{\sigma^{-2}+\sum_{k \in S}\sigma^{-2}_{Bk}}{\sqrt{\sum_{k \in S}\sigma^{-2}_{Bk}}}
    \left (\tilde q^*_S - \frac{\mu\sigma^{-2} + q\sum_{k \in S}\sigma_{Bk}^{-2}}{\sigma^{-2} + \sum_{k \in S}\sigma^{-2}_{Bk}}\right) \right) \right).
\end{equation*}
\normalsize
Taking the derivative of $I(q; P_S)$ with respect to $q$, we find that
\begin{equation*}
\begin{split}
    \frac{\diff I(q; P_{S})}{ \diff q} =&\phi\left( \frac{\sigma^{-2}+\sum_{k \in S}\sigma^{-2}_{Ak}}{\sqrt{\sum_{k \in S}\sigma^{-2}_{Ak}}}
    \left (\tilde q^*_S - \frac{\mu\sigma^{-2} + q\sum_{k \in S}\sigma_{Ak}^{-2}}{\sigma^{-2} + \sum_{k \in S}\sigma^{-2}_{Ak}}\right) \right) \sqrt{\sum_{k \in S}\sigma^{-2}_{Ak}}\\
    &-\phi
    \left( \frac{\sigma^{-2}+\sum_{k \in S}\sigma^{-2}_{Bk}}{\sqrt{\sum_{k \in S}\sigma^{-2}_{Bk}}}
    \left (\tilde q^*_S - \frac{\mu\sigma^{-2} + q\sum_{k \in S}\sigma_{Bk}^{-2}}{\sigma^{-2} + \sum_{k \in S}\sigma^{-2}_{Bk}}\right) \right) \sqrt{\sum_{k \in S}\sigma^{-2}_{Bk}}.
\end{split}
\end{equation*}

Thus, to prove that $ \frac{\diff I(q; P_{S})}{ \diff q} <0$, it suffices to show that 
\begin{equation*}
\begin{split}
    &\ln \left(\phi\left( \frac{\sigma^{-2}+\sum_{k \in S}\sigma^{-2}_{Ak}}{\sqrt{\sum_{k \in S}\sigma^{-2}_{Ak}}} 
    \left (\tilde q^*_S - \frac{\mu\sigma^{-2} + q\sum_{k \in S}\sigma_{Ak}^{-2}}{\sigma^{-2} + \sum_{k \in S}\sigma^{-2}_{Ak}}\right) \right) \sqrt{\sum_{k \in S}\sigma^{-2}_{Ak}}\right)\\
    <& \ln \left(\phi
    \left( \frac{\sigma^{-2}+\sum_{k \in S}\sigma^{-2}_{Bk}}{\sqrt{\sum_{k \in S}\sigma^{-2}_{Bk}}}
    \left (\tilde q^*_S - \frac{\mu\sigma^{-2} + q\sum_{k \in S}\sigma_{Bk}^{-2}}{\sigma^{-2} + \sum_{k \in S}\sigma^{-2}_{Bk}}\right) \right) \sqrt{\sum_{k \in S}\sigma^{-2}_{Bk}}\right).
\end{split}
\end{equation*}
The above condition is equivalent to 
\small
\begin{equation*}
\begin{split}
 & - \frac{ \left((\tilde q^*_S - \mu)\sigma^{-2} + (\tilde q^*_S - q) \sum_{k \in S}\sigma^{-2}_{Ak} \right)^2}{\sum_{k \in S}\sigma^{-2}_{Ak}}  +\ln \left ( \sum_{k \in S}\sigma^{-2}_{Ak}\right) < 
  -\frac{ \left((\tilde q^*_S - \mu)\sigma^{-2} + (\tilde q^*_S - q) \sum_{k \in S}\sigma^{-2}_{Bk} \right)^2}{\sum_{k \in S}\sigma^{-2}_{Bk}}  +\ln \left ( \sum_{k \in S}\sigma^{-2}_{Bk}\right)  \\
 \Longleftrightarrow &    \left({\sum_{k \in S}\sigma^{-2}_{Ak}} - {\sum_{k \in S}\sigma^{-2}_{Bk}} \right) \left(  \sigma^{-4}(\mu - \tilde q^*_S)^2 - {\sum_{k \in S}\sigma^{-2}_{Ak}} {\sum_{k \in S}\sigma^{-2}_{Bk}} (q - \tilde q^*_S)^2\right) 
    + {\sum_{k \in S}\sigma^{-2}_{Ak}} {\sum_{k \in S}\sigma^{-2}_{Bk}} \ln\left( \frac{{\sum_{k \in S}\sigma^{-2}_{Ak}}}{{\sum_{k \in S}\sigma^{-2}_{Bk}}}\right)<0.
\end{split}
\end{equation*}
\normalsize
Given our assumption on unequal precision, i.e.,
${\sum_{k \in S}\sigma^{-2}_{Bk}} < {\sum_{k \in S}\sigma^{-2}_{Ak}},$
we further get that this condition is satisfied for
$$q> q_e \triangleq \tilde q^*_S + \sqrt{\frac{ \sigma^{-4}(\mu - \tilde q^*_S)^2 }{{\sum_{k \in S}\sigma^{-2}_{Ak}}  {\sum_{k \in S}\sigma^{-2}_{Bk}}  } +  \frac{\ln\left({{\sum_{k \in S}\sigma^{-2}_{Ak}}}\right) - \ln\left({{ \sum_{k \in S}\sigma^{-2}_{Bk}}}\right)}{{\sum_{k \in S}\sigma^{-2}_{Ak}} - {\sum_{k \in S}\sigma^{-2}_{Bk}}}}.$$
Therefore, the individual fairness gap $I(q; P_{S})$ is decreasing in $q$ for $q>q_e$ as desired.

Furthermore, by the definition of $I(q; P_{S})$ and the fact that $\lim_{q' \rightarrow \infty} \Phi(q') =1$, we immediately  get that
$\lim_{q \rightarrow \infty} I(q; P_{S}) = 0.$
\end{proof}

\subsection{Dropping a feature with and without barriers (Proofs from Section~\ref{sec:dropping_tests})}
\label{app.A4}

\noindent \textbf{Dropping a feature  in the absence of barriers.}
We are interested in comparing the group-aware policies $P_{\full}$ and $P_{\subb}$.
By our previous result in Lemma \ref{lemma:perceived_skill_distribution}, we get that
\begin{equation*}
    \tilde q \mid g, P_{\subb}  \sim  \N \left( \mu, \sigma^{2}\frac{\sum_{k\in \subb} \sigma_{gk}^{-2}  }{\sigma^{-2} +\sum_{k\in \subb} \sigma_{gk}^{-2}}\right),\,\,\,\,
    \tilde q \mid g, P_{\full}  \sim  \N \left( \mu, \sigma^{2}\frac{\sum_{k\in \full} \sigma_{gk}^{-2}  }{\sigma^{-2} +\sum_{k\in \full} \sigma_{gk}^{-2}}\right).
\end{equation*}

\begin{lemma}
\label{lemma:mean_var_dropping_tests}
 The variance of $\tilde q \mid g, P_{\subb}$ is lower than that of $\tilde q \mid g, P_{\full}$ but their means are both equal to $\mu$.
\end{lemma}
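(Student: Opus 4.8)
The plan is to read off both claims directly from the explicit formulas for the two estimate distributions given just above the lemma (which themselves come from \Cref{lemma:perceived_skill_distribution}). For the means: both $\tilde q \mid g, P_{\subb}$ and $\tilde q \mid g, P_{\full}$ are Normal with mean exactly $\mu$, so the equality of means is immediate and requires no argument.

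For the variances, the key observation is that $\full = \subb \cup \{K\}$, so the total precision strictly increases when we add feature $K$: $\sum_{k\in\full}\sigma_{gk}^{-2} = \sum_{k\in\subb}\sigma_{gk}^{-2} + \sigma_{gK}^{-2} > \sum_{k\in\subb}\sigma_{gk}^{-2}$, since $\sigma_{gK}^{-2} > 0$. It then remains to compare $\sigma^2\,\tfrac{x}{\sigma^{-2}+x}$ at $x_\subb \triangleq \sum_{k\in\subb}\sigma_{gk}^{-2}$ versus at $x_\full \triangleq \sum_{k\in\full}\sigma_{gk}^{-2}$. Writing $\tfrac{x}{\sigma^{-2}+x} = 1 - \tfrac{\sigma^{-2}}{\sigma^{-2}+x}$ shows this map is strictly increasing in $x$ on $(0,\infty)$ (equivalently, its derivative $\tfrac{\sigma^{-2}}{(\sigma^{-2}+x)^2}$ is positive). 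Since $x_\subb < x_\full$, this gives $\var(\tilde q \mid g, P_{\subb}) < \var(\tilde q \mid g, P_{\full})$, completing the proof.

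There is essentially no obstacle here; the statement is a one-line monotonicity fact about the posterior-variance formula, included mainly to be invoked in the comparisons of $P_\full$ and $P_\subb$ that follow (e.g., in the proofs of \Cref{prop.test_group_aware_noAA} and the multi-school results). If anything, the only point worth stating carefully is that the strictness relies on $\sigma_{gK}^{-2}>0$, i.e., feature $K$ carries nonzero information; I would mention this explicitly so the strict inequality is justified rather than merely the weak one.
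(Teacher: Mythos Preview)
Your proposal is correct and follows essentially the same approach as the paper: both arguments read the common mean $\mu$ off the explicit distribution formulas, and both establish the variance inequality by noting that $\sum_{k\in\full}\sigma_{gk}^{-2} > \sum_{k\in\subb}\sigma_{gk}^{-2}$ and that $x \mapsto \tfrac{x}{\sigma^{-2}+x}$ is strictly increasing on $(0,\infty)$. Your extra remark on the strictness requiring $\sigma_{gK}^{-2}>0$ is a nice touch the paper leaves implicit.
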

\begin{proof}{\textit{Proof.}}
The proof follows trivially from the fact that the function $h(x)=\frac{x}{\sigma^{-2} +x }$ is increasing in $x>0$ and $$\sum_{k\in \full} \sigma_{gk}^{-2} = \sum_{k=1}^{K} \sigma_{gk}^{-2}> \sum_{k=1}^{K-1} \sigma_{gk}^{-2} = \sum_{k\in \subb} \sigma_{gk}^{-2}$$
for any $g$.
\end{proof}

Let $\tilde q^*_{ \subb}$ be the decision threshold of a school considering only features $k=1$ to $K-1$. 
By \Cref{def:threshold_groupaware_noAA}, $\tilde q^*_{\subb}$ is the solution to the following equation
$$(1-\pi) F_{\tilde q \mid A, P_{\subb}}(\tilde{q}^*_{\subb}) + \pi F_{\tilde q \mid A, P_{\subb}}(\tilde{q}^*_{\subb}) = 1-C,$$
whereas $\tilde q^*_{\full}$ is the solution to 
$$(1-\pi) F_{\tilde q \mid A, P_{\full}}(\tilde{q}^*_{\full}) + \pi F_{\tilde q \mid A, P_{\full}}(\tilde{q}^*_{\full}) = 1-C.$$

\begin{lemma}
\label{cor.threshold_comp_group_aware}
The admission threshold decreases after dropping feature $k=K$, i.e., $\tilde{q}^*_{\subb} < \tilde{q}^*_{\full}$.
\end{lemma}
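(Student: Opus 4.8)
The plan is a one-variable monotonicity argument on the mixture CDFs of skill estimates. For a feature set $S$ write $G_S(q) \triangleq (1-\pi)F_{\tilde q\mid A,P_S}(q) + \pi F_{\tilde q\mid B,P_S}(q)$ for the CDF of the Gaussian mixture of $\tilde q\mid A,P_S$ and $\tilde q\mid B,P_S$; this is a strictly increasing, continuous function, and by \eqref{def:threshold_groupaware_noAA} the thresholds $\tilde q^*_{\subb}$ and $\tilde q^*_{\full}$ are the unique solutions of $G_{\subb}(\tilde q^*_{\subb}) = 1-C$ and $G_{\full}(\tilde q^*_{\full}) = 1-C$, respectively.

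First I would record that both thresholds lie strictly above $\mu$: by \Cref{lemma:perceived_skill_distribution} each $F_{\tilde q\mid g,P_S}$ is a Gaussian CDF with mean $\mu$, so $G_S(\mu) = \tfrac12$, and since the school is selective ($C<\tfrac12$) we need $G_S(\tilde q^*_S) = 1-C > \tfrac12 = G_S(\mu)$, hence $\tilde q^*_S > \mu$ by monotonicity of $G_S$. Next, by \Cref{lemma:mean_var_dropping_tests}, $\tilde q\mid g,P_{\subb}$ and $\tilde q\mid g,P_{\full}$ have the same mean $\mu$ but $\var[\tilde q\mid g,P_{\subb}] < \var[\tilde q\mid g,P_{\full}]$ (strict, because the precision $\sigma_{gK}^{-2}$ of the dropped feature is positive). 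For Gaussians with equal mean and strictly smaller variance, the CDF is strictly larger at every point to the right of the mean; therefore $F_{\tilde q\mid g,P_{\subb}}(q) > F_{\tilde q\mid g,P_{\full}}(q)$ for every $q>\mu$ and each $g\in\{A,B\}$, and taking the convex combination with weights $1-\pi,\pi$ gives $G_{\subb}(q) > G_{\full}(q)$ for all $q>\mu$.

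Finally I would evaluate this pointwise inequality at $q=\tilde q^*_{\full}>\mu$:
\[
G_{\subb}(\tilde q^*_{\full}) > G_{\full}(\tilde q^*_{\full}) = 1-C = G_{\subb}(\tilde q^*_{\subb}).
\]
Since $G_{\subb}$ is strictly increasing, this yields $\tilde q^*_{\full} > \tilde q^*_{\subb}$, which is the claim.

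There is no real obstacle here — the only points requiring a little care are (i) confirming that the relevant thresholds are above $\mu$ so that the variance ordering translates into a CDF ordering in the correct direction, and (ii) keeping the variance inequality strict (which it is, as $\sigma_{gK}^{-2}>0$) so that the threshold comparison is strict rather than weak. Everything else is routine.
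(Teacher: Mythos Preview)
Your proposal is correct and follows essentially the same approach as the paper, which simply states that the result ``follows from the definitions of $\tilde{q}^*_{\subb}, \tilde{q}^*_{\full}$, and Lemma~\ref{lemma:mean_var_dropping_tests}.'' You have carefully filled in the details the paper omits: using selectivity to place both thresholds above $\mu$, then invoking the variance ordering from Lemma~\ref{lemma:mean_var_dropping_tests} to compare the mixture CDFs pointwise to the right of $\mu$.
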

\begin{proof}{\textit{Proof.}}
The proof follows  from  the definitions of  $\tilde{q}^*_{\subb}, \tilde{q}^*_{\full}$, and Lemma \ref{lemma:mean_var_dropping_tests}.
\end{proof}

\proptestgroupawarenoAA*

\begin{proof}{\textit{Proof of Part (i).}}
Diversity improves if and only if 
$$\tau(P_{\subb}) = 1- F_{\tilde q \mid  A, P_{\subb}}(\tilde q^*_{\subb}) > 1- F_{\tilde q \mid A, P_{\full}}(\tilde q_{a}) = \tau(P_{\full}).$$
By the definition of diversity level and Lemma~\ref{lemma:perceived_skill_distribution}, this is equivalent to the following condition:
\begin{equation*}
    1 - \Phi \left( \frac{\tilde q^*_{\subb} -\mu}{ \sigma\sqrt{\frac{\sum_{k\in \subb} \sigma^{-2}_{Bk} }{\sum_{k\in \subb} \sigma^{-2}_{Bk}+\sigma^{-2}}}}\right) >
     1 - \Phi \left( \frac{\tilde q^*_{\full} -\mu}{ \sigma\sqrt{\frac{\sum_{k\in \full} \sigma^{-2}_{Bk} }{\sum_{k\in \full} \sigma^{-2}_{Bk}+\sigma^{-2}}}}\right).
\end{equation*}

Replacing $\tilde q^*_{\full}, \tilde q^*_{\subb}$ with their definitions as in \Cref{eq:threshold_groupaware_noAA_Phi}, the above inequality becomes
\begin{equation*}
    \Phi \left( \Phi^{-1}(1-C) \sqrt{(1-\pi) \frac{\frac{\sum_{k \in \subb} \sigma_{Ak}^{-2}}{\sigma^{-2} + \sum_{k\in \subb} \sigma_{Ak}^{-2}}}{\frac{\sum_{k\in \subb} \sigma_{Bk}^{-2}}{\sigma^{-2} + \sum_{k\in \subb} \sigma_{Bk}^{-2}}} + \pi } \right) < 
    \Phi \left( \Phi^{-1}(1-C) \sqrt{(1-\pi) \frac{\frac{\sum_{k \in \full} \sigma_{Ak}^{-2}}{\sigma^{-2} + \sum_{k\in \full} \sigma_{Ak}^{-2}}}{\frac{\sum_{k\in \full} \sigma_{Bk}^{-2}}{\sigma^{-2} + \sum_{k\in \full} \sigma_{Bk}^{-2}}} + \pi } \right),
\end{equation*}
which---due to the monotonicity of $\Phi$---holds if and only if
\begin{equation*}
    \frac{\frac{\sum_{k \in \subb} \sigma_{Ak}^{-2}}{\sigma^{-2} + \sum_{k\in \subb} \sigma_{Ak}^{-2}}}{\frac{\sum_{k\in \subb} \sigma_{Bk}^{-2}}{\sigma^{-2} + \sum_{k\in \subb} \sigma_{Bk}^{-2}}} < \frac{\frac{\sum_{k \in \full} \sigma_{Ak}^{-2}}{\sigma^{-2} + \sum_{k\in \full} \sigma_{Ak}^{-2}}}{\frac{\sum_{k\in \full} \sigma_{Bk}^{-2}}{\sigma^{-2} + \sum_{k\in \full} \sigma_{Bk}^{-2}}}.
\end{equation*}
Using the substitution $ \sum_{k \in \full} \sigma_{gk}^{-2} = \sum_{k \in \subb} \sigma_{gk}^{-2} + \sigma_{gK}$,  the last relation equivalently simplifies to \Cref{eq:condition_diversity_dropping_test}.
\end{proof}
\begin{proof}{\textit{Proof of Part (ii)}.} We prove each claim at a separate step.

\noindent\textit{Step 1: We show that, for group $B$, $\prob(Y=1 \mid q, B, P_{\full}) < \prob(Y=1 \mid q, B, P_{\subb})$ if and only if
\begin{equation*}
\begin{split}
    q< q_{B} \triangleq \mu+ & \frac{ \sigma \Phi^{-1}(1-C) }{\sqrt{\sum_{k\in \full} \sigma_{Bk}^{-2}} - \sqrt{\sum_{k\in \subb} \sigma_{Bk}^{-2}} }
    \left ( \sqrt{\sigma^{-2} + \sum_{k\in \full} \sigma_{Bk}^{-2}} \sqrt{(1-\pi) \frac{ \frac{\sum_{k\in \full} \sigma_{Ak}^{-2}}{\sigma^{-2} + \sum_{k\in \full} \sigma_{Ak}^{-2}} }{\frac{\sum_{k\in \full} \sigma_{Bk}^{-2}}{\sigma^{-2} + \sum_{k\in \full} \sigma_{Bk}^{-2}} } + \pi} \right.\\
    &- \left. \sqrt{\sigma^{-2} + \sum_{k\in \subb} \sigma_{Bk}^{-2}} \sqrt{(1-\pi) \frac{ \frac{\sum_{k\in \subb} \sigma_{Ak}^{-2}}{\sigma^{-2} + \sum_{k\in \subb} \sigma_{Ak}^{-2}} }{\frac{\sum_{k\in \subb} \sigma_{Bk}^{-2}}{\sigma^{-2} + \sum_{k\in \subb} \sigma_{Bk}^{-2}} } + \pi} \right),
\end{split}
\end{equation*}
Similarly, for group $A$, it holds that $\prob(Y=1 \mid q,A, P_{\full}) > \prob(Y=1 \mid q, A, P_{\subb})$ if and only if}
\begin{equation*}
    \begin{split}
q<  q_{A} \triangleq \mu+ & \frac{ \sigma \Phi^{-1}(1-C) }{\sqrt{\sum_{k\in \full} \sigma_{Ak}^{-2}} - \sqrt{\sum_{k\in \subb} \sigma_{Ak}^{-2}} }
    \left (\sqrt{\sigma^{-2} + \sum_{k\in \full} \sigma_{Ak}^{-2}} \sqrt{(1-\pi) + \pi \frac{ \frac{\sum_{k\in \full} \sigma_{Bk}^{-2}}{\sigma^{-2} + \sum_{k\in S} \sigma_{Ak}^{-2}} }{\frac{\sum_{k\in \full} \sigma_{Ak}^{-2}}{\sigma^{-2} + \sum_{k\in \full} \sigma_{Bk}^{-2}} } }\right. \\
    & -\left.    \sqrt{\sigma^{-2} + \sum_{k\in \subb} \sigma_{Ak}^{-2}} \sqrt{(1-\pi) + \pi\frac{ \frac{\sum_{k\in \subb} \sigma_{Bk}^{-2}}{\sigma^{-2} + \sum_{k\in \subb} \sigma_{Bk}^{-2}} }{\frac{\sum_{k\in \subb} \sigma_{Ak}^{-2}}{\sigma^{-2} + \sum_{k\in \subb} 
    \sigma_{Ak}^{-2}} } } \right),
   \end{split}
\end{equation*}
Assume $g=B$; the proof for group $A$ is analogous.
Replacing $\tilde q^*_S$ from \Cref{eq:threshold_groupaware_noAA_Phi} in \Cref{eq:admission_prob_cond_q}, we find that for  policy $P_{S}$, the admissions probability (conditional on true skill $q$ and group $g$) equals
\footnotesize
\begin{equation*}
    \prob(Y=1 \mid q, B, P_{S}) = 1- \Phi \left( (\mu-q) \sqrt{\sum_{k\in S} \sigma_{Bk}^{-2}} + \sigma \Phi^{-1}(1-C) \sqrt{\sigma^{-2} + \sum_{k\in S} \sigma_{Bk}^{-2}}  \sqrt{\frac{(1-\pi) \frac{\sum_{k\in S} \sigma_{Ak}^{-2}}{\sigma^{-2} + \sum_{k\in S} \sigma_{Bk}^{-2}} + \pi \frac{\sum_{k\in S} \sigma_{Bk}^{-2}}{\sigma^{-2} + \sum_{k\in S} \sigma_{Bk}^{-2}}}{\frac{\sum_{k\in S} \sigma_{Bk}^{-2}}{\sigma^{-2} + \sum_{k\in S} \sigma_{Bk}^{-2}}}}\right).
\end{equation*}
\normalsize

Thus, the admission probability increases after dropping test scores,  if and only if
\footnotesize
\begin{equation}
\label{eq:cond_prob_admission_drop_tests}
\begin{split}
    (\mu-q) &\left(\sqrt{\sum_{k\in \full} \sigma_{Bk}^{-2}} - \sqrt{\sum_{k\in \subb} \sigma_{Bk}^{-2}} \right)  + 
    \sigma \Phi^{-1}(1-C) \sqrt{\sigma^{-2} + \sum_{k\in \full} \sigma_{Bk}^{-2}}  \sqrt{(1-\pi) \frac{ \frac{\sum_{k\in \full} \sigma_{Ak}^{-2}}{\sigma^{-2} + \sum_{k\in \full} \sigma_{Ak}^{-2}} }{\frac{\sum_{k\in \full} \sigma_{Bk}^{-2}}{\sigma^{-2} + \sum_{k\in \full} \sigma_{Bk}^{-2}} } + \pi} \\ 
    &- \sigma \Phi^{-1}(1-C)\sqrt{\sigma^{-2} + \sum_{k\in \subb} \sigma_{Bk}^{-2}} \sqrt{(1-\pi) \frac{ \frac{\sum_{k\in \subb} \sigma_{Ak}^{-2}}{\sigma^{-2} + \sum_{k\in \subb} \sigma_{Ak}^{-2}} }{\frac{\sum_{k\in \subb} \sigma_{Bk}^{-2}}{\sigma^{-2} + \sum_{k\in \subb} \sigma_{Bk}^{-2}} } + \pi}   >0.
 \end{split}
\end{equation}
\normalsize
This is equivalent to  $q <q_{ B}$, i.e., \small
\begin{equation*}
\begin{split}
    q< \mu+ & \frac{ \sigma \Phi^{-1}(1-C) }{\sqrt{\sum_{k\in \full} \sigma_{Bk}^{-2}} - \sqrt{\sum_{k\in \subb} \sigma_{Bk}^{-2}} }
    \left ( \sqrt{\sigma^{-2} + \sum_{k\in \full} \sigma_{Bk}^{-2}} \sqrt{(1-\pi) \frac{ \frac{\sum_{k\in \full} \sigma_{Ak}^{-2}}{\sigma^{-2} + \sum_{k\in \full} \sigma_{Ak}^{-2}} }{\frac{\sum_{k\in \full} \sigma_{Bk}^{-2}}{\sigma^{-2} + \sum_{k\in \full} \sigma_{Bk}^{-2}} } + \pi} \right.\\
    &- \left. \sqrt{\sigma^{-2} + \sum_{k\in \subb} \sigma_{Bk}^{-2}} \sqrt{(1-\pi) \frac{ \frac{\sum_{k\in \subb} \sigma_{Ak}^{-2}}{\sigma^{-2} + \sum_{k\in \subb} \sigma_{Ak}^{-2}} }{\frac{\sum_{k\in \subb} \sigma_{Bk}^{-2}}{\sigma^{-2} + \sum_{k\in \subb} \sigma_{Bk}^{-2}} } + \pi} \right).
\end{split}
\end{equation*}
\normalsize

\smallskip
\noindent\textit{Step 2: We show that there exists a threshold $\hat{q} \geq \max\{q_A, q_B\}$ such that the individual fairness gap increases for all $q> \hat{q}$. Otherwise, it may decrease.} 
Let 
\small
\begin{equation*}
   \underline{q} \triangleq \arg \min_{q \in \mathbb{R}} \left \{ (\mu-q) \sqrt{\sum_{k\in S} \sigma_{gk}^{-2}} + \sigma \Phi^{-1}(1-C) \sqrt{\sigma^{-2} + \sum_{k\in S} \sigma_{gk}^{-2}}  \sqrt{\frac{(1-\pi) \frac{\sum_{k\in S} \sigma_{Ak}^{-2}}{\sigma^{-2} + \sum_{k\in S} \sigma_{Ak}^{-2}} + \pi \frac{\sum_{k\in S} \sigma_{Bk}^{-2}}{\sigma^{-2} + \sum_{k\in S} \sigma_{Bk}^{-2}}}{\frac{\sum_{k\in S} \sigma_{gk}^{-2}}{\sigma^{-2} + \sum_{k\in S} \sigma_{gk}^{-2}}}} \leq 0, \forall g, S \right\}.
\end{equation*}
\normalsize

Next, consider only $ q> \max \{\underline{q}, q_A, q_B\}.$ 
Since $\Phi$ is monotone and convex  in $(-\infty, 0]$ 
and by Step 1 for any group $g$, it also holds that $\prob(Y=1 \mid q, g, P_{\full}) > \prob(Y=1 \mid q, g, P_{\subb})$ for all $q>q_g$, a sufficient condition for $I(q; P_{\full}) > I(q; P_{\subb})$ 
to hold  is 
\begin{equation*}
\begin{split}
    &\frac{\sigma^{-2} + \sum_{k\in \subb} \sigma^{-2}_{Ak} }{ \sqrt{\sum_{k\in \subb} \sigma^{-2}_{Ak}}} \left (\tilde q^*_{\subb} - \frac{\mu\sigma^{-2} + q \sum_{k\in \subb} \sigma^{-2}_{Ak} }{\sigma^{-2} + \sum_{k\in \subb} \sigma^{-2}_{Ak}}\right)
    - \frac{\sigma^{-2} + \sum_{k\in \full} \sigma^{-2}_{Ak} }{ \sqrt{\sum_{k\in \full} \sigma^{-2}_{Ak}}} \left (\tilde q^*_{\full} - \frac{\mu\sigma^{-2} + q \sum_{k\in \full} \sigma^{-2}_{Ak} }{\sigma^{-2} + \sum_{k\in \full} \sigma^{-2}_{Ak}}\right)
    \\
    <&  
    \frac{\sigma^{-2} + \sum_{k\in\subb} \sigma^{-2}_{Bk} }{ \sqrt{\sum_{k\in \subb} \sigma^{-2}_{Bk}}} \left (\tilde q^*_{\subb} - \frac{\mu\sigma^{-2} + q \sum_{k\in\subb} \sigma^{-2}_{Bk} }{\sigma^{-2} + \sum_{k\in\subb} \sigma^{-2}_{Bk}}\right)
    - \frac{\sigma^{-2} + \sum_{k\in \full} \sigma^{-2}_{Bk} }{ \sqrt{\sum_{k\in \full} \sigma^{-2}_{Bk}}} \left (\tilde q^*_{\full} - \frac{\mu\sigma^{-2} + q \sum_{k\in \full} \sigma^{-2}_{Bk} }{\sigma^{-2} + \sum_{k\in \full} \sigma^{-2}_{Bk}}\right).
\end{split}
\end{equation*}
Let \small
\begin{equation*}
\begin{split}
   \underline{\underline{q}} \triangleq \arg \min_{q \in \mathbb{R}} &\left\{  \frac{\sigma^{-2} + \sum_{k\in \subb} \sigma^{-2}_{Ak} }{ \sqrt{\sum_{k\in \subb} \sigma^{-2}_{Ak}}} \left (\tilde q^*_{\subb} - \frac{\mu\sigma^{-2} + q \sum_{k\in \subb} \sigma^{-2}_{Ak} }{\sigma^{-2} + \sum_{k\in \subb} \sigma^{-2}_{Ak}}\right)
    - \frac{\sigma^{-2} + \sum_{k\in \full} \sigma^{-2}_{Ak} }{ \sqrt{\sum_{k\in \full} \sigma^{-2}_{Ak}}} \left (\tilde q^*_{\full} - \frac{\mu\sigma^{-2} + q \sum_{k\in \full} \sigma^{-2}_{Ak} }{\sigma^{-2} + \sum_{k\in \full} \sigma^{-2}_{Ak}}\right)  \right.
     \\
    <& \left. 
    \frac{\sigma^{-2} + \sum_{k\in\subb} \sigma^{-2}_{Bk} }{ \sqrt{\sum_{k\in \subb} \sigma^{-2}_{Bk}}} \left (\tilde q^*_{\subb} - \frac{\mu\sigma^{-2} + q \sum_{k\in\subb} \sigma^{-2}_{Bk} }{\sigma^{-2} + \sum_{k\in\subb} \sigma^{-2}_{Bk}}\right)
    - \frac{\sigma^{-2} + \sum_{k\in \full} \sigma^{-2}_{Bk} }{ \sqrt{\sum_{k\in \full} \sigma^{-2}_{Bk}}} \left (\tilde q^*_{\full} - \frac{\mu\sigma^{-2} + q \sum_{k\in \full} \sigma^{-2}_{Bk} }{\sigma^{-2} + \sum_{k\in \full} \sigma^{-2}_{Bk}}\right)
    \right\}.
\end{split}
\end{equation*}
\normalsize

Define $\hat{q} \triangleq \max\{\underline{q}, \underline{\underline{q}}, q_A, q_B\}$. Then, by the previous conditions, we have $I(q; P_{\full}) > I(q; P_{\subb})$ for all $q>\hat{q}$, thus the individual fairness gap decreases.
Furthermore, $\hat{q} \geq \max\{q_A, q_B\}$ as required.

Finally, if $q_A < q_B$, then for all $q_A < q < q_B$, ${\prob(Y=1 \mid q, A, P_{\full}) > \prob(Y=1 \mid A, g, P_{\subb})}$  but ${\prob(Y=1 \mid q, B, P_{\full}) < \prob(Y=1 \mid B, g, P_{\subb})}$ (by Step 1). Thus, $I(q; P_{\full}) > I(q; P_{\subb})$. 
\end{proof}

\begin{proof}[\textit{Proof of Part (iii).}]
Since $\textrm{Var}[\tilde q \mid g, P_{\subb}] < \textrm{Var}[\tilde q \mid g, P_{\full}] $ and, by Corollary \ref{cor.threshold_comp_group_aware}, $\tilde q^*_{\subb} < \tilde q^*_{\full}$, the expected estimated skill of each admitted group decreases, that is, $$\expec[\tilde q \mid \tilde q \geq \tilde q^*_{\subb}, g, P_{\subb} ] < \expec[\tilde q \mid \tilde q\geq \tilde q^*_{\full}, g, P_{\full} ].$$ \Cref{eq:expected_academic_merit} further implies that
$\expec[q \mid Y=1, g, P_{\subb} ] < \expec[q \mid Y=1, g, P_{\full} ].$
\end{proof}

\bigskip
\noindent \textbf{Admissions with barriers to testing.}
 In a setting with barriers to testing and policy $P_{\full}$, let $\tilde w^*_{\full}$ the decision threshold of the school with policy $P_{\full}$. Then, observe that $\tilde w^*_{\full} < \tilde q^*_{\full}$, where
\begin{equation}
  (1-\pi)\gamma_A  (1-F_{\tilde q \mid A, P_{\full}} (\tilde w^*_{\full})) +\pi\gamma_B (1- F_{\tilde q \mid B, P_{\full}} (\tilde w^*_{\full})) = C.
\end{equation}
 
We now study the trade-off between  barriers and informativeness.
For brevity, we use $\pi_A=1- \pi$, $\pi_B=\pi$.

\begin{restatable}[
Theorem~\ref{thm:threshold_char_without_aa}]{theorem}{propthresholdnoAA}
Consider policies  $P_{\full}$ and $P_{\subb}$ and assume unequal precisions under $P_{\full}$. 
\begin{itemize}

    \item[(i)] For each group $g$ there exists a constant $\Delta_g (\xi_g, \rho^g_{\subb})$  
such that the academic merit of group $g$ increases if and only if
\begin{equation}
\label{eq:gamma_A_gamma_B_tension}
    \beta_g ( \gamma_A, \gamma_B, \rho^g_{\full}) \leq \Delta_g (\xi_g, \rho^g_{\subb}),
\end{equation}
where
\begin{equation*}
	\rho^A_{S}  = \frac{1}{\rho^B_{S}} \triangleq \frac{\sum_{k \in S } \sigma_{Bk}^{-2}}{\sigma^{-2}+\sum_{k \in S} \sigma_{Bk}^{-2}} \left ( \frac{\sum_{k \in S} \sigma_{Ak}^{-2}}{\sigma^{-2}+\sum_{k \in S} \sigma_{Ak}^{-2}} \right)^{-1},
\end{equation*}
\begin{equation*}
    \xi_g =
\frac{\sum_{k \in \subb } \sigma_{gk}^{-2} }{\sigma^{-2}+\sum_{k \in \subb } \sigma_{gk}^{-2} } \left(\frac{\sum_{k \in \full } \sigma_{gk}^{-2}}{\sigma^{-2}+\sum_{k \in \full } \sigma_{gk}^{-2}}\right)^{-1}, 
\end{equation*}
\begin{equation*}
	\beta_g ( \gamma_g, \gamma_{g'}, \rho^g_{S}) \triangleq \Phi^{-1} \left(1-\frac{C}{\pi_g \gamma_g + \pi_{g'} \gamma_{g'}}\right) \sqrt{ \frac{\frac{\pi_{g'}\gamma_{g'}}{\pi_{g}\gamma_{g}} \rho^g_{S} +1}{1+ \frac{\pi_{g'}\gamma_{g'}}{\pi_{g}\gamma_{g}} }},
\end{equation*}
\begin{equation*}
    \Delta_g (\xi_g, \rho^g_{\subb})= \textrm{HR}^{-1}\left({\xi_g}
    \textrm{HR}\left(
    \Phi^{-1}(1-C)
     \sqrt{\pi_{g}
    + \pi_{g'} \rho^g_{\subb} }
    \right)\right).
\end{equation*}
As barriers to group $g$ increase ($\gamma_g$ decreases), then $\beta_g ( \gamma_A, \gamma_B, \rho^g_{\subb})$ decreases. 
{Thus, given any group $g$ and $\gamma_{g'} \in (0,1]$, $g'\neq g$, there exists threshold $\bar{\bar \gamma}_g \in (0,1]$, such that academic merit of group $g$ improves by dropping feature $K$ if and only if $\gamma_g < \bar{\bar\gamma}_{g}$.}

    \item[(ii)]	Diversity strictly improves after dropping test scores if and only if
		$\eta(1, 1, \rho^B_{\subb}) > \eta(\gamma_A, \gamma_B, \rho^B_{\full}),$ where
	\begin{equation*}
		\eta(\gamma_A, \gamma_B, \rho^B_S) \triangleq \frac{(1-\pi)\gamma_B}{C} \left( 1- \Phi\left(
		\Phi^{-1}\left(1-\frac{C}{(1-\pi)\gamma_A  + \pi \gamma_B}\right) \sqrt{\frac{(1-\pi) \gamma_A \rho^B_{S}  +  {\pi\gamma_B} }{(1-\pi)\gamma_A  + {\pi\gamma_B}}}\right)\right).
	\end{equation*}
	Given any $\gamma_A \in (0,1]$, there exists a threshold $\bar{\gamma} \in (0,1]$, such that diversity strictly improves after dropping test scorers if and only if $\gamma_B < \bar{\gamma}$.
	\label{prop:diversitythresholdnoAA}
\end{itemize}

\end{restatable}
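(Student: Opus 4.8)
\textbf{Proof proposal for \Cref{thm:threshold_char_without_aa}.}
The plan is to reduce both parts to (a) a closed form for each group's academic merit and for the diversity level in terms of a single standardized cutoff, and (b) elementary monotonicity of those forms in the access parameters. For a policy $P_S$ write $v_{g,S}\triangleq\mathrm{Var}[\tilde q\mid g,P_S]$ (given by \Cref{lemma:perceived_skill_distribution}) and let $t_S$ denote the induced admission threshold. By \Cref{lemma.true_skill_conditional_perceived} and the tower rule, $\E[q\mid Y=1,g,P_S]=\E[\tilde q\mid \tilde q\ge t_S,g,P_S]$, so \Cref{lemma:conditional_expectation_normal} gives, exactly as in \eqref{eq:exp_estimated_skill_given_admit}--\eqref{eq:expected_academic_merit},
$$\E[q\mid Y=1,g,P_S]=\mu+\sqrt{v_{g,S}}\,\textsc{HR}(z_{g,S}),\qquad z_{g,S}\triangleq\frac{t_S-\mu}{\sqrt{v_{g,S}}} .$$
Likewise the admitted mass of group $B$ is $\pi\gamma_B\bigl(1-\Phi(z_{B,S})\bigr)$ (with $\gamma_B=1$ when the test is not required), so $\tau(P_S)$ is this quantity divided by $C$. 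Finally, mimicking the derivation of \eqref{eq:threshold_groupaware_noAA_Phi} but now with applicant masses $\pi_A\gamma_A,\pi_B\gamma_B$ under $P_\full$ and $\pi_A,\pi_B$ under $P_\subb$, I would solve each capacity equation for $t_S$ and substitute, obtaining $z_{g,\full}=\beta_g(\gamma_A,\gamma_B,\rho^g_\full)$ and $z_{g,\subb}=\Phi^{-1}(1-C)\sqrt{\pi_g+\pi_{g'}\rho^g_\subb}$, where $\rho^g_S=v_{g',S}/v_{g,S}$ and $\xi_g=v_{g,\subb}/v_{g,\full}$ are the ratios named in the statement; note $z_{g,\subb}$ is independent of the access levels since $P_\subb$ has no barriers, and $\tau(P_S)$ is then precisely $\eta(\gamma_A,\gamma_B,\rho^B_S)$.

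For part (ii) this is almost immediate: diversity improves iff $\tau(P_\subb)>\tau(P_\full)$, i.e. $\eta(1,1,\rho^B_\subb)>\eta(\gamma_A,\gamma_B,\rho^B_\full)$, and $\tau(P_\subb)=\eta(1,1,\rho^B_\subb)$ does not depend on $\gamma_A,\gamma_B$. Fixing $\gamma_A$ and the informational parameters, I claim $\tau(P_\full)$ is continuous and strictly increasing in $\gamma_B$. Indeed, by the modeling assumption that access is independent of skill and of feature values, raising $\gamma_B$ simply enlarges the mass of the group-$B$ applicant pool without changing the shape of $\tilde q\mid B,P_\full$, so the threshold $t_\full$ needed to fill capacity $C$ strictly increases; group $A$'s applicant pool is untouched, hence the admitted group-$A$ mass $\pi_A\gamma_A\bigl(1-F_{\tilde q\mid A,P_\full}(t_\full)\bigr)$ strictly decreases, so the admitted group-$B$ mass, which equals $C$ minus that quantity, strictly increases, and so does $\tau(P_\full)$. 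Since $\tau(P_\full)\to0$ as $\gamma_B\to0$ while $\tau(P_\subb)$ is a positive constant, monotonicity and continuity produce a threshold $\bar\gamma\in(0,\gamma_A]$ with diversity improving iff $\gamma_B<\bar\gamma$.

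For part (i), combining the first two displays, the academic merit of group $g$ strictly improves under $P_\subb$ iff $\sqrt{v_{g,\subb}}\,\textsc{HR}(z_{g,\subb})>\sqrt{v_{g,\full}}\,\textsc{HR}(z_{g,\full})$, i.e. $\sqrt{\xi_g}\,\textsc{HR}(z_{g,\subb})>\textsc{HR}(z_{g,\full})$, using $\xi_g<1$ from \Cref{lemma:mean_var_dropping_tests}. Because $\textsc{HR}(x)>x$ for every $x\in\mathbb{R}$ (immediate for $x\le0$ since $\textsc{HR}>0$, and \Cref{lemma:HR}(ii) for $x>0$), \Cref{lemma:HR}(i) gives $\textsc{HR}'>0$, so $\textsc{HR}$ is a strictly increasing bijection onto $(0,\infty)$; applying $\textsc{HR}^{-1}$ turns the inequality into $z_{g,\full}\le\textsc{HR}^{-1}\!\bigl(\sqrt{\xi_g}\,\textsc{HR}(z_{g,\subb})\bigr)=\Delta_g(\xi_g,\rho^g_\subb)$, which is exactly \eqref{eq:gamma_A_gamma_B_tension}. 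For the monotonicity claim, $\beta_g=z_{g,\full}=(t_\full-\mu)/\sqrt{v_{g,\full}}$ with $v_{g,\full}$ independent of access, so it suffices that $t_\full$ strictly decreases as $\gamma_g$ decreases; this follows from the same independence-of-access fact (lowering $\gamma_g$ deletes a uniformly random sub-mass of group-$g$ applicants, strictly shrinking the pool without reshaping any conditional distribution, so the capacity-$C$ threshold strictly falls). As $\Delta_g$ is $\gamma_g$-free and $\beta_g$ is continuous and strictly decreasing in $\gamma_g$ with the other parameters fixed, the set of $\gamma_g$ satisfying \eqref{eq:gamma_A_gamma_B_tension} is a down-interval, yielding the threshold $\bar{\bar\gamma}_g$.

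I expect the main obstacle to be the substitution step of the first paragraph: extracting from each capacity equation the standardized cutoffs $\beta_g$ and $z_{g,\subb}$ in a form where the two competing consequences of dropping the test --- the variance shrinkage $\xi_g<1$ and the threshold shift from $\beta_g$ to $z_{g,\subb}$ --- are funneled through a single hazard-rate inequality, after which invertibility of $\textsc{HR}$ closes part (i); the affirmative-action analogue \Cref{thm.budget_tests_vs_poolsize}, where the same ratio $\sqrt{v_{g,\subb}/v_{g,\full}}$ appears, indicates the computation carries through group by group. The monotonicity arguments are comparatively routine once one exploits that access is independent of skill, so that varying $\gamma$ only rescales masses; the only delicate point there is the boundary behavior needed to place $\bar\gamma$ and $\bar{\bar\gamma}_g$ in $(0,1]$, which for $\bar\gamma$ follows from $\tau(P_\full)\to0$ and for $\bar{\bar\gamma}_g$ requires checking the sign of \eqref{eq:gamma_A_gamma_B_tension} as $\gamma_g\to0$.
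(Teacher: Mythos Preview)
Your proposal is correct and follows essentially the same route as the paper's proof: both reduce to the hazard-rate identity $\E[q\mid Y=1,g,P_S]=\mu+\sqrt{v_{g,S}}\,\textsc{HR}(z_{g,S})$ via \Cref{lemma:conditional_expectation_normal} and \Cref{lemma.true_skill_conditional_perceived}, compare the two policies through this single scalar, invert $\textsc{HR}$, and then establish the threshold by monotonicity in the access parameter plus continuity/boundary checks. Two small remarks: (1) your monotonicity argument for $\beta_g$ (``shrinking the pool lowers the capacity threshold $t_\full$, and $\beta_g=(t_\full-\mu)/\sqrt{v_{g,\full}}$'') is cleaner than the paper's, which instead differentiates each factor of $\beta_g$ separately; (2) your derivation correctly produces $\sqrt{\xi_g}$ inside $\textsc{HR}^{-1}$, matching the paper's own computation and the analogous formula in \Cref{thm.budget_tests_vs_poolsize}, whereas the displayed definition of $\Delta_g$ in the statement writes $\xi_g$---this appears to be a typo in the statement rather than an error on your part.
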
 

\begin{proof}{\textit{Proof of Part (i).}}
We break the proof into the following parts.

\noindent\textit{Step 1: We show that the academic merit of group $g$ increases if and only if \Cref{eq:gamma_A_gamma_B_tension} holds.}
We adopt an argument similar to the proof of  Proposition~\ref{thm.budget_tests_vs_poolsize}. We prove the statement for $g=A$. The argument for group $B$ is similar.

First, similarly to \Cref{eq:threshold_groupaware_noAA_Phi}, we derive that
\begin{equation}
\label{eq:w_threshold}
\begin{split}
    \tilde w_{\full}^* &= \mu + \Phi^{-1}\left(1-\frac{C}{(1-\pi)\gamma_A + \pi \gamma_B}\right)
    \sigma  \sqrt{
    \frac{(1-\pi) \gamma_A  \frac{\sum_{k\in \full} \sigma_{Ak}^{-2}}{\sigma^{-2}+\sum_{k\in \full} \sigma_{Ak}^{-2}}
    +
    \pi \gamma_B  \frac{\sum_{k\in \full} \sigma_{Bk}^{-2}}{\sigma^{-2}+\sum_{k\in \full} \sigma_{Bk}^{-2}}
    }
    {(1-\pi)\gamma_A +\pi\gamma_B}}.
\end{split}
\end{equation}

Second, requiring that $\expec[q \mid Y=1, A, P_{\full}] \leq \expec[q \mid Y=1, A, P_{\subb}]$ and adapting Lemma~\ref{lemma:conditional_expectation_normal} to our setting with barriers gives us
\small
\begin{equation*}
\begin{split}
    &\sqrt{\frac{\sum_{ k \in \full} \sigma_{Ak}^{-2}}{\sigma^{-2}+\sum_{ k \in \full} \sigma_{Ak}^{-2}}}
    \textrm{HR} \left(
    \frac{\Phi^{-1}\left(1-\frac{C}{(1-\pi)\gamma_A + \pi \gamma_B}\right)
     \sqrt{
    \frac{ \frac{(1-\pi)\gamma_A}{\pi\gamma_B}  \frac{\sum_{ k \in \full} \sigma_{Ak}^{-2}}{\sigma^{-2}+\sum_{ k \in \full} \sigma_{Ak}^{-2}}
    +    \frac{\sum_{ k \in \full} \sigma_{Bk}^{-2}}{\sigma^{-2}+\sum_{ k \in \full} \sigma_{Bk}^{-2}}
    }
    {1+\frac{(1-\pi)\gamma_A}{\pi\gamma_B}}}}{\sqrt{\frac{\sum_{ k \in \full} \sigma_{Ak}^{-2}}{\sigma^{-2}+\sum_{ k \in \full} \sigma_{Ak}^{-2}}}}
    \right)\\
    \leq&
    \sqrt{\frac{\sum_{ k \in \subb} \sigma_{Ak}^{-2}}{\sigma^{-2}+\sum_{ k \in \subb} \sigma_{Ak}^{-2}}}
    \textrm{HR}\left(
    \frac{\Phi^{-1}(1-C)
     \sqrt{(1-\pi)
    \frac{  \sum_{ k \in \subb} \sigma_{Ak}^{-2}}{\sigma^{-2}+\sum_{ k \in \subb} \sigma_{Ak}^{-2}}
    + \pi   \frac{\sum_{ k \in \subb} \sigma_{Bk}^{-2}}{\sigma^{-2}+\sum_{ k \in \subb} \sigma_{Bk}^{-2}}
    }}{\sqrt{\frac{\sum_{ k \in \subb} \sigma_{Ak}^{-2}}{\sigma^{-2}+\sum_{ k \in \subb} \sigma_{Ak}^{-2}}}}
    \right).
\end{split}
\end{equation*}
\normalsize
 
Replacing with the definitions of $\Delta_A, \rho^A_{\full}$, we finally obtain that
\begin{equation*}
    \Phi^{-1} \left(1-\frac{C}{(1-\pi)\gamma_A + \pi \gamma_B}\right) \sqrt{\frac{\frac{(1-\pi)\gamma_A}{\pi\gamma_B} +\rho^A_{\full}}{1+\frac{(1-\pi)\gamma_A}{\pi\gamma_B} }}
    \leq \Delta_A(\xi_A, \rho^A_{\subb}).
\end{equation*}
Equivalently, using the definition of $\beta_g$, we finally get that academic merit in group $A$ improves after dropping feature $K$ if and only if
$\beta_A(\gamma_A, \gamma_B, \rho^A_{\full}) \leq \Delta_A(\xi_A, \rho^A_{\subb}).$

\medskip\noindent\textit{Step 2: We show that, for each group $g\in \{A,B\}$, $\beta_g(\gamma_g, \gamma_{g'}, \rho^g_{\full})$ is increasing in $\gamma_g$.}
Given some group $g$, fix all parameters except $\gamma_g$.
Then, the function
$\Phi^{-1} \left(1-\frac{C}{(1-\pi)\gamma_A + \pi \gamma_B}\right)$
is increasing in $\gamma_g$ since $\Phi^{-1}$ is increasing in its argument and $1-\frac{C}{(1-\pi)\gamma_A + \pi \gamma_B}$ is an increasing function of both $\gamma_A, \gamma_B$.

Now consider the expression in the second term of $\beta$:
\begin{equation}
\label{eq:beta_second_term}
   \sqrt{ \frac{\frac{\pi_{g'}\gamma_{g'}}{\pi_{g}\gamma_{g}} \rho^g_{S} +1}{1+ \frac{\pi_{g'}\gamma_{g'}}{\pi_{g}\gamma_{g}} }}.
\end{equation}
We show that this function is increasing in $\gamma_g$, for both $g=A$ and $g=B$.
More specifically, for group $g=A$, the derivative of  \Cref{eq:beta_second_term}  with respect to $\gamma_A$ equals
\begin{equation*}
    \begin{split}
        \frac{\partial}{\partial \gamma_A}\left(\sqrt{\frac{\frac{(1-\pi)\gamma_A}{\pi\gamma_B} +\rho^A_{\full}}{1+\frac{(1-\pi)\gamma_A}{\pi\gamma_B} }}\right) = 
\frac{ (1-\pi) \pi \gamma_B (1-\rho^A_{\full})}{ 2((1-\pi)\gamma_A + {\pi\gamma_B})^2}
\left(\sqrt{\frac{{(1-\pi)\gamma_A} + {\pi\gamma_B} \rho^A_{\full}}{{(1-\pi)\gamma_A}+ {\pi\gamma_B} }}\right)^{-1},
    \end{split}
\end{equation*}
and is positive since $\rho^A_{\full}<1$. 
A similar argument applies for group $g=B$ since $\rho^B_{\full} >1$.

\smallskip
\noindent\textit{Step 3: We show that for any given group $g$ and $\gamma_{g'}\in(0,1]$, $g'\neq q$, there exists threshold $\bar{\bar\gamma}_g\in(0,1]$ such that academic merit of group $g$ improves if and only if $\gamma_g < \bar{\bar\gamma}_g$.}
Fix group $A$; the proof is analogous for group $B$.
It suffices to show that (a) $\bar{\bar\gamma}_A$ is the unique solution to $\beta_A(\bar{\bar\gamma}_A, \gamma_B, \rho^A_{\full}) = \Delta_A(\xi_A, \rho^A_{\subb})$
and (b) $\bar{\bar\gamma}_A \in (0,1]$.

Conditional on the existence of $\bar{\bar{\gamma}}_A$, uniqueness in (a) follows immediately from the monotonicity of $\beta_A$ shown in Step 2. Existence in turn can be shown as follows.  In the absence of barriers, Part (iii) in Theorem~\ref{prop.test_group_aware_noAA} guarantees that the academic merit of group $g$ decreases after dropping test scores, thus 
$\beta_A (1, \gamma_B,\rho^A_{\full}) > \Delta_A (\xi_A, \rho^A_{\full})$.
Furthermore, observe that for $\gamma_A =0$, academic merit trivially improves from $\beta_A (0,\gamma_B, \rho^A_{\full}) =0$ to a positive value $\Delta_A(\xi_A, \rho^A_{\subb})>0$ after dropping test scores.
Thus, by the continuity of $\beta_A(\gamma_A, {\gamma}_B, \rho^A_{\full})$, such a $\bar{\bar\gamma}_A$ exists.
For Part (b), continuity of $\beta_A$ further implies that there must exist an interval $[0, \epsilon)$, $\epsilon)>0$, such that  $ \beta_A(\gamma_A, \gamma_B, \rho^B_{\full}) < \Delta_A(\xi_A, \rho^A_{\subb})$ for all $\gamma_A \in [0, \epsilon)$. Consequently, $\bar{\bar\gamma}_A \geq \epsilon>0$.
\end{proof}

\medskip
\begin{proof}[\textit{Proof of Part (ii).}] 
Plugging  \Cref{eq:w_threshold} 
into the definition of diversity with and without test scores, respectively, it immediately follows that diversity  improves if and only if
$ \eta(1, 1, \rho^B_{\subb}) > \eta(\gamma_A, \gamma_B, \rho^B_{\full}).$

\noindent\textit{Step 1: Fix all parameters (including $\gamma_A \in (0,1]$) except for $\gamma_B \in(0,1]$. We  show that diversity strictly increases as barriers decrease ($\gamma_B$ increases), i.e., $\eta(\gamma_A, \gamma_B', \rho^B_{\full}) > \eta(\gamma_A, \gamma_B, \rho^B_{\full})$ for $\gamma_B'>\gamma_B$.}

By \Cref{eq:w_threshold}, the admission threshold increases as $\gamma_B$ increases. 
Indeed, $\tilde q^*_{\subb}$ is the solution to
$ {(1-\pi) \gamma_A (1-F_{\tilde q \mid A, P_{\subb}}( \tilde q^*_{\subb}))  + \pi \gamma_B (1-F_{\tilde q \mid B, P_{\subb}}( \tilde q^*_{\subb}))  = 1-C.}$
Thus, as $\gamma_B$ increases, the solution $\tilde q^*_{\subb}$ must decrease since each $F_{\tilde q \mid g, P_{\subb}}$ is increasing in its argument.

Then, since the admission threshold $\tilde q^*_{\subb}$ increases but the capacity $C$, barriers $\gamma_A$ (thus the mass of students in group $A$ who are eligible to apply), and the perceived skill distributions for both groups remain constant, it follows that a lower mass of students are admitted from group $A$.
As a result, the remaining capacity is filled with more students from group $B$, which in turn implies that diversity increases.

\smallskip\noindent\textit{Step 2: We show that, given all other parameters fixed including $\gamma_A$, there exists a threshold $\bar{\gamma}_B(\gamma_A)$ such that diversity increases after dropping the test if and only if $\gamma_B < \bar{\gamma}$.}
It suffices to show that (a) $\bar{\gamma}$ is the unique solution to $ \eta(1, 1, \rho^B_{\subb}) = \eta(\gamma_A, \bar{\gamma}, \rho^B_{\full})$
and (b) $\bar{\gamma} \in (0,1]$.
The proof follows as in Step 3 in Part (i).
\end{proof}

\subsection{Strategic students: Single school (Proofs from Section~\ref{sec:singleschoolstratetic})}
\label{app:strategic_single}

\begin{lemma}
\label{lemma:reduced_equilibria_single}
Fix testing policy $P_{\full}$. Let $\alpha(\tilde q_{\subb}, g; P_{\full}): \mathbb{R} \times \{A, B\} \rightarrow \{0,1\}$ denote the function that describes the action of students in group $g$ with skill estimate $\tilde q_{\subb}$,
i.e., 
\begin{equation}
\label{eq.student.decision.cost}
    \alpha(\tilde q_{\subb}, g; P_{\full}) \triangleq \arg \max_{\alpha \in \{0,1\}} \alpha \left( v \prob (Y = 1 \mid \tilde q_{\subb} , g, P_{\full}) - c_g \right).
\end{equation}
At equilibrium, for any $\thetaset_{\subb} \in \mathbb{R}^{K-1}$ and $g\in \{A,B\}$, it holds that
$$\alpha(\tilde q(\thetaset_{\subb}, g), g; P_{\full}) = \arg \max_{\alpha \in \{0,1\}} \alpha \left( v \prob (Y = 1 \mid \thetaset_{\subb}, g, P_{\full}) - c_g  \right) $$
\end{lemma}

\begin{proof}{\textit{Proof.}}
Recall that $\tilde q^*_{\full}$ denotes the admission threshold of the school with policy $P_{\full}$ at a given equilibrium. To solve \Cref{eq.student.decision.cost.theta}, the student computes the following probability:
\begin{equation*}
    \begin{split}
        \prob (Y = 1 \mid \thetaset_{\subb}, g, P_{\full})  & = \prob( \tilde q (\thetaset_{\full}, g) \geq \tilde q^*_{\full} \mid \thetaset_{\subb}, g) \\ & = \prob_{\theta_K} \left(\frac{\tilde q (\thetaset_{\subb},g) (\sigma^{-2} + \sum_{k\in \subb} \sigma_{gk}^{-2}) + (\theta_{K} - \mu_{gK})\sigma_{gK}^{-2}}{\sigma^{-2} +\sum_{k\in \full} \sigma_{gk}^{-2}} \geq \tilde q^*_{\full} \mid \thetaset_{\subb}, g\right)\\
        & = \prob_{\theta_K}\left (\theta_K \geq \mu_{gK} + \tilde q^*_{\full} + \sigma_{gK}^{2}(\tilde q^*_{\full}  - \tilde q (\thetaset_{\subb},g)) (\sigma^{-2} + \sum_{k\in \subb} \sigma_{gk}^{-2}) \mid \thetaset_{\subb},g \right) \\
        &= \prob_{\theta_K}\left (\theta_K \geq \mu_{gK} + \tilde q^*_{\full} + \sigma_{gK}^{2}(\tilde q^*_{\full}  - \tilde q (\thetaset_{\subb},g)) (\sigma^{-2} + \sum_{k\in \subb} \sigma_{gk}^{-2}) \mid \tilde q(\thetaset_{\subb},g), g \right)\\
        & = \prob (Y = 1 \mid \tilde q(\thetaset_{\subb}, g), g, P_{\full}),
    \end{split}
\end{equation*}
where in the second line we used \Cref{eq.app.perceived_skill_cond_features} for $\thetaset= \thetaset_{\full},\thetaset_{\subb}$ to rewrite $\tilde q (\thetaset_{\full},g)$ in terms of  $\tilde q_{\subb} (\thetaset_{\subb})$ and $\theta_K$, i.e.,
\begin{equation}
\label{eq.qfull_qsub_thetaK}
\begin{split}
  \tilde q (\thetaset_{\full},g) &= \frac{\mu \sigma^{-2} + \sum_{k\in \full}(\theta_{k} - \mu_{gk})\sigma_{gk}^{-2}} {\sigma^{-2} +\sum_{k\in \full} \sigma_{gk}^{-2}} \\ &= 
  \frac{\tilde q (\thetaset_{\subb},g) (\sigma^{-2} + \sum_{k\in \subb} \sigma_{gk}^{-2}) + (\theta_{K} - \mu_{gK})\sigma_{gK}^{-2}}{\sigma^{-2} +\sum_{k\in \full} \sigma_{gk}^{-2}}.
\end{split}
\end{equation}

This equality immediately implies that for any $\alpha \in \{0,1\}$:
$$\alpha (v\prob (Y = 1 \mid \thetaset_{\subb}, g, P_{\full})  -c_g)= \alpha(v \prob (Y = 1 \mid \tilde q(\thetaset_{\subb}, g), g, P_{\full}) -c_g).$$
Consequently, 
\begin{equation*}
  \begin{split}
      \arg \max_{\alpha \in \{0,1\}}\alpha (v\prob (Y = 1 \mid \thetaset_{\subb}, g, P_{\full})  -c_g)&= \arg \max_{\alpha \in \{0,1\}} \alpha(v \prob (Y = 1 \mid \tilde q(\thetaset_{\subb}, g), g, P_{\full}) -c_g) \\&= \alpha (\tilde q(\thetaset_{\subb}, g),g; P_{\full}),
  \end{split}  
\end{equation*}
which concludes the proof of the lemma.
\end{proof}

\lemmathresstrategy*

\begin{proof}{\textit{Proof.}}
Without loss of generality, we fix group $g$ throughout the proof as the arguments are analogous for both groups of students.

\medskip
 \textit{Step 1: We derive the distribution of $\tilde q_{\full}  \mid  \tilde q_{\subb}, g, P_{\full}$}. The student uses this distribution to solve \Cref{eq.student.decision.cost}. 
 
 Fix test-free skill estimate $\tilde q_{\subb}$. 
 By Lemma~\ref{lemma.true_skill_conditional_perceived}, we have that 
 \begin{equation}
 \label{eq:expected_skill_given_qtilde_sub}
    q \mid \tilde{q}_{\subb}, g \sim \N \left( \tilde{q}_{\subb}, \frac{1}{\sigma^{-2} + \sum_{k\in \subb} \sigma_{gk}^{-2}}\right).
  \end{equation}
  Furthermore, conditional on her true skill $q$, the student's test score $\theta_K$ is drawn from a  distribution  $\theta_K \mid q, g \sim \N(q + \mu_{gK}, \sigma_{gK}^2)$. 
  Applying Lemma~\ref{lemma:prior_marginal_normal}, we get that
  \begin{equation}
  \label{eq:expected_test_given_qtilde_sub}
      \theta_K \mid \tilde q_{\subb}, g \sim \N \left ( \tilde q_{\subb} +\mu_{gK}, \sigma^{2}_{gK} + \frac{1}{\sigma^{-2} + \sum_{k \in \subb} \sigma_{gk}^{-2}} \right).
  \end{equation}
 
By applying Lemma~\ref{lemma:prior_marginal_normal}, combined with \Cref{eq.qfull_qsub_thetaK} and the above distribution,
the student then finds that her projected skill estimate $\tilde q_{\full} \mid \tilde q_{\subb}, g, P_{\full}$, after they take the test and submit the score $\theta_K$ to the school, will follow a Normal distribution:
\begin{equation}
\label{eq:q_tilde_cond_thetasubb_q}
\tilde q_{\full} \mid \tilde q_{\subb}, g, P_{\full} \sim \N \left(\tilde q_{\subb}, 
\left ( \frac{\sigma_{gK}^{-2}}{\sigma^{-2} + \sum_{k \in \full} \sigma_{gk}^{-2}}\right)^2
\left(\sigma^{2}_{gK} + \frac{1}{\sigma^{-2} + \sum_{k \in \subb} \sigma_{gk}^{-2}} \right)  \right).
\end{equation}

\textit{Step 2:} \textit{Neither $\alpha^*(\tilde q_{\subb}, g; P_{\full}) = 1$, $\forall ~ \tilde q_{\subb}$, or $\alpha^*(\tilde q_{\subb}, g; P_{\full}) =0$,  $\forall ~ \tilde  q_{\subb}$, constitute an equilibrium.} For the sake of contradiction, assume that $\alpha^*(\tilde q_{\subb}, g; P_{\full}) = 1$, $\forall ~\tilde q_{\subb}$, is an equilibrium. Then, all students take the test and apply to the school as in the main setup without barriers. 

The student has probability $\prob (Y=1 \mid \tilde q_{\subb}, g, P_{\full}) = \prob (\tilde q_{\full} \geq \tilde q^*_{\full} \mid \tilde q_{\subb}, g, P_{\full}) $ to be accepted by the school. Keeping $\tilde q^*_{\full}$ fixed, by \Cref{eq:q_tilde_cond_thetasubb_q}, there exists a small enough $\underline{q}$ such that for all $\tilde q_{\subb} < \underline{q}$,  $v\prob (Y=1 \mid \tilde q_{\subb}, g, P_{\full}) - c_g <0$. Thus, students with $\tilde q_{\subb} < \underline{q}$ have incentive not to apply, implying that $\alpha^* (\tilde q_{\subb}, g; P_{\full}) =0$ for a positive mass of students, which contradicts our assumption. 

A similar argument also shows that $\alpha^* (\tilde q _{\subb}, g; P_{\full}) =0$ cannot be an equilibrium, since  students with  $\tilde q_{\subb} > \overline{q}$ for some threshold $\overline{q}$ will have the incentive to deviate and take the test.

\medskip
\textit{Step 3: If $\alpha^*(\tilde q_{\subb}, g; P_{\full})$ is an equilibrium student strategy, then  it must be non-decreasing in $\tilde q_{\subb}$.}
We prove this claim by contradiction. Suppose that there exist $\tilde q_{\subb}', \tilde q_{\subb}''$, with $\tilde q_{\subb}'< \tilde q_{\subb}''$, such that $1=\alpha^*(\tilde q_{\subb}', g; P_{\full}) > \alpha^*(\tilde q_{\subb}'', g; P_{\full})=0$. 

We show that this cannot hold true. Indeed, since the mean of \Cref{eq:q_tilde_cond_thetasubb_q} is increasing in $\tilde q_{\subb}$ and the variance does not depend on $\tilde q_{\subb}$, it follows that $$\prob ( Y =1 \mid \tilde q_{\subb}', g, P_{\full}) \leq \prob ( Y =1 \mid \tilde q_{\subb}'', g, P_{\full}),$$ 
therefore 
$0 \leq v \prob ( Y =1 \mid \tilde q_{\subb}', g, P_{\full}) - c_g \leq v \prob ( Y =1 \mid \tilde q_{\subb}'', g, P_{\full}) - c_g,$ where the first inequality follows from the fact that $\alpha^*(\tilde q_{\subb}', g; P_{\full})=1$.  Consequently, the student with $\tilde q_{\subb}''$ also has the incentive to apply, i.e., $\alpha^*(\tilde q_{\subb}'', g; P_{\full}) =1$ which is a contradiction. Thus, $\alpha^*(\tilde q_{\subb}, g; P_{\full})$ must be non-decreasing in $\tilde q_{\subb}$.

\medskip
\textit{Step 4: If an equilibrium exists, it takes a threshold form: $\alpha^*(\tilde q_{\subb}, g; P_{\full}) = \indic\{\tilde q_{\subb} \geq \underline{q}^ g\}$.}
An immediate corollary of Steps 2 and 3 is that if an equilibrium $\alpha^*(\tilde q_{\subb}, g; P_{\full})$  exists, it must take a threshold form, i.e.,  
there must exist a threshold $\underline{q}^ g$ such that $\alpha^*(\tilde q_{\subb}, g; P_{\full}) = \indic\{ \tilde q_{\subb} \geq \underline{q}^ g\}$. In other words, $\underline{q}^ g$ corresponds to the unique skill level that characterizes students who are indifferent between taking and not taking the test.

\medskip
\textit{Step 5: An equilibrium $(\alpha^*,Y^*)$ exists and is unique.}
As explained in the main text, the selection policy $Y^*$ of the school remains the same as in the baseline setting without test costs: among the students who apply, the school sets a threshold $\tilde q^*_{\full}$ to accept the top mass $C$ of applicants thus $Y^*(\tilde q^*_{\full}; P_{\full}) = \indic\{\tilde q_{\full} \geq \tilde q^*_{\full}\}$ where $\tilde q^*_{\full}$ is the unique solution to: 
\begin{equation}
\begin{split}
    \label{eq:singleschoolproblem_main_appendix}
    \tilde q^*_{S} =  \min \left\{z \in \mathbb{R}:  
    \sum_g \pi_g \expec_{\thetaset_S}[\alpha^*(\thetaset_{\subb},g; P_S)  \mid \tilde q (\thetaset_S, g)  \geq z, g, P_S  ] \leq C\right\}.
\end{split}
\end{equation}
Regarding $\alpha^*$, we will prove the slightly more general statement: given any threshold $\tilde q_{\full}^*$, there exists a unique equilibrium with $\alpha^*(\tilde q_{\subb}, g; P_{\full}) = \indic\{\tilde q_{\subb} \geq \underline{q}^ g\}$ where $\underline{q}^ g$ is the solution to
\begin{equation}
    \label{eq:indiff_student_equilibrium}
\prob (\tilde q_{\full} \geq \tilde q_{\full}^* \mid \underline{q}^ g, \thetaset_{\subb},g)= \frac{c_g}{v}.
\end{equation}

Indeed, given the admission cutoff $\tilde q_{\full}^*$ and using \Cref{eq:q_tilde_cond_thetasubb_q}, the student computes her admission probability:
$$\prob (\tilde q_{\full} \geq \tilde q_{\full}^* \mid \underline{q}^ g,g, P_{\full}) = 1-\Phi\left(\frac{\tilde q_{\full}^* -\underline{q}^ g }{\var(\tilde q_{\full} \mid \underline{q}^ g, g, P_{\full} )}\right).$$
Given that the CDF $\Phi$ is a continuous, strictly increasing function in $\tilde q_{\subb}$   and $c_g/v<1$, it follows that \Cref{eq:indiff_student_equilibrium} has a unique solution $\underline{q}^ g$. 
Then, $\alpha^*(\tilde q_{\subb}, g; P_{\full}) = \indic\{ \tilde q_{\subb} \geq \underline{q}^ g\}$ is an equilibrium:  all students with $\tilde q_{\subb} \geq\underline{q}^ g$ receive weakly positive expected utility if they apply, whereas all students with $\tilde q_{\subb} < \underline{q}^ g$ get strictly negative expected utility therefore they choose not to apply.  By the uniqueness of the solution $\underline{q}^ g$ to \Cref{eq:indiff_student_equilibrium}, it follows that no other equilibrium of a threshold form can exist. Due to Step 4, this further implies that $\alpha^*(\tilde q_{\subb}, g; P_{\full}) = \indic\{\tilde q_{\subb} \geq \underline{q}^ g\}$ must be unique.
Extending the arguments to students of any  $g$ concludes the proof. 
\end{proof}

\subsubsection{Effect of test cost and informativeness on admissions}  
\label{app.singleschool.effectsoftestcosts}
In the non-strategic setting of \Cref{prop:group_aware_noAA},  the sign of the informativeness gap $\sum_{k \in S} \sigma^{-2}_{Ak} - \sum_{k \in S} \sigma^{-2}_{Bk}$ determined the diversity level and academic merit in a straightforward manner: if group $B$ has lower total precision than group $A$, then it is under-represented and has lower academic merit among the admitted class. The same holds even in the presence of barriers as long as $\gamma_A \geq \gamma_B$.  With costly testing, however, Proposition~\ref{prop.div_merit_strategic} below shows that the relationship between informativeness and fairness becomes more complex, depending on the costs and informativeness of the features with and without the test score.
Recall that $\Phi_2(x,y; \rho)$ denotes the CDF of the  standard bivariate Normal distribution with correlation $\rho$.

\begin{proposition}
\label{prop.div_merit_strategic.full}
\label{prop.div_merit_strategic}
Consider the equilibrium under policy $P_{\full}$.  
\begin{itemize}
    \item [(i)] \emph{Diversity level}: Group $B$ students are under-represented, i.e., $\tau(P_{\full}) < \pi$, if and only if  
    \begin{equation}
    \label{eq.diversity_condition_barriers}
        \frac{\Phi\left(\frac{a_A+b_A\mu}{\sqrt{1 + \tilde \sigma_A ^2 b_A^2}}\right) - \Phi_2 \left( \frac{a_A+b_A\mu}{\sqrt{1 + \tilde \sigma_A ^2 b_A^2}}, \frac{\tilde q^*_{\full} -\mu}{\tilde \sigma_A}; -\frac{\tilde \sigma_A b_A}{\sqrt{1 + \tilde \sigma_A ^2 b_A^2}} \right)}
        {\Phi\left(\frac{a_B+b_B\mu}{\sqrt{1 + \tilde \sigma_B ^2 b_B^2}}\right) -  \Phi_2 \left( \frac{a_B+b_B\mu}{\sqrt{1 + \tilde \sigma_B ^2 b_B^2}}, \frac{\tilde q^*_{\full} -\mu}{\tilde \sigma_B}; -\frac{\tilde \sigma_B b_B}{\sqrt{1 + \tilde \sigma_B ^2 b_B^2}} \right)}
        > \frac{ \tilde \sigma_B}{ \tilde \sigma_A}, 
    \end{equation}
    where  
    $$\tilde \sigma_g = \sigma \sqrt{
    \frac{\sum_{k\in \full} \sigma^{-2}_{gk}}{\sigma^{-2}+ \sum_{k\in \full} \sigma^{-2}_{gk}}},$$ 
    $$a_g \triangleq a_g(\tilde q_{\full}^*)= \frac{  \frac{\sigma_{gK}^{-2}\Phi^{-1} \left(1 - \frac{c_g}{v}\right)}{\sigma^{-2}  + \sum_{k \in \full} \sigma_{gk}^{-2} }   
\sqrt{\sigma^{2}_{gK} + \frac{1}{\sigma^{-2} + \sum_{k \in \subb} \sigma_{gk}^{-2}} } + \mu \frac{ \sigma^{-2} }{\sigma^{-2} + \sum_{k \in \subb} \sigma_{gk}^{-2}}- \tilde q_{\full}^*}{\frac{\sqrt{\sum_{k \in \subb} \sigma_{gk}^{-2}}}{\sigma^{-2} + \sum_{k \in \subb} \sigma_{gk}^{-2}} \sqrt{1 + \frac{ \sum_{k \in \subb} \sigma_{gk}^{-2}}{\sigma^{-2} + \sum_{k \in \full} \sigma_{gk}^{-2}}}},$$
   $$b_g = \sqrt{\frac{\sum_{k\in \subb} \sigma^{-2}_{gk} (\sigma^{-2} + \sum_{k\in \full} \sigma^{-2}_{gk})}{\sigma^{-2}+ \sum_{k\in \subb} \sigma^{-2}_{gk} + \sum_{k\in \full} \sigma^{-2}_{gk}}}.$$
 \item[(ii)] \emph{Academic merit}: Policy $P_{\full}$ achieves worse academic merit for group $B$ than group $A$ if and only if 
    $ \lambda (a_A, b_A, \tilde \sigma_A, \tau_A) >  \lambda (a_B, b_B, \tilde \sigma_B, \tau_B),$
    where
    \begin{equation*}
        \begin{split}
       \lambda (a_g, b_g, \tilde \sigma_g, \tau_g)\triangleq \mu  
       &- 
     \frac{\tilde \sigma_g^2 b_g}{\tau_g \sqrt{1 +\tilde \sigma_g^2 b_g^2}} \phi \left(\frac{a_g + b_g \mu}{\sqrt{1 +\tilde \sigma_g^2 b_g^2}}\right) 
     \Phi\left(\tilde q^*_{\full} \sqrt{1 +\tilde \sigma_g^2 b_g^2}  + \frac{ b_g\tilde \sigma_g (a_g+b_g \mu)} {\sqrt{1 +\tilde \sigma_g^2 b_g^2}} \right) \\
     &+  \Phi(a_g + b_g \mu + b_g \tilde \sigma_g \tilde q^*_{\full}) \frac{\tilde \sigma_g^2  \phi(\tilde q^*_{\full})}{\tau_g}. 
            \end{split}
    \end{equation*}
    \normalsize
\end{itemize} 
\end{proposition}

\begin{proof}[\textit{Proof of Part (i).}]
Fix group $g$. We break the proof into steps. 

\medskip\textit{Step 1: We derive the distribution of $\tilde q_{\subb} \mid \tilde q_{\full}, g$.}
By Lemma~\ref{lemma.true_skill_conditional_perceived}, we have that
\begin{equation*}
    q \mid \tilde q_{\full}, g \sim \N \left(\tilde q, \frac{1}{\sigma^{-2} + \sum_{k\in \full} \sigma_{gk}^{-2}} \right),  
\end{equation*}
while by \Cref{eq:tilde_q_cond_q_distribution},
\begin{equation*}
    \tilde q_{\subb}  \mid q, g \sim \N\left(
    \frac{\mu\sigma^{-2} + q\sum_{k\in \subb} \sigma_{gk}^{-2}}{\sigma^{-2} + \sum_{k\in \subb} \sigma^{-2}_{gk}},
    \frac{\sum_{k\in \subb} \sigma^{-2}_{gk}}{\left(\sigma^{-2}+\sum_{k\in \subb} \sigma^{-2}_{gk}\right)^2}
    \right).
\end{equation*}
Applying Lemma~\ref{lemma:prior_marginal_normal} gives us
\begin{equation*}
    \tilde q_{\subb} \mid \tilde q_{\full}, g \sim \N\left(
    \frac{\mu \sigma^{-2} + \tilde q_{\full} \sum_{k\in \subb} \sigma^{-2}_{gk} }{\sigma^{-2} + \sum_{k\in \subb} \sigma^{-2}_{gk}}, 
    \frac{\sum_{k\in \subb} \sigma^{-2}_{gk}}{(\sigma^{-2}+ \sum_{k\in \subb} \sigma^{-2}_{gk})^2} \left( 
    1 + \frac{\sum_{k\in \subb} \sigma^{-2}_{gk}}{\sigma^{-2} + \sum_{k\in \full} \sigma^{-2}_{gk}}
    \right)
    \right).
\end{equation*}

\medskip
\textit{Step 2: We show that} 
$$\tau_g = \frac{\pi_g \tilde \sigma_g }{C} \Phi\left(\frac{a_g+b_g\mu}{\sqrt{1 + \tilde \sigma_g ^2 b_g^2}}\right) - \frac{\pi_g \tilde \sigma_g }{C}  \Phi_2 \left( \frac{a_g+b_g\mu}{\sqrt{1 + \tilde \sigma_g ^2 b_g^2}}, \frac{\tilde q^*_{\full} -\mu}{\tilde \sigma_g}; -\frac{\tilde \sigma_g b_g}{\sqrt{1 + \tilde \sigma_g ^2 b_g^2}} \right),$$
\textit{where $\tilde \sigma_g, a_g, b_g$ are defined as above.}

 Given that the school's admission threshold is $\tilde q^*_{\full}$, only students with $\tilde q \geq \tilde q^*_{\full} $ get admitted. If no costs existed, the fraction of students who would get admitted under a fixed threshold $\tilde q^*_{\full}$ would be 
 $$\int_{\tilde q^*_{\full}}^{\infty} \phi\left(\frac{\tilde q - \mu}{\tilde \sigma_g}\right) \diff \tilde q,$$
 by Lemma~\ref{lemma:perceived_skill_distribution}.
 However, in the presence of costs, by Lemma~\ref{lemma.thres.strategy}, among all students who in our continuum model could have $\tilde q_{\full} > \tilde q^*_{\full}$,  only students with $\tilde q_{\subb} \geq \underline{q}^g_{\subb}$ apply. Conditional on having the same $\tilde q_{\full}$, Step 1 implies that the fraction of applying students from group $g$ equals 
 $$\Phi\left(\frac{   \frac{\mu \sigma^{-2} + \tilde q_{\full} \sum_{k\in \subb} \sigma^{-2}_{gk}}{\sigma^{-2} + \sum_{k\in \subb} \sigma^{-2}_{gk}} - \underline{q}^g_{\subb}}{\frac{\sqrt{\sum_{k\in \subb} \sigma^{-2}_{gk}}}{\sigma^{-2}+\sum_{k\in \subb} \sigma^{-2}_{gk}}\sqrt{
    1 + \frac{\sum_{k\in \subb} \sigma^{-2}_{gk}}{\sigma^{-2} + \sum_{k\in \full} \sigma^{-2}_{gk}}} }\right).$$
Consequently, putting everything together, we get that
\begin{equation}
\label{eq.diversity_barriers}
\tau_g = \frac{\pi_g }{C} \int_{\tilde q^*_{\full}}^{\infty}
\Phi\left(\frac{  \frac{\mu \sigma^{-2} + \tilde q \sum_{k\in \subb} \sigma^{-2}_{gk} }{\sigma^{-2} + \sum_{k\in \subb} \sigma^{-2}_{gk}} - \underline{q}^g_{\subb} }{\frac{\sqrt{\sum_{k\in \subb} \sigma^{-2}_{gk}}}{\sigma^{-2}+\sum_{k\in \subb} \sigma^{-2}_{gk}} \sqrt{ 
    1 + \frac{\sum_{k\in \subb} \sigma^{-2}_{gk}}{\sigma^{-2} +\sum_{k\in \full} \sigma^{-2}_{gk}}}}
    \right)
\phi\left(\frac{\tilde q - \mu}{\tilde \sigma_g}\right) \diff \tilde q.
\end{equation}

By Equation (10,010.1) in \cite{owen1980table}, we have that
\begin{equation*}
    \int_{-\infty}^{u} \Phi(a   +  bx) \phi\left(\frac{x-\mu}{\rho}\right) \diff x = \rho \Phi_2 \left(
    \frac{a + b \mu}{\sqrt{1 + \rho^2 b^2}}, \frac{u -  \mu}{\rho}; -\frac{\rho b}{\sqrt{1 + \rho^2 b^2}}\right).
\end{equation*}
Furthermore, by Equation (10,010.1) in \cite{owen1980table}, 
\begin{equation*}
    \int_{-\infty}^{\infty} \Phi(a   +  bx) \phi\left(\frac{x-\mu}{\rho}\right) \diff x = \rho \Phi\left(\frac{a+b\mu}{\sqrt{1 + \rho^2 b^2}}\right).
\end{equation*}
Substituting the definition of $\underline{q}^g_{\subb}$ from Lemma~\ref{lemma.thres.strategy}, plugging the definitions of $a_g$, $b_g$ and $\tilde\sigma_g$ into \Cref{eq.diversity_barriers} and using the two Owen's formulae above completes the current step.

\medskip\textit{Step 3: An immediate corollary  is that group $B$ is under-represented if and only if $\tau_B < \tau_A$, which by Step 2 is equivalent to \Cref{eq.diversity_condition_barriers}.}
\end{proof}

\medskip
\begin{proof}[\textit{Proof of Part (ii).}]
 The academic merit of admitted students from group $g$ equals
\begin{equation*}
\begin{split}
    \expec[q \mid Y =1, g, P_{\full}] & = \expec[ \tilde q_{\full} \mid \tilde q_{\full} \geq \tilde q^*_{\full}, \tilde q_{\subb} \geq \underline{q}^g_{\subb},  g, P_{\full}] \\
    &=     \frac{\pi_g}{\tau_g C} \int_{\tilde q^*_{\full}}^{\infty} \tilde q
\Phi\left(\frac{ \frac{\mu \sigma^{-2} + \tilde q \sum_{k\in \subb} \sigma^{-2}_{gk} }{\sigma^{-2} + \sum_{k\in \subb} \sigma^{-2}_{gk}} -\underline{q}^g_{\subb}  }{\frac{\sqrt{\sum_{k\in \subb} \sigma^{-2}_{gk}}}{\sigma^{-2}+\sum_{k\in \subb} \sigma^{-2}_{gk}} \sqrt{
    1 + \frac{\sum_{k\in \subb} \sigma^{-2}_{gk}}{\sigma^{-2} +\sum_{k\in \full} \sigma^{-2}_{gk}}
    }}\right)
\phi\left(\frac{\tilde q - \mu}{\tilde \sigma_g}\right) \diff \tilde q. 
\end{split}
\end{equation*}

By Equation (10,011.1) in \cite{owen1980table}, we get that
\begin{equation*}
\begin{split}
    \int x \Phi(a +bx) \phi\left(\frac{x-\mu}{\rho}\right) \diff x = &
    \frac{\rho^2 b}{\sqrt{1 +\rho^2 b^2}} \phi \left(\frac{a + b \mu}{\sqrt{1 +\rho^2 b^2}}\right) \Phi\left(x \sqrt{1 +\rho^2 b^2}  +  \frac{ b\rho (a+b \mu)} {\sqrt{1 +\rho^2 b^2}} \right) 
    \\
    &- \rho^2 \Phi(a + b \mu + b \rho x) \phi(x)
    + \mu \rho \int \Phi(a +bx) \phi\left(\frac{x-\mu}{\rho}\right) \diff x.
\end{split}
\end{equation*}
Observe that the last integral simplifies because
\begin{equation*}
    \frac{\mu \pi_g \tilde \sigma_g}{\tau_g  C} \int_{\tilde q^*_{\full}}^\infty \Phi(a_g +b_g \tilde q) \phi\left(\frac{\tilde q-\mu}{\tilde \sigma_g}\right) \diff \tilde q = \frac{\mu \tau_g }{\tau_g} = \mu.
\end{equation*}
For the first and second term, we find that
\begin{equation*}
\begin{split}
     -\frac{\tilde \sigma_g^2 b_g}{\sqrt{1 +\tilde \sigma_g^2 b_g^2}} \phi \left(\frac{a_g + b_g \mu}{\sqrt{1 +\tilde \sigma_g^2 b_g^2}}\right)
     \Phi\left(\tilde q^*_{\full} \sqrt{1 +\tilde \sigma_g^2 b_g^2}  +\frac{ b_g\tilde \sigma_g (a_g+b_g \mu)} {\sqrt{1 +\tilde \sigma_g^2 b_g^2}} \right) +  \tilde \sigma_g ^2 \Phi(a_g + b_g \mu + b_g \tilde \sigma_g \tilde q^*_{\full}) \phi(\tilde q^*_{\full}).
\end{split}
\end{equation*}
Putting everything together, we get that
\begin{equation*}
\begin{split}
    \expec[q \mid Y =1, g, P_{\full}]  = & \mu -   
     \frac{\tilde \sigma_g^2 b_g}{\tau_g \sqrt{1 +\tilde \sigma_g^2 b_g^2}} \phi \left(\frac{a_g + b_g \mu}{\sqrt{1 +\tilde \sigma_g^2 b_g^2}}\right) 
     \Phi\left(\tilde q^*_{\full} \sqrt{1 +\tilde \sigma_g^2 b_g^2}  + \frac{ b_g\tilde \sigma_g (a_g+b_g \mu)} {\sqrt{1 +\tilde \sigma_g^2 b_g^2}} \right) \\
     &+  \Phi(a_g + b_g \mu + b_g \tilde \sigma_g \tilde q^*_{\full}) \frac{ \tilde \sigma_g^2  \phi(\tilde q^*_{\full})}{\tau_g}\\
     = & \lambda(
     a_g, b_g, \tilde \sigma_g, \tau_g).
\end{split}
\end{equation*}
Requiring that $\expec[q \mid Y =1, A, P_{\full}]  > \expec[q \mid Y =1, B, P_{\full}] $ concludes the proof.
\end{proof}

Because testing is costly, admissions outcomes reflect both the informativeness of the test and other $K-1$ features and the cost-to-valuation ratio $c_g/v$. Low diversity can occur either because group $B$ students self-select out of the test at higher rates (due to higher costs), or are admitted at lower rates even if they apply similarly (due to low feature informativeness). On the other hand, unlike exogenous barriers, student incentives can improve outcomes: higher-skilled students in both groups are more likely to take the test and apply (see \Cref{fig:calibrated_sims_theop_test_and_hsclassrank}). %

Overall, Figure~\ref{fig:strategic_varying_costs_and_feature_var} shows how test informativeness and test costs interact to determine academic merit, diversity, and individual fairness. When test costs are high for group  $B$, both academic merit and diversity decline—an effect that is amplified when the test is \emph{more informative} (lower conditional variance), up to a point. In such cases, more group $B$ students self-select out of testing, exacerbating these outcomes. Intuitively, when feature informativeness is increased, group $B$ students near the previous decision boundary have a lower admissions probability, because they can no longer can ``get lucky'' with a higher test score.

\subsection{Two schools (Proofs from Section~\ref{sec:twoschoolstrategicbehavior} and \ref{ssec:two_schools_strategic_theory})}
\label{app:strategic_two_schools}

\noindent\textbf{Student decisions.} In a two-school setting with policies $\mathbf{P} = (P^1, P^2)$, students' decisions to take the test and thus apply to test-requiring schools are determined per case as follows:
\begin{equation}
\label{eq:alpha_twoschools_definition}
    \alpha(\thetaset_{\subb}, g; \mathbf{P})
=\arg\max_{\alpha \in \{0,1\}} h (\alpha, \thetaset_{\subb}, g; \mathbf{P}),
\end{equation}
where
\begin{small}
\begin{equation*}
h (\alpha, \thetaset_{\subb}, g; \mathbf{P})
= 
\begin{cases}
\displaystyle 
\alpha\big( v_1 \Pr(Y_1=1 \mid \thetaset_{\subb}, g, P^1_{\full}) - c_g \big)
+ v_2 \Pr(Y_1=0 \cap Y_2=1 \mid \thetaset_{\subb}, g, P^2_{\subb}),
& \mathbf{P}=(P^1_{\full}, P^2_{\subb}),
\\[1.1em]
\displaystyle
\alpha\big( 
v_1 \Pr(Y_1=1 \mid \thetaset_{\subb}, g, P^1_{\full})
+ v_2 \Pr(Y_1=0 \cap  Y_2=1 \mid \thetaset_{\subb}, g, P^2_{\full})
- c_g
\big),
& \mathbf{P}=(P^1_{\full}, P^2_{\full}),
\\[1.1em]
\displaystyle
v_1 \Pr(Y_1=1 \mid \thetaset_{\subb}, g, P^1_{\subb})
+ \alpha \big( 
v_2 \Pr(Y_1=0 \cap  Y_2=1 \mid \thetaset_{\subb}, g, P^2_{\full})
- c_g
\big),
& \mathbf{P}=(P^1_{\subb}, P^2_{\full}),
\\[0.9em]
0, 
& \mathbf{P}=(P^1_{\subb}, P^2_{\subb}).
\end{cases}
\end{equation*}
\end{small}

\smallskip
 \noindent\textbf{Schools' selection policies.} Recall that $Y_i (\tilde q_{S_i}; \mathbf{P})$ denotes the selection policy of school $J_i$. For brevity, we also define the indicator function $$\chi_i(\thetaset_{\subb}, g, \mathbf{P}) =  1+  (\alpha^*(\thetaset_{\subb},g; \mathbf{P}) -1) \cdot \indic\{P^i=P_{\full}\} ), $$
which takes value 1 in two cases: either when school $J_i$ does not require the test ($P^i_{\subb}$) or school $J_i$ requires the test and a student in group $g$ with features $\thetaset_{\subb}$ takes the test ($P^i_{\full}$). 

At equilibrium, given the student preference for $J_1$ over $J_2$, the more preferred school, $J_1$, picks students first. 
In particular, school $J_1$ optimizes  
the academic merit of its admitted class as follows:
\begin{equation}
\begin{split}
    \label{eq:twoschoolproblem_1}
        &\max_{Y_1} \sum_g \pi_g \expec_{\thetaset_{S_1}}[ \tilde q (\theta_{S_1}, g)  \cdot \chi_1(\thetaset_{\subb}, g, \mathbf{P}) \cdot   Y_1(\tilde q (\thetaset_{S_1}, g); \mathbf{P}) \mid g, P^1] \\
    &\mathrm{~s.t.~}  \sum_g \pi_g\expec_{\thetaset_{S_1}}[ \chi_1(\thetaset_{\subb}, g, \mathbf{P})   \cdot Y_1(\tilde q (\thetaset_{S_1}, g); \mathbf{P}) \mid g, P^1 ] \leq C_1.
\end{split}
\end{equation}
Similarly, $J_2$ optimizes academic merit by selecting among the students who either did not apply to $J_1$ at all (if $J_1$ requires the test) or applied but did not get admitted, i.e.,
\small
\begin{equation}
\begin{split}
    \label{eq:twoschoolproblem_2}
        &\max_{Y_2} \sum_g \pi_g \expec_{\thetaset_{S_2}}[ \tilde q (\theta_{S_2}, g) \cdot \chi_2(\thetaset_{\subb}, g, \mathbf{P}) \cdot Y_2(\tilde q (\thetaset_{\subb}, g); \mathbf{P})  \mid \chi_1(\thetaset_{\subb}, g, \mathbf{P})    \cdot Y_1(\tilde q (\thetaset_{S_i}, g); \mathbf{P})=0, g, P^2] \\
    &\mathrm{~s.t.~}  \sum_g \pi_g \expec_{\thetaset_{S_2}}[\chi_2(\thetaset_{\subb}, g, \mathbf{P}) \cdot Y_2(\tilde q (\thetaset_{\subb}, g); \mathbf{P})  \mid \chi_1(\thetaset_{\subb}, g, \mathbf{P})    \cdot Y_1(\tilde q (\thetaset_{S_i}, g); \mathbf{P})=0, g, P^2] \leq C_2.
\end{split}
\end{equation}
\normalsize

\smallskip\noindent\textbf{Two-school equilibria.} Given testing policies $\mathbf{P}$ and capacities $C_i$, we say that a triple $(\alpha^*, \mathbf{Y}^*, \mathbf{P})$ constitutes an \textit{equilibrium} if: 
(i) for all  $\thetaset_{\subb} \in \mathbb{R}^{K-1}$ and $g \in \{A,B\}$, 
$ \alpha^*(\thetaset_{\subb}, g; \mathbf{P})
=\arg\max_{\alpha \in \{0,1\}} h (\alpha, \thetaset_{\subb}, g; \mathbf{P});$
and  (ii) for all  $\thetaset_{\full} \in \mathbb{R}^{K}$ and $g \in \{A,B\}$, $Y_i^*(\thetaset_{S_i}, g; \mathbf{P}) = \indic\{\tilde q (\thetaset_{S_i}, g) \geq \tilde q^*_{i, S_i}\}$, where  
$\tilde q^*_{i, S_i}$  is the corresponding solutions to \Cref{eq:twoschoolproblem_1} and \Cref{eq:twoschoolproblem_2}.

Note that, as in the single school case, we can focus on admission strategies of the form $Y^*_i (\tilde q_{S_i}; \mathbf{P})$.
In Lemma~\ref{lemma:twoschoools_thresholdpolicies} below we  formalize that each such $Y_i$ preserves its threshold-based form.

\normalsize

\begin{lemma}
\label{lemma:twoschoools_thresholdpolicies}
At an equilibrium $(\alpha^*, \mathbf{Y}^*)$, each school $J_i$'s selection  policy $Y^*_i (\tilde q_{S_i}; \mathbf{P})$, $i \in \{1,2\}$, takes a threshold form, i.e., there exists a threshold $\tilde q^*_{i, S_i}$ such that $Y^*_i (\tilde q_{S_i}; \mathbf{P}) = \indic\{\tilde q_S \geq \tilde q^*_{i, {S_i}}\}$ where  $\tilde q^*_{1, S_1}$, $\tilde q^*_{2,S_2}$, are the solutions to \Cref{eq:twoschoolproblem_1} and \Cref{eq:twoschoolproblem_2}, respectively.

\end{lemma}

\begin{proof}{\textit{Proof.}}
We provide the proof for $\mathbf{P}= (P^1_{\full}, P^2_{\subb})$; the remaining cases are analogous. 
     Since $v_1>v_2$ and school $J_2$ uses $P^2_{\subb}$, all students who apply to $J_1$ also apply to $J_2$ but not vice versa. All students have incentive to apply to $J_2$. 

    We begin with school $J_1$. Note that every student with $\alpha^* (\tilde q_{\subb}, g) =1$ admitted to $J_1$ will accept the offer since $v_1 > v_2$. 
    Therefore $J_1$ can pick any student as long as the student has applied to $J_1$. Let $G_1$ denote the CDF of all students with skill estimate  $\tilde q_{\full}$ who apply to school $J_1$ at equilibrium.  
    
    We show that $J_1$ admits the top mass $C_1^{-}\triangleq \min\{C_1, \int_{- \infty}^{\infty} \sum_g \pi_g a^*(\tilde q_{\subb},g; \mathbf{P}) \diff \tilde q_{\subb} \}$  with the highest skill estimates $\tilde q_{\full}$. I.e., $Y_1^*(\tilde q_{\full}; \mathbf{P}) = \indic\{\tilde q_{\full} \geq \tilde q^*_{1,\full}\}$, where $\tilde q^*_{1, \full}$ satisfies 
    $$1-G_1(\tilde q^*_{1,\full}) = C_1^{-}.$$

    First note that any other threshold-based policy is infeasible or suboptimal. This is because $Y^*_1$ either admits all applicants (in the case where $C_1^{-} < C_1$) or the capacity constraint in \Cref{eq:twoschoolproblem_1} binds.
    Next, consider any feasible selection policy $Y_1$ and 
    observe that under any $Y_1$ the academic merit objective in \Cref{eq:twoschoolproblem_1} can be written as
    $$\int_{-\infty}^{\infty} \tilde q_{\full} Y_1(\tilde q_{\full}; \mathbf{P}) \diff G_1(\tilde q_{\full}) = \int_{0}^{1} G_1^{-1} (s)  Y_1(G_1^{-1} (s); \mathbf{P}) \diff s, $$
    which is trivially convex in $Y_1$ and supermodular.
    By the threshold form of $Y^*_1$, $Y^*_1$ weakly majorizes any other feasible selection policy $Y_1$.
    Thus, by the Fan-Lorentz inequality \citep{fan1954integral}, it follows that
    \begin{equation*}
    \begin{split}
        \int_{-\infty}^{\infty} \tilde q_{\full} Y_1(\tilde q_{\full}; \mathbf{P}) \diff G_1(\tilde q_{\full}) &= \int_{0}^{1} G_1^{-1} (s)  Y_1(G_1^{-1} (s); \mathbf{P}) \diff s \\&\leq
        \int_{0}^{1} G_1^{-1} (s)  Y^*_1(G_1^{-1} (s); \mathbf{P}) \diff s \\& =  \int_{-\infty}^{\infty} \tilde q_{\full} Y^*_1(\tilde q_{\full}; \mathbf{P}) \diff G_1(\tilde q_{\full}),
      \end{split}
    \end{equation*}
    thus $Y_1^*$ is optimal.
\end{proof}

\begin{proposition}[Proposition~\ref{thm.twoschools}]
\label{thm.twoschools.full}
  Consider the setting with two schools defined above with testing policies $\mathbf{P}=(P^1_{\full}, P^2_{\subb})$. Then, there exists a unique equilibrium $(\alpha^*, \mathbf{Y}^*)$ with the following properties:
  \begin{itemize}
      \item [(i)] School $J_i$'s selection policy $Y^*_i$ takes a threshold form: $Y^*_i (\tilde q_{S_i}; \mathbf{P})= \indic\{\tilde q_{S_i} \geq \tilde q^*_{i, S_i}\}$ where $\tilde q^*_{1, \full}$, $\tilde q^*_{2, \subb}$, are the solutions to \eqref{eq:twoschoolproblem_1}, \eqref{eq:twoschoolproblem_2}, respectively.
      \item[(ii)] Students in group $g$ take the test and apply to school $J_1$,  if and only if one of the following conditions holds:
      \begin{itemize}
          \item [1)]  either  $\tilde q^*_{2 ,\subb} > \tilde q_{\subb} \geq \underline{q}_{l}^g $ where
            \begin{equation}
            \label{eq.q_l_thres}
    \underline{q}_{l}^g = \tilde q_{1,\full}^* - \Phi^{-1} \left(1 - \frac{c_g}{v_1}\right) \left ( \frac{\sigma_{gK}^{-2}}{\sigma^{-2} + \sum_{k \in \full} \sigma_{gk}^{-2}}\right)
\sqrt{\sigma^{2}_{gK} + \frac{1}{\sigma^{-2} + \sum_{k \in \subb} \sigma_{gk}^{-2}} }; 
\end{equation}

\item[2)] or $\tilde q_{\subb} \geq \max\{\underline{q}_{h}^g , \tilde q^*_{2,\subb}\}$, where
\begin{equation}
\label{eq.q_h_thres}
    \underline{q}_{h}^g = \tilde q_{1, \full}^* - \Phi^{-1} \left(1 - \frac{c_g}{v_1-v_2}\right) \left ( \frac{\sigma_{gK}^{-2}}{\sigma^{-2} + \sum_{k \in \full} \sigma_{gk}^{-2}}\right)
\sqrt{\sigma^{2}_{gK} + \frac{1}{\sigma^{-2} + \sum_{k \in \subb} \sigma_{gk}^{-2}} }.
\end{equation}
      \end{itemize}
     
     Furthermore, $\underline{q}_{l}^g < \underline{q}_{h}^g $ for both groups $g \in \{A,B\}$.

  \item [(iii)] 
  The fraction of students in group $g$ who have $\tilde q_{\subb} > q_a > \underline{q}_l^g$ and get admitted to school $J_1$ equals
     $$ D_g (\hat{a}_g (q_a)) \triangleq \pi_g \tilde\sigma_g \Phi \left( \frac{\hat{a}_g (q_a)+ b_g \mu }{\sqrt{1 + \tilde \sigma_g ^2 b_g^2}}\right) -  \pi_g \tilde\sigma_g \Phi_2 \left( \frac{\hat{a}_g (q_a)+b_g\mu}{\sqrt{1 + \tilde \sigma_g ^2 b_g^2}}, \frac{\tilde q^*_{1, \full} -\mu}{\tilde \sigma_g}; -  \frac{\tilde \sigma_g b_g}{\sqrt{1 + \tilde \sigma_g ^2 b_g^2}} \right), $$
    where $b_B $ and $\tilde \sigma_B$ are defined as in Proposition~\ref{prop.div_merit_strategic}, and 
\begin{equation*}
    \hat{a}_g (q_a) = \frac{\mu \sigma^{-2} -q_a }{\frac{\sqrt{\sum_{k \in \subb} \sigma_{gk}^{-2}}}{\sigma^{-2} + \sum_{k \in \subb} \sigma_{gk}^{-2}} \sqrt{1 + \frac{ \sum_{k \in \subb} \sigma_{gk}^{-2}}{\sigma^{-2} + \sum_{k \in \full} \sigma_{gk}^{-2}}} }.
\end{equation*}
    
  Conditional on  $\tilde q^*_{2, \subb} > \underline{q}^B_h$, school $J_1$ is more diverse than $J_2$ if and only if
  \small
  \begin{equation*}
  \begin{split}
  \frac{ D_B (\hat{\alpha}_B(\underline{q}_l^B)) }{\Phi \left ((\mu - \tilde q^*_{2,\subb} )/  \sigma \sqrt{\frac{\sum_{k \in \subb} \sigma_{Bk}^{-2}}{\sigma^{-2} + \sum_{k \in \subb} \sigma_{Bk}^{-2}}}\right) - D_B (\hat{\alpha}_B(\tilde q_{2, \subb}^*))} > \frac{C_1} {C_2}.
   \end{split}
  \end{equation*}
  \normalsize
Otherwise, school $J_1$ is more diverse than $J_2$ if and only if
\begin{equation*}
\begin{split}
\frac{ D_B (\hat{a}_B (\underline{q}_l^B)) - 
     D_B (\hat{a}_B (\underline{q}_h^B)) +  D_B (\hat{a}_B (\tilde q^*_{2, \subb})) + \tilde \sigma_B \left( \Phi\left(\frac{\hat{a}_B (\underline{q}_h^B) +b_B\mu}{\sqrt{1 + \tilde \sigma_B ^2 b_B^2}}\right) -  \Phi\left(\frac{\hat{a}_B (\tilde q^*_{2, \subb}) +b_B\mu}{\sqrt{1 + \tilde \sigma_B ^2 b_B^2}}\right) \right)}{\Phi\left ((\mu- \tilde q^*_{2, \subb} )/  \sigma \sqrt{\frac{\sum_{k \in \subb} \sigma_{Bk}^{-2}}{\sigma^{-2} + \sum_{k \in \subb} \sigma_{Bk}^{-2}}}\right)  - D_B (\hat{\alpha}_B (\underline{q}_h^B))} > \frac{C_1}{C_2}.
\end{split}
\end{equation*}

  \item [(iv)] There exist instances of the model parameters such that school $J_1$ achieves lower academic merit for group $g$ than $J_2$.
  In particular, assume that $\tilde q^*_{2, \subb} > \underline{q}_h^g$. Then, $J_1$ achieves lower academic merit for group $g$ than $J_2$ if and only if 
  \begin{equation*}
  \lambda (a_g (\tilde q^*_{1, \full}), b_g, \tilde \sigma_g, \tau_g)   <  \kappa (a_g (\tilde q^*_{2, \subb}), b_g, \tilde \sigma_g, \tau_g), 
  \end{equation*}
  where 
\begin{equation*} 
\begin{split}
    \kappa ( a_g' (q), b'_g, \tilde \sigma_g, \tau_g) \triangleq &\frac{\tilde \sigma_g^2 b'_g}{\tau_g\sqrt{1 + \tilde \sigma_g^2 (b'_g)^2}} \phi \left( \frac{a'_g(q) +{b'_g} \mu}{\sqrt{1+ \tilde \sigma_g^2 (b'_g)^2}}\right) \Phi \left ( 
\tilde q^*_{1, \full} \sqrt{1 + \tilde \sigma_g ^2 (b'_g)^2} + \frac{b'_g \tilde \sigma_g (a'_g(q) + b'_g \mu)}{\sqrt{1+\tilde \sigma_g^2 (b'_g)^2}}
\right)\\
&- \Phi\left (a'_g(q) + b'_g \mu + b'_g \tilde \sigma_g \tilde q^*_{1, \full}\right) \frac{\tilde \sigma_g^2 \phi(q)}{\tau_g} + \frac{\mu (1-\tau_g)}{ \tau_g},
\end{split}
\end{equation*}
\begin{equation*}
    a'_g (q) = \frac{\mu\sigma^{-2} - q\left( \sigma^{-2} + \sum_{k \in \subb} \sigma_{gk}^{-2}  \right) }{\sqrt{ \sum_{k \in \subb} \sigma_{gk}^{-2} }},  b'_g  = \sqrt{ \sum_{k \in \subb} \sigma_{gk}^{-2} }.
\end{equation*}
\normalsize  
  \end{itemize}
\end{proposition}

\begin{proof}{\textit{Proof of Part (i).}} The result was already proved in Lemma~\ref{lemma:twoschoools_thresholdpolicies}.
\end{proof}

\begin{proof}{\textit{Proof of Part (ii).}} 
At equilibrium, all students apply to the test-free school $J_2$. By Part (i), only students with $\tilde q_{\subb} > \tilde q^*_{2, \subb}$ get accepted. Thus, we have two separate cases:
\begin{itemize}
    \item [--] \textit{Students who get rejected by $J_2$}:  Students in group $g$  with $\tilde q_{\subb} < \tilde q^*_{2, \subb}$ decide to take the test (and apply to $J_1$) if and only if
    $$v_1 \prob (\tilde q_{\full} \geq \tilde q_{1, \full}^* \mid \tilde q_{\subb}, g, P^1_{\full}) - c_g \geq 0,$$
    thus the problem reduces to the single-school setting. By Lemma~\ref{lemma.thres.strategy}, the above condition translates to $\tilde q^*_{2, \subb} > \tilde q_{\subb} \geq \underline{q}^g_l$, where \Cref{eq.q_l_thres} follows analogously to  \Cref{eq.decision_threshold_taking_test} for $v=v_1$.

    \item [--]  \textit{Students who get accepted to $J_2$}:  Students in group $g$ with $\tilde q_{\subb} \geq \tilde q^*_{2, \subb}$ decide to take the test if and only if 
    \begin{equation}
    \label{eq.student.decision.test.fullfull}
    \begin{split}
        &v_1 \prob (\tilde q_{\full} \geq \tilde q_{1, \full}^* \mid \tilde q_{\subb}, g, P^1_{\full}) + v_2 \prob (\tilde q_{\full} <\tilde q_{1, \full}^* \mid \tilde q_{\subb}, g, P^1_{\full}) - c_g \geq v_2\\
        \Leftrightarrow & \prob (\tilde q_{\full} \geq \tilde q_{1, \full}^* \mid \tilde q_{\subb}, g, P^1_{\full}) \geq  \frac{c_g}{v_1-v_2}\\
        \Leftrightarrow & \tilde q_{\subb} \geq \underline{q}^g_h,
        \end{split}
    \end{equation}
    where $\underline{q}^g_h$ in \Cref{eq.q_h_thres} follows similarly to \Cref{eq.decision_threshold_taking_test} by replacing $v$ with $v_1-v_2$. 
\end{itemize}

The property that $\underline{q}_{l}^g < \underline{q}_{h}^g $, $g \in \{A,B\}$, follows directly from comparing \Cref{eq.q_h_thres} to \Cref{eq.q_l_thres} and using that $0<v_1 - v_2< v_1$.

Finally, note that the equilibrium described by Parts (i) and (ii) is unique. This follows using arguments similar to Lemma~\ref{lemma:reduced_equilibria_single}.
\end{proof}

\begin{proof}{\textit{Proof of Part (iii).}}
 Part (ii), together with the assumption that $\tilde q_{2, \subb}^* > \underline{q}^B_h$, implies that students in $B$ apply to $J_1$ if and only if $\tilde q_{\subb}  \geq \underline{q}^B_l$.
Thus, we can apply 
Step 2 in Part (i) of Proposition~\ref{prop.div_merit_strategic} and find that diversity at school $J_1$ equals
\begin{equation}
\label{eq.div.J1.twoschool.strategic}
\begin{split}
    \tau^1_B &= \frac{\pi_B \tilde \sigma_B }{C_1} \Phi\left(\frac{a_B (\tilde q^*_{1, \full}) +b_B\mu}{\sqrt{1 + \tilde \sigma_B ^2 b_B^2}}\right) - \frac{\pi_B \tilde \sigma_B }{C_1} \Phi_2 \left( \frac{a_B (\tilde q^*_{1, \full})+b_B\mu}{\sqrt{1 + \tilde \sigma_B ^2 b_B^2}}, \frac{\tilde q^*_{1,\full} -\mu}{\tilde \sigma_B}; -\frac{\tilde \sigma_B b_B}{\sqrt{1 + \tilde \sigma_B ^2 b_B^2}} \right)\\
    & = \frac{\pi_B \tilde \sigma_B }{C_1} \Phi\left( \frac{\hat{a}_B ( \underline{q}_l^B) +b_B\mu}{\sqrt{1 + \tilde \sigma_B ^2 b_B^2}}\right) - \frac{\pi_B \tilde \sigma_B }{C_1} \Phi_2 \left( \frac{\hat{a}_B ( \underline{q}_l^B) +b_B\mu}{\sqrt{1 + \tilde \sigma_B ^2 b_B^2}}, \frac{\tilde q^*_{1,\full} -\mu}{\tilde \sigma_B}; -\frac{\tilde \sigma_B b_B}{\sqrt{1 + \tilde \sigma_B ^2 b_B^2}} \right) \\
    & = D_B  (\hat{a}_B ( \underline{q}_l^B) )
\end{split}
\end{equation}

 Next we find the diversity level $\tau^2_B$ at $J_2$. The total mass of students from group $B$ who are eligible for acceptance at $J_2$ is 
 $\pi_B\Phi\left ((\mu - \tilde q^*_{2, \subb} )/  \sigma \sqrt{\frac{\sum_{k \in \subb} \sigma_{Bk}^{-2}}{\sigma^{-2} + \sum_{k \in \subb} \sigma_{Bk}^{-2}}}\right)$. However, only students who do not get admitted to $J_1$ actually enroll in $J_2.$
  Thus,
 
  \begin{equation*}
     \tau_B^2 = \frac{ \pi_B}{C_2} \left (\Phi\left ((\mu - \tilde q^*_{2, \subb} )/  \sigma \sqrt{\frac{\sum_{k \in \subb} \sigma_{Bk}^{-2}}{\sigma^{-2} + \sum_{k \in \subb} \sigma_{Bk}^{-2}}}\right)  \right) 
     \end{equation*}
 Requiring that $\tau^1_B > \tau_B^2$ gives us 
the condition in the statement and thus concludes the proof for the case $\tilde q^*_{2, \subb} > \underline{q}_h^B$.

For the complementary case where $\underline{q}_h^B \geq \tilde q^*_{2, \subb} > \underline{q}_l^B$, we need to take into account that students with $\tilde q_{\subb} \in [\tilde q^*_{2, \subb}, \underline{q}_h^B)$ apply and get admitted only to $J_2$. Formally, applying Equation (10,010.1) in \cite{owen1980table}, we find that
\begin{equation*}
\begin{split}
    \tau_B^1 &= \frac{\pi_B}{C_1} \left(\tilde \sigma_B  \Phi\left(\frac{\hat{a}_B (\underline{q}_l^B) +b_B\mu}{\sqrt{1 + \tilde \sigma_B ^2 b_B^2}}\right) - \tilde \sigma_B \Phi_2 \left( \frac{\hat{a}_B(\underline{q}_l^B)+b_B\mu}{\sqrt{1 + \tilde \sigma_B ^2 b_B^2}}, \frac{\tilde q^*_{1, \full} -\mu}{\tilde \sigma_B}; -\frac{\tilde \sigma_B b_B}{\sqrt{1 + \tilde \sigma_B ^2 b_B^2}} \right) \right. \\
    & + \tilde \sigma_B \Phi_2 \left( \frac{\hat{a}_B(\underline{q}_h^B)+b_B\mu}{\sqrt{1 + \tilde \sigma_B ^2 b_B^2}}, \frac{\tilde q^*_{1, \full} -\mu}{\tilde \sigma_B}; -\frac{\tilde \sigma_B b_B}{\sqrt{1 + \tilde \sigma_B ^2 b_B^2}} \right) \\
    & \left. - \tilde \sigma_B \Phi_2 \left( \frac{\hat{a}_B(\tilde q^*_{2, \subb})+b_B\mu}{\sqrt{1 + \tilde \sigma_B ^2 b_B^2}}, \frac{\tilde q^*_{1, \full} -\mu}{\tilde \sigma_B}; -\frac{\tilde \sigma_B b_B}{\sqrt{1 + \tilde \sigma_B ^2 b_B^2}} \right)\right)\\
    & = \frac{1}{C_1} \left ( D_B (\hat{a}_B (\underline{q}_l^B)) - 
     D_B (\hat{a}_B (\underline{q}_h^B)) +  D_B (\hat{a}_B (\tilde q^*_{2, \subb}))\right) \\&+ \frac{\tilde \sigma_B}{C_1} \left( \Phi\left(\frac{\hat{a}_B (\underline{q}_h^B) +b_B\mu}{\sqrt{1 + \tilde \sigma_B ^2 b_B^2}}\right) -  \Phi\left(\frac{\hat{a}_B (\tilde q^*_{2, \subb}) +b_B\mu}{\sqrt{1 + \tilde \sigma_B ^2 b_B^2}}\right) \right).
\end{split}
\end{equation*}
For school $J_2$, we have that
\begin{equation*}
\begin{split}
    \tau_B^2 = &\frac{1}{C_2} \left(\pi_B \Phi\left ((\mu- \tilde q^*_{2, \subb} )/  \sigma \sqrt{\frac{\sum_{k \in \subb} \sigma_{Bk}^{-2}}{\sigma^{-2} + \sum_{k \in \subb} \sigma_{Bk}^{-2}}}\right) - \tilde \sigma_B  \Phi\left(\frac{a_B (\underline{q}_h^B) +b_B\mu}{\sqrt{1 + \tilde \sigma_B ^2 b_B^2}}\right) \right. \\
    & \left. + \tilde \sigma_B \Phi_2 \left( \frac{a_B(\underline{q}_h^B)+b_B\mu}{\sqrt{1 + \tilde \sigma_B ^2 b_B^2}}, \frac{\tilde q^*_{1,\full} -\mu}{\tilde \sigma_B}; -\frac{\tilde \sigma_B b_B}{\sqrt{1 + \tilde \sigma_B ^2 b_B^2}} \right) \right)\\
    = & \frac{\pi_B}{C_2} \left(\Phi\left ((\mu- \tilde q^*_{2, \subb} )/  \sigma \sqrt{\frac{\sum_{k \in \subb} \sigma_{Bk}^{-2}}{\sigma^{-2} + \sum_{k \in \subb} \sigma_{Bk}^{-2}}}\right)  - D_B (\hat{\alpha}_B (\underline{q}_h^B))\right)
\end{split}
\end{equation*}
where the last two terms correspond to the students with $\tilde q_{\subb} \geq \underline{q}_h^B$ that were admitted by $J_1$.
Requiring that $\tau_B^1> \tau_B^2$ gives the result.
\end{proof}

\begin{proof}{\textit{Proof of Part (iv).}}
As in Part (iii), if $\tilde q^*_{2, \subb} > \underline{q}^g_h$, then  students in $g$ apply to $J_1$ if and only if $\tilde q_{\subb} > \underline{q}_l^g$.
However, only a fraction of them (equal to $C_1$) will get admitted. In the right panel of Figure~\ref{fig.merit.twoschools.strategic}, this mass of admitted students is depicted in yellow.
From Proposition~\ref{prop.div_merit_strategic}, it follows that the academic merit of group $g$ in the admitted class at $J_1$ is
$$  \lambda (a_g, b_g, \tilde \sigma_g, \tau_g).$$

The academic merit of the admitted class in $J_2$ equals the expected skill of the students with $\tilde q_{\subb}>\tilde q^*_{2, \subb}$ who do not get admitted to $J_1$ (this is depicted in purple in Figure~\ref{fig.merit.twoschools.strategic}). Mathematically, we can find the merit of the admitted class to $J_2$ as  the expected skill of all the students with $\tilde q_{\subb}>\tilde  q^*_{2, \subb}$ who have $\tilde q_{\full} < \tilde q^*_1$. Similarly to the proof of Part (ii) in Proposition~\ref{prop.div_merit_strategic}, we find that
\begin{equation*}
    \expec[q \mid Y_2 =1, Y_1 =0, g, P^2_{\subb}] =      \frac{1}{\tau_g} \int_{-\infty}^{\tilde q^*_{1, \full}} \tilde q
\Phi\left(\frac{ \frac{\mu \sigma^{-2} + \tilde q \sum_{k\in \subb} \sigma^{-2}_{gk} }{\sigma^{-2} + \sum_{k\in \subb} \sigma^{-2}_{gk}} -\tilde q^*_{2, \subb} }{\frac{\sqrt{\sum_{k\in \subb} \sigma^{-2}_{gk}}}{\sigma^{-2}+\sum_{k\in \subb} \sigma^{-2}_{gk}}}\right)
\phi\left(\frac{\tilde q - \mu}{\tilde \sigma_g}\right) \diff \tilde q,
\end{equation*}
which, using Equation (10,011.1) in \cite{owen1980table}, simplifies to $\kappa $ as given in the statement of the proposition.
\end{proof}

\begin{proposition}[Dropping tests with strategic students: Academic merit]
\label{thm.droptest.strategic.2school}
    Consider two schools, $J_1$ and $J_2$, both of which initially follow test-based policies $\mathbf{P}=(P_{\full}, P_{\full})$. 
    When schools optimize for academic merit only, the following statements hold: 
    \begin{itemize}
        \item [(i)]  
         Let $\tilde q^*_{1, \subb}$ be the solution to \eqref{eq:threshold_groupaware_noAA_Phi} for $P^1 = P_{\subb}$. School $J_1$ drops the test if and only if 
 \eqref{eq.condition.drop.twoschools.strategic.J1} in Theorem~\ref{thm.twoschool.strategic.equilibria} holds.
        Furthermore, conditional on $J_1$ dropping the test, then $J_2$ keeps the test if and only if 
        \eqref{eq.sub.full.J2.cond1} and \eqref{eq.sub.full.J2.cond2} in Theorem~\ref{thm.twoschool.strategic.equilibria} hold.

        \item[(ii)]  
        Let  $\tilde q^*_{2, \subb}$ be the admission threshold of $J_2$ under $\mathbf{P}=(P_{\full}, P_{\subb})$ as in \Cref{thm.twoschools}.
        School $J_2$ drops the test, while school $J_1$ keeps the test, if and only if 
    \eqref{eq.condition.drop.twoschools.strategic.J2.1}
   and  \eqref{eq.condition.drop.twoschools.strategic.J2.2}
       in Theorem~\ref{thm.twoschool.strategic.equilibria} hold.
       
       \item[(iii)] There exist functions $\underline{c}_A, \underline{c}_B: \mathbb{R}_+ \rightarrow \mathbb{R}_+$ such that neither school wants to drop the test if and only if $c_A \leq \underline{c}_A (c_B)$ and $c_B \leq \underline{c}_B (c_A)$.  
    \end{itemize}
\end{proposition}

\begin{proof}{\textit{Proof of part (i).}}
Under $\mathbf{P}=(P_{\full}, P_{\full})$,  students must take the test to be eligible to apply to schools
$J_1$ and $J_2$. A student with skill estimate $\tilde q_{\subb}$ takes the test if and only if
\begin{equation*}
        v_1 \prob ( \tilde q_{\full} \geq \tilde q^*_{1, \full} \mid \tilde q_{\subb}, g, \mathbf{P}) + v_2  \prob (\tilde q^*_{1, \full}  > \tilde q_{\full} \geq \tilde q^*_{2, \full} \mid \tilde q_{\subb}, g, \mathbf{P}) -c_g \geq 0.
\end{equation*}
Recall from \eqref{eq:q_tilde_cond_thetasubb_q} that
$\tilde q_{\full} \mid \tilde q_{\subb}, g, P_{\full} \sim \mathcal{N} ( \tilde q_{\subb}, \tilde \rho_g^2)$. Therefore, the above inequality becomes
\begin{equation}
\label{eq.testtaking.full.full}
    v_1 - c_g + (v_2 - v_1) \Phi\left( \frac{\tilde q_{1, \full}^* - \tilde q_{\subb}}{\tilde \rho_g}\right)
- v_2 \Phi \left( \frac{\tilde q_{2, \full}^* - \tilde q_{\subb}}{\tilde \rho_g}\right) \geq 0.
\end{equation}

Observe that, for each group $g$, there exists a threshold $\underline{q}_{\full, \full}^g$ such
that students from group $g$ take the test if and only $\tilde q_{\subb} \geq \underline{q}_{\full, \full}^g$. This follows from the fact that the LHS of \eqref{eq.testtaking.full.full} is strictly increasing in $\tilde q_{\subb}$ since $\Phi$ is strictly increasing and $v_2 < v_1$.
The thresholds $\underline{q}_{\full, \full}^g$, $\tilde q_{1,\full}^*$, $\tilde q_{2,\full}^*$ are the solution to the following system of equations:
\begin{equation}
\label{eq.qfullfull.g.def}
    \underline{q}_{\full, \full}^g  = q_{1, \full}^* - \tilde \rho_g \Phi^{-1} \left ( \frac{v_1 - c_g}{v_1-v_2} - \frac{v_2}{v_1 -v_2} \Phi \left( \frac{\tilde q_{2, \full}^* - \underline{q}_{\full, \full}^g}{\tilde \rho_g}\right)  \right)
\end{equation}
\begin{equation}
 C_1  = \sum_{g \in \{A,B\}}\int_{\underline{q}_{\full, \full}^g}^\infty \left (1 - \Phi \left( \frac{\tilde q_{1, \full}^* - \tilde q_{\subb} }{\tilde \rho_g}\right) \right) \phi \left (\frac{\tilde q_{\subb} - \mu }{\sigma \sqrt{\frac{\sum_{{k\in \subb}} \sigma_{gk}^{-2}  }{\sigma^{-2} +\sum_{{k\in \subb}} \sigma_{gk}^{-2}}}} \right) \diff \tilde q_{\subb}
\end{equation}
\small
\begin{equation}
\label{eq.J2.capacity.fullfull}
C_1+ C_2  = \sum_{g \in \{A,B\}} \int_{\underline{q}_{\full, \full}^g}^\infty \left (\Phi \left( \frac{\tilde q_{1, \full}^* - \tilde q_{\subb} }{\tilde \rho_g}\right) - \Phi \left( \frac{\tilde q_{2, \full}^* - \tilde q_{\subb} }{\tilde \rho_g}\right) \right) \phi \left (\frac{\tilde q_{\subb} - \mu }{\sigma \sqrt{\frac{\sum_{{k\in \subb}} \sigma_{gk}^{-2}  }{\sigma^{-2} +\sum_{{k\in \subb}} \sigma_{gk}^{-2}}}} \right) \diff \tilde q_{\subb}.
\end{equation}
\normalsize
where we used Lemma~\ref{lemma:perceived_skill_distribution} in the last two equations.

Next, we study when school $J_1$ has the incentive to drop the test. If it does, all students apply to $J_1$ since there is no test cost and $v_1 > v_2$. 
Thus, similar to Theorem~\ref{thm:threshold_char_without_aa}, school $J_1$ incurs an information loss (due to one missing feature) leading to academic merit decrease, but has the incentive to drop the test only if it now has access to higher-skilled candidates on average compared to its previous policy $P^1 = P_{\subb}$. 

The new admission threshold $\tilde q^*_{1, \subb}$ that school $J_1$ uses is the solution to \eqref{eq:threshold_groupaware_noAA_Phi}.  
The proof is by case analysis. 

\noindent\textit{Case (1):  $\tilde q^*_{1,\subb} < \underline{q}^g_{\full, \full}$, for both $g \in \{A, B\}$.} This case must be ruled out by our main assumption that $C_1, C_2$ are small enough such that both schools fill in their capacity. If case (i) were true, then school $J_1$ would not be able to fill their capacity under $\mathbf{P}=(P_{\full}, P_{\full})$, since  the LHS in \eqref{eq:threshold_groupaware_noAA_Phi} would be larger than the RHS in \eqref{eq.J2.capacity.fullfull}.

\noindent\textit{Case (2): $\tilde q^*_{1,\subb} \geq \underline{q}^g_{\full, \full}$, for both $g \in \{A, B\}$.}  
Under $P^1=P_{\subb}$, school $J_{1}$ admits only students with $\tilde q_{\subb} \ge \tilde q^{*}_{1,\subb}$. Since $\tilde q^{*}_{1,\subb} \ge \underline q^{g}_{\full,\full}$, every student who can be admitted by $J_{1}$ under $P^1=P_{\subb}$ is already contained in the set of test-taking students under $P^1=P_{\full}$. Thus,  dropping the test does not expand the pool of students from which $J_{1}$ can select its class; the relevant applicant pool is identical under $P^1=P_{\full}$ and $P^1=P_{\subb}$.

Given this fixed pool and capacity $C_{1}$, the only difference between the two policies is the information used to rank applicants. Under $P^1=P_{\full}$, school $J_{1}$ uses the more informative signal $\tilde q_{\full}$, whereas under $P^1=P_{\subb}$ it relies only on the coarser signal $\tilde q_{\subb}$. By Theorem~\ref{prop.test_group_aware_noAA}, when  capacity is fixed and there are no access barriers, $P_{\full}$ yields strictly higher expected academic merit than $P_{\subb}$.  Consequently, $J_{1}$ has no incentive to drop the test in this case, if $J_2$ also requires it.

\noindent\textit{Case (3): $\tilde q^*_{1,\subb} <\underline{q}^g_{\full, \full}$, for exactly one fixed group $g^* \in \{A, B\}$.} First, observe that this condition is equivalent to the second part of \eqref{eq.condition.drop.twoschools.strategic.J1}. Thus, school $J_1$ will expand its pool of group $g^*$ students by dropping the test. However, it might not necessarily improve its academic merit due to information loss. School $J_1$ improves academic merit by dropping the test if and only if
\begin{equation*}
\begin{split}
   & \sum_{g \in \{A,B\}} \pi_g\int_{\underline{q}_{\full, \full}^g}^\infty 
       \left( \int_{\tilde q_{1, \full}^*}^\infty
        \tilde q_{\full}  \phi \left( \frac{\tilde q_{ \full} - \tilde q_{\subb} }{\tilde \rho_g}\right) \diff \tilde q_{\full}\right) \phi \left (\frac{\tilde q_{\subb} - \mu }{\sigma \sqrt{\frac{\sum_{{k\in \subb}} \sigma_{gk}^{-2}  }{\sigma^{-2} +\sum_{{k\in \subb}} \sigma_{gk}^{-2}}}} \right) \diff \tilde q_{\subb}\\
        &<\sum_{g \in \{A,B\} }\pi_g\int_{\tilde q^*_{1, \subb}}^\infty \tilde q_{\subb} \phi \left (\frac{\tilde q_{\subb} - \mu }{\sigma \sqrt{\frac{\sum_{{k\in \subb}} \sigma_{gk}^{-2}  }{\sigma^{-2} +\sum_{{k\in \subb}} \sigma_{gk}^{-2}}}} \right) \diff \tilde q_{\subb}.
\end{split}
\end{equation*}
Since $\underline{q^*}_{\full, \full}^g$ is strictly increasing in $c_{g^*}$ (see \eqref{eq.qfullfull.g.def}), the LHS is strictly decreasing in  $c_{g^*}$ given fixed capacity $C_1$. Note that for $c_{g^*} = 0$, the LHS is larger than the RHS; for $c_{g^*} \rightarrow v_2$, the RHS becomes larger than the LHS, since no students from group $g^*$ take the test. Due to the continuity of the LHS in $c_{g^*}$, the intermediate value theorem implies that there exists a $\hat{c}_{g^*}$ such that the above inequality holds for all $c_{g^*}\geq \hat{c}_{g^*}$. 

Finally, we study when school $J_2$ has the incentive to drop the test given that school $J_1$ has dropped the test. In particular, school $J_1$ drops the test only under case (3). Under case (3), school $J_1$ admits all test-taking students from group $g^*$ under $(P_{\full}, P_{\full})$, i.e., students with $\tilde q_{\subb}  \geq \underline{q}_{\full, \full}^{g^*}$. Furthermore, under $(P_{\subb}, P_{\full})$, the pool of test-taking students decreases since no students from group $g^*$ apply to $J_2$, while for the other group $g' \neq g^*$, $\underline{q}^{g'}_{\subb, \full} > \underline{q}^{g'}_{\full, \full}$, where 
\begin{equation}
\underline{q}^{g'}_{\subb, \full} =  \tilde q_{2, \full}'-
    \tilde \rho_{g'}\Phi^{-1} \left(1 -c_{g'}/v_2\right)
\end{equation}
and $\tilde q_{2, \full}'$ is school $J_2$'s admission threshold under  $(P_{\subb}, P_{\full})$. 

It is now possible that the mass of test-taking students from $g'$ falls below $C_2$. Thus, to ensure that $J_2$ keeps the test, two conditions must hold: $M_{g'} (P_{\subb}, P_{\full}) \geq C_2$ (i.e., condition \eqref{eq.sub.full.J2.cond1}) and the expected merit of group $g'$ under $P^2=P_{\full}$ is higher than the expected merit of the two admitted groups under $P^2=P_{\subb}$, i.e., 
\begin{equation*}
\begin{split}
   &  \pi_{g'}\int_{\underline{q}_{\subb, \full}^g}^{\tilde q^*_{1, \subb}} 
       \left( \int_{\tilde q_{2, \full}^*}^\infty
        \tilde q_{\full}  \phi \left( \frac{\tilde q_{ \full} - \tilde q_{\subb} }{\tilde \rho_{g'}}\right) \diff \tilde q_{\full}\right) \phi \left (\frac{\tilde q_{\subb} - \mu }{\sigma \sqrt{\frac{\sum_{{k\in \subb}} \sigma_{gk}^{-2}  }{\sigma^{-2} +\sum_{{k\in \subb}} \sigma_{gk}^{-2}}}} \right) \diff \tilde q_{\subb}\\
        &>\sum_{g \in \{A,B\} }\pi_g\int_{\tilde q^*_{2, \subb}}^{q^*_{1, \subb}} \tilde q_{\subb} \phi \left (\frac{\tilde q_{\subb} - \mu }{\sigma \sqrt{\frac{\sum_{{k\in \subb}} \sigma_{gk}^{-2}  }{\sigma^{-2} +\sum_{{k\in \subb}} \sigma_{gk}^{-2}}}} \right) \diff \tilde q_{\subb},
\end{split}
\end{equation*}
which is equivalent to \eqref{eq.sub.full.J2.cond2}.
\end{proof}

\begin{proof}{\textit{Proof of Part (ii).}}  Fixing $P^1=P_{\full}$, school $J_2$ wants to drop the test if and only if
this expands its pool of high-skilled students. Similarly to case (3) in Part (i), this is equivalent to the first two conditions in \eqref{eq.condition.drop.twoschools.strategic.J2.1}. At the same time, school $J_1$ wants to keep the test if and only if  either it does not expand its pool by dropping the test or the academic merit of the admitted class decreases after dropping the test. The former is similar to case (2) in part (i) and occurs if and only if the first condition in \eqref{eq.condition.drop.twoschools.strategic.J2.2} holds. The latter holds if and only if the first condition in \eqref{eq.condition.drop.twoschools.strategic.J2.2} does not hold but
\begin{equation*}
\begin{split}
   & \sum_{g \in \{A,B\}} \pi_g\int_{\underline{q}_{\full, \full}^g}^\infty 
       \left( \int_{\tilde q_{1, \full}^*}^\infty
        \tilde q_{\full}  \phi \left( \frac{\tilde q_{ \full} - \tilde q_{\subb} }{\tilde \rho_g}\right) \diff \tilde q_{\full}\right) \phi \left (\frac{\tilde q_{\subb} - \mu }{\sigma \sqrt{\frac{\sum_{{k\in \subb}} \sigma_{gk}^{-2}  }{\sigma^{-2} +\sum_{{k\in \subb}} \sigma_{gk}^{-2}}}} \right) \diff \tilde q_{\subb}\\
        &<\sum_{g \in \{A,B\} }\pi_g\int_{\tilde q^*_{1, \subb}}^\infty \tilde q_{\subb} \phi \left (\frac{\tilde q_{\subb} - \mu }{\sigma \sqrt{\frac{\sum_{{k\in \subb}} \sigma_{gk}^{-2}  }{\sigma^{-2} +\sum_{{k\in \subb}} \sigma_{gk}^{-2}}}} \right) \diff \tilde q_{\subb}.
\end{split}
\end{equation*}
Similarly to case (3) in Part (i), the last inequality holds if and only if $c_g \leq \hat{c}_g''$. 
\end{proof}

\begin{proof}{\textit{Proof of Part (iii).}}
By Part (i), school $J_1$ wants to keep the test in Cases (1) and (2). It also wants to keep the test in Case (3) when \eqref{eq.condition.drop.twoschools.strategic.J1} does not hold. Equivalently, by the continuity and monotonicity of $M_g$ in $c_g, c_{g'}$, if $J_1$ wants to keep the test conditional on $P^2 = P_{\full}$,  there must exist continuous functions $\underline{c}^1_g: \mathbb{R} \rightarrow$ such that  $c_A \leq \underline{c}^1_A (c_B)$ and $c_B \leq \underline{c}^1_B (c_A)$. Similarly, conditional on $P^1 = P_{\full}$, 
and using a similar argument to Case (3) in Part (i) and the case of school $J_1$ above, there must exist
 $\underline{c}^2_g: \mathbb{R} \rightarrow$ such that school $J_2$ 
wants to keep the test if and only if  $c_A \leq \underline{c}^2_A (c_B)$ and $c_B \leq \underline{c}^2_B (c_A)$. 
 
 Finally, $(P_{\full},P_{\full})$ is an equilibrium if and only if both schools prefer to keep the test.  
This holds if and only if 
\[
c_A \le \underline{c}_A(c_B)
\quad\text{and}\quad
c_B \le \underline{c}_B(c_A),
\]
where 
$\underline{c}_A := \min\{\underline{c}^1_A, \underline{c}^2_A\}$ 
and 
$\underline{c}_B := \min\{\underline{c}^1_B, \underline{c}^2_B\}$.
\end{proof}

\thmtwoschoolsstrategicequlibria*

\begin{proof}
     The proof follows from   \Cref{thm.droptest.strategic.2school}.
\end{proof}

\begin{proposition}[Dropping tests with strategic students: Diversity]
\label{prop.strategic.twoschool.diversity}
        Consider two schools, $J_1$ and $J_2$, both of which initially follow test-based policies $\mathbf{P}=(P_{\full}, P_{\full})$. When schools optimize for diversity only, 
            school $J_1$ drops the test if and only if    \small
            \begin{equation*}
                \frac{\pi \tilde \sigma_B }{C_1} \left(\Phi\left(\frac{a_B+b_B\mu}{\sqrt{1 + \tilde \sigma_B ^2 b_B^2}}\right) -  \Phi_2 \left( \frac{a_B+b_B\mu}{\sqrt{1 + \tilde \sigma_B ^2 b_B^2}}, \frac{\tilde q^*_{1, \full} -\mu}{\tilde \sigma_B}; -\frac{\tilde \sigma_B b_B}{\sqrt{1 + \tilde \sigma_B ^2 b_B^2}} \right)\right) < 1-  \Phi \left( \Phi^{-1}(1-C_1) \sqrt{(1-\pi) \frac{\tilde{\sigma}_{A, \subb}}{ \tilde{\sigma}_{B, \subb}}+ \pi } \right),
            \end{equation*}
            \normalsize
            where $\tilde{\sigma}_{g, \subb}:= \sigma \frac{\sum_{k\in \subb} \sigma_{gk}^{-2}}{\sigma^{-2} + \sum_{k\in \subb} \sigma_{gk}^{-2}}$ and $a_B=a_B(\tilde q_{1, \full}^*)$, $b_B$, $\tilde \sigma_B$ are defined similarly to  Proposition~\ref{prop.div_merit_strategic}.

\end{proposition}

\begin{proof}
Under $P^1=P_{\full}$, similarly to  \eqref{eq.div.J1.twoschool.strategic}, the diversity level at $J_1$ equals 
\begin{equation*}
      \tau_B^1 (P_{\full}, P_{\full}) = \frac{\pi \tilde \sigma_B }{C_1} \Phi\left(\frac{a_B+b_B\mu}{\sqrt{1 + \tilde \sigma_B ^2 b_B^2}}\right) - \frac{\pi \tilde \sigma_B }{C_1} \Phi_2 \left( \frac{a_B+b_B\mu}{\sqrt{1 + \tilde \sigma_B ^2 b_B^2}}, \frac{\tilde q^*_{1, \full} -\mu}{\tilde \sigma_B}; -\frac{\tilde \sigma_B b_B}{\sqrt{1 + \tilde \sigma_B ^2 b_B^2}} \right).
\end{equation*}
If $J_1$ drops the test, then all students apply to $J_1$. Since $v_1 > v_2$, $J_1$ does not compete with $J_2$, thus under $P^1= P_{\subb}$, the diversity level at $J_1$ equals
$$\tau_B^1 (P_{\subb}, P_{\full})= 1-  \Phi \left( \Phi^{-1}(1-C) \sqrt{(1-\pi) \frac{\tilde{\sigma}_{A, \subb}}{ \tilde{\sigma}_{B, \subb}}+ \pi } \right)$$
as in the single school setting without barriers
(see the proof of Part (i) in Theorem~\ref{prop.test_group_aware_noAA}).
Requiring that $\tau_B^1 (P_{\subb}, P_{\full}) > \tau_B^1 (P_{\full}, P_{\full})$ gives the  statement.
\end{proof}

\section{Group-unaware estimation}
\label{app.group_unaware}

In the main text, we primarily consider a ``group-aware'' estimation procedure, in which the school uses students' group membership in its estimation procedure (and thus is able to plug in group-specific noise biases and variances). We now briefly discuss ``unaware'' estimation when it cannot do so. Ignoring group attributes is an oft-proposed but often problematic policy proposal to combat bias in machine learning tasks~\citep{corbett2018measure}, and so we evaluate its consequences.

Ignoring group membership complicates the skill estimation challenge. When the feature distributions differ across groups but the school cannot observe the group of a student,
the resulting estimated skill distribution is a mixture of Normal distributions. The mixture weights depend on the noise means and variances of each group $g$.
In contrast to the group-aware case, where the school manages to correct for the feature noise biases (but not variance), the biases now play an important rule in each feature's implications.

We derive this distribution below. However, we  primarily study the effects through simulation in Figure~\ref{fig:policycompare_barriers_withunaware}.

\smallskip
\noindent\textbf{Unaware estimation derivation.} Conditional on the true skill level $q$, the features are still distributed according to a group-specific Normal distribution:
\[\theta_{k} | q,g \sim N(q + \mu_{gk}, \sigma_{gk}^2) \ \ \ \ \ \ \forall k = 1 \dots K \]

But under group-unaware estimation, the school does not know or cannot use $g$,  so the posterior is now a mixture of Normal distributions. Specifically, let $f(q \mid \thetaset)$ denote the pdf of the posterior distribution, $q \mid \thetaset$; similarly, we use the notation $f(\thetaset)$ and $f(q \mid \thetaset, g)$. Thus,
\begin{align*}
	f(q | \thetaset) &= 	\sum_{g\in\{A,B\}} f(q | \thetaset,g)\prob(g | \thetaset)\\
	&= \sum_{g\in\{A,B\}} f(q | \thetaset,g) \left[\frac{f( \thetaset | g) \prob(g)}{f(\thetaset)}\right]\\
	&= \sum_{g\in\{A,B\}} w(\thetaset,g) f(q | \thetaset,g), & w(\thetaset,g) \triangleq \left[\frac{f( \thetaset | g) \prob(g)}{f(\thetaset)}\right].
\end{align*}

Then, the posterior $q | \thetaset$ is distributed as a mixture of Normal distributions, where each Normal is as in the group-aware case:
\[q | \thetaset \sim \sum_{g\in\{A,B\}} w(\thetaset,g) N\left(\tilde q(\thetaset,g), \tilde\sigma^2(\thetaset,g)\right)
\]

For the weights, we find that
\[w(\thetaset,g) \triangleq \frac{f( \thetaset | g) \prob(g)}{f(\thetaset)}=\frac{\int_{\infty}^{\infty}\Pi_kf( \theta_k | g,q)\diff F(q) \cdot  \prob(g)}{f(\thetaset)}\]
and for $K$ features,
\begin{align}
	\int_{\infty}^{\infty}\Pi_kf( \theta_k | g,q)\diff F(q) &=	\frac{
		e^{\left(-\frac
			{ \sum_{k=1}^K \left[\left(\mu + \mu_{gk} - \theta_k\right)^{2} \sigma^{-2}\sigma_{gk}^{-2}\right]
				+ \sum_{k \neq \ell} \left[\left((\mu_{\ell g} - \theta_\ell) - (\mu_{gk} - \theta_k)\right)^{2} \sigma_{\ell g}^{-2}\sigma_{gk}^{-2}\right]
			}
			{2 \, {\left(\sigma^{-2} + \sum_{k=1}^K \sigma_{gk}^{-2}\right)}}
			\right)}}
	{2 \, (1-\pi)^{K/2} \sigma\left(\Pi_k \sigma_{gk}\right)\sqrt{\sigma^{-2} + \sum_{k=1}^K \sigma_{gk}^{-2}}} \label{eqn:weightskfeatures}
\end{align}

Thus, we have
\begin{align*}
	w(\thetaset,g) &\triangleq \frac{f( \thetaset | g) \prob(g)}{f(\thetaset)}=\frac{\int_{\infty}^{\infty}\Pi_kf( \theta_k | g,q)\diff F(q) \prob(g)}{f(\thetaset)} \\
	&\propto \frac{\prob(g)
		\exp{\left(-\frac
			{ \sum_{k=1}^K \left[\left(\mu + \mu_{gk} - \theta_k\right)^{2} \sigma^{-2}\sigma_{gk}^{-2}\right]
				+ \sum_{k \neq \ell} \left[\left((\mu_{\ell g} - \theta_\ell) - (\mu_{gk} - \theta_k)\right)^{2} \sigma_{\ell g}^{-2}\sigma_{gk}^{-2}\right]
			}
			{2 \, {\left(\sigma^{-2} + \sum_{k=1}^K \sigma_{gk}^{-2}\right)}}
			\right)}}
	{\left[\Pi_k \sigma_{gk}\right]\sqrt{\sigma^{-2} + \sum_{k=1}^K \sigma_{gk}^{-2}}}
\end{align*}

\smallskip
\noindent\textbf{Derivation for equation~\eqref{eqn:weightskfeatures}}. We explicitly show the algebra for $K=1$ and $K=2$ features, and the pattern continues for $K$ features.

\noindent\textit{For  one feature $\theta_1$:}
\begin{align*}
w(\theta_1,g) &\triangleq \frac{f( \theta_1 | g) \prob(g)}{f(\theta_1)}
=\frac{\int_{\infty}^{\infty}f( \theta_1 | g,q)\diff F(q) \prob(g)}{f(\theta_1)}\\
&=\frac{\frac{1}{\sqrt{2(1-\pi)(\sigma^2 + \sigma_{g1}^2)}}\exp\left[-{\frac{{\left(\mu + \mu_{g1} - \theta_{1}\right)}^{2}}{2(\sigma^2 + \sigma_{g1}^2)}}\right]\prob(g)}
{f(\theta_1)}
= \frac{\frac{1}{\sqrt{\sigma^2 + \sigma_{g1}^2}}\exp\left[-{\frac{{\left(\mu + \mu_{g1} - \theta_{1}\right)}^{2}}{2(\sigma^2 + \sigma_{g1}^2)}}\right]\prob(g)}
{\sum_g\left[{\frac{1}{\sqrt{\sigma^2 + \sigma_{g1}^2}}\exp\left[-{\frac{{\left(\mu + \mu_{g1} - \theta_{1}\right)}^{2}}{2(\sigma^2 + \sigma_{g1}^2)}}\right]\prob(g)}\right]}.
\end{align*}

 \noindent\textit{For two features $\theta_1, \theta_2$:}
\begin{align*}
w(\thetaset,g) &\triangleq \frac{f( \thetaset | g) \prob(g)}{f(\thetaset)}=\frac{\int_{\infty}^{\infty}\Pi_kf( \theta_k | g,q)\diff F(q) \prob(g)}{f(\thetaset)},\\
\end{align*}
\begin{align*}
\int_{\infty}^{\infty}\Pi_kf( \theta_k | g,q)\diff F(q)
&=	\frac{
	e^{\left(-\frac
		{\left({\left(\mu_{g1} - \theta_1\right)} - \left(\mu_{g2} -\theta_2\right)\right)^2 \sigma_{g1}^{-2} \sigma_{g2}^{-2}
			+ {\left(\mu + \mu_{g2} - \theta_2\right)^{2}} \sigma^{-2}\sigma_{g2}^{-2}
			+ {\left(\mu + \mu_{g1}- \theta_1\right)^{2}} \sigma^{-2}\sigma_{g1}^{-2} }
		{2 \, {\left(\sigma^{-2} + \sigma_{g1}^{-2} + \sigma_{g2}^{-2}\right)}}
		\right)}}
{2 \, (1-\pi) \sigma\sigma_{g1}\sigma_{g2}\sqrt{\sigma^{-2} + \sigma_{g1}^{-2} + \sigma_{g2}^{-2}}}.
\end{align*}

\section{General distributions}
\label{app.general.distributions}

\noindent\textbf{Extended model.}
We extend the model from Section~\ref{sec:baseline_model} to non-Normal distributions. In the current setting, each candidate is characterized by a (latent) \textit{true skill} $q$ drawn from a distribution $F^0$ with support $Q = [\underline{q}, \overline{q}]$ and mean $\mu$.\footnote{Formally, we assume that there exists a probability space $(Q, \mathcal{F}, \mathbb{P})$ on which $q$ is defined.} 
We assume that $F^0$ is common for both social groups.

For each candidate, the school has access to  $K$ observable \textit{features} $\thetaset=(\theta_k)_{k=1}^K$. Throughout this section, we thus assume that the school uses policy $P_{\full}$ and omit it from the notation.  

Conditional on the true skill level $q$ and group $g$, feature $\theta_k$ is independently drawn from a distribution $F^k_{q, g}$. Let $\Theta_k= [\underline{\theta}_k, \overline{\theta}_k]$  be the support of each feature $\theta_k$.
We assume that the distributions $F^0$, $\{F^k_{q,g}\}_{k=1}^K$ are common knowledge.
Without loss of generality and for the sake of simplicity, we  further assume that  $F^0$, $\{F^k_{q,g}\}_{k=1}^K$ are continuous (although being measurable would suffice).

At an aggregate level per group $g$, the information structure $\left ( \times_{k=1}^K \Theta_k, F^0, \{F^k_{q,g}\}_{k=1}^K\right)$ induces a \textit{skill estimate} distribution, $\hat{F}_g$, for candidates in group $g$, i.e., $\tilde q \mid g \sim \hat{F}_g$, where $\tilde q(\thetaset, g) \triangleq \expec [q \mid \thetaset, g]$ as in the main model. We also let $\hat{F} = (1-\pi) \hat{F}_A + \pi \hat{F}_B$.

\smallskip\noindent\textbf{Preliminaries.} We will need the following technical terms and properties.

\begin{definition}[\textbf{\protect\cite{blackwell1953equivalent}}]
$\{\Pi_q, q \in Q\}$ is \emph{sufficient} for $\{\Pi'_q, q \in Q\}$ if there exists a transformation $T(x, dy)$ such that for all $q \in Q$, $\Pi'_q (\cdot) = \int_X T(x, \cdot) \Pi_q (dx)$.
\end{definition}

\begin{lemma}[\textbf{\protect\cite{eckwert2004economic, zhang2009comparison}}]
\label{thm:zhang}
The following statements are equivalent:
\begin{itemize}
    \item $\{\Pi_q, q \in Q\}$ is sufficient for $\{\Pi'_q, q \in Q\}$;
    \item The distribution of posteriors $ \left \langle \Pi'_q \right\rangle$ second-order stochastically dominates $\left \langle \Pi_q \right\rangle$.
\end{itemize}
\end{lemma}

\begin{lemma}[\textbf{\protect\cite{gentzkow2016rothschild}}]
\label{lemma:MPS_distr_posteriors_to_posterior_mean}
If the distribution of posteriors $\langle \Pi_q  \rangle$ is a mean-preserving spread of $\langle \Pi'_q \rangle$, then the posterior mean distribution under $\langle \Pi_q \rangle$ is a mean-preserving spread of the posterior mean distribution under $\langle \Pi'_q \rangle$.
\end{lemma}

\begin{lemma}
\label{lemma:MPS_SSD}
Let $X$, $Y$ be two random variable with equal means $\expec[X]=\expec[Y]$, support $[\underline{q}, \overline{q}]$, and CDFs $F$ and $G$, respectively. Then, 
the following are equivalent:
\begin{itemize}
    \item[(i)] $Y \succ_{\textrm{SSD}} X$;
    \item[(ii)] $X$ is a mean-preserving spread of $Y$;
    \item[(iii)]  $\int_{\underline{q}}^{\overline{q}} u(y)\diff G(y) \geq \int_{\underline{q}}^{\overline{q}} u(x)\diff F(x)$ for every weakly increasing concave function $u: [\underline{q}, \overline{q}] \rightarrow \mathbb{R}$.
\end{itemize}
\end{lemma}

\begin{lemma}
\label{lemma:crossing_MPS}
Let $X$ and $Y$ be two random variables with support $[\underline{q}, \overline{q}]$ and CDFs $F$ and $G$, such that $Y$ is a mean preserving of $X$. Then,
$F$ crosses $G$ exactly once at a point $q_+ \in [\underline{q}, \overline{q}]$. For $q \in [\underline{q}, q_+)$, $F(q) > G(q)$ whereas for $q \in (q_+, \underline{q}]$,  $F(q) < G(q)$.
\end{lemma}

\smallskip
\noindent\textbf{Generalizing Proposition~\ref{prop:group_aware_noAA}.}
We are finally ready to prove a generalized version of Proposition~\ref{prop:group_aware_noAA}.

\begin{proposition}
Suppose that $(\tilde q \mid A) \prec_{\textrm{SSD}} (\tilde q \mid B) $ with crossing point $q_+$. Consider a school that uses admissions policy $P_{\full}$. Then, $(\tilde q \mid A) \prec_{\textrm{SSD}} (\tilde q \mid B) $ is equivalent to each of the following conditions.
\begin{itemize}
    \item[(i)] \emph{Diversity}: Group $B$ is under-represented if and only if $C< 1-\hat{F}(q_+) $;
    \item[(ii)] \emph{Academic merit}: For any capacity $C$, the policy achieves worse academic merit for admitted students from group $B$.
\end{itemize}
Furthermore, suppose that $\{\times_{k=1}^K F^k_{q,A}, q \in Q\}$ is sufficient for $\{\times_{k=1}^K F^k_{q,B}, q\in Q\}$.  Equivalently,
\begin{itemize}
   \item[(iii)] \emph{Individual fairness}: there exists a threshold $\hat{q}$ such that $I(q; P_S) >0$ if and only if $q>\hat{q}$.
\end{itemize}
\end{proposition}

\begin{proof}
We prove each part separately.

\noindent\textit{Proof of part (i).} 
Let $\tilde q^*$ denote the optimal acceptance threshold as given by the following equation:
\begin{equation*}
    (1-\pi) \hat{F}_A(\tilde q^*) + \pi \hat{F}_B(\tilde q^*)  = 1-C \Longleftrightarrow \hat{F} (\tilde q^*) = 1-C.
\end{equation*}
Therefore, for $C< 1-\hat{F}(q_+)$, it holds that $\tilde q^*> q_+$, and vice versa.
Thus, part (i) follows directly from Lemma~\ref{lemma:crossing_MPS}.

~\\\noindent\textit{Proof of part (ii).} 
Part (ii) follows from the equivalence between (i) and (iii) in  Lemma~\ref{lemma:MPS_SSD} where we consider $u(x)$ to be the linear function $u(x)=x$. 

~\\\noindent\textit{Proof of part (iii).} By Lemma~\ref{thm:zhang}, sufficiency equivalently guarantees that the posterior distribution $\{\times_{k=1}^K F^k_{q,A}, q \in Q\}$ second-order stochastically dominates $\{\times_{k=1}^K F^k_{q,B}, q\in Q\}$. By Lemma~\ref{lemma:crossing_MPS}, this immediately translates to the following property:
\begin{equation*}
    \prob[Y=1 \mid q, A] = \prob[\tilde q \geq \tilde q^* \mid q, A] >     \prob[Y=1 \mid q, B] = \prob[ q \geq \tilde q^* \mid q, B] 
\end{equation*}
if and only $q > \tilde q^*$,
where $\tilde q^*$ is the optimal acceptance threshold corresponding to some capacity $C$.
\end{proof}

Note that an analog of the above proposition can also be obtained for any subset of features $S$.

\section{Affirmative action}
\label{sec:affirmative_action}
Schools often have an additional lever in their admissions policies: whether or not to use affirmative action. The term \textit{affirmative action} refers to admissions policies that partially base decisions on applicants’ membership in social groups with legally protected characteristics (e.g., race, ethnicity, or gender), to promote equal opportunity as well as the educational benefits of diversity~\citep{alon2015race}.

We define affirmative action as a constraint on the fraction of students from each group. As a result, the admissions policy may  use different admission thresholds for different groups.
This approach is common in the literature \citep{fang2011theories} and a proxy of the practices adopted by universities. 
However, due to the recent lawsuit against Harvard~\citep{harvard2019} and the Supreme Court decision in 2023~\citep{saul_2023},  %
	the legal framework around such affirmative action is restrictive.
	Explicit, predetermined \textit{racial} quotas  are generally illegal, as is (newly) broad consideration of race separate from individuals' contexts; conversely, University of Texas admits students using a high school-based quota system \citep{ut_news_2019}.
    
    From a theoretical standpoint, the class of affirmative-action policies is interesting because it generates a Pareto frontier between the academic merit and diversity objectives. A fully Bayesian school,  using group information when forming skill estimates but then accepting students with the highest skill estimates regardless of group, would maximize academic merit. To instead maximize some weighted combination of academic merit and diversity, an optimal school (with no legal constraints) would be fully Bayesian \textit{within} each group, ranking students within each group according to their expected true skill and then accepting the top students \textit{from each group} to achieve some desired balance between academic merit and diversity objectives. Different weights would correspond to different fractions of students from each group, tracing out a Pareto curve.

Next, we study outcomes when schools can decide both whether to require standardized testing \textit{and} whether to use affirmative action.

\medskip
\noindent\textbf{Affirmative action under a fixed testing policy.} As a stylized model of affirmative action, we extend the main setup of Section~\ref{sec:baseline_model} by introducing a constraint on the diversity level $\tau(P)$ achieved by a policy $P$, i.e., %
 consider admissions policies of the form $P^\tau_{S}$, where $\tau \in (\tau(P_{S}), \pi]$ is the target diversity level set by the school.
Thus, the school still optimizes for academic merit but under the additional constraint that a fraction $\tau$ of admitted students must belong to group $B$.
To do so, the common admission decision threshold  is now replaced
by two group-dependent thresholds, $\tilde q^*_{A,S}$ and $\tilde q^*_{B,S}$.\footnote{{In Proposition~\ref{prop:group_aware_AA}, the assumptions that $\gamma_A \geq \frac{2(1-\tau)C}{1-\pi}$ and $\gamma_B \geq \frac{2\tau C}{\pi}$ ensure that, even in the presence of barriers, the admission to the school is over-demanded (in the sense that the school cannot admit all applicants) and selective (meaning that the admission thresholds satisfy $\tilde q^*_{g,S} \geq \mu$).}}
Note that $\tau(P^\tau_S) = \tau$, thus
under affirmative action, diversity improves by definition, and group fairness holds when the target diversity level is set to $\tau = \pi$.\footnote{Proposition \ref{prop:group_aware_AA} focuses only on diversity levels $\tau \in (\tau (P_{S}), \pi]$. The lower bound is reasonable since $\tau (P_{S})$ is the diversity level achieved by a school  optimizing solely for academic merit (Theorem~\ref{prop:group_aware_noAA}). The upper bound achieves group fairness. Note that higher levels $\tau>\pi$ could have also been considered with similar results; however, higher values of $\tau$ may be infeasible for certain values of $C$ and $(1-\pi)$ therefore are omitted.
} 
Affirmative action can be utilized on top of test-free or test-based policies. Whereas the testing policy determines the amount of information available in the estimation process, the affirmative action changes the selection process given information.

 We find that although affirmative action increases diversity, it does not change the information that schools have on students, and as a result the school still cannot identify high-skilled students in group $B$ as well as it can identify group $A$ students. 
We show that with unequal precision, affirmative action  improves the individual fairness gap but does not eliminate it, as disparities in the identification of the highest-skilled students remain. It further increases the gap in academic merit across social groups. Affirmative action alone cannot address the fundamental issue caused by variance in the features. As a result, we consider this decision as orthogonal.

\begin{restatable}[Affirmative action with a fixed testing policy]{proposition}{propgroupawareAA}
\label{prop:group_aware_AA}
Fix %
the target diversity level $\tau(P_{S})<\tau \leq \pi$  and assume unequal precisions.
{Let also $\gamma_B \leq \gamma_A \leq 1$ such that $\gamma_A \geq \frac{2 (1-\tau) C }{1-\pi}$, $\gamma_B \geq \frac{2 \tau C }{\pi}$.}
Then, 
\begin{itemize}

    \item[(i)] \emph{Individual fairness}: In comparison to $P_{S}$, the {individual fairness gap}  improves, i.e., 
    $ I(q ; P^\tau_{S}) < I(q ; P_{S})$ for all $q$.
    However, group A students still have higher probability of admission than same-skilled  group B students, i.e., $I(q; P^\tau_{S}) >0$,
	if and only if
 \begin{equation*}
    q  >  \frac{\left( 
    \frac{{\sum_{k\in S} \sigma_{Ak}^{-2}} + \sigma^{-2}}{\sqrt{\sum_{k\in S} \sigma_{Ak}^{-2}}}
    \right) \tilde q^*_{A,S} - \left( 
    \frac{{\sum_{k\in S} \sigma_{Bk}^{-2}} + \sigma^{-2}}{\sqrt{\sum_{k\in S} \sigma_{Bk}^{-2}}}
    \right) \tilde q^*_{B,S}}{\sqrt{\sum_{k\in S} \sigma_{Ak}^{-2}}-\sqrt{\sum_{k\in S} \sigma_{Bk}^{-2}}} + \frac{\mu \sigma^{-2}}{ \sqrt{\sum_{k\in S} \sigma_{Ak}^{-2}} \sqrt{\sum_{k\in S} \sigma_{Bk}^{-2}}}.
\end{equation*}

     Finally, there exist parameters such that 
     $ I(q ; P^\tau_{S}) < 0< I(q ; P_{S})$ for some $q$.

      \item [(ii)] %
   \emph{Academic merit}: 
   Policy $P^\tau_{S}$ always achieves worse {academic merit} for admitted  group B students than for group A students.
    Furthermore, in comparison to $P_{S}$, the academic merit of admitted students decreases for group B, while it increases for group A.

\end{itemize}
\end{restatable}

\begin{proof}{\textit{Proof of Part (i).}}
With affirmative action, the common threshold $\tilde q^*_{S}$ in \Cref{def:threshold_groupaware_noAA} is replaced
by two group-dependent thresholds, $\tilde q^*_{A,{S}}$ and $\tilde q^*_{B,{S}}$:
\begin{equation}
\label{eq:thresholds_AA}
    \begin{split}
        (1-\pi) \gamma_A (1- F_{\tilde q \mid A, P_{S}}(\tilde q^*_{A,{S}})) = (1-\tau) C, \,\,\,\,
        \pi \gamma_B (1-F_{\tilde q \mid B, P_{S}}(\tilde q^*_{B,{S}}))  = \tau C.
    \end{split}
\end{equation}
Note further that the distribution $F_{\tilde q \mid g, P_{S}} \equiv F_{\tilde q \mid g, P^\tau_{S}}$, $g\in \{A,B\}$, remains unchanged under both admissions policies $P^\tau_{S}$ and $P_{S}$, as both share the same (group-aware) estimation policy and  feature set $S$.

First, observe that  \Cref{eq:thresholds_AA} gives us
\begin{equation}
\label{eq:two_thresholds_tau}
\begin{split}
       \tilde q^*_{A,{S}} = F_{\tilde q \mid A, P_{S}}^{-1} \left(1- \frac{1-\tau}{(1-\pi)\gamma_A}C \right), \,\,\,
        \tilde q^*_{B,{S}} = F_{\tilde q \mid B, P_{S}}^{-1} \left(1- \frac{\tau}{\pi \gamma_B}C \right).
\end{split}
\end{equation}
Since $\tau>\tau(P_{S})$ and $\gamma_B \leq \gamma_A \leq 1$,
it follows that $\tilde q^*_{B,{S}} < \tilde{q}^*_{S} < \tilde q^*_{A,{S}}$.
Due to our assumptions that $\gamma_A \geq \frac{2(1-\tau)C}{1-\pi}$ and $\gamma_B \geq \frac{2\tau C}{\pi}$, we also get that $\mu<\tilde q^*_{B,{S}} < \tilde{q}^*_{S} < \tilde q^*_{A,{S}}$.

For the first statement of part (i), observe that, due to $\tilde q^*_{A,S} > \tilde q^*_{S}$ and  $\tilde q^*_{B,S} < \tilde q^*_{S}$ for all $\tau (P_{S})< \tau\leq \pi$, 
$\prob[\tilde q \geq \tilde q^*_{A,S} \mid q, A, P^\tau_{S}] < \prob[\tilde q \geq \tilde q^*_{S} \mid q, A, P_{S}],$ and ${\prob[\tilde q \geq \tilde q^*_{B,S} \mid q, B, P^\tau_{S}]> \prob[\tilde q \geq \tilde q^*_{S} \mid q, B, P_{S}],}$
since the distribution of $\tilde q \mid q, P$ remains the same under both $P \in \{P_{S}, P_{S}^\tau\}$. Consequently, $I(q ; P^\tau_{S}) <I(q ; P_{S})$.

For the proof of the second statement in Part (i), we apply the argument used in Proposition~\ref{prop:appendix_baseline}, Part (ii). Thus, we get that
$I(q; P_{S}^\tau) >0$ if and only if
\small
\begin{equation*}
        \frac{ \tilde q^*_{A,S}\sigma^{-2} + \tilde q^*_{A,S}\sum_{k \in S}\sigma_{Ak}^{-2}  - \mu \sigma^{-2} - q \sum_{k\in S} \sigma_{Ak}^{-2}}{ \sqrt{\sum_{k\in S} \sigma_{Ak}^{-2}}} < \frac{ \tilde q^*_{B,S}\sigma^{-2} +\tilde q^*_{B,S} \sum_{k\in S} \sigma_{Bk}^{-2}  - \mu \sigma^{-2} - q \sum_{k\in S} \sigma_{Bk}^{-2}}{ \sqrt{\sum_{k\in S} \sigma_{Bk}^{-2}}},
\end{equation*}
\normalsize
which is equivalent to
\small
 \begin{equation*}
 \begin{split}
  q  >  \frac{\left( 
    \frac{{\sum_{k\in S} \sigma_{Ak}^{-2}} + \sigma^{-2}}{\sqrt{\sum_{k\in S} \sigma_{Ak}^{-2}}}
    \right) \tilde q^*_{A,S} - \left( 
    \frac{{\sum_{k\in S} \sigma_{Bk}^{-2}} + \sigma^{-2}}{\sqrt{\sum_{k\in S} \sigma_{Bk}^{-2}}}
    \right) \tilde q^*_{B,S}}{\sqrt{\sum_{k\in S} \sigma_{Ak}^{-2}}-\sqrt{\sum_{k\in S} \sigma_{Bk}^{-2}}} + \frac{\mu \sigma^{-2}}{ \sqrt{\sum_{k\in S} \sigma_{Ak}^{-2}} \sqrt{\sum_{k\in S} \sigma_{Bk}^{-2}}}.
 \end{split}
\end{equation*}
\normalsize

Finally, we prove the third statement in Part (ii).  Consider an instance of the model parameters where
\begin{equation}
\label{eq:instance_condition_derivative}
    \sqrt{\sum_{k\in S} \sigma_{Bk}^{-2}}  \sqrt{\sum_{k\in S} \sigma_{Ak}^{-2}} > \sigma^{-2},
\end{equation}
and under $P^\tau_{S}$, the condition in Part (ii) in Proposition \ref{prop:appendix_baseline}, holds with equality for some $\hat{q},$ 
i.e.,
$$(\tilde q^*_{A,S}-\hat{q}) \sqrt{\sum_{k\in S} \sigma_{Bk}^{-2}}  \sqrt{\sum_{k\in S} \sigma_{Ak}^{-2}} = \sigma^{-2}(\tilde q^*_{A,S} -\mu).$$
Therefore, $\prob[\tilde q > \tilde q^*_{A,S} \mid \hat{q}, A] = \prob[\tilde q > \tilde q^*_{A,S} \mid \hat{q}, B].$
Since $\tilde q^*_{B,S}<\tilde q^*_{A,S}$, it further holds that ${\prob[\tilde q > \tilde q^*_{B,S} \mid \hat{q}, B]> \prob[\tilde q > \tilde q^*_{A,S} \mid \hat{q}, B]}$. Thus, $I(\hat{q}; P^\tau_{S}) <0$.

However, for  $q = \hat{q}$, we also have that
    $$(\tilde q^*_{S}-\hat{q}) \sqrt{\sum_{k\in S} \sigma_{Bk}^{-2}}  \sqrt{\sum_{k\in S} \sigma_{Ak}^{-2}} <\sigma^{-2}(\tilde q^*_S -\mu).$$
To see why, observe that given the condition in \Cref{eq:instance_condition_derivative}, the function $$g(\tilde q) =  (\tilde q -\hat{q}) \sqrt{\sum_{k\in S} \sigma_{Bk}^{-2}}  \sqrt{\sum_{k\in S} \sigma_{Ak}^{-2}} - \sigma^{-2}(\tilde q -\mu)$$ is increasing in $\tilde q$ since $$\frac{\diff g(\tilde q)}{\diff \tilde q} = \sqrt{\sum_{k\in S} \sigma_{Bk}^{-2}}  \sqrt{\sum_{k\in S} \sigma_{Ak}^{-2}} - \sigma^{-2} >0.$$
Consequently, for $\tilde q^*_S < \tilde q^*_{A,S}$,  $g(\tilde q^*_S) < g(\tilde q^*_{A,S}) =0$.
Part (ii) in Proposition \ref{prop:appendix_baseline}  further guarantees that $I(\hat{q}; P_{S}) >0 $ for this particular instance of model parameters.
Consequently,  we have constructed an instance of model parameters such that $I(\hat{q}; P_{S}) >0 > I(\hat{q}; P^\tau_{S})$ for some $\hat{q}$. Thus, such an instance exists. 
\end{proof}

\smallskip
\begin{proof}{\textit{Proof of Part (ii)}.}
We  use an argument similar to part (iii) in Proposition~\ref{prop:appendix_baseline} (note that this part holds for any common threshold greater than $\mu$ and not only $\tilde q^*_{S}$). Similarly to \Cref{eq:expected_academic_merit}, we derive that for both $g \in \{A, B\}$,
  $   {\expec[ q \mid \tilde q \geq \tilde q_{g,{S}}^*, g, P^\tau_{S}] = \expec[ \tilde q \mid \tilde q \geq \tilde q_{g,{S}}^*, g, P^\tau_{S}].}$
By the same part (iii) in Proposition~\ref{prop:appendix_baseline}, replacing $\tilde q^*_{S}$ with threshold $\tilde q_{A,{S}}^*>\mu$ implies that
$ \expec[\tilde q \mid \tilde q \geq \tilde q_{A,{S}}^*, A, P^\tau_{S}] > \expec[ \tilde q \mid \tilde q \geq \tilde q_{A,{S}}^*, B, P^\tau_{S}].$
Next, we have that
\begin{equation*}
\begin{split}
    \expec[\tilde q \mid Y=1, B, P^\tau_{S}] =& \expec[\tilde q \mid \tilde q \geq \tilde{q}^*_{B,{S}}, B,  P^\tau_{S}] \\
    =& \frac{1}{1- F_{\tilde q \mid B, P_{S}}(\tilde q^*_{B,{S}})} \int_{\tilde q^*_{B,S}}^{\infty} \tilde q d F_{\tilde q \mid B, P_{S}} (\tilde q)\\
     =& \frac{1}{1- F_{\tilde q \mid B, P_{S}}(\tilde q^*_{B,{S}})} \left( \int_{\tilde q^*_{B,S}}^{\tilde q^*_{A,{S}}} \tilde q d F_{\tilde q \mid B, P_{S}} (\tilde q) +  \int_{\tilde q^*_{A,S}}^{\infty} \tilde q d F_{\tilde q \mid B, P_{S}} (\tilde q) \right)  \\
     =&  \frac{F_{\tilde q \mid B, P_{S}}(\tilde q^*_{A,S}) -F_{\tilde q \mid B, P_{S}}(\tilde q^*_{B,S})}{1- F_{\tilde q \mid B, P_{S}}(\tilde q^*_{B,S})}  \expec[\tilde q \mid \tilde{q}^*_{A,S} >\tilde q \geq \tilde{q}^*_{B,S}, B,  P^\tau_{a,K}]\\
    &{}+ \frac{1 -F_{\tilde q \mid B, P_{S}}(\tilde q^*_{A,S})}{1- F_{\tilde q \mid B, P_{S}}(\tilde q^*_{B,S})}  \expec[\tilde q \mid \tilde q \geq \tilde{q}^*_{A,S}, B,  P^\tau_{a,K}].
\end{split}
\end{equation*}

The fact that $\expec[\tilde q \mid  \tilde{q}^*_{A,S}> \tilde q \geq\tilde q^*_{B,S}  , B, P^{\tau}_{S}] < \expec[\tilde q \mid \tilde q \geq \tilde{q}^*_{A,S}, B,P^{\tau}_{S}]$, together with    
the inequalities above,
finally imply that
$$\expec[ q \mid  Y=1, B, P^\tau_{S}] = \expec[\tilde q \mid \tilde q \geq \tilde{q}^*_{B,S}, B, P^\tau_{S}] < \expec[\tilde q \mid \tilde q \geq \tilde{q}^*_{A,S}, A, P^\tau_{S}] =\expec[ q \mid Y=1, A, P^\tau_{S}].$$

Regarding the second statement of part (ii), recall that the distributions $F_{\tilde q \mid g, P_{S}}$ and $F_{\tilde q \mid g, P^\tau_{S}}$ are identical. Since $\tilde q^*_{B,S} < \tilde{q}^*_{S} < \tilde q^*_{A,S}$, it follows that the conditional expectations satisfy
$$ \expec[ q \mid Y=1, A, P^\tau_{S}] = \expec[\tilde q \mid \tilde q \geq \tilde{q}^*_{A,S}, A, P^\tau_{S}] > \expec[\tilde q \mid \tilde q \geq \tilde{q}^*_{S}, A, P_{S}]= \expec[ q \mid Y=1, A, P_{S}],$$
$$\expec[ q \mid Y=1, B, P^\tau_{S}] = \expec[\tilde q \mid \tilde q \geq \tilde{q}^*_{B,S}, B, P^\tau_{S}]<\expec[\tilde q \mid \tilde q \geq \tilde{q}^*_{S}, B, P_{S}]=\expec[ q \mid Y=1, B, P_{S}].$$
Thus, the academic merit of admitted students increases for group $A$ while it decreases for group $B$.
\end{proof}

\medskip
\noindent\textbf{Dropping the test under affirmative action.} We now study how test-free and test-based policies with affirmative actions compare in a setting with unequal barriers $\gamma_g$ to test access. Recall that Theorem~\ref{thm:threshold_char_without_aa}   (without affirmative action) shows that, conditional on the information environment, if there are substantial barriers to test access, removing the test requirement improves academic merit.
The following theorem establishes the same result for a school using affirmative action. Recall that  the function $\textsc{HR}$ denote the hazard rate of the Normal distribution $\Phi$,  
    $\textsc{HR}(z) = \frac{\phi(z)}{1-\Phi(z)}$. 
\begin{restatable}[Dropping tests under affirmative action with barriers]{proposition}{thmbudgettestsvspoolsize}
\label{thm.budget_tests_vs_poolsize}
Fix group $g \in \{A, B\}$, variances $\sigma^2_{gk}$, and target diversity level $\tau$. Let $\tau_A \triangleq 1-\tau$ and $\tau_B \triangleq \tau$. Dropping the test score requirement improves the academic merit of admitted students from group $g$, i.e., $\E [q \mid Y=1, g, P^\tau_{\full}] < \E [q \mid Y=1, g, P^\tau_{\subb}]$, if and only if $\gamma_g \leq \hat{\gamma}_g$, where
\begin{equation}
\label{eq:hat_gamma_threshold}
    \hat{\gamma}_g=\frac{\tau_g C}{1-\Phi\left( \textsc{HR}^{-1} \left (\frac{\sqrt{\frac{\sum_{k \in \subb} \sigma_{gk}^{-2}  }{\sigma^{-2} +\sum_{k \in \subb} \sigma_{gk}^{-2}}}}
    {\sqrt{\frac{\sum_{k \in \full} \sigma_{gk}^{-2}  }{\sigma^{-2} +\sum_{k \in \full} \sigma_{gk}^{-2}}} } \textsc{HR} (
    \Phi^{-1}(1-\frac{\tau_gC}{\pi_g}))\right)\right)}.
\end{equation}
Fixing all other parameters,  the threshold $\hat{\gamma}_g$  increases as test variance $\sigma_{gK}$ for group $g$ increases.
\end{restatable}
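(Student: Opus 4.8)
The plan is to reduce the statement to a clean monotonicity comparison of one quantity — the expected skill of admitted students from group $g$ — as a function of the access level $\gamma_g$, holding everything else fixed, and then to locate the crossover point explicitly. First I would fix group $g$, write $\pi_g$ for its population mass ($\pi_A = 1-\pi$, $\pi_B = \pi$) and $\tau_g$ for its diversity target share, and observe that under affirmative action with target $\tau$ the school must admit a mass $\tau_g C$ from group $g$ by thresholding that group's skill estimates $\tilde q \mid g, P_S$. By \Cref{lemma:perceived_skill_distribution}, $\tilde q \mid g, P_S \sim \N(\mu, v_{g,S})$ where $v_{g,S} = \sigma^2 \cdot \frac{\sum_{k\in S}\sigma_{gk}^{-2}}{\sigma^{-2}+\sum_{k\in S}\sigma_{gk}^{-2}}$. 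Only a fraction $\gamma_g$ of group $g$ applies (those with access to the required feature set $S$), so the mass available from group $g$ is $\gamma_g \pi_g$, and the group-$g$ threshold $\tilde q^*_{g,S}$ solves $\gamma_g \pi_g (1 - \Phi((\tilde q^*_{g,S}-\mu)/\sqrt{v_{g,S}})) = \tau_g C$, i.e. $\Phi^{-1}(1 - \tau_g C/(\gamma_g \pi_g))$ in standardized units. The key subtlety — and I think this is where the access level enters nontrivially versus the no-barrier case — is that conditioning on "has access to $S$" is independent of skill $q$ and of the feature noise given group, so the joint distribution of $(q, \tilde q)$ among applicants from group $g$ is exactly the unconditional joint distribution; barriers only rescale the applying mass and hence move the threshold, they do not distort the $(q,\tilde q)$ relationship.

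Next I would compute $\E[q \mid Y=1, g, P_S]$. Since $(q, \tilde q)$ is jointly Gaussian with $\E[q \mid \tilde q] = \tilde q$ (the estimate is unbiased — a consequence of the Bayesian tower property, $\E[q\mid \tilde q] = \E[\E[q\mid \thetaset,g]\mid \tilde q]=\tilde q$), we get $\E[q \mid Y=1, g, P_S] = \E[\tilde q \mid \tilde q > \tilde q^*_{g,S}]$, which for a $\N(\mu, v_{g,S})$ variable truncated above its threshold equals $\mu + \sqrt{v_{g,S}}\cdot \textsc{HR}(z_{g,S})$ where $z_{g,S} = \Phi^{-1}(1 - \tau_g C/(\gamma_g \pi_g))$ and $\textsc{HR}(z)=\phi(z)/(1-\Phi(z))$ is the hazard rate of the standard normal. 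So the comparison $\E[q\mid Y=1,g,P^\tau_{\full}] < \E[q\mid Y=1,g,P^\tau_{\subb}]$ becomes, after subtracting $\mu$,
\begin{equation*}
\sqrt{v_{g,\full}}\cdot \textsc{HR}\!\left(\Phi^{-1}\!\Big(1 - \tfrac{\tau_g C}{\gamma_g^{\full}\pi_g}\Big)\right) \;<\; \sqrt{v_{g,\subb}}\cdot \textsc{HR}\!\left(\Phi^{-1}\!\Big(1 - \tfrac{\tau_g C}{\gamma_g^{\subb}\pi_g}\Big)\right),
\end{equation*}
where under $P_{\full}$ the relevant access level is $\gamma_g$ and under $P_{\subb}$ everyone in group $g$ applies, so $\gamma_g^{\subb} = 1$ (the subset $\subb$ has no access barrier), while $\gamma_g^{\full} = \gamma_g$. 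Plugging $\gamma_g^{\subb}=1$ into the right-hand side gives the fixed quantity $\sqrt{v_{g,\subb}}\cdot\textsc{HR}(\Phi^{-1}(1-\tau_g C/\pi_g))$, independent of $\gamma_g$.

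The remaining work is to isolate $\gamma_g$. On the left-hand side, $\textsc{HR}$ is strictly increasing and $\Phi^{-1}(1-\tau_g C/(\gamma_g \pi_g))$ is strictly increasing in $\gamma_g$ (as $\gamma_g$ grows, the applying mass grows, the threshold rises, the truncated mean rises); hence the left-hand side is strictly increasing in $\gamma_g$, so there is a unique crossover $\hat\gamma_g$ and the inequality holds iff $\gamma_g \le \hat\gamma_g$. To get the closed form, set the two sides equal and solve: dividing by $\sqrt{v_{g,\full}}$ and applying $\textsc{HR}^{-1}$ gives $\Phi^{-1}(1-\tau_g C/(\hat\gamma_g\pi_g)) = \textsc{HR}^{-1}\big(\frac{\sqrt{v_{g,\subb}}}{\sqrt{v_{g,\full}}}\,\textsc{HR}(\Phi^{-1}(1-\tau_g C/\pi_g))\big)$; applying $\Phi$, rearranging for $\hat\gamma_g$, and substituting $v_{g,S}=\sigma^2\sum_{k\in S}\sigma_{gk}^{-2}/(\sigma^{-2}+\sum_{k\in S}\sigma_{gk}^{-2})$ (the $\sigma^2$ factors cancel in the ratio) yields exactly \eqref{eq:hat_gamma_threshold}. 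Finally, for the comparative statics: as $\sigma_{gK}$ increases, $\sigma_{gK}^{-2}$ decreases, so the ratio $v_{g,\subb}/v_{g,\full}$ increases toward $1$ (the test contributes less, so dropping it costs less information); since $\textsc{HR}^{-1}$ is increasing and $\Phi$ is increasing, the argument of $1-\Phi(\cdot)$ in the denominator of \eqref{eq:hat_gamma_threshold} increases, the denominator decreases, and hence $\hat\gamma_g$ increases — dropping the test is beneficial over a wider range of access levels when the test is noisier for that group. The main obstacle I anticipate is not any single computation but getting the conditioning argument airtight: one must carefully justify that restricting to applicants (those with access to the required set) leaves the within-group joint law of $(q,\tilde q)$ unchanged and only reweights the total mass, so that the truncated-Gaussian formula applies with the threshold shifted purely through the mass constraint $\tau_g C$; and one must confirm that the selectivity assumptions ($\gamma_g \ge 2\tau_g C/\pi_g$ and the analogous bound) guarantee $z_{g,S}\ge 0$ so that $\textsc{HR}$ and $\textsc{HR}^{-1}$ are evaluated on the region where the needed monotonicity and invertibility hold.
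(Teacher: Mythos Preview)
Your proposal is correct and follows essentially the same route as the paper: compute the group-$g$ admission threshold from the mass constraint $\gamma_g\pi_g(1-\Phi(z_{g,S}))=\tau_g C$, invoke the truncated-normal formula $\E[q\mid Y=1,g,P^\tau_S]=\mu+\sqrt{v_{g,S}}\,\textsc{HR}(z_{g,S})$ (via $\E[q\mid\tilde q]=\tilde q$), compare the two policies, and solve for the crossover $\hat\gamma_g$ using monotonicity of $\textsc{HR}\circ\Phi^{-1}$ in $\gamma_g$. Your explicit justification that access is conditionally independent of $(q,\thetaset)$ given $g$---so barriers shift only the threshold, not the within-group $(q,\tilde q)$ law---is a point the paper leaves implicit; and note that since $\textsc{HR}$ is strictly increasing on all of $\mathbb{R}$ (its derivative $\textsc{HR}(x)(\textsc{HR}(x)-x)>0$ everywhere), your worry about needing $z_{g,S}\ge 0$ for invertibility is unnecessary.
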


\begin{proof}
Let $\tilde w^*_{g,{\full}}$ be the group-dependent threshold in a policy with barriers and  affirmative action. Define
$$t_g = \frac{\tilde w^*_{g,{\full}} - \mu}{\sigma\sqrt{ \frac{\sum_{k\in \full} \sigma_{gk}^{-2}  }{\sigma^{-2} +\sum_{k\in \full} \sigma_{gk}^{-2}}}}, \,\,\, t'_g = \frac{\tilde q^*_{g,\subb} - \mu}{\sigma \sqrt{\frac{\sum_{k\in \subb} \sigma_{gk}^{-2}  }{\sigma^{-2} +\sum_{k\in \subb} \sigma_{gk}^{-2}}}}.$$
For such a policy with admission thresholds $\tilde w^*_{g,{\full}}$, $g\in\{A,B\}$, Lemma~\ref{lemma:conditional_expectation_normal} implies that the expected skill level of admitted students in group $g$ equals
$$\expec[q \mid Y=1, g, P^\tau_{\full}]= \mu +  \sigma \sqrt{\frac{\sum_{k\in \full} \sigma_{gk}^{-2}  }{\sigma^{-2} +\sum_{k\in \full} \sigma_{gk}^{-2}}} \cdot \frac{\phi(t_g)}{1-\Phi(t_g)}.$$
Similarly, for a policy using affirmative action but no tests, and admission thresholds $\tilde q^*_{g,\subb}$, we get that
$$\expec[ q \mid Y=1, g, P^\tau_{\subb}]=  \mu +  \sigma \sqrt{\frac{\sum_{k\in \subb} \sigma_{gk}^{-2}  }{\sigma^{-2} +\sum_{k\in \subb} \sigma_{gk}^{-2}}} \cdot \frac{\phi(t'_g)}{1-\Phi(t'_g)}.$$

To compute the threshold $\hat{\gamma}_g$, we require that
$\expec[ q \mid Y=1, g, P^\tau_{\subb}]=\expec[q \mid Y=1, g, P^\tau_{\full}].$
Based on the above equations, this condition is equivalent to
\begin{equation*}
\begin{split}
    \sqrt{\frac{\sum_{k \in \subb} \sigma_{gk}^{-2}  }{\sigma^{-2} +\sum_{k \in \subb} \sigma_{gk}^{-2}}} \textrm{HR} \left(\frac{\tilde q^*_{g,\subb} - \mu}{ \sigma\sqrt{\frac{\sum_{k \in \subb} \sigma_{gk}^{-2}  }{\sigma^{-2} +\sum_{k \in \subb} \sigma_{gk}^{-2}}}}\right) &=
     \sqrt{\frac{\sum_{k\in \full} \sigma_{gk}^{-2}  }{\sigma^{-2} +\sum_{k\in \full} \sigma_{gk}^{-2}}}  \textrm{HR} \left(  \frac{\tilde w^*_{g,\full} - \mu}{\sigma\sqrt{ \frac{\sum_{ k \in \full} \sigma_{gk}^{-2}  }{\sigma^{-2} +\sum_{ k \in \full} \sigma_{gk}^{-2}}}}\right).
\end{split}
\end{equation*}
Letting $\tau_B=\tau$, $\tau_A=1-\tau$ and using  \Cref{eq:two_thresholds_tau} 
to compute the thresholds $\tilde w^*_{g,\full}$, $\tilde q^*_{g,\subb}$, we get that
\begin{equation*}
\begin{split}
    &\sqrt{\frac{\sum_{k \in \subb} \sigma_{gk}^{-2}  }{\sigma^{-2} +\sum_{k \in \subb} \sigma_{gk}^{-2}}} \textrm{HR} \left(
    \Phi^{-1}\left(1-\frac{\tau_gC}{\pi_g}\right)\right)
    =
     \sqrt{\frac{\sum_{k\in \full} \sigma_{gk}^{-2}  }{\sigma^{-2} +\sum_{k\in \full} \sigma_{gk}^{-2}}}  \textrm{HR} \left(  \Phi^{-1}\left(1-\frac{\tau_g C}{\pi_g \hat{\gamma}_g}\right) \right)
\end{split}
\end{equation*}
Thus, solving for $\hat{\gamma}_g$, we finally get  \Cref{eq:hat_gamma_threshold}. Note that the expected skill level of admitted students in the test-based policy is given -- due to Lemma~\ref{lemma:conditional_expectation_normal} -- by 
$$\mu + \sigma\sqrt{\frac{\sum_{k\in \full} \sigma_{gk}^{-2}  }{\sigma^{-2} +\sum_{k\in \full} \sigma_{gk}^{-2}}}  \textrm{HR} \left(  \Phi^{-1}\left(1-\frac{\tau_g C}{\pi_g\hat{\gamma}_g}\right) \right).$$
By Lemma~\ref{lemma:HR}, it follows that $\textrm{HR}$ is increasing. However, $\Phi^{-1}\left(1-\frac{\tau_g C}{\pi_g\hat{\gamma}_g}\right)$ is decreasing in $\hat{\gamma}_g$.   Therefore, the academic merit of  $g$ must be decreasing in $\hat{\gamma}_g$. Thus, dropping the test increases academic merit for   $g$ if and only if $\gamma_g \leq \hat{\gamma}_g$.

Finally, we prove the second claim. As $\sigma_{gK}$ increases, $\sqrt{\frac{\sum_{k\in \full} \sigma_{gk}^{-2}  }{\sigma^{-2} +\sum_{k\in \full} \sigma_{gk}^{-2}}}$ decreases. Thus, the quantity
$$\frac{\sqrt{\frac{\sum_{k \in \subb} \sigma_{gk}^{-2}  }{\sigma^{-2} +\sum_{k \in \subb} \sigma_{gk}^{-2}}}}
    {\sqrt{\frac{\sum_{k\in \full} \sigma_{gk}^{-2}  }{\sigma^{-2} +\sum_{k\in \full} \sigma_{gk}^{-2}}} } \textsc{HR} \left(
    \Phi^{-1}\left(1-\frac{\tau_g C}{\pi_g}\right)\right)$$
increases. By Lemma~\ref{lemma:HR}, the hazard rare ($\mathrm{HR}$) is increasing so its inverse $\mathrm{HR}^{-1}$ is also increasing. Since the CDF $\Phi$ is increasing, their composition $\Phi(\textrm{HR}^{-1}(\cdot))$ must be also increasing, which in turn implies that the denominator in \Cref{eq:hat_gamma_threshold} is decreasing in $\sigma_{gK}$. Consequently, $\hat{\gamma}_g$ increases as $\sigma_{gK}$ increases.
\end{proof}

Observe that the threshold $\hat{\gamma}_g$ now depends only the characteristics of group $g$ and $\tau$, in contrast to Theorem \ref{thm:threshold_char_without_aa}, where the threshold depends on characteristics of both groups. The result further holds regardless of the economic inequality $\gamma_A - \gamma_B$ between the two groups; under affirmative action with a fixed diversity level, the school conducts the selection process for the two groups separately.
Finally, as expected, if the test has a higher variance for a certain group, then it is more beneficial for that group to drop the test.

\medskip 

\begin{figure}[t]
	\begin{subfigure}[b]{0.48\textwidth}
		\centering
		\includegraphics[width=\linewidth]{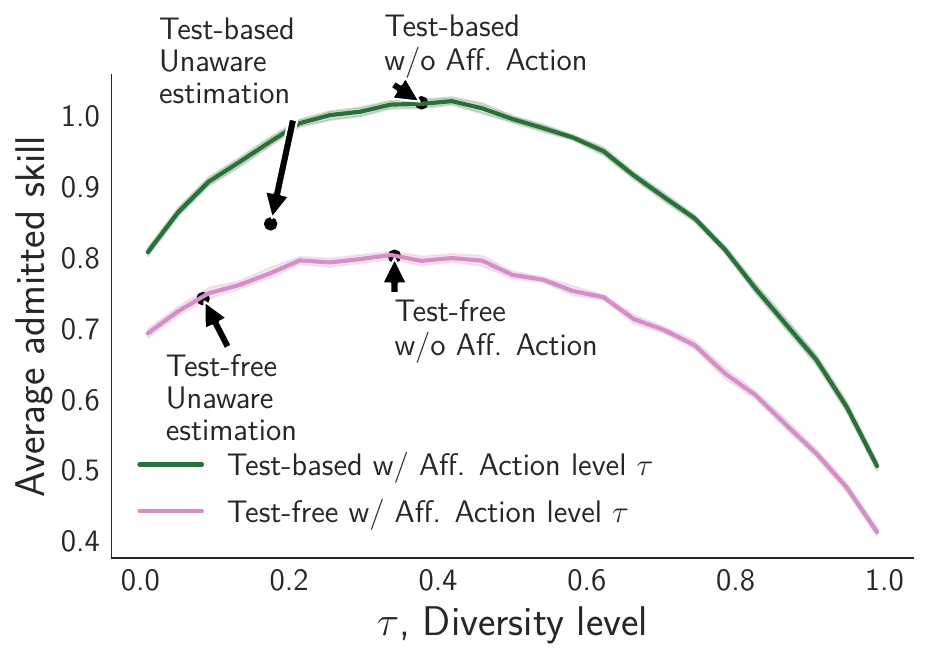}
		\caption{\small Average admitted skill vs. diversity level \normalsize}
		\label{fig:paretosimulations_barriers_withunaware}
	\end{subfigure}
	\hfill
	\begin{subfigure}[b]{0.48\textwidth}
		\centering
		\includegraphics[width=\linewidth]{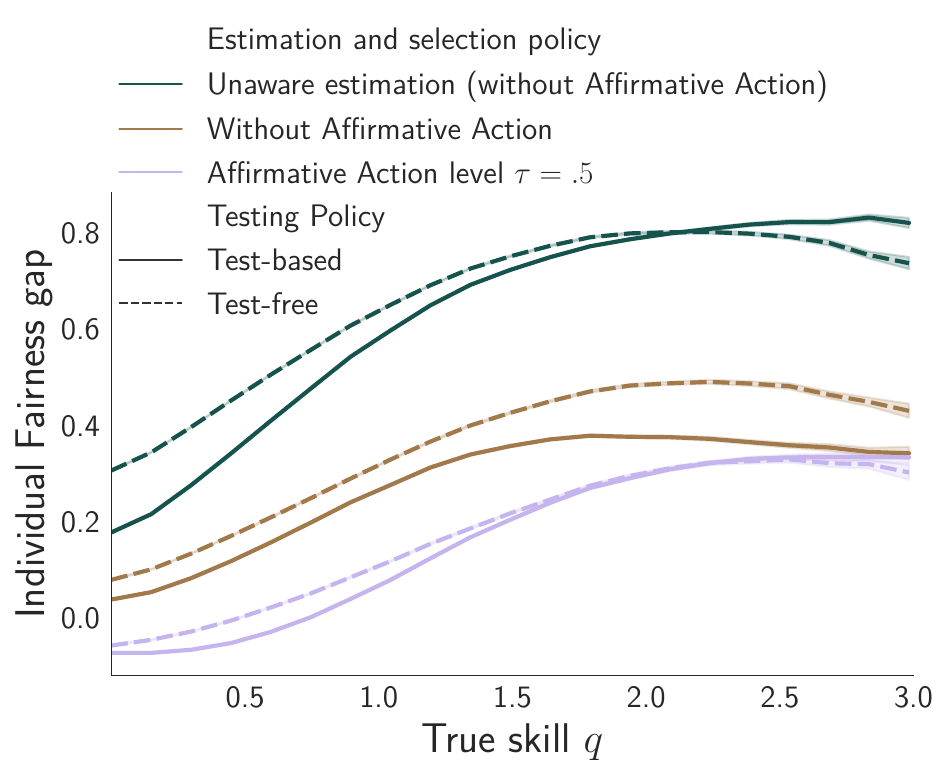}
		\caption{\small Individual fairness gap \normalsize}
		\label{fig:IFsimulations_barriers_withunaware}
	\end{subfigure}

	\caption{Performance of various policies, in simulation in a setting where features are more informative for group $A$, and with testing barriers for group $B$.  Group-unaware policies, analyzed in \Cref{app.group_unaware}, are those in which the school does not use group information in its Bayesian skill estimation. Affirmative action, analyzed in \Cref{sec:affirmative_action}, is defined as fixing the diversity level $\tau$ as a constraint. Figure (a) then shows results for the full range of diversity levels. Affirmative action in general improves both diversity and individual fairness, while dropping the test score has an ambiguous impact. Group-unaware policies generally perform the worst on all metrics. More generally, dropping the test has equivocal effects, depending on the objective (diversity level, academic merit, individual fairness gap) and other policies. The full parameter set can be found in Electronic Companion \ref{appsec:simparams}.
 }
	\label{fig:policycompare_barriers_withunaware}
\end{figure}

\noindent\textbf{Comparing the policies in simulation.} Figure~\ref{fig:policycompare_barriers_withunaware} compares, for one parameter setting, our policies: with and without testing, and with and without \textit{affirmative action} (where a fixed fraction $\tau$ of the admitted class is group B; see Section \ref{sec:affirmative_action}). In Figure~\ref{fig:paretosimulations_barriers_withunaware}, the Pareto curves trace the trade-off between diversity and academic merit, for each testing policy. In this scenario, constraining each group's admitted class to be proportional to its group size (affirmative action at level $\tau = \pi = \frac12$) does not substantially affect academic merit, while improving both group and individual fairness substantially. Furthermore, dropping tests has an equivocal effect: it worsens diversity levels and academic merit, as well as the individual fairness gap in the case without affirmative action. However, it (slightly) improves the individual fairness gap with affirmative action. 

{Figure~\ref{fig:policycompare_barriers_withunaware} also includes \textit{group-unaware estimation} policies, that ignore the social group that a student belongs to; in this case, estimating student skill levels requires calculating the posterior from a mixture of Normal distributions.  Ignoring group attributes is an oft-proposed but often problematic policy proposal to combat bias~\citep{corbett2018measure}.
Perhaps unsurprisingly, group-unaware estimation policies perform most poorly. It worsens both the average academic merit of the admitted class and the diversity level, compared to the policy with group-aware estimation. It also leads to large individual fairness gaps, especially for high-skilled students. More details can be found in Electronic Companion~\ref{app.group_unaware}.}

\end{document}